\documentclass[11pt,a4paper]{article}
\usepackage{authblk}

\usepackage{algorithm}
\usepackage{algorithmic}

\usepackage{thm-restate}

\usepackage{fullpage}
\usepackage[latin1]{inputenc}
\usepackage{amsthm}
\usepackage{amsmath, amsfonts, amssymb}
\usepackage{dsfont}
\usepackage{xspace}
\usepackage{multirow,longtable}
\usepackage{array}
\usepackage{tabularx}
\usepackage{color}
\usepackage{hyperref}
\usepackage[normalem]{ulem}
\usepackage{tikz,pgf}
\usepackage{subfig}
\usepackage{caption}
\usepackage{pifont}
\usepackage{anyfontsize}
\usetikzlibrary{snakes}
\tikzset{decoration={snake,amplitude=.4mm,segment length=2mm,
                       post length=0mm,pre length=0mm}}

\usepackage{booktabs}
\newcolumntype{C}{>{$\displaystyle}c<{$}}

\usetikzlibrary{decorations.pathreplacing,calc}
\tikzset{%
  middle dotted line/.style={
    decoration={show path construction, 
      lineto code={
          \draw[#1] (\tikzinputsegmentfirst) --($(\tikzinputsegmentfirst)!.3333!(\tikzinputsegmentlast)$);,
          \draw[dotted,#1] ($(\tikzinputsegmentfirst)!.3333!(\tikzinputsegmentlast)$)--($(\tikzinputsegmentfirst)!.6666!(\tikzinputsegmentlast)$);,
          \draw[#1] ($(\tikzinputsegmentfirst)!.6666!(\tikzinputsegmentlast)$)--(\tikzinputsegmentlast);,
      }
    },
    decorate
  },
}

\tikzset{%
  linetwo/.style={
    decoration={show path construction, 
      lineto code={
          \draw[#1] (\tikzinputsegmentfirst) --($(\tikzinputsegmentfirst)!.5!(\tikzinputsegmentlast)$);,
          \draw[snake,#1] ($(\tikzinputsegmentfirst)!.5!(\tikzinputsegmentlast)$)--(\tikzinputsegmentlast);,
          
      }
    },
    decorate
  },
}

\tikzset{%
  linethree/.style={
    decoration={show path construction, 
      lineto code={
          \draw[#1] (\tikzinputsegmentfirst) --($(\tikzinputsegmentfirst)!.3333!(\tikzinputsegmentlast)$);,
          \draw[snake,#1] ($(\tikzinputsegmentfirst)!.3333!(\tikzinputsegmentlast)$)--($(\tikzinputsegmentfirst)!.6666!(\tikzinputsegmentlast)$);,
          \draw[very thick,#1] ($(\tikzinputsegmentfirst)!.6666!(\tikzinputsegmentlast)$)--(\tikzinputsegmentlast);,
          
      }
    },
    decorate
  },
}

\tikzset{%
  linefour/.style={
    decoration={show path construction, 
      lineto code={
          \draw[#1] (\tikzinputsegmentfirst) --($(\tikzinputsegmentfirst)!.25!(\tikzinputsegmentlast)$);,
          \draw[snake,#1] ($(\tikzinputsegmentfirst)!.25!(\tikzinputsegmentlast)$)--($(\tikzinputsegmentfirst)!.5!(\tikzinputsegmentlast)$);,
          \draw[very thick,#1] ($(\tikzinputsegmentfirst)!.5!(\tikzinputsegmentlast)$)--($(\tikzinputsegmentfirst)!.75!(\tikzinputsegmentlast)$);,
          \draw[dashdotted,#1] ($(\tikzinputsegmentfirst)!.75!(\tikzinputsegmentlast)$)--(\tikzinputsegmentlast);,
      }
    },
    decorate
  },
}

\tikzset{%
  linefive/.style={
    decoration={show path construction, 
      lineto code={
          \draw[#1] (\tikzinputsegmentfirst) --($(\tikzinputsegmentfirst)!.2!(\tikzinputsegmentlast)$);,
          \draw[snake,#1] ($(\tikzinputsegmentfirst)!.2!(\tikzinputsegmentlast)$)--($(\tikzinputsegmentfirst)!.4!(\tikzinputsegmentlast)$);,
          \draw[very thick,#1] ($(\tikzinputsegmentfirst)!.4!(\tikzinputsegmentlast)$)--($(\tikzinputsegmentfirst)!.6!(\tikzinputsegmentlast)$);,
          \draw[dashdotted,#1] ($(\tikzinputsegmentfirst)!.6!(\tikzinputsegmentlast)$)--($(\tikzinputsegmentfirst)!.8!(\tikzinputsegmentlast)$);,
          \draw[decorate,decoration={coil,segment length=4pt},#1] ($(\tikzinputsegmentfirst)!.8!(\tikzinputsegmentlast)$)--(\tikzinputsegmentlast);,
      }
    },
    decorate
  },
}

\tikzset{%
  linesix/.style={
    decoration={show path construction, 
      lineto code={
          \draw[#1] (\tikzinputsegmentfirst) --($(\tikzinputsegmentfirst)!.1666!(\tikzinputsegmentlast)$);,
          \draw[snake,#1] ($(\tikzinputsegmentfirst)!.1666!(\tikzinputsegmentlast)$)--($(\tikzinputsegmentfirst)!.3332!(\tikzinputsegmentlast)$);,
          \draw[very thick,#1] ($(\tikzinputsegmentfirst)!.3332!(\tikzinputsegmentlast)$)--($(\tikzinputsegmentfirst)!.4998!(\tikzinputsegmentlast)$);,
          \draw[dashdotted,#1] ($(\tikzinputsegmentfirst)!.4998!(\tikzinputsegmentlast)$)--($(\tikzinputsegmentfirst)!.6664!(\tikzinputsegmentlast)$);,
          
          \draw[decorate,decoration={coil,segment length=4pt},#1] ($(\tikzinputsegmentfirst)!.6664!(\tikzinputsegmentlast)$)--($(\tikzinputsegmentfirst)!.8330!(\tikzinputsegmentlast)$);,
          
          \draw[decorate,decoration={coil,segment length=2pt},#1] ($(\tikzinputsegmentfirst)!.8330!(\tikzinputsegmentlast)$)--(\tikzinputsegmentlast);,
      }
    },
    decorate
  },
}

\tikzset{%
  linedot/.style={
    decoration={show path construction, 
      lineto code={
          \draw[decorate, line width=1pt,decoration={snake,amplitude=.4mm,segment length=2mm,
                       post length=0mm,pre length=0mm, },#1] (\tikzinputsegmentfirst) --($(\tikzinputsegmentfirst)!.3333!(\tikzinputsegmentlast)$);,
          \draw[dotted,#1] ($(\tikzinputsegmentfirst)!.3333!(\tikzinputsegmentlast)$)--($(\tikzinputsegmentfirst)!.6666!(\tikzinputsegmentlast)$);,
          \draw[decorate, line width=1pt,decoration={snake,amplitude=.4mm,segment length=2mm,
                       post length=0mm,pre length=0mm, },#1] ($(\tikzinputsegmentfirst)!.6666!(\tikzinputsegmentlast)$)--(\tikzinputsegmentlast);,
          
      }
    },
    decorate
  },
}
\usepackage{comment}
\usepackage{todonotes}

\newtheorem{theorem}{Theorem}
\newtheorem{proposition}{Proposition}
\newtheorem{lemma}{Lemma}
\newtheorem{claim}{Claim}
\newtheorem{question}{Question}

\newtheorem{definition}{Definition}
\newtheorem{corollary}{Corollary}

\newcommand{\qedclaim}{\hfill $\diamond$ \medskip}
\newenvironment{proofclaim}{\noindent{\it Proof of Claim.}}{\qedclaim}

\newcommand{\FPT}{$\mathsf{FPT}$}
\hyphenation{paramete-ri-sed}

\begin{document}


\title{
Further results on the Hunters and Rabbit game through monotonicity\footnote{A Preliminary version of this paper appeared in the proceedings of MFCS 2023.}}

\author[1]{Thomas Dissaux}
\author[1,2]{Foivos Fioravantes}
\author[3]{Harmender Galhawat}
\author[1]{Nicolas Nisse}
\affil[1]{Universit\'e C\^{o}te d'Azur, Inria, CNRS, I3S, France}
\affil[2]{Department of Theoretical Computer Science, Faculty of Information Technology, Czech Technical University in Prague, Prague, Czech Republic}
\affil[3]{Ben-Gurion University of the Negev, Beersheba, Israel}

\maketitle

\begin{abstract}
The \textsc{Hunters and Rabbit} game is played on a graph $G$ where the Hunter player shoots at $k$ vertices  in every round while the Rabbit player occupies an unknown vertex and, if it is not shot, must move to a neighbouring vertex after each round. The Rabbit player wins if it can ensure that its position is never shot. The Hunter player wins otherwise. The hunter number $h(G)$ of a graph $G$ is the minimum integer $k$ such that the Hunter player has a winning strategy (i.e., allowing him to win whatever be the strategy of the Rabbit player). This game has been studied in several graph classes, in particular in bipartite graphs (grids, trees, hypercubes...), but the computational complexity of computing $h(G)$ remains open in general graphs and even in more restricted graph classes such as trees. To progress further in this study, we propose a notion of monotonicity (a well-studied and useful property in classical pursuit-evasion games such as graph searching games) for the \textsc{Hunters and Rabbit} game imposing that, roughly, a vertex that has already been shot ``must not host the rabbit anymore''. This allows us to obtain new results in various graph classes.

More precisely, let the monotone hunter number $mh(G)$ of a graph $G$ be the minimum integer $k$ such that the Hunter player has a monotone winning strategy. We show that $pw(G) \leq mh(G) \leq pw(G)+1$ for any graph $G$ with pathwidth $pw(G)$, which implies that computing $mh(G)$, or even approximating $mh(G)$ up to an additive constant, is \textsf{NP}-hard. Then, we show that $mh(G)$ can be computed in polynomial time in split graphs, interval graphs, cographs and trees. These results go through structural characterisations which allow us to relate the monotone hunter number with the pathwidth in some of these graph classes. In all cases, this allows us to specify the hunter number or to show that there may be an arbitrary gap between $h$ and $mh$, i.e., that monotonicity does not help. In particular, we show that, for every $k\geq 3$, there exists a tree $T$ with $h(T)=2$ and $mh(T)=k$. We conclude by proving that computing $h$ (resp., $mh$) is \FPT~parameterised by the minimum size of a vertex cover.

\end{abstract}




\section{Introduction}

The \textsc{Hunters and Rabbit} game is played on a graph $G$ and with a fixed integer $k$ (the number of hunters), where the Hunter player shoots at $k$ vertices in every round while the Rabbit player occupies an unknown vertex and, if it is not shot, must move to a neighbouring vertex after each round. The Rabbit player wins if it can ensure that its position is never shot. The Hunter player wins otherwise.
The \textsc{Hunters and Rabbit} game was first introduced in~\cite{FPrincess}, in the case $k=1$, where it was shown that the Hunter player wins in a tree $T$ if and only if $T$ does not contain as subgraph any tree obtained from a star with $3$ leaves by subdividing each edges twice.
This result was also observed in~\cite{H14}, where the authors also consider the minimum number of rounds needed for the Hunter player to win. The version where $k>1$ was first considered in~\cite{AbramovskayaFGP16}. Observe that, if $k=|V(G)|-1$, the Hunter player can win in any connected graph $G$ (in two rounds) by shooting twice a subset of $k$ vertices of $G$. Hence, let the \textit{hunter number} of $G$, denoted by $h(G)$, be the minimum integer $k$ such that $k$ hunters can win in $G$ whatever be the rabbit strategy.
%
The exact value of $h(G)$ 
has been determined for several specific families of graphs $G$. For any $n\geq 2$, $h(P_n)=1$ where $P_n$ is the path with $n$ vertices~\cite{AbramovskayaFGP16} (because the rabbit is forced to move at every round, $h(P_1)=0$). For any $n\geq 3$, $h(C_n)=2$ and $h(K_n)=n-1$, where $C_n$ and $K_n$ are the cycle and complete graph on $n$ vertices respectively~\cite{AbramovskayaFGP16}. Moreover, $h(G_{n\times m})=\lfloor \frac{\min\{n,m\}}{2}\rfloor +1$~\cite{AbramovskayaFGP16} and $h(Q^n)=1+\Sigma_{i=0}^{n-2}{i \choose \lfloor i/2 \rfloor}$~\cite{BOLKEMA2019360}, where $G_{n\times m}$ is the $n\times m$ grid and $Q^n$ is the hypercube with dimension $n$. By taking advantage of the bipartiteness of trees,  it was proven that, for any tree $T$, $h(T)\leq \lceil \frac{1}{2}\log_2(|V(T)|)\rceil$~\cite{gruslys2015catching}. Surprisingly, the computational complexity of the problem that takes a graph $G$ and an integer $k$ as inputs and aims at deciding whether $h(G) \leq k$ is still open, even if $G$ is restricted to be a tree.

In this paper, we progress further in this research direction by exhibiting new classes of graphs $G$ where $h(G)$ can be determined in polynomial time. We also define some {\it monotone} variants of the game which allow us to get new results on the initial game. 
\medskip

\noindent{\bf Graph searching games.}
The \textsc{Hunters and Rabbit} game takes place in the larger class of Graph Searching games initially introduced in~\cite{Br67,parsons1}. In these pursuit-evasion games, one player plays with a team of searchers (also called cops, hunters, etc.) that must track a fugitive (or robber, rabbit, etc.) moving in a graph. There are many games that can fall under this framework, each one specifying its own rules on, for example, the available moves of the searchers, the speed of the fugitive, whether the fugitive is visible or not, and so on. Several variations of graph searching games have been studied in the literature due to their numerous applications in artificial intelligence~\cite{app1}, robot motion planning~\cite{appRobotics}, constraint satisfaction problems and database theory~\cite{appDB3}, and distributed computing~\cite{DBLP:series/lncs/Nisse19}.
Graph Searching games have mostly been studied for their significant implications in graph theory and algorithms. In particular, many variants of these games provide  algorithmic interpretations of several width measures of graphs like treewidth~\cite{seymour}, pathwidth~\cite{parsons1}, tree-depth~\cite{depth}, hypertree-width~\cite{adler}, cycle-rank~\cite{depth}, and directed tree-width~\cite{dtwidth}. The connection between Graph Searching games and structural parameters, such as the treewidth or the pathwidth, is based on the notion of \textit{monotonicity}~\cite{BienstockS91,seymour,DBLP:journals/tcs/MazoitN08,DBLP:journals/dc/IlcinkasNS09}. 
 In short, a searchers' strategy is {\it monotone} if it ensures that the fugitive can never ``recontaminate'' a vertex, i.e., it can never access a vertex that has already been 
 ``visited'' (or ``searched'') by a searcher. The main question is then, given a game, whether ``recontamination does not help in this game''~\cite{LP93}, i.e., whether there always exists, in this game, an optimal (in terms of number of searchers) monotone winning strategy for the searchers. In particular, the monotonicity played a central role in the proof that the minimum number of searchers to capture an invisible (resp., visible) fugitive in the node-searching game played in a graph $G$ equals its pathwidth plus one~\cite{BienstockS91} (resp., treewidth plus one~\cite{seymour}). 

 Not surprisingly, the \textsc{Hunters and Rabbit} game has also a close relationship with the pathwidth of graphs. Precisely, the hunter number of any graph is at most its pathwidth plus one~\cite{AbramovskayaFGP16}. In this paper, we investigate further this relationship and, for this purpose, we define a notion of monotonicity adapted to the \textsc{Hunters and Rabbit} game and study the monotone variant of the game.

\medskip

\noindent{\bf Our contribution.}
In Section~\ref{sec:preliminaries}, we first give the main notation and definitions used throughout this paper, and we prove (or recall from previous works) several basic properties of the hunter number of graphs. In Section~\ref{sec:monotonicity}, we introduce the notion of monotonicity for the \textsc{Hunters and Rabbit} game. As discussed in Section~\ref{sec:monotonicity}, some peculiar behaviours of the \textsc{Hunters and Rabbit} game makes the definition of monotone hunter strategies not as straightforward as in classical Graph Searching games. We then prove, in Section~\ref{ssec:propMonot}, some technical properties (used later) of the monotone hunter number $mh(G)$ of a graph $G$, i.e., the minimum number of hunters needed by a monotone  strategy to win against the rabbit whatever it does in $G$. In Section~\ref{sec:Monotone&Pathwidth}, we prove that $mh(G) \in \{pw(G),pw(G)+1\}$ in any graph $G$. This result has interesting implications. Along with implying that it is \textsf{NP}-hard to compute $mh(G)$ for a graph $G$, it also implies that it is \textsf{NP}-hard to approximate $mh(G)$ up to an additive error of $|V(G)|^\varepsilon$, for $0<\varepsilon<1$. 
On the positive side, we give polynomial-time algorithms to determine $h(G)$ and/or $mh(G)$ in particular graph classes $G$ in Section~\ref{sec:classes}. Precisely, in Section~\ref{sec:split}, we show that $\omega(G)\leq h(G)\leq mh(G) \leq \omega(G)+1$ in any split graph $G$ with maximum clique of size $\omega(G)$ and precisely characterise when each bound is reached. We also precisely characterise $mh(G)$ for any interval graph $G$. In Section~\ref{ssec:cograph}, we design a linear-time algorithm that computes $mh(G)$ for any cograph $G$ and give bounds for $h(G)$ in that case. In Section~\ref{ssec:trees}, we adapt the Parsons' Lemma~\cite{parsons1} to the case of the monotone \textsc{Hunters and Rabbit} game which leads to a polynomial-time algorithm that computes $mh(T)$ for any tree $T$. In Section~\ref{sec:bip}, we investigate the monotonicity property in the case of the ``bipartite'' variant of the \textsc{Hunters and Rabbit} game (see~\cite{AbramovskayaFGP16,gruslys2015catching}). In particular, this allows us to show that, for any $k\in \mathbb{N}$, there exist trees $T$ such that $h(T)=2$ and $mh(T)\geq k$. That is, ``recontamination helps a lot'' in the \textsc{Hunters and Rabbit} game.
Finally, in Section~\ref{sec:kernel}, we show as a general positive result that the problem of deciding if $h(G)\leq k$, for some given integer $k$, is in \FPT~when parameterised by the vertex cover number of $G$. This is done through kernelisation. We close our study by providing directions for further research in Section~\ref{sec:futureDirections}.

\section{Preliminaries}\label{sec:preliminaries}


Unless mentioned otherwise, in this paper we will always deal with graphs $G=(V,E)$ that are non empty, finite, undirected, connected and simple. For any two adjacent vertices $x,y \in V$, let $xy \in E$ denote the edge between $x$ and $y$. Given a set $S \subseteq V$, let $G[S]$ denote the subgraph of $G$ induced by (the vertices in) $S$ and let $G\setminus S$ denote the subgraph $G[V \setminus S]$. For any $v \in V$ and $X \subseteq V$, let $N_X(v) = \{u \in X \mid uv \in E\}$ be the \textit{open neighbourhood} of $v$ in $X$ and let the \textit{closed neighbourhood} of $v$ in $X$ be $N_X[v]=(N_X(v) \cup \{v\}) \cap X$. If $X=V$, we simply write $N(v)$ and $N[v]$ respectively. For any $S \subseteq V$, let $N(S) = \bigcup_{v\in S} N(v)\setminus S$ and $N[S]=N(S)\cup S$. The degree $d(v)=|N(v)|$ is the number of neighbours of $v$ and let $\delta(G)=\min_{v \in V} d(v)$. An {\it independent set} of a graph $G=(V,E)$ is a subset $I$  of $V$ such that, for every $u,v \in I$, $uv \notin E$. A graph is {\it bipartite} if its vertex-set can be partitioned into two independent sets.

\paragraph{\textsc{Hunters and Rabbit} game.} The \textsc{Hunters and Rabbit} game is played between two players, Hunter and Rabbit, on a non empty, finite, undirected, connected and simple graph $G=(V,E)$. Let $k \in \mathbb{N}^*$. The Hunter player controls $k$ hunters and the Rabbit player controls a single rabbit. First, the Rabbit player places the rabbit at a vertex $r_0 \in V$. The rabbit is \textit{invisible}, that is, the position of the rabbit is not known to the hunters.
Then, the game proceeds in \textit{rounds}. In each round $i \geq 1$, first, the Hunter player selects a non empty subset $S_i \subseteq V$ of at most $k$ vertices of $G$ (we say that the vertices in $S_i$ are \textit{shot} at round $i$). If the current position $r_{i-1}$ of the rabbit is shot, i.e., if $r_{i-1} \in S_i$ (we say that the rabbit is shot), then the Hunter player wins, and the game stops. Otherwise, the rabbit must move from its current position $r_{i-1}$ to a vertex $r_i \in N(r_{i-1})$, and the next round starts. The Rabbit wins if it avoids being shot forever. 

 
 A {\it hunter strategy} in $G=(V,E)$ is a finite sequence ${\cal S}=(S_1,\dots,S_{\ell})$ of non empty subsets of vertices of $G$. Let $h({\cal S}):=\max_{1 \leq i \leq \ell} |S_i|$ and let us say that $\cal S$ {\it uses} $h({\cal S})$ hunters.  
 A {\it rabbit trajectory in $G$ starting from $W \subseteq V$} ($W$ will always be assumed non empty) is any walk $(r_0,\dots,r_{\ell})$ starting from $W$, {\it i.e.}, $r_0 \in W$ and $r_i \in N(r_{i-1})$ for every $1 \leq i \leq \ell$. A hunter strategy is {\it winning with respect to $W$} if, for every rabbit trajectory $(r_0,\dots,r_{\ell})$ starting from $W$, there exists $0 \leq j <\ell$ such that $r_j \in S_{j+1}$, that is, the rabbit is eventually shot whatever be its trajectory starting from $W$. Given a hunter strategy ${\cal S}=(S_1,\dots,S_{\ell})$, a rabbit trajectory $(r_0,\dots,r_{\ell})$ starting from $W$ is {\it winning against $\cal S$} if $r_{i} \notin S_{i+1}$ for every $0 \leq i < \ell$. A {\it winning hunter strategy} is any winning hunter strategy with respect to $V$ and a {\it rabbit trajectory} is any rabbit trajectory starting from $V$.
 
 The {\it hunter number of $G=(V,E)$ with respect to $W \subseteq V$}, denoted by $h_W(G)$, is the minimum integer $k$ such that there exists a winning hunter strategy with respect to $W$ and using $k$ hunters. Let $h(G)=h_V(G)$ be the \textit{hunter number} of $G$. Note that, for technical reasons, for a single vertex graph $G$, we set $h(G)=0$. This goes in accordance with ``the locating part'' of the game since the rabbit is already located. The Rabbit player has a {\it strategy $\cal R$ starting from $W \subseteq V$ against $k\geq 1$ hunters} if, for every hunter strategy $\cal S$ using $k$ hunters, there exists a rabbit trajectory ${\cal R}({\cal S})$ that is winning against $\cal S$. Note that, if such a strategy $\cal R$ exists, then $h_W(G)>k$. 
 
The following lemmas will be used throughout this paper.   In~\cite{AbramovskayaFGP16}, it is shown that the hunter number is closed under taking subgraphs. We first show that this result trivially extends to the case when the starting positions of the rabbit are restricted.

 
 \begin{lemma}\label{lem:subgraph}
Let $G=(V,E)$ be any graph and let $H$ be a subgraph of $G$, and let $W \subseteq V$ with $W \cap V(H) \neq \emptyset$. Then, $h_{W \cap V(H)}(H) \leq h_W(G) \leq h(G)$.
\end{lemma}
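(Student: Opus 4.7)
The plan is to prove the two inequalities separately, with the right-hand inequality being essentially immediate and the left-hand inequality following from a restriction argument.

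For the right-hand inequality $h_W(G) \le h(G)$, I would observe that any rabbit trajectory starting from $W$ is, a fortiori, a rabbit trajectory starting from $V$ (since $W \subseteq V$). Hence any winning hunter strategy (i.e., winning with respect to $V$) is automatically winning with respect to $W$, and the number of hunters it uses is unchanged. Taking an optimal winning strategy in $G$ yields one using $h(G)$ hunters that is winning with respect to $W$, so $h_W(G) \le h(G)$.

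For the left-hand inequality $h_{W \cap V(H)}(H) \le h_W(G)$, let $k = h_W(G)$ and let $\mathcal{S} = (S_1,\dots,S_\ell)$ be an optimal winning hunter strategy in $G$ with respect to $W$, so $|S_i| \le k$ for all $i$. Define a strategy $\mathcal{S}' = (S_1',\dots,S_\ell')$ in $H$ by setting $S_i' = S_i \cap V(H)$ whenever this set is non empty, and otherwise letting $S_i'$ be an arbitrary singleton of $V(H)$ (which is possible since $V(H)\neq\emptyset$, as it contains at least one vertex of $W \cap V(H)$). Clearly $\mathcal{S}'$ is a valid hunter strategy in $H$ with $|S_i'| \le \max(|S_i|,1) \le k$, so it uses at most $k$ hunters.

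It remains to argue that $\mathcal{S}'$ is winning in $H$ with respect to $W \cap V(H)$. Suppose for contradiction that there is a rabbit trajectory $(r_0,\dots,r_\ell)$ in $H$ starting from $W \cap V(H)$ that is winning against $\mathcal{S}'$. Since $H$ is a subgraph of $G$, every edge $r_i r_{i+1}$ lies in $E(G)$, so $(r_0,\dots,r_\ell)$ is also a rabbit trajectory in $G$; moreover $r_0 \in W \cap V(H) \subseteq W$. For each $0 \le i < \ell$ we have $r_i \in V(H)$ and $r_i \notin S_{i+1}'$, which together with $S_i \cap V(H) \subseteq S_i'$ imply $r_i \notin S_{i+1}$. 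This rabbit trajectory is therefore winning against $\mathcal{S}$ in $G$ starting from $W$, contradicting the choice of $\mathcal{S}$. Hence $h_{W \cap V(H)}(H) \le k = h_W(G)$.

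The only subtle point is the edge case where $S_i \cap V(H) = \emptyset$: the definition requires each $S_i'$ to be non empty, so we must insert a dummy shot, but this is harmless since adding extra vertices to a shot set can only hurt the rabbit. No other difficulty arises.
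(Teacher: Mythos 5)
Your proof is correct and follows essentially the same route as the paper's: the right-hand inequality is immediate from the definitions, and the left-hand one uses exactly the same restriction $S_i' = S_i \cap V(H)$ (padded with an arbitrary vertex of $H$ when empty), with the winning property transferred by viewing a rabbit trajectory in $H$ as one in $G$. The only cosmetic difference is that you phrase the final step as a contradiction while the paper argues directly; the content is identical.
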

\begin{proof}
By definition, $h_W(G) \leq h(G)$. Let us show the other inequality.

Let $\mathcal{S}=(S_1,\dots, S_\ell)$ be a winning hunter strategy in $G$ with respect to $W$. Let $\mathcal{S}'=(S'_1, S_2',\dots, S_\ell')$ be such that, for every $1 \leq i \leq \ell$, $S'_i=S_i \cap V(H)$ if $S_i \cap V(H) \neq \emptyset$ and $S'_i$ consists of any vertex of $V(H)$ otherwise. Then, ${\cal S}'$  is a winning hunter strategy in $H$ with respect to $W \cap V(H)$. Indeed, any rabbit trajectory $(r_0 \in W \cap V(H),r_1,\dots,r_{\ell})$ in $H$ is also a trajectory starting from $W$ in $G$. Since $\cal S$ is winning w.r.t. $W$, there exists $i< \ell$ such that $r_i \in S_{i+1}\cap V(H) \subseteq S'_{i+1}$, and so ${\cal S}'$ is winning w.r.t. $W \cap V(H)$. Moreover,  $h({\cal S}') \leq h({\cal S})$.
\end{proof}

For any hunter strategy ${\cal S}=(S_1,\dots,S_{\ell})$, it will be convenient to identify the potential positions of a rabbit (starting in $W \subseteq V$) after each round. Precisely, let ${\cal Z}^W({\cal S})=(Z^W_0({\cal S}),\dots,Z^W_{\ell}({\cal S}))$ be defined as follows. Let $Z^W_0({\cal S})=W$ and, for every $0 < i \leq \ell$, let $Z^W_i({\cal S})$ be the set of vertices $v$ such that there exists a rabbit trajectory $(r_0,r_1,\dots,r_i=v)$ such that $r_0 \in W$ and, for every $0 \leq j < i$, $r_j \notin S_{j+1}$. Formally, for any $1\leq i\leq \ell$, let $Z^W_{i}({\cal S})=\{x\in V(G)\mid\exists y\in (Z^W_{i-1}({\cal S})\setminus S_i) \wedge (xy\in E(G))\}$. Intuitively, $Z^W_i({\cal S})$ is the set of vertices that the rabbit (starting from some vertex in $W$) can have reached at the end of the $i^{th}$ round without having been shot. We will refer to the vertices in $Z^W_i({\cal S})$ as the {\it contaminated} vertices after round $i$. Note that, if $\cal S$ is winning, then $Z^W_{\ell}({\cal S})=\emptyset$. In what follows, we write $Z_i$ (resp., $Z_i({\cal S})$) instead of $Z^W_i({\cal S})$ when $\cal S$ and $W$ (resp., when $W$) are clear from the context.

We now show that we can only consider hunter strategies that consist only of ``useful shots''. 
A hunter strategy ${\cal S}=(S_1,\dots,S_{\ell})$ is said to be {\it parsimonious} if, for every $1 \leq i \leq \ell$, $S_i \subseteq Z_{i-1}({\cal S})$. Note that, if ${\cal S}$ is parsimonious, then $Z_i \neq \emptyset$ for every $i<\ell$. Note that if ${\cal S}$ is parsimonious, then it can be retrieved only from the sequence ${\cal Z}({\cal S})=(Z_0,\dots,Z_{\ell})$ of the contaminated sets. Indeed, for any $1\leq i\leq \ell$, $S_i=\{w\in Z_{i-1}\mid\exists x\in N(w)\backslash Z_{i}\}$. 

In the following lemma, we establish that we can hunt the rabbit in a parsimonious manner without increasing the number of required hunters.
 
 \begin{lemma}\label{lem:parsimonious}
 For any graph $G=(V,E)$ and any non empty subset $W \subseteq V$, there is a parsimonious winning hunter strategy in $G$ with respect to $W$ and that uses $h_W(G)$ hunters.
 \end{lemma}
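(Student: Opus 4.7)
The plan is to take any winning hunter strategy $\mathcal{S}=(S_1,\dots,S_\ell)$ with respect to $W$ that attains $h(\mathcal{S})=h_W(G)$ and to transform it into a parsimonious one by pruning, at each round, the shots that are spent on vertices where the rabbit cannot possibly be. Formally, I would build $\mathcal{S}'=(S'_1,\dots,S'_{\ell'})$ recursively while tracking $Z_{i-1}(\mathcal{S}')$: set $Z_0(\mathcal{S}')=W$, and, given $Z_{i-1}(\mathcal{S}')$, let $S'_i := S_i\cap Z_{i-1}(\mathcal{S}')$ whenever this intersection is non empty, and otherwise let $S'_i := \{v\}$ for an arbitrary $v\in Z_{i-1}(\mathcal{S}')$; stop as soon as $Z_i(\mathcal{S}')=\emptyset$, giving some $\ell'\le\ell$. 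By construction $S'_i\subseteq Z_{i-1}(\mathcal{S}')$ and $|S'_i|\le \max(|S_i|,1)\le h_W(G)$ (the trivial single-vertex case where $h(G)=0$ being handled separately by the empty strategy), so $\mathcal{S}'$ is parsimonious and uses at most $h_W(G)$ hunters.

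The central invariant to verify is $Z_i(\mathcal{S}')\subseteq Z_i(\mathcal{S})$ for every $i\le \ell'$, proved by induction on $i$. The base case $Z_0(\mathcal{S}')=W=Z_0(\mathcal{S})$ is immediate. For the inductive step, the definition of $Z_i$ given in the excerpt yields the recurrence $Z_j(\mathcal{T}) = N(Z_{j-1}(\mathcal{T})\setminus T_j)$ for any strategy $\mathcal{T}$, so it suffices to establish the pointwise inclusion $Z_{i-1}(\mathcal{S}')\setminus S'_i \subseteq Z_{i-1}(\mathcal{S})\setminus S_i$ and then take neighbourhoods. If $S_i\cap Z_{i-1}(\mathcal{S}')\ne \emptyset$, then $Z_{i-1}(\mathcal{S}')\setminus S'_i = Z_{i-1}(\mathcal{S}')\setminus S_i$, which lies inside $Z_{i-1}(\mathcal{S})\setminus S_i$ by the inductive hypothesis. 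If $S_i\cap Z_{i-1}(\mathcal{S}')=\emptyset$, then every $w\in Z_{i-1}(\mathcal{S}')\setminus\{v\}$ lies in $Z_{i-1}(\mathcal{S})$ (induction) and avoids $S_i$ (case hypothesis), giving the same inclusion.

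Since $\mathcal{S}$ is winning, $Z_\ell(\mathcal{S})=\emptyset$, so the invariant forces $Z_{\ell'}(\mathcal{S}')=\emptyset$ at some $\ell'\le \ell$, and hence $\mathcal{S}'$ is a parsimonious winning hunter strategy with respect to $W$ satisfying $h(\mathcal{S}')\le h_W(G)$; minimality of $h_W(G)$ then yields equality. The only delicate point, and the main obstacle to a cleaner formulation, is the second case of the construction: one is tempted simply to \emph{delete} any round in which $\mathcal{S}$ wastes all its shots outside $Z_{i-1}(\mathcal{S}')$, but deleting a round would desynchronise the rabbit's moves from the remaining shots and break the inductive bookkeeping of $Z_i$. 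Shooting a ``dummy'' vertex inside $Z_{i-1}(\mathcal{S}')$ preserves the shot/move alternation of the original strategy, keeps $\mathcal{S}'$ parsimonious, and can only shrink the contaminated sets relative to those of $\mathcal{S}$, which is exactly what the invariant exploits.
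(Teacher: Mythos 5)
Your proposal is correct and follows essentially the same route as the paper: both prune each $S_i$ down to the current contaminated set and substitute a singleton shot inside that set whenever the pruned shot would be empty, then verify that the contaminated sets of the new strategy are contained in those of the original. The only cosmetic difference is that you do this in a single pass with the explicit invariant $Z_i(\mathcal{S}')\subseteq Z_i(\mathcal{S})$, whereas the paper first normalises $\mathcal{S}$ (truncating after the first empty $Z_i$ and fixing wasted rounds) and then intersects with $Z_{i-1}(\mathcal{S})$, obtaining equality of the two sequences of contaminated sets.
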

 \begin{proof}
 Let $\mathcal{S}=(S_1,\dots, S_\ell)$ be a winning hunter strategy with respect to $W\subseteq V$ using at most $k\geq 1$ hunters. Let $\mathcal{Z}({\cal S})=(Z_0({\cal S}),\dots,Z_{\ell}({\cal S}))$ be the set of contaminated vertices for each round of $\mathcal{S}$. If there exists an integer $\ell' < \ell$ such that $Z_{\ell'}({\cal S}) = \emptyset$, then $\mathcal{S}=(S_1,\dots, S_{\ell'})$ is also a winning hunter strategy with respect to $W\subseteq V$ using at most $k$ hunters. Hence, we may assume that $Z_{i}({\cal S}) \neq \emptyset$ for every $0 \leq i < \ell$. 

Moreover, if there exists an integer $1 \leq i \leq \ell$ such that $S_i \cap  Z_{i-1}({\cal S}) = \emptyset$, let $h$ be the smallest such integer and let $v \in Z_{h-1}({\cal S})$. Then, $\mathcal{S}'=(S_1,\dots,S_{h-1},\{v\},S_{h+1},\dots,S_{\ell})$ is also a winning strategy with respect to $W\subseteq V$ using at most $k\geq 1$ hunters (since $S_h \cap Z_{h-1}({\cal S}) = \emptyset$). By repeating this process, we may assume that, for every $1 \leq i \leq \ell$,  $S_i \cap  Z_{i-1}({\cal S}) \neq \emptyset$.

Let $\mathcal{S}'=(S'_1, S_2',\dots, S_{\ell'}')$ be such that, for every $1 \leq i \leq \ell'$, $S'_i=S_i \cap Z_{i-1}({\cal S})$. 
It is easy to see that, for every $i \leq \ell'$, $Z_i({\cal S})=Z_i({\cal S}')$, and then ${\cal S}'$ is parsimonious. Furthermore, ${\cal S}'$  is a winning hunter strategy with respect to $W$. Indeed, since $\cal S$ is winning w.r.t. $W$,  for any rabbit trajectory $(r_0,r_1,\dots,r_{\ell})$, there exists an integer $j < \ell$ such that $r_j \in S_{j+1}$. Let $i$ be the smallest such integer. By definition, $r_i \in Z_i \cap S_{i+1}=S'_{i+1}$ and so ${\cal S}'$ is winning w.r.t. $W$.
Moreover,  $h({\cal S}') \leq h({\cal S})$.
 \end{proof}

 It must be noticed that there exist graphs $G=(V,E)$ and hunter strategies $(S_1,\dots,S_{\ell})$ that are winning in $G$ without shooting to all vertices, i.e., such that $V \setminus \bigcup_{1 \leq i \leq \ell} S_i \neq \emptyset$. For instance, in the graph $G$ that consists of a single edge $uv$, the strategy $(\{u\},\{u\})$ is a winning hunter strategy using one hunter and without shooting at $v$. Note that, in that example, there exists no winning parsimonious hunter strategy using one hunter and that shots to both $u$ and $v$. The next lemma, that characterises the set of such unshot vertices,  will be used throughout the paper.



\begin{lemma}\label{lem:independent}
Let $H$ be any non-empty connected subgraph of any graph $G=(V,E)$. Let $W \subseteq V$ such that $W\cap V(H)\neq \emptyset$. Let ${\cal S}=(S_1,\dots, S_\ell)$ be any winning hunter strategy in $G$ with respect to $W$. If $S_i\cap V(H)=\emptyset$ for all $1\leq i\leq \ell$, then $|V(H)|=1$.
\end{lemma}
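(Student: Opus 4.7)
The plan is to argue by contradiction: assuming $|V(H)| \geq 2$, construct an explicit rabbit trajectory that remains inside $V(H)$ for all $\ell$ rounds, hence is never shot by $\mathcal{S}$.

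First, I would use the hypothesis $W \cap V(H) \neq \emptyset$ to pick a starting vertex $r_0 \in W \cap V(H)$. The key observation is that since $H$ is connected and (by assumption) has at least two vertices, every vertex of $V(H)$ has at least one neighbour in $V(H)$: a vertex of a connected graph on $\geq 2$ vertices cannot be isolated. Moreover, since $H$ is a subgraph of $G$, any edge of $H$ is also an edge of $G$, so a move along an edge of $H$ is a legal rabbit move in $G$.

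Next, I would build the trajectory $(r_0, r_1, \dots, r_\ell)$ inductively: having defined $r_{i-1} \in V(H)$, pick any $r_i \in N_{V(H)}(r_{i-1})$, which is non-empty by the preceding observation. By construction, $r_i \in V(H)$ for every $0 \leq i \leq \ell$, and $r_i \in N(r_{i-1})$ for every $1 \leq i \leq \ell$, so this is indeed a rabbit trajectory in $G$ starting from $W$. Since the hypothesis $S_i \cap V(H) = \emptyset$ holds for every $1 \leq i \leq \ell$, we get $r_j \notin S_{j+1}$ for every $0 \leq j < \ell$, i.e., the trajectory is winning against $\mathcal{S}$. This contradicts $\mathcal{S}$ being a winning hunter strategy with respect to $W$, so $|V(H)| = 1$.

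There is no real obstacle here; the statement is essentially a direct translation of the definitions, with the only mild subtlety being to notice that connectedness plus $|V(H)|\geq 2$ guarantees the existence of a neighbour \emph{inside} $V(H)$ at every step, so the trajectory can be continued for arbitrarily many rounds without ever leaving $H$.
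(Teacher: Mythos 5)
Your proof is correct and follows essentially the same approach as the paper: both argue by contradiction and exhibit a rabbit trajectory confined to $V(H)$, which is never shot since no $S_i$ meets $V(H)$. The only cosmetic difference is that the paper fixes the trajectory to oscillate between a vertex $x\in W\cap V(H)$ and a single neighbour $y\in N_H(x)$, whereas you allow an arbitrary walk inside $H$; both work for the same reason (connectedness plus $|V(H)|\geq 2$ guarantees a neighbour inside $H$ at every step).
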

\begin{proof}
Let $x \in V(H)\cap W$. Towards a contradiction, assume that $|V(H)|\geq 2$. Let $y\in N_H(x)$ (it exists since $H$ is connected). Note that since $S_i\cap V(H)=\emptyset$ for all $1\leq i\leq \ell$, $\{x,y\} \cap \bigcup_{1\leq i\leq \ell} S_i = \emptyset$. Thus, the rabbit can oscillate between $x$ and $y$ during the whole game without being shot. That is, ${\cal R}=(r_0=x,r_1=y,r_2=x,\dots,r_\ell)$ is a winning rabbit trajectory against ${\cal S}$ starting from $W\cap V(H)$. This contradicts that ${\cal S}$ is a winning hunter strategy in $G$ with respect to $W$.
\end{proof}

In what follows, we will use the following result of~\cite{BOLKEMA2019360}:

\begin{lemma}\cite{BOLKEMA2019360}\label{lem:minDegree}
For any graph $G$, $h(G) \geq \delta(G)$.
\end{lemma}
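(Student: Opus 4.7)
The plan is to prove the contrapositive: if $k < \delta(G)$, then no hunter strategy in $G$ using at most $k$ hunters can be winning with respect to $W = V$. The most convenient way to encode this is through the sequence of contaminated sets $Z_i({\cal S})$ introduced before Lemma~\ref{lem:parsimonious}, since a strategy $\cal S$ of length $\ell$ is winning (w.r.t. $V$) only if $Z_\ell({\cal S}) = \emptyset$. Hence it suffices to show that the rabbit can never be cornered, i.e., that $Z_i({\cal S}) = V$ for every round $i$, regardless of how the hunter plays.

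The key step is to prove the invariant $Z_i({\cal S}) = V$ by induction on $i$. The base case $i=0$ is immediate since $Z_0({\cal S}) = V$ by definition. For the inductive step, assume $Z_{i-1}({\cal S}) = V$ and take any vertex $v \in V$. Unfolding the recursive definition $Z_i({\cal S}) = \{x \in V \mid \exists y \in (Z_{i-1}({\cal S}) \setminus S_i), xy \in E\}$, membership of $v$ in $Z_i({\cal S})$ reduces to exhibiting some neighbor of $v$ lying outside $S_i$. This is exactly where the hypothesis $k < \delta(G)$ kicks in: since $|N(v)| \geq \delta(G) > k \geq |S_i|$, a simple counting argument yields $N(v) \setminus S_i \neq \emptyset$, so $v \in Z_i({\cal S})$ and the induction closes.

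In particular $Z_\ell({\cal S}) = V \neq \emptyset$, so $\cal S$ is not winning with respect to $V$. As this holds for every hunter strategy using at most $k$ hunters whenever $k < \delta(G)$, we get $h(G) > k$ for all such $k$, hence $h(G) \geq \delta(G)$. There is no real obstacle in this argument; the only subtlety worth flagging is to apply the strict inequality $\delta(G) > k$ locally, to the neighborhood of each individual vertex, rather than globally to the whole graph, which is what makes the inductive step go through in one line.
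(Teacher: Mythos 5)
Your proof is correct. Note that the paper does not prove this lemma itself — it imports it from \cite{BOLKEMA2019360} — so there is no in-paper argument to compare against; your induction showing $Z_i(\mathcal{S})=V$ for every round $i$ (because each vertex $v$ has $d(v)\geq \delta(G)>k\geq |S_i|$ neighbours and hence always retains an unshot neighbour in $Z_{i-1}\setminus S_i$) is exactly the standard argument behind the cited result, and it is consistent with the paper's definition of the contaminated sets and with the observation that a winning strategy must end with $Z_\ell=\emptyset$.
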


The \textsc{Hunters and Rabbit} game has been particularly studied in bipartite graphs~\cite{AbramovskayaFGP16,BOLKEMA2019360,gruslys2015catching} and we continue this study in Section~\ref{sec:bip}. In what follows, bipartite graphs are referred to as $G=(V_r\cup V_w,E)$ where $(V_r,V_w)$ is implicitly a bipartition of $V(G)$ such that $V_r$ and $V_w$ are independent sets respectively. We refer to the vertices in $V_r$ (resp., in $V_w$) as the {\it red} (resp., {\it white}) vertices. 
 
In~\cite{AbramovskayaFGP16}, it is shown that, in bipartite graphs, it is sufficient to consider winning hunter strategies with respect to one of the independent sets of the bipartition. For completeness and to further motivate some of our results, we briefly recall their result. Precisely:

\begin{lemma}\cite{AbramovskayaFGP16}\label{lem:bipartition}
For any bipartite graph $G=(V_r \cup V_w,E)$, $h(G)=h_{V_r}(G)=h_{V_w}(G)$. 
\end{lemma}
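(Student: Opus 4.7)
The inequalities $h_{V_r}(G) \leq h(G)$ and $h_{V_w}(G) \leq h(G)$ follow immediately from the definitions, since any hunter strategy that is winning with respect to $V$ is, a fortiori, winning with respect to any $W \subseteq V$. It remains to establish $h(G) \leq h_{V_r}(G)$; the symmetric inequality $h(G) \leq h_{V_w}(G)$ then follows by exchanging the roles of $V_r$ and $V_w$.

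The plan exploits the parity feature of bipartite graphs: along any rabbit trajectory $(r_0,r_1,\ldots)$, the vertex $r_i$ lies in the same part as $r_0$ when $i$ is even and in the opposite part when $i$ is odd. Let $k = h_{V_r}(G)$ and let $\mathcal{S} = (S_1,\ldots,S_\ell)$ be a winning hunter strategy with respect to $V_r$ that uses $k$ hunters. Without loss of generality, I would first assume that $\ell$ is odd: if $\ell$ were even, one can append one extra arbitrary shot at the end of $\mathcal{S}$, which clearly preserves the winning property with respect to $V_r$ and uses no additional hunters.

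The key step is then to consider the concatenation $\mathcal{S}' = (S_1,\ldots,S_\ell,S_1,\ldots,S_\ell)$, which uses $h(\mathcal{S}') = h(\mathcal{S}) = k$ hunters. I would then argue that $\mathcal{S}'$ is winning in $G$ by a case analysis on the starting vertex of an arbitrary rabbit trajectory $(r_0,\ldots,r_{2\ell})$. If $r_0 \in V_r$, the first copy of $\mathcal{S}$ already shoots the rabbit. If $r_0 \in V_w$ and the rabbit survives the first copy of $\mathcal{S}$, then by the parity observation together with the fact that $\ell$ is odd, we have $r_\ell \in V_r$; hence the suffix $(r_\ell,\ldots,r_{2\ell})$ is a rabbit trajectory starting from $V_r$, and the second copy of $\mathcal{S}$ shoots it. The main point (and the only real subtlety) of the argument is the parity adjustment: without forcing $\ell$ to be odd, a rabbit starting in $V_w$ could, after phase one, be located in $V_w$, where $\mathcal{S}$ is not guaranteed to catch it in phase two.
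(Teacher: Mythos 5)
Your proposal is correct and follows essentially the same argument as the paper: the easy inequalities from the definition, followed by concatenating two copies of a winning strategy with respect to $V_r$, with a single padding shot inserted to fix the parity when $\ell$ is even (the paper places the padding shot between the two copies rather than appending it to the first, which is an immaterial difference). The parity analysis justifying why the second copy catches a rabbit starting in $V_w$ is exactly the intended point.
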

\begin{proof}
By definition,  $\max \{h_{V_r}(G),h_{V_w}(G)\} \leq h(G)$. To show that $h(G) \leq h_{V_r}(G)$ (resp.,  $h(G) \leq h_{V_w}(G)$), let $\mathcal{S}_r=(S_1,\dots, S_\ell)$ be a winning hunter strategy in $G$ with respect to $V_r$ (resp., w.r.t. $V_w$). If $\ell$ is odd, then $(S_1,\dots, S_\ell,S_1,\dots, S_\ell)$  is a winning hunter strategy, and otherwise, $(S_1,\dots, S_\ell,\{u\},S_1,\dots, S_\ell)$ where $u$ is any arbitrary vertex is a winning hunter strategy.
\end{proof}

Note that, in most of the paper, we will consider hunter strategies with respect to $V$, but in section~\ref{sec:bip}.
More precisely, in Section~\ref{sec:bip}, we will consider the \textsc{Hunters and Rabbit} game in bipartite graphs when the rabbit must start at some vertex of $V_r$. We will refer to this variant as the {\it red variant} of the game. The following remark will be widely used. 

\paragraph{Remark.} Let $G=(V_r \cup V_w,E)$ be a bipartite graph and $\mathcal{S}_r=(S_1,\dots, S_\ell)$ be a parsimonious 
hunter strategy in $G$ with respect to $V_r$. Then, for every $1 \leq i \leq \lceil \ell/2 \rceil$, $S_{2i-1} \subseteq Z_{2i-2} \subseteq V_r$ and (if $2i\leq \ell$) $S_{2i} \subseteq  Z_{2i-1} \subseteq V_w$. Indeed, in a bipartite graph, if the rabbit starts at a vertex in $V_r$ (resp., $V_w$), it must occupy a vertex of $V_r$ at the end of every even (resp., odd) round and a vertex of $V_w$ at the end of every odd  (resp., even)  round.

\section{Monotonicity}\label{sec:monotonicity} 

In classical graph pursuit-evasion games, an important notion is that of {\it monotonicity}. Without going into the details, in these games, a strategy is {\it monotone} if the area reachable by the fugitive never increases. Said differently, in the particular case of graph searching games, a strategy is monotone if, once a searcher is removed from one vertex, it is never necessary to occupy this vertex during a subsequent round (note that, in some specific cases, for instance in directed graphs, these two definitions are not rigorously equivalent~\cite{Adler07}). Monotone strategies have been widely studied~\cite{BienstockS91,YangDA09,DBLP:journals/tcs/MazoitN08} because, on the one hand, it is generally easier to design them and, on the other hand, monotone strategies have length polynomial in the size of the graph and, so, corresponding decision problems (is there a monotone strategy using $k$ searchers?) can be proven to be in \textsf{NP}.

It is clear that such a definition is not suitable to the \textsc{Hunters and Rabbit} game. Indeed, consider the graph that consists of a single edge $uv$: the hunter must shoot at some vertex, say $u$, and, if the rabbit was at $v$, it will move to $u$, i.e., the vertex $u$ is ``recontaminated''. Therefore, we propose to define monotonicity in the \textsc{Hunters and Rabbit} game as follows (see the formal definition below): once a vertex has been 
``cleared'', if the rabbit can access it in a subsequent round, then the vertex must be shot immediately. 

In classical graph searching games, a vertex being cleared at some round means that the searchers' strategy ensures that the fugitive cannot occupy this vertex at this round. Being recontaminated can then be intuitively defined by the fact that a vertex can be reached by the fugitive while having been cleared in a previous round. This intuitive definition does not make any sense in the \textsc{Hunters and Rabbit} game and, in particular, in its red variant in bipartite graphs. Indeed, in such case, every red vertex is cleared at every odd round and so, looking for a strategy without recontamination would be meaningless. To overcome this difficulty, we propose to define the clearing of a vertex at some round by the fact that the actions of the hunters ensure that this vertex cannot be occupied by the rabbit at this round. 

A related difficulty comes from the fact that, contrary to classical graph searching games, a vertex may be ``cleared'' without having been shot during the game.  Recall, for instance, our previous discussion for the graph consisting of a single edge. As a less trivial example, consider a star with three leaves whose edges have been subdivided once each. Then, assuming that the leaves and the centre are red, in the red variant, it is possible for one hunter to win without shooting any of the leaves (while any of the leaves may be occupied by the rabbit initially). Indeed, consider the strategy for one hunter where on every odd round it shoots on the centre and on every even round it shoots on an arbitrary neighbour of the centre that was not previously shot. Figure~\ref{fig:example1} illustrates the above strategy.

\begin{figure}[!t]
\centering
\captionsetup{justification=centering}
\subfloat[$Z_0=\{a,c,e,g\}$]{
\scalebox{0.4}{
\begin{tikzpicture}[inner sep=1.5mm]
    \node[draw, circle, line width=1pt, fill=red](u1) at (2,8)[label=left: {\Huge$a$}] {};
    \node[draw, circle, line width=1pt, fill=white](u2) at (2,6)[label=left: {\Huge$b$}] {};
    \node[draw, circle, line width=1pt, fill=red](u3) at (2,4)[label=left: {\Huge$c$}] {};
    \node[draw, circle, line width=1pt, fill=white](u4) at (0.5,2)[label=right: {\Huge$d$}] {};
    \node[draw, circle, line width=1pt, fill=red](u5) at (-1,0)[label=right: {\Huge$e$}] {};
    \node[draw, circle, line width=1pt, fill=white](u6) at (3.5,2)[label=left: {\Huge$f$}] {};
    \node[draw, circle, line width=1pt, fill=red](u7) at (5,0)[label=left: {\Huge$g$}] {};
    
    \node[] at (6,0) {\includegraphics[scale=0.4]{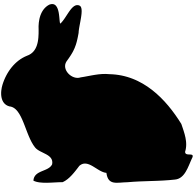}};
    \node[] at (-2,0) {\includegraphics[scale=0.4]{figure/rabbit.png}};
    \node[] at (3,4) {\includegraphics[scale=0.4]{figure/rabbit.png}};
    \node[] at (3,8) {\includegraphics[scale=0.4]{figure/rabbit.png}};
	
	\draw[-, line width=1pt]  (u1) -- (u2);
    \draw[-, line width=1pt]  (u2) -- (u3);
    \draw[-, line width=1pt]  (u3) -- (u4);
    \draw[-, line width=1pt]  (u3) -- (u6);
    \draw[-, line width=1pt]  (u4) -- (u5);
    \draw[-, line width=1pt]  (u6) -- (u7);

\end{tikzpicture}
}
}
\hspace{40pt}
\subfloat[$S_1=\{c\}$, \newline $Z_1=\{b,d,f\}$]{
\scalebox{0.4}{
\begin{tikzpicture}[inner sep=1.5mm]
\captionsetup{justification=centering}

    \node[draw, circle, line width=1pt, fill=red](u1) at (2,8)[label=left: {\Huge$a$}] {};
    \node[draw, circle, line width=1pt, fill=white](u2) at (2,6)[label=left: {\Huge$b$}] {};
    \node[draw, circle, line width=1pt, fill=red](u3) at (2,4)[label=left: {\Huge$c$}] {};
    \node[draw, circle, line width=1pt, fill=white](u4) at (0.5,2)[label=right: {\Huge$d$}] {};
    \node[draw, circle, line width=1pt, fill=red](u5) at (-1,0)[label=right: {\Huge$e$}] {};
    \node[draw, circle, line width=1pt, fill=white](u6) at (3.5,2)[label=left: {\Huge$f$}] {};
    \node[draw, circle, line width=1pt, fill=red](u7) at (5,0)[label=left: {\Huge$g$}] {};
	
	\draw[-, line width=1pt]  (u1) -- (u2);
    \draw[-, line width=1pt]  (u2) -- (u3);
    \draw[-, line width=1pt]  (u3) -- (u4);
    \draw[-, line width=1pt]  (u3) -- (u6);
    \draw[-, line width=1pt]  (u4) -- (u5);
    \draw[-, line width=1pt]  (u6) -- (u7);

    \node[](t) at (u3) {\fontsize{50pt}{58pt}\selectfont \ding{55}};

    \node[] at (3,6) {\includegraphics[scale=0.4]{figure/rabbit.png}};
    \node[] at (-0.5,2) {\includegraphics[scale=0.4]{figure/rabbit.png}};
    \node[] at (4.5,2) {\includegraphics[scale=0.4]{figure/rabbit.png}};

\end{tikzpicture}
}
}
\hspace{40pt}
\subfloat[$S_2=\{d\}$, $Z_2=\{a,c,g\}$]{
\scalebox{0.4}{
\begin{tikzpicture}[inner sep=1.5mm]

    \node[draw, circle, line width=1pt, fill=red](u1) at (2,8)[label=left: {\Huge$a$}] {};
    \node[draw, circle, line width=1pt, fill=white](u2) at (2,6)[label=left: {\Huge$b$}] {};
    \node[draw, circle, line width=1pt, fill=red](u3) at (2,4)[label=left: {\Huge$c$}] {};
    \node[draw, circle, line width=1pt, fill=white](u4) at (0.5,2)[label=right: {\Huge$d$}] {};
    \node[draw, circle, line width=1pt, fill=red](u5) at (-1,0)[label=right: {\Huge$e$}] {};
    \node[draw, circle, line width=1pt, fill=white](u6) at (3.5,2)[label=left: {\Huge$f$}] {};
    \node[draw, circle, line width=1pt, fill=red](u7) at (5,0)[label=left: {\Huge$g$}] {};
	
	\draw[-, line width=1pt]  (u1) -- (u2);
    \draw[-, line width=1pt]  (u2) -- (u3);
    \draw[-, line width=1pt]  (u3) -- (u4);
    \draw[-, line width=1pt]  (u3) -- (u6);
    \draw[-, line width=1pt]  (u4) -- (u5);
    \draw[-, line width=1pt]  (u6) -- (u7);

    \node[](t) at (u4) {\fontsize{50pt}{58pt}\selectfont \ding{55}};

    \node[] at (6,0) {\includegraphics[scale=0.4]{figure/rabbit.png}};
    \node[] at (3,4) {\includegraphics[scale=0.4]{figure/rabbit.png}};
    \node[] at (3,8) {\includegraphics[scale=0.4]{figure/rabbit.png}};

\end{tikzpicture}
}
}

\subfloat[$S_3=\{c\}$, $Z_3=\{b,f\}$]{
\scalebox{0.4}{
\begin{tikzpicture}[inner sep=1.5mm]

    \node[draw, circle, line width=1pt, fill=red](u1) at (2,8)[label=left: {\Huge$a$}] {};
    \node[draw, circle, line width=1pt, fill=white](u2) at (2,6)[label=left: {\Huge$b$}] {};
    \node[draw, circle, line width=1pt, fill=red](u3) at (2,4)[label=left: {\Huge$c$}] {};
    \node[draw, circle, line width=1pt, fill=white](u4) at (0.5,2)[label=right: {\Huge$d$}] {};
    \node[draw, circle, line width=1pt, fill=red](u5) at (-1,0)[label=right: {\Huge$e$}] {};
    \node[draw, circle, line width=1pt, fill=white](u6) at (3.5,2)[label=left: {\Huge$f$}] {};
    \node[draw, circle, line width=1pt, fill=red](u7) at (5,0)[label=left: {\Huge$g$}] {};
	
	\draw[-, line width=1pt]  (u1) -- (u2);
    \draw[-, line width=1pt]  (u2) -- (u3);
    \draw[-, line width=1pt]  (u3) -- (u4);
    \draw[-, line width=1pt]  (u3) -- (u6);
    \draw[-, line width=1pt]  (u4) -- (u5);
    \draw[-, line width=1pt]  (u6) -- (u7);

    \node[](t) at (u3) {\fontsize{50pt}{58pt}\selectfont \ding{55}};

    \node[] at (3,6) {\includegraphics[scale=0.4]{figure/rabbit.png}};
    \node[] at (4.5,2) {\includegraphics[scale=0.4]{figure/rabbit.png}};

\end{tikzpicture}
}
}
\hspace{40pt}
\subfloat[$S_4=\{f\}$, $Z_4=\{a,c\}$]{
\scalebox{0.4}{
\begin{tikzpicture}[inner sep=1.5mm]

    \node[draw, circle, line width=1pt, fill=red](u1) at (2,8)[label=left: {\Huge$a$}] {};
    \node[draw, circle, line width=1pt, fill=white](u2) at (2,6)[label=left: {\Huge$b$}] {};
    \node[draw, circle, line width=1pt, fill=red](u3) at (2,4)[label=left: {\Huge$c$}] {};
    \node[draw, circle, line width=1pt, fill=white](u4) at (0.5,2)[label=right: {\Huge$d$}] {};
    \node[draw, circle, line width=1pt, fill=red](u5) at (-1,0)[label=right: {\Huge$e$}] {};
    \node[draw, circle, line width=1pt, fill=white](u6) at (3.5,2)[label=left: {\Huge$f$}] {};
    \node[draw, circle, line width=1pt, fill=red](u7) at (5,0)[label=left: {\Huge$g$}] {};
	
	\draw[-, line width=1pt]  (u1) -- (u2);
    \draw[-, line width=1pt]  (u2) -- (u3);
    \draw[-, line width=1pt]  (u3) -- (u4);
    \draw[-, line width=1pt]  (u3) -- (u6);
    \draw[-, line width=1pt]  (u4) -- (u5);
    \draw[-, line width=1pt]  (u6) -- (u7);

    \node[](t) at (u6) {\fontsize{50pt}{58pt}\selectfont \ding{55}};

    \node[] at (3,4) {\includegraphics[scale=0.4]{figure/rabbit.png}};
    \node[] at (3,8) {\includegraphics[scale=0.4]{figure/rabbit.png}};

\end{tikzpicture}
}
}
\hspace{40pt}
\subfloat[$S_5=\{c\}$, $Z_5=\{b\}$]{
\scalebox{0.4}{
\begin{tikzpicture}[inner sep=1.5mm]

    \node[draw, circle, line width=1pt, fill=red](u1) at (2,8)[label=left: {\Huge$a$}] {};
    \node[draw, circle, line width=1pt, fill=white](u2) at (2,6)[label=left: {\Huge$b$}] {};
    \node[draw, circle, line width=1pt, fill=red](u3) at (2,4)[label=left: {\Huge$c$}] {};
    \node[draw, circle, line width=1pt, fill=white](u4) at (0.5,2)[label=right: {\Huge$d$}] {};
    \node[draw, circle, line width=1pt, fill=red](u5) at (-1,0)[label=right: {\Huge$e$}] {};
    \node[draw, circle, line width=1pt, fill=white](u6) at (3.5,2)[label=left: {\Huge$f$}] {};
    \node[draw, circle, line width=1pt, fill=red](u7) at (5,0)[label=left: {\Huge$g$}] {};
	
	\draw[-, line width=1pt]  (u1) -- (u2);
    \draw[-, line width=1pt]  (u2) -- (u3);
    \draw[-, line width=1pt]  (u3) -- (u4);
    \draw[-, line width=1pt]  (u3) -- (u6);
    \draw[-, line width=1pt]  (u4) -- (u5);
    \draw[-, line width=1pt]  (u6) -- (u7);

    \node[](t) at (u3) {\fontsize{50pt}{58pt}\selectfont \ding{55}};

    \node[] at (3,6) {\includegraphics[scale=0.4]{figure/rabbit.png}};

\end{tikzpicture}
}
}

\subfloat[$S_6=\{b\}$, $Z_6=\emptyset$]{
\scalebox{0.4}{
\begin{tikzpicture}[inner sep=1.5mm]

    \node[draw, circle, line width=1pt, fill=red](u1) at (2,8)[label=left: {\Huge$a$}] {};
    \node[draw, circle, line width=1pt, fill=white](u2) at (2,6)[label=left: {\Huge$b$}] {};
    \node[draw, circle, line width=1pt, fill=red](u3) at (2,4)[label=left: {\Huge$c$}] {};
    \node[draw, circle, line width=1pt, fill=white](u4) at (0.5,2)[label=right: {\Huge$d$}] {};
    \node[draw, circle, line width=1pt, fill=red](u5) at (-1,0)[label=right: {\Huge$e$}] {};
    \node[draw, circle, line width=1pt, fill=white](u6) at (3.5,2)[label=left: {\Huge$f$}] {};
    \node[draw, circle, line width=1pt, fill=red](u7) at (5,0)[label=left: {\Huge$g$}] {};
	
	\draw[-, line width=1pt]  (u1) -- (u2);
    \draw[-, line width=1pt]  (u2) -- (u3);
    \draw[-, line width=1pt]  (u3) -- (u4);
    \draw[-, line width=1pt]  (u3) -- (u6);
    \draw[-, line width=1pt]  (u4) -- (u5);
    \draw[-, line width=1pt]  (u6) -- (u7);

    \node[](t) at (u2) {\fontsize{50pt}{58pt}\selectfont \ding{55}};

\end{tikzpicture}
}
}
\captionsetup{justification=justified}
\caption{Example of a bipartite graph (where $V_r=\{a,c,e,g\}$ corresponds to the red part of the bipartition, illustrated by the red vertices in the figures) and of a parsimonious winning strategy with respect to $V_r$, such that no vertex in $\{a,e,g\}$ is ever shot. Each subfigure depicts the situation at the end of the corresponding round. In round $0$, the rabbit occupies any vertex in $\{a,c,e,g\}$. Then, in round $1$, the hunter shoots the vertex $c$ (depicted as the cross over the corresponding vertex of subfigure~(b)) and the rabbit moves to one of the vertices in $\{b,d,f\}$. The game continues until the end of round $6$ (subfigure~(g)), at which point the hunter is sure to shoot the rabbit in vertex $b$. Formally, we have ${\cal S} =(\{c\},\{d\},\{c\},\{f\},\{c\},\{b\})$ and ${\cal Z}({\cal S}) = (\{a,c,e,g\}, \{b,d,f\}, \{a,c,g\},\{b,f\},\{a,c\},\{b\}, \emptyset)$.}
\label{fig:example1}
\end{figure}

Therefore, two actions of the hunters may clear a vertex: either a hunter shoots a vertex $v$ at round $i$ and does not shoot the rabbit (i.e. there is no rabbit trajectory, such that $r_{i-1}=v$, that is winning against a strategy shooting at $v$ at round $i$), or the hunters shoot at every contaminated vertex in the neighbourhood of $v$. In this case, either $v$ was occupied and the rabbit has to leave $v$, or it was not and cannot be occupied after the move of the rabbit. In both cases, $v \notin Z_i$. This discussion motivates the following definition for the monotonicity of hunter strategies.

\subsection{Definition of monotone strategies and first properties} \label{ssec:propMonot}

Given a graph $G$, a winning hunter strategy $\mathcal{S}$ in $G$ with respect to $W\subseteq V$, is {\it monotone} if for every vertex $v\in V$, once $v$ has been ``cleared'', then it is shot again every time the rabbit can potentially reach $v$. Formally, we say that a vertex $v$ is {\it cleared} at round $i$ if either $v\in S_i$ or $N(v)\cap Z_{i-1}\neq \emptyset$ and $N(v)\cap Z_{i-1} \subseteq S_i$. Note that, in the second condition, the fact that we require that $N(v)\cap Z_{i-1}\neq \emptyset$ comes from technicalities when $W \neq V$.
A strategy $\mathcal{S}=(S_1,\dots,S_{\ell})$ is {\it monotone} if, for every vertex $v\in V$, if there exists an $i$ such that $v$ is cleared at round $i$, then for every $j>i$ such that $v\in Z_j$, the strategy ensures that $v\in S_{j+1}$. A vertex $v$ is {\it recontaminated} at round $j$ if there exists $i\leq j$ such that $v$ is cleared at round $i$ and $v \in Z_j\setminus S_{j+1}$.

The \textit{monotone hunter number} of a graph $G$ with respect to $W \subseteq V(G)$, denoted by $mh_W(G)$, is the minimum number $k$ such that $k$ hunters have a monotone winning hunter strategy in $G$ with respect to $W$. Let us denote the \textit{monotone hunter number} $mh_{V}(G)$ of $G$ by $mh(G)$. Note that, by definition:
\begin{proposition}\label{prop:relationMonotone&nonM}
For every graph $G=(V,E)$ and $W \subseteq V$, $h_W(G) \leq mh_W(G) \leq mh(G)$.
\end{proposition}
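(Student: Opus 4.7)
My plan is to prove the two inequalities separately, both essentially from the definitions.

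For $h_W(G) \leq mh_W(G)$ I would dispatch the argument in a single line: every monotone winning hunter strategy with respect to $W$ is, in particular, a winning hunter strategy with respect to $W$, so the minimum number of hunters over the monotone class is at least the minimum over the full class.

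For $mh_W(G) \leq mh(G) = mh_V(G)$ I would take any monotone winning hunter strategy $\mathcal{S}=(S_1,\dots,S_\ell)$ with respect to $V$ using $mh(G)$ hunters, and argue that the same $\mathcal{S}$ is also a monotone winning hunter strategy with respect to $W$. First, a straightforward induction on $i$, with base case $Z^W_0=W\subseteq V=Z^V_0$ and inductive step using the common recurrence $Z_i=\{x\in V\mid\exists y\in(Z_{i-1}\setminus S_i),\;xy\in E\}$, shows $Z^W_i(\mathcal{S})\subseteq Z^V_i(\mathcal{S})$ for every $0\leq i\leq\ell$. In particular $Z^W_\ell\subseteq Z^V_\ell=\emptyset$, so $\mathcal{S}$ is winning with respect to $W$. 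Second, I would verify that $\mathcal{S}$ remains monotone with respect to $W$: if $v$ is cleared at round $i$ with respect to $W$, a case analysis on the two clearing criteria combined with the inclusion $Z^W_{i-1}\subseteq Z^V_{i-1}$ shows that $v$ is also cleared at some round $i'\leq i$ with respect to $V$, and then for any $j>i$ with $v\in Z^W_j\subseteq Z^V_j$ one has $j>i'$, so the $V$-monotonicity of $\mathcal{S}$ yields the required $v\in S_{j+1}$.

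The main obstacle will be the case analysis of clearing, specifically the sub-case in which $v$ is cleared at round $i$ with respect to $W$ via the neighbourhood condition $\emptyset\neq N(v)\cap Z^W_{i-1}\subseteq S_i$ while the analogous condition with respect to $V$ fails because $v$ has extra $V$-contaminated neighbours outside $Z^W_{i-1}$ that are not shot at round $i$. To handle this, I would invoke Lemma~\ref{lem:parsimonious} to assume $\mathcal{S}$ is parsimonious (checking that parsimony preserves monotonicity) and use that $\mathcal{S}$ is winning with respect to $V$ to locate a $V$-clearing round for $v$ not later than any problematic $W$-recontamination round $j$, from which $V$-monotonicity again gives $v\in S_{j+1}$.
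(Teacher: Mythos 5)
Your first inequality is correct (and is genuinely definitional), and so is the part of your second argument showing that $\mathcal{S}$ remains \emph{winning} with respect to $W$ via the inclusion $Z^W_i(\mathcal{S})\subseteq Z^V_i(\mathcal{S})$. The gap is exactly at the sub-case you flag, and neither your main claim nor your fallback closes it. The assertion that a vertex cleared at round $i$ with respect to $W$ is cleared at some round $i'\leq i$ with respect to $V$ is false, and so is the weaker assertion that a $V$-clearing round exists before the problematic recontamination round. Concretely, let $G$ be the tree with centre $c$ and three paths $c-u_t-w_t$ ($t=1,2,3$), and let $\mathcal{S}=(\{w_1,u_1\},\{u_1,c\},\{c,u_3\},\{c,u_3\},\{c,u_2\},\{u_2\})$. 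One checks that $\mathcal{S}$ is a parsimonious monotone winning hunter strategy with respect to $V$ using two hunters (it is the sweep produced by the Parsons-type construction of Lemma~\ref{lem:parson}). Now take $W=\{c,u_3\}$. Then $Z^W_1=\{c,u_1,u_2,u_3,w_3\}$, so $N(u_2)\cap Z^W_1=\{c\}\subseteq S_2$ and $u_2$ is cleared at round $2$ with respect to $W$; it is \emph{not} cleared at that round with respect to $V$, because $w_2\in N(u_2)\cap Z^V_1\setminus S_2$, and in fact $u_2$ is first $V$-cleared only at round $5$, when it is shot. Continuing, $Z^W_2=\{c,u_3,w_2,w_3\}$ and $Z^W_3=\{u_2,u_3\}$, so $u_2\in Z^W_3\setminus S_4$: the vertex $u_2$ is recontaminated with respect to $W$, and $\mathcal{S}$ is not monotone with respect to $W$. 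Since this $\mathcal{S}$ is already parsimonious, preprocessing via Lemma~\ref{lem:parsimonious} (or Lemma~\ref{lem:monotoneParsimonious}) changes nothing, and there is no $V$-clearing round for $u_2$ at or before round $3$ for $V$-monotonicity to act on.

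So the inequality $mh_W(G)\leq mh(G)$ cannot be obtained by reusing a $V$-monotone strategy verbatim: in the example one must exhibit a different $W$-monotone strategy with two hunters (e.g.\ the one-round strategy $(\{c,u_3\})$, or the $W$-parsimonious modification of $\mathcal{S}$). To be fair to you, the paper itself offers no proof here --- it asserts the whole proposition ``by definition'', and you have correctly isolated the one place where the statement is not definitional (the very technicality the paper alludes to when requiring $N(v)\cap Z_{i-1}\neq\emptyset$ in the definition of clearing). But your proposed repair does not work; a correct proof of $mh_W(G)\leq mh(G)$ requires an actual construction, for instance an argument that some modification of a $V$-monotone strategy (such as its $W$-parsimonious version) is $W$-monotone, and no lemma in the paper supplies this.
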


In this subsection, we prove some general properties of (non-)monotone strategies.
Let us start with two technical claims that will be used in several proofs below.

\begin{proposition}\label{prop:reach}
Let $\mathcal{S} = (S_1,\ldots, S_\ell)$ be a hunter strategy in a graph $G=(V,E)$. Let $v\in V$ and $1\leq i \leq \ell$. If there exists a vertex $u \in N(v)$ and a vertex $x\in N(u)$ (possibly $x=v$) such that $u\notin \bigcup_{j\leq i}S_j$ and $x\notin \bigcup_{j< i}S_j$, then $v \in Z_{p}$ for each $p\leq i$.
 \end{proposition}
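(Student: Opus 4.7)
The plan is to prove the statement by constructing an explicit rabbit trajectory $(r_0,r_1,\ldots,r_p)$ with $r_p=v$ that survives the first $p$ rounds of $\mathcal{S}$, which by definition witnesses $v \in Z_p$.

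The trajectory I have in mind exploits the two ``safe'' vertices $u$ and $x$: the rabbit oscillates between them and then jumps to $v$ at the last step along the edge $uv$. Concretely, I would set $r_p := v$ and, for every $0 \leq j \leq p-1$, define
\[
r_j := \begin{cases} u & \text{if } p-j \text{ is odd,} \\ x & \text{if } p-j \text{ is even.} \end{cases}
\]
In particular $r_{p-1}=u$, so the final move $r_{p-1}\to r_p$ uses the edge $uv$, while every other consecutive pair $(r_j,r_{j+1})$ with $j<p-1$ alternates between $u$ and $x$ and is therefore an edge since $x\in N(u)$. This works uniformly whether $x=v$ or $x\ne v$ (if $x=v$ the trajectory simply revisits $v$ along the way, which is allowed). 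Note also $r_0\in\{u,x\}\subseteq V$, so the trajectory starts in $W=V$ as required for membership in $Z_p$.

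It remains to verify that $r_j \notin S_{j+1}$ for every $0\leq j\leq p-1$, which is the only non-trivial step. I would split by which vertex $r_j$ is. When $r_j = u$, the index $j+1$ satisfies $j+1\leq p\leq i$, and by hypothesis $u\notin\bigcup_{k\leq i}S_k$, so $u\notin S_{j+1}$. When $r_j = x$, the definition forces $p-j$ to be even and positive, hence $j\leq p-2$, giving $j+1\leq p-1\leq i-1$; since $x\notin\bigcup_{k<i}S_k$, again $x\notin S_{j+1}$. The degenerate case $p=0$ is handled separately by $r_0=v\in V=Z_0$.

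The main subtlety I expect is getting the parities aligned with the two different hypotheses on $u$ and $x$ — the assumption on $u$ covers rounds up to $i$ while the one on $x$ only covers rounds up to $i-1$, which is precisely why the trajectory must end with the move from $u$ (not from $x$) into $v$. Anchoring the oscillation from the right end by the parity of $p-j$ rather than of $j$ itself makes this alignment automatic regardless of whether $p$ is even or odd, and also removes any case distinction for $x=v$.
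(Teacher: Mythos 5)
Your proof is correct and follows essentially the same approach as the paper's: both construct the oscillating trajectory on $\{u,x\}$ ending with the move $u\to v$, the paper simply splitting into two explicit cases according to the parity of $p$ where you unify them via the parity of $p-j$. Your verification that $x$ is only ever required to avoid shots at rounds $\leq i-1$ is in fact slightly more careful than the paper's, which asserts this without spelling out the index bound.
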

 \begin{proof}
     This clearly holds if $p=0$ since $Z_0=V$. If $p=1$, there exists a rabbit trajectory $(r_0=u \in N(v) \setminus S_1, r_1=v)$  and so $v \in Z_1$. Hence, we assume that $p>1$. 
    
     The rabbit can follow the following strategy depending on whether $p$ is odd or even:
     \begin{enumerate}
         \item $p$ is odd: The rabbit can follow the following trajectory: $(r_0 = u,r_1=x,r_2=u, \ldots, r_{p-2}=x,r_{p-1} = u, r_{p}=v)$ where, for $q<p$, $r_{q} = u$ if $q$ is even and $r_q = x$ if $q$ is odd. 
         \item $p$ is even: The rabbit can follow the following trajectory: $(r_0 = x,r_1=u,r_2=x, \ldots, r_{p-2}=x, r_{p-1} = u, r_{p}=v)$ where, for $q<p$, $r_{q} = x$ if $q$ is even and $r_q = u$ if $q$ is odd.
     \end{enumerate}
In both cases, for every $0\leq j <p$, $r_j\notin S_{j+1}$ since $p\leq i$, $u \notin \bigcup_{j\leq  i}S_j$ and $x\notin \bigcup_{j< i}S_j$. Therefore, $v\in Z_p$.
 \end{proof}

 The next lemma shows that, as expected, if the hunters follow a monotone strategy, the set of potential positions for the rabbit cannot increase.

 \begin{lemma}\label{lem:reach2}
 Let $G=(V,E)$ be a graph with at least two vertices. Let $\mathcal{S} = (S_1,\ldots, S_\ell)$ be a monotone hunter strategy in $G$. For any $0 \leq p \leq i \leq \ell$, $Z_i \subseteq Z_p$.
 \end{lemma}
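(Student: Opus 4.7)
The plan is to prove the statement by strong induction on $i$, with the base case $i=0$ being trivial. Assuming the inductive hypothesis $Z_{i'} \subseteq Z_p$ for all $i' < i$ and $p \leq i'$, it suffices to show $Z_i \subseteq Z_{i-1}$; the desired containment for arbitrary $p \leq i$ then follows by transitivity of set inclusion.

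To establish $Z_i \subseteq Z_{i-1}$, I would fix $v \in Z_i$ and argue by contradiction that $v \notin Z_{i-1}$ is impossible. By the definition of $Z_i$, there exists $u \in N(v) \cap (Z_{i-1} \setminus S_i)$. Let $j \in \{1, \ldots, i-1\}$ be the smallest index for which $v \notin Z_j$. The inductive hypothesis gives $u \in Z_{i-1} \subseteq Z_{j-1}$, and since $v \notin Z_j$ we have $N(v) \cap Z_{j-1} \subseteq S_j$, so $u \in S_j$; in particular $u$ is cleared at round $j$. When $j < i-1$, applying monotonicity to $u$ with $j'' = i-1 > j$ and using $u \in Z_{i-1}$ forces $u \in S_i$, contradicting $u \notin S_i$.

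The hard part will be the boundary case $j = i-1$, in which direct monotonicity only yields $u \in S_{i-1}$ rather than $u \in S_i$. To handle it, I would exploit that $u \in Z_{i-1}$ guarantees a witness $w \in N(u) \cap (Z_{i-2} \setminus S_{i-1})$, and a careful monotonicity argument applied to both $u$ and $w$ rules out their being shot at strictly earlier rounds (any such earlier clearing would, via monotonicity at round $i-1$, force membership in $S_i$). Proposition~\ref{prop:reach} can then be invoked with an appropriately chosen neighbour pair $(u', x')$ of $v$ to produce a rabbit trajectory that reaches $v$ at round $i-1$, contradicting $v \notin Z_{i-1}$. This delicate handling of the boundary round, where the monotonicity constraint aligns with the clearing round rather than with the round that must be controlled, is the central technical difficulty of the proof.
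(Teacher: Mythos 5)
Your main case ($j < i-1$) is fine, but the boundary case $j = i-1$ --- which you correctly flag as the crux --- is not actually closed by the argument you sketch, and under the literal reading of the paper's definition of monotonicity it cannot be. In that case you have $u \in N(v)$ with $u \in S_{i-1} \cap Z_{i-1} \setminus S_i$: since $u$ is cleared at round $i-1$ and you only know $u \in Z_{i-1}$, the clause ``for every $j' > i-1$ with $u \in Z_{j'}$, \dots'' gives you nothing at round $i-1$ itself. Your plan to then invoke Proposition~\ref{prop:reach} to place $v$ in $Z_{i-1}$ needs a neighbour of $v$ that is unshot through round $i-1$, but the only neighbour of $v$ you control is $u$, and $u$ is shot exactly at round $i-1$. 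The obstruction is real, not just a gap in exposition: on the path $v-u-w$ the strategy $(\{u\},\{w\},\{v,w\})$ satisfies the literal definition (no vertex cleared at some round $q$ lies in $Z_{j'}\setminus S_{j'+1}$ for any $j'>q$), yet $Z_1=\{u\}$ and $Z_2=\{v,w\}\not\subseteq Z_1$. So no amount of ``careful monotonicity argument'' applied to $u$ and $w$ can rescue the boundary case as stated.

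What the paper actually uses (consistently with its definition of a vertex being \emph{recontaminated} at round $j$, which allows the clearing round $i$ to equal $j$) is the stronger convention that a vertex cleared at round $q$ and lying in $Z_{q}\setminus S_{q+1}$ already violates monotonicity. Under that reading your boundary case is an immediate contradiction ($u$ is cleared and recontaminated at round $i-1$), and in fact your entire induction becomes unnecessary: the paper's proof simply takes a witness trajectory $(r_0,\dots,r_{i-2}=x,r_{i-1}=u,r_i=v)$ for $v\in Z_i$, uses this reading of monotonicity to conclude $u\notin\bigcup_{q\le i}S_q$ and $x\notin\bigcup_{q\le i-1}S_q$, and applies Proposition~\ref{prop:reach} once to obtain $v\in Z_p$ for all $p\le i$ directly. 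You should either adopt that convention explicitly --- which disposes of your hard case in one line --- or restructure along the paper's direct argument.
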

 \begin{proof}
     This clearly holds if $p=0$ or $i=1$ since $Z_0=V$. Hence, let us assume that $p\geq 1$ and $i>1$. Let $v \in Z_i$. Since $v\in Z_i$, there exists a rabbit trajectory $R=(r_0,\dots, r_{i-2}=x,r_{i-1}=u, r_i=v)$ such that, for any $0\leq j<i$, $r_j\notin S_{j+1}$. By definition of a rabbit trajectory, $u \in N(v)$ and $x\in N(u)$. Moreover, by monotonicity of ${\cal S}$, since $u\in Z_{i-1}\setminus S_i$ (resp. $x\in Z_{i-2}\setminus S_{i-1}$), $u\notin \bigcup_{q\leq i} S_q$ (resp. $x\notin \bigcup_{q\leq i-1} S_q$). By Proposition~\ref{prop:reach}, $v \in Z_{p}$ for each $p\leq i$.
%
 \end{proof}

The next lemma states that, for any non-monotone strategy, there must exist a vertex that has been shot at some round and that is recontaminated later (recall that it is not trivial since a vertex may be recontaminated without being previously shot).
\begin{lemma}\label{lem:nonMonotone}
    Let $\mathcal{S} = (S_1,\ldots, S_\ell)$ be a non-monotone winning hunter strategy in a graph $G = (V,E)$. Then, there exist a vertex $v\in V$ and $1\leq i \leq \ell$ such that $v\in Z_{i-1}\setminus S_i$ and $v\in \bigcup_{p< i}S_p$.
\end{lemma}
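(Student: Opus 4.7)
The plan is to prove the contrapositive. Assume that $\mathcal{S}$ is a winning strategy satisfying the hypothesis that no already-shot vertex ever appears later in $Z_{i'-1}\setminus S_{i'}$; formally, for every $v'\in V$ and every $1\leq i'\leq \ell$ with $v'\in Z_{i'-1}\setminus S_{i'}$, one has $v'\notin \bigcup_{p<i'}S_p$. I will show that under this hypothesis $\mathcal{S}$ must be monotone, which proves the lemma. So suppose for contradiction that $\mathcal{S}$ is non-monotone, witnessed by some $v\in V$ and rounds $i<j$ with $v$ cleared at round $i$, $v\in Z_j$, and $v\notin S_{j+1}$.

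First, applying the hypothesis to the pair $(v, j+1)$ yields $v\notin \bigcup_{p\leq j} S_p$; in particular $v\notin S_i$. Therefore the clearing of $v$ at round $i$ must come from the neighbourhood condition, so $N(v)\cap Z_{i-1}\neq \emptyset$ and $N(v)\cap Z_{i-1}\subseteq S_i$. This forces $v\notin Z_i$, since $v\in Z_i$ would require a neighbour in $Z_{i-1}\setminus S_i$. I then let $k$ be the smallest index in $\{i+1,\dots,j\}$ with $v\in Z_k$ (which exists because $v\in Z_j$). By the minimality of $k$ together with $v\notin Z_i$, we have $v\notin Z_{k-1}$, a fact that will provide the contradiction.

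Next, I unfold the definition of $Z_k$ to obtain $y\in N(v)\cap Z_{k-1}$ with $y\notin S_k$. Applying the hypothesis to $(y,k)$ gives $y\notin \bigcup_{p<k}S_p$, which combined with $y\notin S_k$ yields $y\notin \bigcup_{p\leq k-1}S_p$. Note that $k\geq 2$ since $k>i\geq 1$. Picking a trajectory $(r'_0,\dots,r'_{k-1}=y)$ witnessing $y\in Z_{k-1}$, I set $x=r'_{k-2}\in N(y)\cap Z_{k-2}$; since $x\notin S_{k-1}$, the hypothesis applied to $(x,k-1)$ gives $x\notin \bigcup_{p<k-1}S_p$.

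The final step is to apply Proposition~\ref{prop:reach} with $u=y$, $x=r'_{k-2}$, and index $k-1$ in the proposition's notation: the requirements $u\in N(v)$, $x\in N(u)$, $u\notin\bigcup_{p\leq k-1}S_p$ and $x\notin\bigcup_{p<k-1}S_p$ are all met, so the proposition concludes $v\in Z_{k-1}$, contradicting $v\notin Z_{k-1}$. The main obstacle is pinpointing the right instance of Proposition~\ref{prop:reach}; the key insight is that the ``no shot-recontamination'' hypothesis, when applied along a witness trajectory for $y\in Z_{k-1}$, furnishes exactly the two consecutive \emph{unshot} neighbours of $v$ needed to certify $v\in Z_{k-1}$ and hence contradict the minimality of $k$.
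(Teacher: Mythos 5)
Your proof is correct. It shares its opening with the paper's: both argue by contradiction from the negated conclusion, take a non-monotonicity witness $v$ cleared at round $i$ and recontaminated at round $j>i$, and use the negated conclusion (applied to $v\in Z_j\setminus S_{j+1}$) to conclude that $v$ was never shot, hence was cleared via the neighbourhood condition $N(v)\cap Z_{i-1}\subseteq S_i$. The two arguments then close differently. The paper proves that \emph{every} neighbour of $v$ lies in $\bigcup_{p\le i}S_p$ --- for a neighbour $x$ unshot before round $i$, Proposition~\ref{prop:reach} is applied with the roles reversed relative to your usage (the proposition's $v$ is $x$ and its $u$ is the recontaminated vertex) to place $x$ in $Z_{i-1}$ and hence in $S_i$ --- so the neighbour $w\in (N(v)\cap Z_{j-1})\setminus S_j$ that recontaminates $v$ at round $j$ is itself a previously shot vertex satisfying the lemma's conclusion, a direct contradiction with the negation. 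You instead introduce the first recontamination round $k$, show via the negated conclusion that the recontaminating neighbour $y\in N(v)\cap Z_{k-1}\setminus S_k$ and its trajectory predecessor $r'_{k-2}$ are both unshot, and apply Proposition~\ref{prop:reach} forward to conclude $v\in Z_{k-1}$, contradicting the minimality of $k$. Both closings are sound and of comparable length; the paper's avoids the auxiliary facts $v\notin Z_i$ and the minimal index $k$, while yours makes explicit the back-propagation along a witness trajectory, essentially the same device the paper deploys in Lemma~\ref{lem:reach2}. (One cosmetic remark: since $\bigcup_{p<k}S_p=\bigcup_{p\le k-1}S_p$, the phrase ``combined with $y\notin S_k$'' is superfluous at that step.)
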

\begin{proof}
    Towards a contradiction, assume that the statement of the lemma is false, i.e., for every vertex $v\in V$ and every $1\leq i \leq \ell$, if $v\in Z_{i-1} \setminus S_i$ then $v\notin \bigcup_{p< i}S_p$. 

    Since $\mathcal{S}$ is non-monotone and winning, there exists a vertex $u$ such that $u$ is cleared at a round $1\leq q\leq \ell -2$, and then recontaminated at a round $j> q$ (i.e., $u\in Z_j\setminus S_{j+1}$). Moreover, by our assumption, $u$ is cleared by shooting each contaminated vertex in $N(u)$ at round $q$, i.e., $Z_{q-1} \cap N(u) \subseteq S_q$.

    Let us show that $N(u) \subseteq \bigcup_{p\leq q} S_p$. Let us assume that there exists a vertex $x \in N(u)$ such that $x\notin \bigcup_{p< q} S_p$.  Since both $u,x\notin \bigcup_{p< q} S_q$ and $u\in N(x)$, by Proposition~\ref{prop:reach}, we get that $x\in Z_{q-1}$. Therefore, $x\in Z_{q-1} \cap N(u) \subseteq S_q$. Hence, $N(u) \subseteq \bigcup_{p\leq q} S_p$.

    Since $u \in Z_j$ then there exists $w \in (N(u)\cap Z_{j-1})\setminus S_j$ and $w\in \bigcup_{p< j}S_p$, i.e., $w$ satisfies the statement of the lemma, a contradiction. 
\end{proof}

Now, let us generalise Lemmas~\ref{lem:subgraph} and~\ref{lem:parsimonious} to monotone strategies.


\begin{lemma}\label{lem:MonotoneSubgraph}
For any non-empty connected subgraph $H$ of a graph $G=(V,E)$, $mh(H)\leq mh(G)$. More precisely, if there exists a monotone winning hunter strategy ${\cal S}=(S_1,\ldots, S_\ell)$ in $G$, then there exists a monotone winning hunter strategy ${\cal S}'$ in $H$ using at most $\max_{1\leq i\leq \ell} |S_i\cap V(H)|$ hunters.
\end{lemma}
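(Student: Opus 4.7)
The plan is to construct $\mathcal{S}'$ in $H$ as the restriction of $\mathcal{S}$ to $V(H)$, with singleton dummies at rounds where $S_i \cap V(H) = \emptyset$, and then to verify both winning and monotonicity using the monotonicity of $\mathcal{S}$ in $G$ together with Lemma~\ref{lem:reach2}. The case $|V(H)| = 1$ is immediate ($mh(H) = 0$ by convention). For $|V(H)| \geq 2$, I would define $S'_i := S_i \cap V(H)$ whenever non-empty, and otherwise $S'_i := \{u_i\}$ for a carefully chosen $u_i \in V(H)$ (for instance, a vertex already shot in $\mathcal{S}$ at some prior round, which exists after the first time $S_j$ meets $V(H)$; Lemma~\ref{lem:independent} guarantees some such $S_j$). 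Since this lemma also forces $\max_j |S_j \cap V(H)| \geq 1$, the dummy singletons are absorbed by the bound $|S'_i| \leq \max_j |S_j \cap V(H)|$.

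Next, an easy induction on $i$ would give $Z'_i(\mathcal{S}') \subseteq Z_i(\mathcal{S}) \cap V(H)$: every rabbit trajectory in $H$ that avoids $\mathcal{S}'$ is also one in $G$ that avoids $\mathcal{S}$. Consequently, $Z'_\ell \subseteq Z_\ell = \emptyset$, so $\mathcal{S}'$ wins in $H$.

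For monotonicity I would argue by contradiction. Suppose a pair $(v, s)$ with $v \in V(H)$ realizes a violation --- $v$ is cleared at some $r < s$ in $\mathcal{S}'$, $v \in Z'_s$, and $v \notin S'_{s+1}$ --- and pick such $(v, s)$ with $s$ minimum. If $v$ is cleared in $\mathcal{S}$ (in $G$) at some round $r' < s$, then the monotonicity of $\mathcal{S}$ applied at $j = s > r'$, together with $v \in Z'_s \subseteq Z_s$ and Lemma~\ref{lem:reach2} (to guarantee $v \in Z_s$), forces $v \in S_{s+1}$, hence $v \in S'_{s+1}$, a contradiction. Otherwise $v$ has never been cleared in $\mathcal{S}$ before round $s$, so the clearing of $v$ at $r$ in $\mathcal{S}'$ must be via the neighbor condition, i.e.\ $N_H(v) \cap Z'_{r-1} \subseteq S'_r$. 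Tracing the $H$-neighbor $y \in N_H(v) \cap Z'_{s-1} \setminus S'_s$ responsible for $v \in Z'_s$ then shows that $y \in S'_r$ was itself cleared at $r$ and subsequently recontaminated, so $(y, s-1)$ is an earlier violation of monotonicity in $\mathcal{S}'$, contradicting the minimality of $s$.

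The hard part is this last case --- where $v$ is cleared in $\mathcal{S}'$ but never in $\mathcal{S}$, because some $G$-neighbor of $v$ outside $V(H)$ remains contaminated --- since the monotonicity of $\mathcal{S}$ in $G$ does not directly constrain $v$ there, and the violation has to be propagated to the $H$-neighbor $y$ via the second clearing condition. Closing it cleanly requires both the minimality argument above and a careful treatment of the dummies to avoid introducing spurious clearings that would block the propagation.
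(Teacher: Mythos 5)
Your construction and the winning argument match the paper's, but your monotonicity argument has a genuine gap in exactly the case you flag as hard, and the step you sketch to close it does not go through. You want to propagate the violation at $(v,s)$ to an earlier violation at $(y,s-1)$ by arguing that the neighbour $y\in N_H(v)\cap Z_{s-1}({\cal S}')\setminus S'_s$ lies in $S'_r$, via the clearing condition $N_H(v)\cap Z_{r-1}({\cal S}')\subseteq S'_r$. But that requires $y\in Z_{r-1}({\cal S}')$, and nothing you have gives this: the only inclusion available is $Z_j({\cal S}')\subseteq Z_j({\cal S})$, which points the wrong way, and Lemma~\ref{lem:reach2} cannot be applied to ${\cal S}'$ because its monotonicity is precisely what is at stake. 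Without $y\in Z_{r-1}({\cal S}')$ you cannot conclude $y\in S'_r$, so $y$ need not have been cleared before round $s$ and $(y,s-1)$ need not be a violation at all. Your treatment of the dummies is also underspecified in a way that matters: choosing ``a vertex already shot at some prior round'' is impossible for the rounds before ${\cal S}$ first meets $V(H)$, and an arbitrary dummy there creates exactly the spurious clearing you worry about.

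The missing idea is Lemma~\ref{lem:nonMonotone}: in any non-monotone winning strategy there is a vertex that was actually \emph{shot} at some earlier round and is later recontaminated. Applying this to ${\cal S}'$ gives a witness $v\in S'_q$ with $v\in Z_i({\cal S}')\setminus S'_{i+1}$, which eliminates the neighbour-condition-clearing case entirely. The paper then takes the dummy to be a single fixed vertex $u\in S_m\cap V(H)$, where $m$ is the first round with $S_m\cap V(H)\neq\emptyset$, so that $u\in S'_p$ for all $p\leq m$; hence if the witness is $u$ the recontamination occurs at a round $i\geq m$ and contradicts the monotonicity of ${\cal S}$ via $u\in S_m$, and if the witness is any other vertex then $S'_p\setminus\{u\}\subseteq S_p$ transfers the violation to ${\cal S}$ directly. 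I would recommend reworking your argument around Lemma~\ref{lem:nonMonotone} (or proving an equivalent statement first), since the direct propagation along neighbours does not close.
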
 
\begin{proof}
Let $\mathcal{S}=(S_1,\dots,S_\ell)$ be a monotone winning hunter strategy for $G$.

If $|V(H)|=1$, the result clearly holds since $mh(H)=0$. Hence, let us assume that $|V(H)|>1$.
Let $m$ be the minimum integer such that $S_m \cap V(H) \neq \emptyset$ and let $u \in S_m \cap V(H)$ (by Lemma~\ref{lem:independent}, such an integer $m$ exists because $|V(H)|>1$). Let $\mathcal{S}'=(S'_1,\dots,S'_\ell)$ be the hunter strategy, such that for every $1\leq i\leq \ell$,
\[   
S'_i = 
     \begin{cases}
       S_i\cap V(H), &\quad\text{if } S_i\cap V(H)\neq \emptyset\\
       \{u\}, &\quad\text{otherwise}\\
     \end{cases}
\]
 First, we have the following claim.

\begin{claim}\label{C:L1}
    For every $0 \leq i\leq  \ell$ and for any vertex $v\in V(H)$, if $v\in Z_i({\cal S}')$, then $v\in Z_i({\cal S})$.
\end{claim}
\begin{proofclaim}
    Let $\mathcal{R}=(r_0,\dots, r_i=v)$ be a rabbit trajectory in $H$ such that for any $0\leq j < i$, $r_j r_{j+1}\in E(H)$, $r_j\notin S'_{j+1}$ and $r_i=v$ (such a trajectory exists since $v\in Z_i({\cal S}')$). By construction of $\mathcal{S}'$, for any $1\leq j\leq \ell$, $S_j\cap V(H)\subseteq S'_j$. Therefore, $\mathcal{R}$ is also a rabbit trajectory in $G$ with $r_j \notin S_{j+1}$, for all $0 \leq j < i$. Thus, $v\in Z_i({\cal S})$. 
\end{proofclaim}

Let us show that $\mathcal{S}'$ is a monotone winning hunter strategy in $H$.
First, we show that $\mathcal{S}'$ is indeed a winning hunter strategy in $H$. Towards a contradiction, assume that $\mathcal{S}'$ is not a winning strategy in $H$. This implies that $Z_\ell ({\cal S}')\neq \emptyset$. Hence, Claim~\ref{C:L1} implies that $Z_\ell ({\cal S})\neq \emptyset$, contradicting the fact that $\mathcal{S}$ is a winning hunter strategy in $G$.

Thus, $\mathcal{S}'$ is a winning strategy in $H$. Next, we establish that $\mathcal{S}'$ is indeed monotone. Towards a contradiction, let us assume that $\mathcal{S}'$ is non-monotone. Hence, by Lemma~\ref{lem:nonMonotone}, there exist $v \in V(H)$ and $1 \leq q <i \leq \ell$ such that $v \in S'_q$ and $v \in Z_i({\cal S'})\setminus S'_{i+1}$.  By Claim~\ref{C:L1} and because $v\in Z_i({\cal S}')$, $v \in Z_i({\cal S})$.
Since $S_p\cap V(H)\subseteq S'_p$ for any $1\leq p\leq \ell$ and because $v\notin S'_{i+1}$, $v\notin S_{i+1}$. 

If $v=u$, $i+1>m$ (since $u \in S'_p$ for all $1 \leq p \leq m$) and so $v \in S_m$ and in $Z_i({\cal S}) \setminus S_{i+1}$, contradicting the monotonicity of $\cal S$.

Otherwise, $v \neq u$. By construction of $\mathcal{S}'$, $S'_p \setminus \{u\} \subseteq S_p$ for all $1 \leq p \leq \ell$. Hence,  $v \in S_q$ and $v \in Z_i({\cal S}) \setminus S_{i+1}$, contradicting the monotonicity of $\cal S$.

Finally, the fact that $h(\mathcal{S}')\leq \max_{1\leq i\leq \ell} |S_i\cap V(H)|\leq h({\cal S})$ completes the proof.
\end{proof}


\begin{lemma}\label{lem:monotoneParsimonious}
For a graph $G=(V,E)$ and any $k \geq mh(G)$, there exists a parsimonious monotone winning hunter strategy in $G$ using at most $k$ hunters.
\end{lemma}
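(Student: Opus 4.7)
The plan is to start with any monotone winning hunter strategy ${\cal S}=(S_1,\ldots,S_L)$ in $G$ that uses at most $k$ hunters (which exists by the hypothesis $k \geq mh(G)$) and to transform it into a parsimonious one without increasing the number of hunters nor destroying monotonicity or the winning property. The single-vertex case is trivial (the empty strategy is vacuously parsimonious, monotone and winning), so I assume $|V(G)| \geq 2$. First I would truncate ${\cal S}$ to its shortest winning prefix: let $\ell$ be the smallest index with $Z_\ell({\cal S}) = \emptyset$, and replace ${\cal S}$ by $(S_1,\ldots,S_\ell)$. This truncation is harmless, since all monotonicity constraints for rounds $\leq \ell$ are inherited from the original ${\cal S}$ and $Z_\ell({\cal S})=\emptyset$ is preserved.

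The key step, and the main obstacle, is to show that this truncated ${\cal S}$ already shoots at least one contaminated vertex in every round, i.e., $S_i \cap Z_{i-1}({\cal S}) \neq \emptyset$ for all $1 \leq i \leq \ell$. I would argue by contradiction. Suppose $S_i \cap Z_{i-1}({\cal S}) = \emptyset$; by minimality of $\ell$ we have $Z_{i-1}({\cal S}) \neq \emptyset$, and $Z_{i-1}({\cal S}) \setminus S_i = Z_{i-1}({\cal S})$, so by the definition of $Z_i$ we get $Z_i({\cal S}) = \{x \in V : \exists y \in Z_{i-1}({\cal S}) \text{ with } xy \in E\}$. However, Lemma~\ref{lem:reach2} gives $Z_i({\cal S}) \subseteq Z_{i-1}({\cal S})$; combining these, every neighbour of every vertex of $Z_{i-1}({\cal S})$ lies in $Z_{i-1}({\cal S})$. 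As $G$ is connected and $Z_{i-1}({\cal S})$ is non-empty, this forces $Z_{i-1}({\cal S}) = V$, whence $S_i \cap Z_{i-1}({\cal S}) = S_i$, which is non-empty by the definition of a hunter strategy, a contradiction.

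Armed with this, I define ${\cal S}' = (S'_1,\ldots,S'_\ell)$ by setting $S'_i := S_i \cap Z_{i-1}({\cal S})$ for each $i$. The preceding claim guarantees each $S'_i$ is non-empty, and a straightforward induction on $i$ shows $Z_i({\cal S}') = Z_i({\cal S})$: shots at vertices outside $Z_{i-1}$ cut no rabbit trajectory, so discarding them leaves the reachable set unchanged. Consequently $Z_\ell({\cal S}') = \emptyset$, so ${\cal S}'$ is winning; it is parsimonious by construction, and it uses at most $k$ hunters since $|S'_i| \leq |S_i| \leq k$.

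It remains to verify that ${\cal S}'$ is monotone. Suppose some $v \in V$ is cleared at round $i$ in ${\cal S}'$. Then either $v \in S'_i \subseteq S_i$, or $N(v) \cap Z_{i-1}({\cal S}')$ is non-empty and contained in $S'_i \subseteq S_i$. Using $Z_{i-1}({\cal S}') = Z_{i-1}({\cal S})$, in both cases $v$ is also cleared at round $i$ in ${\cal S}$. By monotonicity of ${\cal S}$, for every $j > i$ with $v \in Z_j({\cal S})$ we have $v \in S_{j+1}$; since $v \in Z_j({\cal S}') = Z_j({\cal S})$, it follows that $v \in S_{j+1} \cap Z_j({\cal S}) = S'_{j+1}$, which is precisely what monotonicity of ${\cal S}'$ demands.
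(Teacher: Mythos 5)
Your proof is correct, and it takes a genuinely different --- and shorter --- route than the paper's. The paper fixes a monotone winning strategy of minimum length and, among those, one maximising the first non-parsimonious round $j$; it then performs an explicit multi-case surgery, the delicate cases being those with $S_j\cap Z_{j-1}=\emptyset$, which are resolved either by substituting the wasted shots with a later contaminated vertex (using Lemma~\ref{lem:reach2}) or by contradicting the winning property via Proposition~\ref{prop:reach}. You instead observe that, once the strategy is truncated to its shortest winning prefix, the situation $S_i\cap Z_{i-1}=\emptyset$ cannot occur at all: Lemma~\ref{lem:reach2} forces $Z_i\subseteq Z_{i-1}$, while an empty intersection forces $Z_i$ to contain every neighbour of every vertex of $Z_{i-1}$, so the non-empty set $Z_{i-1}$ would be closed under taking neighbours and hence, by connectivity, equal to $V$ --- whence $S_i\cap Z_{i-1}=S_i\neq\emptyset$, a contradiction. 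This renders the paper's hardest cases vacuous and reduces the construction to the single intersection $S_i'=S_i\cap Z_{i-1}$, after which your verification that the result is winning, parsimonious and monotone is routine and essentially matches the corresponding checks in the paper. The trade-off is that your argument leans on the global connectivity of $G$ and on the convention that shot sets are non-empty (both standing assumptions of the paper), and applies Lemma~\ref{lem:reach2} at the offending round itself rather than only at later rounds; since the paper's proof already invokes Lemma~\ref{lem:reach2}, nothing extra is being assumed, and your version is a legitimate simplification.
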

\begin{proof}
Let $\mathcal{S}=(S_1,\dots, S_\ell)$ be a monotone winning hunter strategy in $G$ using at most $k\geq mh(G)$ hunters such that $\ell$ is minimum. If ${\cal S}$ is parsimonious, we are done. Otherwise, among such strategies, let us consider $\cal S$ that maximizes the first round $1 \leq j \leq \ell$ that makes $\mathcal{S}$ not parsimonious. There are several cases to be considered.
\begin{itemize}
\item
Let $\mathcal{Z}({\cal S})=(Z_0({\cal S}),\dots,Z_{\ell}({\cal S}))$ be the set of contaminated vertices for each round of $\mathcal{S}$. If there exists an integer $\ell' < \ell$ such that $Z_{\ell'}({\cal S}) = \emptyset$, then $\mathcal{S}=(S_1,\dots, S_{\ell'})$ is also a monotone winning hunter strategy in $G$ using at most $k$ hunters, contradicting the minimality of $\ell$. 

Hence, we may assume that $Z_{i}({\cal S}) \neq \emptyset$ for every $0 \leq i < \ell$. 
\item
Let $1 \leq j \leq  \ell$ be the smallest integer such that $S_j \setminus Z_{j-1}({\cal S}) \neq \emptyset$ (if no such integer exists, then ${\cal S}$ is parsimonious and we are done). If $S_j \cap Z_{j-1}({\cal S}) \neq \emptyset$, replace $S_j$ by $S_j \cap Z_{j-1}({\cal S})$. This leads to a winning monotone hunter strategy ${\cal S}'$  (indeed, $Z_h({\cal S})=Z_h({\cal S}')$ for all $1 \leq h \leq \ell$) contradicting the maximality of $j$.

Hence, we may assume that $S_j \cap Z_{j-1}({\cal S}) = \emptyset$. Note that this implies that $j<\ell$ (since otherwise, $\cal S$ would not be winning).
\item 
If any, let $0<i$ be the minimum integer such that $S_{j+i} \cap Z_{j+i-1}({\cal S}) \neq \emptyset$. Let $v \in S_{j+i} \cap Z_{j+i-1}({\cal S})$. Since $v \in Z_{j+i-1}({\cal S})$, by Lemma~\ref{lem:reach2}, $v \in Z_{j-1+i'}({\cal S})$ for every $0 \leq i'< i$.
Then, for every $0 \leq i' <i$, replace $S_{j+i'}$ by $\{v\}$. Let us prove that this leads to a monotone hunter strategy contradicting the maximality of $j$. 

Let ${\cal S'}$ be the strategy obtained by the above modifications. First, note that, for any  $0\leq h< j$, $S_h=S'_h$ and so $Z_h({\cal S})= Z_h({\cal S}')$. By definition, $Z_j({\cal S})=\{x\in V\mid\exists y\in Z_{j-1}({\cal S})\setminus S_j \wedge (xy\in E)\}$ and, since $S_j\cap Z_{j-1}=\emptyset$, we get $Z_j({\cal S})=\{x\in V\mid\exists y\in Z_{j-1}({\cal S}) \wedge (xy\in E)\}$. On the other hand, $Z_j({\cal S}')=\{x\in V\mid\exists y\in Z_{j-1}({\cal S'})\setminus S'_j \wedge (xy\in E)\} = \{x\in V\mid\exists y\in Z_{j-1}({\cal S})\setminus \{v\} \wedge (xy\in E)\}$ since $Z_{j-1}({\cal S}')=Z_{j-1}({\cal S})$ and $S'_j=\{v\}$. Hence, $Z_j({\cal S}')\subseteq Z_j({\cal S})$. By induction on $j\leq i' \leq \ell$ and using the same arguments, we get that $Z_{i'}({\cal S}') \subseteq Z_{i'}({\cal S})$ for every $j\leq i'\leq \ell$.
Thus, ${\cal S}'$ is a winning hunter strategy in $G$ using at most $k\geq mh(G) $ hunters (because $Z_{\ell}({\cal S}')\subseteq Z_{\ell}({\cal S})=\emptyset$).
It remains to show that ${\cal S}'$ is monotone. 

For purpose of contradiction, let us assume that ${\cal S}'$ is non-monotone. By Lemma~\ref{lem:nonMonotone}, there exists a vertex $x$  and $1  < m \leq \ell$ such that $x\in Z_{m-1}({\cal S}')\setminus S'_m$ and $x \in \bigcup_{h<m}S'_h$. 

If $x \neq v$, then by definition of ${\cal S}'$ (for every $1 \leq r \leq \ell$, either $S'_r=S_r$ or $S'_r=\{v\}$) and because $Z_r({\cal S}')\subseteq Z_r({\cal S})$ for all $1 \leq r \leq \ell$, we get that $x \in \bigcup_{h<m}S_h$ and $x\in Z_{m-1}({\cal S})\setminus S_m$ which contradicts the monotonicity of ${\cal S}$. Hence, let us assume that $x=v$. Recall that we proved that $v\in Z_{j-1}({\cal S})\setminus S_j$. Therefore, by monotonicity of $\cal S$, $v \notin \bigcup_{r<j}S_r$ which implies that $v \notin \bigcup_{r<j} S'_r$. Since $v \in S'_{j+i'}$ for all $0 \leq i' \leq i$, we get that $m>j+i$. This means that $v \in Z_{m-1}({\cal S})\setminus S_m$ (because $Z_r({\cal S}')\subseteq Z_r({\cal S})$ for all $1 \leq r \leq \ell$ and $S'_r=S_r$ for all $r\geq m>j+i$) and $v \in S_{j+i}$, which contradicts the monotonicity of $\cal S$. 

Hence, we may assume that $S_{j+i'} \cap Z_{j+i'-1}({\cal S}) = \emptyset$ for all $0 \leq i'$ such that $j+i' \leq \ell$.

\item Let us recall that $j< \ell$ and that $Z_{j-1}({\cal S})\neq \emptyset$. Thus, let $v\in Z_{j-1}({\cal S})$. 
So, there exists $w\in N(v)\cap Z_{j-2}({\cal S})\setminus S_{j-1}$. Moreover, since $w\in Z_{j-2}({\cal S})\setminus S_{j-1}$ (resp. $v\in Z_{j-1}({\cal S})\setminus S_j$), $w\notin S_r$ (resp. $v\notin S_r$) for any $r\leq j$ (otherwise, it contradicts the monotonicity of ${\cal S}$). 
Let us recall that for any $0\leq i'$ such that $j+i'\leq \ell$, $S_{j+i'}\cap Z_{j+i'-1}({\cal S})= \emptyset$. Therefore, $w\notin S_r$ and $v\notin S_r$ for any $r\leq \ell$. 
By Proposition~\ref{prop:reach}, $v \in Z_{\ell}({\cal S})$, which contradicts that ${\cal S}$ is winning.

\end{itemize}
This completes the proof.
\end{proof}

To conclude this subsection, we give an alternative point of view of Lemma~\ref{lem:reach2}. In particular, we show that, if the hunters follow a monotone parsimonious strategy, after having shot at one vertex, the hunters must always shoot at this vertex until it  cannot be reached by the rabbit anymore.

\begin{lemma}\label{lem:MonotoneProperties}
Let $G=(V,E)$ be a graph and $\mathcal{S}=(S_1,\dots,S_{\ell})$ be a parsimonious monotone winning hunter strategy in $G$.
\begin{itemize}
\item If there exist $1 \leq i < j \leq \ell$ such that $v \in S_i \cap S_j$, then $v \in S_{i+1}$.
\item If there exists an integer $1 \leq i < \ell$ such that $v \notin Z_{i-1}$, then $v \notin S_j$ for every $j \geq i$.
\end{itemize}
\end{lemma}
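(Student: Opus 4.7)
The plan is to deduce both bullets as short corollaries of Lemma~\ref{lem:reach2}, combined with parsimony and the clearing/monotonicity definitions. There is no recursion or case analysis involved; all the work lies in invoking Lemma~\ref{lem:reach2} with the correct indices and reading off the consequence from the ``no recontamination'' formulation of monotonicity.

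For the first bullet, I would start from $v\in S_i\cap S_j$ with $i<j$. Parsimony applied at round $j$ gives $v\in S_j\subseteq Z_{j-1}(\mathcal{S})$. Since $\mathcal{S}$ is monotone, Lemma~\ref{lem:reach2} applied with $p=i\leq j-1$ yields $Z_{j-1}(\mathcal{S})\subseteq Z_i(\mathcal{S})$, hence $v\in Z_i(\mathcal{S})$. Since $v\in S_i$, the vertex $v$ is cleared at round $i$ by the first case of the clearing definition. The monotonicity of $\mathcal{S}$ then forces $v$ to be shot in the subsequent round whenever $v$ is still a potential position of the rabbit, so from $v\in Z_i(\mathcal{S})$ one concludes $v\in S_{i+1}$, as claimed.

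For the second bullet, I would argue by contradiction. Suppose $v\notin Z_{i-1}(\mathcal{S})$ and yet $v\in S_j$ for some $j\geq i$. Parsimony gives $v\in Z_{j-1}(\mathcal{S})$, and Lemma~\ref{lem:reach2} applied with $p=i-1\leq j-1$ gives $Z_{j-1}(\mathcal{S})\subseteq Z_{i-1}(\mathcal{S})$; hence $v\in Z_{i-1}(\mathcal{S})$, contradicting the hypothesis. Therefore $v\notin S_j$ for every $j\geq i$.

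The only delicate point is the invocation of monotonicity in the first bullet: one must treat the round $i$ at which $v$ is shot as the clearing round and observe that $v$ remains in the contaminated set at the end of round $i$ (via the chain $Z_i(\mathcal{S})\supseteq Z_{j-1}(\mathcal{S})\ni v$), so the no-recontamination property forces the shot at round $i+1$. Everything else is bookkeeping on indices, and the shrinking of contaminated sets encoded by Lemma~\ref{lem:reach2} does all the heavy lifting.
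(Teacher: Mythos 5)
Your proof is correct and follows essentially the same route as the paper's: both bullets are obtained by combining parsimony ($S_j\subseteq Z_{j-1}$) with Lemma~\ref{lem:reach2} to pull membership back to an earlier contaminated set, and then (for the first bullet) invoking the no-recontamination property at the clearing round $i$. The paper phrases the first bullet slightly more strongly, concluding $v\in S_p$ for all $i\leq p\leq j$, but the argument is the same.
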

\begin{proof}
First assume that $v \in S_i \cap S_j$. Since $\cal S$ is parsimonious, it implies that $v \in Z_{i-1} \cap Z_{j-1}$. By Lemma~\ref{lem:reach2} and since $v \in Z_j$, $v \in Z_{i'}$ for all $i' <j$. Since $v \in S_i$ and $v \in Z_{p-1}$ for all $i \leq p \leq j$, by monotonicity of $\cal S$, $v \in S_p$ for all $i \leq p \leq j$.

For the second statement, the fact that $v \notin Z_{i-1}$ and Lemma~\ref{lem:reach2} imply that $v \notin Z_j$ for all $j\geq i$. Since $\cal S$ is parsimonious, $v \notin S_j$ for every $j \geq i$.
\end{proof}

Surprisingly, the above lemma is not a characterisation of monotone strategies. Indeed, consider the path $(a,b,c,d)$ on four vertices. It can be checked that the hunter strategy $(\{a\},\{b,c\},\{b,c\})$ is parsimonious and winning (with respect to $V$) and satisfies the condition of the previous lemma, but this strategy is non-monotone (since $a \in S_1$ and $a \in Z_1 \setminus S_2$). 

\subsection{Monotone Hunter Number and Pathwidth}\label{sec:Monotone&Pathwidth}

In this subsection, we relate the monotone hunter number of a graph to its pathwidth. Our result might be surprising since the pathwidth of a graph $G$ is equivalent to the number of searchers required to (monotonously) capture an arbitrary fast invisible fugitive~\cite{BienstockS91} while, in our case, the invisible rabbit seems much weaker than the fugitive: the rabbit is ``slow'' (it moves only to neighbours) and constrained to move at every round. In this view, we might guess that the monotone hunter number of a graph could be arbitrary smaller than its pathwidth. On the contrary, we show that both parameters differ by at most one.

A {\it path-decomposition} of a graph $G=(V,E)$ is a sequence $P=(X_1,\dots,X_p)$ of subsets of vertices, called {\it bags}, such that (1) $\bigcup_{i\leq p} X_i = V$; (2) for every $uv \in E$, there exists $i\leq p$ with $\{u,v\} \subseteq X_i$; and (3): for every $1 \leq i \leq j \leq q \leq p$, $X_i \cap X_q \subseteq X_j$. The {\it width} $w(P)$ of $P$ is the size of a largest bag of $P$ minus one, i.e., $w(P)=\max_{i \leq p} |X_i|-1$. The {\it pathwidth} $pw(G)$ of $G$ is the minimum width of its path-decompositions. A path-decomposition of $G$ of width $pw(G)$ is said to be {\it optimal}. A path-decomposition is {\it reduced} if no bag is contained in another one. It is well known that any graph admits an optimal reduced path-decomposition.

\begin{theorem}\label{theo:pw}
For any graph $G=(V,E)$, $pw(G) \leq mh(G) \leq pw(G)+1$. 
\end{theorem}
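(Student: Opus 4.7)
My plan is to prove the two inequalities independently, with the upper bound by an explicit strategy derived from a path decomposition, and the lower bound by extracting a path decomposition from a parsimonious monotone strategy.

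\textbf{Upper bound $mh(G)\leq pw(G)+1$.} Let $(X_1,\ldots,X_p)$ be an optimal reduced path-decomposition of $G$ and let $\mathcal{S}$ be the strategy defined by $S_{2i-1}=S_{2i}=X_i$ for $1\leq i\leq p$. It uses $\max_i|X_i|=pw(G)+1$ hunters. To show it is winning, let $C_i=\{v\in V:\text{last}(v)\leq i\}$ (the vertices whose bag-range already ended) and prove by induction on $i$ the invariant $Z_{2i}\cap C_i=\emptyset$: a vertex $v\in C_i$ can only be ``reached'' through a neighbor $u$ lying in some bag $X_j$ with $j\leq \text{last}(v)\leq i$, and if $u\notin X_i$ then $\text{last}(u)<i$ so $u\in C_{i-1}$, and by induction $u\notin Z_{2i-2}$. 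Since $C_p=V$, the strategy wins. For monotonicity, using the contiguity property of the decomposition each vertex $v$ is shot exactly during the consecutive rounds $2\text{first}(v)-1,\ldots,2\text{last}(v)$; the case analysis of ``being cleared by neighbors'' at a round $r$ outside this shooting interval is ruled out by Proposition~\ref{prop:reach}: if $\lceil r/2\rceil<\text{first}(v)$ (resp.~$>\text{last}(v)$) then $v$ has a neighbor $u$ whose first bag is at or beyond round $r$, so $u\notin\bigcup_{j\leq r}S_j$, and Proposition~\ref{prop:reach} forces $u\in Z_{r-1}\setminus S_r$, contradicting the clearing condition. Hence the only clearing is by direct shots, and combined with the winning invariant this gives monotonicity.

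\textbf{Lower bound $pw(G)\leq mh(G)$.} Let $\mathcal{S}=(S_1,\ldots,S_\ell)$ be a parsimonious monotone winning strategy with $k=mh(G)$ hunters, provided by Lemma~\ref{lem:monotoneParsimonious}. By Lemma~\ref{lem:MonotoneProperties}, each shot vertex $v$ appears in exactly a contiguous run of bags $S_{\alpha(v)},\ldots,S_{\beta(v)}$, and $v\notin Z_j$ for $j\geq\beta(v)$; combined with Lemma~\ref{lem:reach2}, this gives $v\in Z_j\iff j<\beta(v)$. I will first establish the key overlap claim: for every edge $uv$ with both $u,v$ shot, the intervals $[\alpha(u),\beta(u)]$ and $[\alpha(v),\beta(v)]$ overlap. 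Suppose otherwise that $\beta(u)<\alpha(v)$. Then $v\in Z_{\beta(u)-1}\setminus S_{\beta(u)}$ (since $\beta(u)<\alpha(v)\leq\beta(v)$), so $u\in N(v)$ forces $u\in Z_{\beta(u)}$, contradicting $u\notin Z_{\beta(u)}$.

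Second, for any unshot vertex $v$, I prove $N(v)\subseteq S_{\text{exit}(v)}$ where $\text{exit}(v)=\min\{i:v\notin Z_i\}$. Because $v\notin S_{\text{exit}(v)}$, the clearing of $v$ is ``by neighbors'', so $N(v)\cap Z_{\text{exit}(v)-1}\subseteq S_{\text{exit}(v)}$. It remains to show $N(v)\subseteq Z_{\text{exit}(v)-1}$. If some neighbor $u$ of $v$ had $\text{exit}(u)<\text{exit}(v)$, then applying monotonicity at round $\text{exit}(u)$ (splitting into $u\in S_{\text{exit}(u)}$ or not) and using the fact that $v\in Z_{\text{exit}(u)-1}$ is adjacent to $u$ forces $v\in S_{\text{exit}(u)}$ or $v\in S_{\text{exit}(u)+1}$, contradicting that $v$ is unshot. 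Hence every neighbor of $v$ has $\text{exit}(u)\geq\text{exit}(v)$ and is in $Z_{\text{exit}(v)-1}$, giving $N(v)\subseteq S_{\text{exit}(v)}$ as claimed.

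\textbf{Assembling the decomposition.} Start with the sequence of bags $X_i'=S_i$ for $1\leq i\leq\ell$; for every unshot vertex $v$, insert an extra bag $Y_v=S_{\text{exit}(v)}\cup\{v\}$ immediately after $X_{\text{exit}(v)}'$ (if several unshot vertices share the same exit round, insert the corresponding bags consecutively). Every bag has size at most $k+1$. The contiguity axiom follows from the interval property of Lemma~\ref{lem:MonotoneProperties} together with the observation that any shot vertex $u\in S_{\text{exit}(v)}$ (with $u$ also in the flanking bags) remains in the inserted $Y_v$'s. Vertex covering is trivial (shot vertices lie in their $S_{\alpha(v)}$, unshot ones in $Y_v$). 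Finally, every edge of $G$ is covered: shot-shot edges by an $S_i$ containing both endpoints (via the overlap claim), edges to an unshot $v$ by $Y_v$ (via $N(v)\subseteq S_{\text{exit}(v)}$), and unshot-unshot edges cannot exist by Lemma~\ref{lem:independent}. This yields a path-decomposition of width $\leq k$, so $pw(G)\leq mh(G)$.

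The main obstacle is the verification of monotonicity in the upper bound (ruling out pathological early clearings via Proposition~\ref{prop:reach}) and, in the lower bound, the interval-overlap claim and the $N(v)\subseteq S_{\text{exit}(v)}$ claim for unshot vertices, both of which rely crucially on the structural lemmas already proved.
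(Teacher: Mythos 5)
Your lower bound is correct and is essentially the paper's own argument: take a parsimonious monotone winning strategy (Lemma~\ref{lem:monotoneParsimonious}), show every unshot vertex $v$ satisfies $N(v)\subseteq S_j$ for some $j$ (the paper's Claim~\ref{claim:neighbourhood}), and interleave bags of the form $S_j\cup\{v\}$ into the sequence of shot sets. Your interval-overlap claim and the identification of the round $\mathrm{exit}(v)$ are clean substitutes for the paper's verification of Properties~(2) and~(3), which goes through a recontamination contradiction via Proposition~\ref{prop:reach}; keeping all the plain bags $S_i$ in the sequence also sidesteps a subtlety in the paper's construction when no unshot vertex exits at a given round. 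The genuine gap is in the upper bound, in the monotonicity verification.

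The intermediate claim ``the only clearing is by direct shots'' is false, and so is the step used to justify it. Take the star with centre $c$ and leaves $l_1,l_2,l_3$ and the reduced decomposition $(\{c,l_1\},\{c,l_2\},\{c,l_3\})$: for $v=l_3$ we have $\mathrm{first}(v)=3$, yet at round $r=1$ we get $N(l_3)\cap Z_0=\{c\}\subseteq S_1$ with $N(l_3)\cap Z_0\neq\emptyset$, so $l_3$ is cleared at round~$1$, far before its shooting interval. Your justification --- ``$v$ has a neighbour $u$ whose first bag is at or beyond round $r$'' --- fails here, since the unique neighbour $c$ has $\mathrm{first}(c)=1$; in general a neighbour $u$ of $v$ only satisfies $\mathrm{last}(u)\geq\mathrm{first}(v)$, not $\mathrm{first}(u)\geq\mathrm{first}(v)$, so Proposition~\ref{prop:reach} cannot be invoked as you do. The sweep strategy is nevertheless monotone, but the argument must target recontamination rather than early clearing: assuming $v$ is cleared at round $i$ and $v\in Z_j\setminus S_{j+1}$ for some $j\geq i$, the containment $Z_q\subseteq\bigcup_{q'>\lfloor q/2\rfloor}X_{q'}$ forces $\mathrm{first}(v)>\lceil (j+1)/2\rceil$; then the predecessor $r_{j-1}\in N(v)\cap Z_{j-1}\setminus S_j$ of a witnessing rabbit walk has $\mathrm{first}(r_{j-1})>\lceil j/2\rceil\geq\lceil i/2\rceil$, hence is never shot up to round $i$, and Proposition~\ref{prop:reach} (applied to $r_{j-1}$ and $r_{j-2}$) puts $r_{j-1}$ in $Z_{i-1}$, so $r_{j-1}\in N(v)\cap Z_{i-1}\subseteq S_i=X_{\lceil i/2\rceil}$, contradicting $\mathrm{first}(r_{j-1})>\lceil i/2\rceil$. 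Separately, your winning invariant has a one-round slip: $v\in Z_{2i}\cap C_i$ yields a neighbour $u\in Z_{2i-1}\setminus X_i$, whereas the induction hypothesis only excludes $u$ from $Z_{2i-2}$; you must unroll once more to $u$'s predecessor $w\in Z_{2i-2}\setminus X_i$ and show $w\in C_{i-1}$ by the same bag argument.
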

\begin{proof}
First, let $P=(X_1,\dots,X_{\ell})$ be a reduced path-decomposition of $G$ with width $k$. Then, $P$ is a monotone hunter strategy in $G$ using $k+1$ hunters. This directly comes from the well known fact that, for every $1 \leq i < \ell$, $X_i \cap X_{i+1}$ separates $\bigcup_{1 \leq j \leq i} X_j\setminus X_{i+1}$ from $\bigcup_{i < j \leq \ell} X_j\setminus X_{i}$, and so $Z_i\subseteq \bigcup_{i < j \leq \ell} X_j$ for every $1 \leq i \leq \ell$. In particular, $mh(G) \leq pw(G)+1$.

To show the other inequality, let $\mathcal{S} = (S_1,\dots, S_\ell)$ be a parsimonious winning monotone hunter strategy in $G$ using at most $k\geq mh(G)$ hunters (it exists by Lemma~\ref{lem:monotoneParsimonious}).


\begin{claim}\label{claim:neighbourhood}
    For every $v \in V\setminus \bigcup_{1 \leq i \leq \ell} S_i$, there exists $1\leq j\leq \ell$ such that $N(v) \subseteq S_j$. 
\end{claim}
\begin{proofclaim}
    For the purpose of contradiction, let $v\notin \bigcup_{1 \leq i \leq \ell} S_i$ such that, for every $1 \leq i \leq \ell$, there exists $u_i \in N(v) \setminus S_i$. Then, $\mathcal{R} = (r_0 = v, u_1,v, u_3,\dots,u_{2i-1},v,u_{2i+1},v \dots)$ is a winning rabbit trajectory against $\cal S$, contradicting the fact that $\cal S$ is a winning hunter strategy. 
\end{proofclaim}

 Let us build a path-decomposition $\mathcal{P}$ of $G$ as follows. 
Start with $\mathcal{P}_0=\emptyset$ and let $Y_0 = V\setminus \bigcup_{1 \leq i \leq \ell} S_i$ ($Y_0$ is the set of vertices that are never shot by $\cal S$). Assume, by induction, that the sequence $\mathcal{P}_i$ and the set $Y_i$ have been built for some $0\leq i<\ell$. Let us define $\mathcal{P}_{i+1}$ and $Y_{i+1}$ as follows. Let $H_{i+1}=\{u^{i+1}_1,\dots,u^{i+1}_{r_{i+1}}\}=\{v \in Y_i \mid N(v) \subseteq S_{i+1}\}$. Let $\odot$ denote the concatenation of two sequences. Let $\mathcal{P}_{i+1}= \mathcal{P}_{i} \odot (S_{i+1} \cup \{u^{i+1}_1\},\dots,S_{i+1} \cup \{u^{i+1}_{r_{i+1}}\})$ and let $Y_{i+1}=Y_i \setminus H_{i+1}$. Finally, let ${\cal P}=(X_1,\dots,X_r)={\cal P}_{\ell}$.

Note that, by construction, for every $1 \leq i \leq r$, $|X_i| \leq k+1$, and so $w({\cal P}) \leq k$. Let us show that $\cal P$ satisfies the three properties of a path-decomposition.

By construction, $\bigcup_{1 \leq i \leq \ell} S_i \subseteq \bigcup_{1 \leq i \leq r} X_i$. Moreover, by Claim~\ref{claim:neighbourhood}, $Y_0 \subseteq \bigcup_{1 \leq i \leq r} X_i$. Hence, $\bigcup_{1 \leq i \leq r} X_i=V$ and the Property (1) of path-decomposition is satisfied. 

By construction, for every $v \in Y_0$, there exists a unique $1 \leq i \leq r$ such that $v \in X_i$. Now, for any $v \in V \setminus Y_0$, let $1 \leq i \leq j \leq r$ such that $v \in X_i \cap X_j$. Let $1 \leq i' \leq \ell$ (resp., $i'\leq j' \leq \ell$) such that $X_i$ has been built from $S_{i'}$ (resp., $X_j$ has been built from $S_{j'}$). By Lemma~\ref{lem:MonotoneProperties}, $v \in S_{p'}$ for all $i' \leq p' \leq j'$. Hence, by construction, $v \in X_p$ for all $i \leq p \leq j$. Therefore, Property (3) of the path-decomposition is satisfied for every $v \in V$. 

Let $uv \in E$. First, let us assume that $u \in Y_0$. By Claim~\ref{claim:neighbourhood}, $v \notin Y_0$. Let $1 \leq j \leq r$ such that $u \in X_j$, then by construction, $N(u) \subset X_j$ and so $u,v \in X_j$. Second, assume that $u,v \in V \setminus Y_0$. For purpose of contradiction, let us assume that, for every $1 \leq i \leq r$, $|\{u,v\} \cap X_i|\leq 1$. W.l.o.g., $M=\max \{1 \leq j \leq r \mid u \in X_j\} < m = \min \{1 \leq j \leq r \mid v \in X_j\}$ (both integers $m$ and $M$ are well defined since Properties (1) and (3) are satisfied). Let $1 \leq M' \leq \ell$ (resp., $M'\leq m' \leq \ell$) such that $X_M$ has been built from $S_{M'}$ (resp., $X_m$ has been built from $S_{m'}$). By definition of $\cal P$, $u \notin \bigcup_{M'<i\leq \ell}S_i$ and $v \in S_{m'} \setminus \bigcup_{1<i< m'}S_i$. Because $\cal S$ is parsimonious, $v \in Z_{m'-1}({\cal S})$ and so, there exists a $w \in N(v) \cap Z_{m'-2}({\cal S})\setminus S_{m'-1}$. By monotonicity of $\cal S$, $w \notin \bigcup_{1 \leq i <m'} S_i$. By Proposition~\ref{prop:reach}, $u \in Z_{m'-1}({\cal S})$. Since $u \in (Z_{m'-1}({\cal S}) \cap S_{M'}) \setminus S_{m'}$, the vertex $u$ is recontaminated, contradicting the monotonicity of $\cal S$. Therefore, in all cases, for every $uv \in E$, there exists $1 \leq j \leq r$ such that $u,v \in X_j$. Hence, Property (2) of path-decompositions is satisfied. 

Hence, $\cal P$ is a path-decomposition of width at most $k$. In particular, $pw(G) \leq mh(G)$.
\end{proof}

Theorem~\ref{theo:pw} has important consequences. 


\begin{corollary}\label{cor:noPolyAlgo}
Given an $n$-node graph $G$ and $k\in \mathbb{N}$, it is \textsf{NP}-hard to decide whether $mh(G) \leq k$. Moreover, it is \textsf{NP}-hard to approximate $mh(G)$ up to an additive error of $n^{\varepsilon}$, for $0<\varepsilon <1$.
\end{corollary}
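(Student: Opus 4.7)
The plan is to derive this corollary as a direct consequence of Theorem~\ref{theo:pw} combined with the well-known inapproximability of pathwidth due to Bodlaender, Gilbert, Hafsteinsson and Kloks. Their result states that, unless $\textsf{P}=\textsf{NP}$, there is no polynomial-time algorithm that, given an $n$-vertex graph $G$, returns an integer within additive error $n^{\varepsilon}$ of $pw(G)$ for any fixed $\varepsilon \in (0,1)$. Since Theorem~\ref{theo:pw} gives $|mh(G) - pw(G)| \leq 1$, an approximation for $mh(G)$ immediately yields an approximation for $pw(G)$ with only an extra additive term of $1$, so the hardness transfers cleanly.

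For the approximation part, I would argue by contradiction: suppose some polynomial-time algorithm returns, on input $G$ with $|V(G)|=n$, a value $A$ with $|A - mh(G)| \leq n^{\varepsilon}$ for some $\varepsilon \in (0,1)$. By Theorem~\ref{theo:pw}, $|A - pw(G)| \leq |A - mh(G)| + |mh(G) - pw(G)| \leq n^{\varepsilon} + 1$. Fixing any $\varepsilon' \in (\varepsilon, 1)$, we have $n^{\varepsilon} + 1 \leq n^{\varepsilon'}$ for all sufficiently large $n$ (the finitely many small cases being handled by brute force), so the algorithm approximates $pw(G)$ within additive error $n^{\varepsilon'}$, contradicting the inapproximability of pathwidth.

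For the first statement (\textsf{NP}-hardness of deciding $mh(G)\leq k$), a polynomial-time decision procedure would, by binary search on $k \in \{0,1,\dots,n-1\}$, compute $mh(G)$ exactly in polynomial time. This is a polynomial-time approximation with additive error $0 \leq n^{\varepsilon}$, which by the argument above contradicts the inapproximability of pathwidth. Hence the decision problem is \textsf{NP}-hard.

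The only real subtlety is recalling and citing the BGHK inapproximability of pathwidth correctly and handling the $+1$ slack between $n^{\varepsilon}$ and $n^{\varepsilon'}$ for $n$ large enough; otherwise the argument is an essentially one-line reduction from Theorem~\ref{theo:pw}.
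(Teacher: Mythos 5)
Your argument is correct and is exactly the route the paper takes: the corollary is deduced from Theorem~\ref{theo:pw} together with the BGHK additive inapproximability of pathwidth, with your version merely spelling out the $+1$ slack and the binary-search reduction that the paper leaves implicit. No further comment is needed.
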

\begin{proof}
    This comes from Theorem~\ref{theo:pw} and the fact that it is \textsf{NP}-hard to approximate the pathwidth of a graph up to an additive error of $n^\varepsilon$, for $0<\varepsilon<1$~\cite{BGHK95}.
\end{proof}

%

Moreover, Theorem~\ref{theo:pw} implies that recontamination may help in the \textsc{Hunters and Rabbit} game.

\begin{corollary}\label{cor:costMonotone}
There exists  $\varepsilon>0$ such that, for any $k\in \mathbb{N}$, there exists a tree $T$ with $h(T) \geq k$ and $mh(T)\geq (1+\varepsilon) h(T)$.
\end{corollary}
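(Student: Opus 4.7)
The plan is to use complete ternary trees as the witness family, exploiting the gap between the pathwidth of the tree (which lower-bounds $mh$ via Theorem~\ref{theo:pw}) and the bound $h(T)\leq \lceil\tfrac{1}{2}\log_2|V(T)|\rceil$ of~\cite{gruslys2015catching} recalled in the introduction. Specifically, for each integer $d\geq 1$, let $T_d$ denote the complete rooted ternary tree of depth $d$, so $|V(T_d)|=(3^{d+1}-1)/2$.

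First I would establish $pw(T_d)=d$ by induction on $d$. Since $T_d$ is obtained by attaching three vertex-disjoint copies of $T_{d-1}$ to a new root, the standard ``three disjoint subtrees'' fact for pathwidth (three pairwise disjoint subtrees of pathwidth $p$ sharing a common vertex give pathwidth at least $p+1$) yields $pw(T_d)\geq pw(T_{d-1})+1$, and a matching path decomposition of width $d$ is routine to exhibit. Theorem~\ref{theo:pw} then gives $mh(T_d)\geq d$. Next, applying the tree bound $h(T)\leq \lceil\log_2|V(T)|/2\rceil$, we get $h(T_d)\leq \tfrac{(d+1)\log_2 3}{2}+1$, so
\[
\frac{mh(T_d)}{h(T_d)} \;\geq\; \frac{d}{\tfrac{(d+1)\log_2 3}{2}+1}\;\xrightarrow[d\to\infty]{}\;\frac{2}{\log_2 3}\;=\;\log_3 4\;\approx\;1.262.
\]
Fixing any $0<\varepsilon<\log_3 4-1$ (e.g.\ $\varepsilon=1/8$), this yields $mh(T_d)\geq(1+\varepsilon)\,h(T_d)$ for all sufficiently large $d$, producing the absolute constant $\varepsilon$ demanded by the corollary.

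The main obstacle is to additionally guarantee $h(T_d)\geq k$ for an arbitrary given $k$, i.e., to show $h(T_d)\to\infty$. Since $T_{d-1}$ embeds as a subgraph of $T_d$, Lemma~\ref{lem:subgraph} already yields that $h(T_d)$ is non-decreasing in $d$; what remains is to rule out the possibility that it stabilises at some constant $C$. I would do this by a direct rabbit-strategy argument: against $C$ hunters in $T_d$, by pigeonhole, at any round at least one of the three subtrees rooted at a child of the root receives at most $\lfloor C/3\rfloor$ shots, so the rabbit can retreat into that subtree and recurse; a constant number of hunters cannot both seal off the root and win in the chosen $T_{d-1}$ once $d$ exceeds a suitable function of $C$. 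Combining the three steps, for any $k$ one picks $d$ large enough that both $h(T_d)\geq k$ and the asymptotic ratio bound of Step~2 hold, and $T=T_d$ is the required tree.
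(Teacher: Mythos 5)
Your Steps 1 and 2 are exactly the paper's proof: the same complete ternary trees $T_d$, the same lower bound $mh(T_d)\geq pw(T_d)=d$ via Parsons' Lemma and Theorem~\ref{theo:pw}, the same upper bound $h(T_d)\leq\lceil\log_2|V(T_d)|/2\rceil$ from~\cite{gruslys2015catching}, and the same constant $2\ln 2/\ln 3=\log_3 4$. (You are in fact slightly more careful than the paper in taking $\varepsilon$ strictly below $\log_3 4-1$, since the ratio tends to its limit from below.)

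The genuine problem is your Step 3. You are right that the clause $h(T)\geq k$ needs some justification, but your pigeonhole/recursion sketch does not work. First, the subtree receiving at most $\lfloor C/3\rfloor$ shots changes from round to round, and the rabbit, moving one edge per round, cannot relocate to it without crossing the root, which a single hunter can guard at every round; so ``retreat into that subtree'' is not an available move. Second, and more fundamentally, the recursion you are implicitly invoking --- three branches of hunter number at least $p$ meeting at a vertex force hunter number at least $p+1$ --- is true for $pw$ and $mh$ but \emph{false} for $h$: the trees $T_{i,q}$ of Section~\ref{sec:bip} have $q\geq 6$ branches at their root, each of hunter number exactly $2$, and yet $h(T_{i,q})=2$ (Lemma~\ref{lem:nonMonotTi}); Theorem~\ref{theo:gapMonotone&nonM} exists precisely because no Parsons-type recursion holds for the non-monotone hunter number. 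So the unboundedness of $h$ on complete ternary trees cannot be obtained by this cheap argument; it requires the lower-bound machinery of~\cite{gruslys2015catching} (which is also what the paper implicitly leans on, as its own one-line proof of the corollary does not address $h(T)\geq k$ either).
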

\begin{proof}
For any $n\in \mathbb{N}$, let $T_n$ be the rooted tree defined as follows: $T_0$ is a single node, and, for any $n>0$, $T_n$ is obtained from three copies of $T_{n-1}$ and a new node $r$ (the root of $T_n$) such that $r$ is made adjacent to each of the three roots of the copies of $T_{n-1}$. We have that $|V(T_n)|=\frac{3^{n+1}-1}{2}$ and, by the Parsons' Lemma~\cite{parsons1}, $pw(T_n)=n=\Theta(\log_3(V(T)))$. On the other hand, it is shown in~\cite{gruslys2015catching} that, for any tree $T$, $h(T)\leq \lceil \frac{\log_2 |V(T)|}{2} \rceil$. The result follows for $(1+\varepsilon)=2\frac{\ln(2)}{\ln(3)}$.
\end{proof}

\section{(Monotone) hunter number of some graph classes}
\label{sec:classes}

In this section, we characterise the monotone hunter number of several graph classes such as split graphs, interval graphs, cographs and trees. In particular, in all these cases, our results lead to a polynomial time algorithm to compute the monotone hunter number.

\subsection{Split and interval graphs}\label{sec:split}

A graph $G=(V,E)$ is a \textit{split graph} if $V=C\cup I$ can be partitioned into a set $C$ inducing an inclusion-maximal clique and a set $I$ inducing an independent set. Note that given a split graph $G$, a partition $(C, I)$ of $V(G)$ can be computed in linear time~\cite{hammer}. In what follows, we denote a split graph by $G=(C \cup I,E)$ where $C$ induces an inclusion-maximal clique and $I$ induces an independent set. 
Let us recall the following result on the pathwidth of split graphs:
\begin{lemma}~\cite{DBLP:journals/dam/Gustedt93}\label{lem:splitPw}
Let $G=(C \cup I,E)$ be a split graph. Then, $|C|-1\leq pw(G)\leq |C|$.
\end{lemma}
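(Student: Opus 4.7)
The plan is to prove the two bounds separately. Both are essentially folklore, but let me sketch the construction I would use.

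For the lower bound $pw(G) \geq |C|-1$, I would invoke the standard fact that pathwidth is monotone under taking subgraphs together with the well-known identity $pw(K_n) = n-1$. Since $G[C] = K_{|C|}$ is an induced subgraph of $G$, this yields $pw(G) \geq pw(K_{|C|}) = |C|-1$. If one wants to avoid quoting these facts, a direct argument works too: in any path-decomposition of $G$, every pair of vertices in the clique $C$ must co-occur in some bag, and by the Helly property for intervals on a path the entire clique must appear together in at least one bag, forcing a bag of size at least $|C|$.

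For the upper bound $pw(G) \leq |C|$, I would write $I = \{v_1, \ldots, v_m\}$ and define the path-decomposition $\mathcal{P} = (X_1, \ldots, X_m)$ by setting $X_i = C \cup \{v_i\}$. (If $I = \emptyset$, take the single bag $C$.) Each bag has size at most $|C|+1$, so the width is at most $|C|$. The three properties of a path-decomposition are straightforward to verify: every vertex appears in some bag (clique vertices appear in all bags, and each $v_i$ appears in $X_i$); every edge is covered, since edges inside $C$ lie in any bag, edges between $I$ and $C$ of the form $v_i c$ lie in $X_i$, and there are no edges inside $I$; and the consecutivity property holds because each clique vertex occurs in every bag while each $v_i$ occurs in exactly one bag.

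There is no real obstacle here; the only point worth double-checking is the degenerate case $|I| = 0$ (where the single bag $C$ suffices and gives pathwidth $|C|-1$, still consistent with the stated bounds) and the convention that $C$ is an inclusion-maximal clique (which is not actually used in either direction of the argument, but does ensure uniqueness of the split partition up to small ambiguities).
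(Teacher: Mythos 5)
Your proof is correct. Note that the paper does not actually prove this lemma -- it is imported as a citation from Gustedt -- so there is no in-paper argument to compare against; your two-sided argument (the clique lower bound via subgraph monotonicity and $pw(K_n)=n-1$, and the explicit decomposition with bags $C\cup\{v_i\}$ for the upper bound) is the standard self-contained proof and all three path-decomposition axioms are verified correctly, including the degenerate case $I=\emptyset$.
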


First, we have the following easy observation. 

\begin{proposition}\label{prop:split}
Let $G=(C \cup I,E)$ be a split graph. Then, $|C|-1 \leq h(G)\leq mh(G)\leq |C|$.
\end{proposition}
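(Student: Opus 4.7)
The plan is to establish the three inequalities separately, with the middle one being immediate from Proposition~\ref{prop:relationMonotone&nonM}.

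For the lower bound $|C|-1 \leq h(G)$, I will invoke Lemma~\ref{lem:subgraph} together with the known value $h(K_n)=n-1$ recalled in the introduction. Since $C$ induces a clique on $|C|$ vertices, the graph $K_{|C|}$ is a subgraph of $G$, and so $h(G) \geq h(K_{|C|}) = |C|-1$.

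For the upper bound $mh(G) \leq |C|$, I would not use Theorem~\ref{theo:pw} combined with Lemma~\ref{lem:splitPw}, because that only gives $mh(G) \leq pw(G)+1 \leq |C|+1$, which is off by one. Instead, I will exhibit an explicit monotone winning strategy using exactly $|C|$ hunters, namely $\mathcal{S}=(C,C)$ (shoot all of $C$ twice). The verification is straightforward: $Z_0({\cal S})=V$ and $S_1=C$, so $Z_1({\cal S})=\{v \in V : \exists u \in V \setminus C, uv \in E\}$. Because $I$ is an independent set, every neighbour of a vertex in $I$ lies in $C$, hence $Z_1({\cal S}) \subseteq C = S_2$, giving $Z_2({\cal S}) = \emptyset$. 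To confirm monotonicity, note that every vertex of $C$ is shot (hence cleared) at round $1$; every $v \in I$ is also cleared at round $1$ since the graph is connected (so $N(v) \neq \emptyset$) and $N(v) \subseteq C = S_1$. Any potential recontamination would therefore have to occur at round $2$, but $Z_2({\cal S})=\emptyset$ rules this out. Thus $(C,C)$ is a monotone winning hunter strategy using $|C|$ hunters, giving $mh(G) \leq |C|$.

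There is no serious obstacle here; the only subtlety is that the bound $mh(G) \leq |C|$ is sharper than what one gets by naively combining Theorem~\ref{theo:pw} with the pathwidth bound of Lemma~\ref{lem:splitPw}, so a direct strategy is needed rather than appealing to the pathwidth connection.
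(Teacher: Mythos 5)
Your proof is correct and follows essentially the same route as the paper: the lower bound via Lemma~\ref{lem:subgraph} applied to the clique (the paper bounds $h(G[C])$ by $\delta(G[C])=|C|-1$ using Lemma~\ref{lem:minDegree}, which is the same fact you invoke as $h(K_n)=n-1$), and the upper bound via the strategy $(C,C)$, whose correctness and monotonicity the paper asserts without the detailed verification you supply. No issues.
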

\begin{proof}
By Lemma~\ref{lem:subgraph}, $h(G) \geq h(G[C])$, and by Lemma~\ref{lem:minDegree}, $h(G[C]) \geq \delta(G[C])=|C|-1$.  Therefore, $h(G) \geq |C|-1$. Moreover, the hunter strategy that consists in shooting to all the vertices of $C$ twice is clearly a monotone winning hunter strategy in $G$. Hence, $h(G) \leq mh(G) \leq |C|$.
\end{proof}

The following theorem fully characterises the hunter number of split graphs. 

\begin{theorem}\label{theo:split}
Let $G=(C \cup I,E)$ be a split graph. Then, $h(G) = |C|$ if and only if for every two distinct vertices $x,y \in C$, there exists a vertex $z \in N_{I}(x)\cap N_{I}(y)$. Otherwise, $h(G)=|C|-1$.
\end{theorem}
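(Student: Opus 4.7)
The plan is to use Proposition~\ref{prop:split} (which already gives $|C|-1 \leq h(G) \leq |C|$) and prove the two directions of the equivalence, since they determine which of the two possible values $h(G)$ takes. Throughout I assume $|C| \geq 2$, which holds as soon as $|V(G)| \geq 2$ since any edge of a connected split graph lies in an inclusion-maximal clique.

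\medskip

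\noindent\textbf{Condition $\Rightarrow$ $h(G) \geq |C|$.} Fix any hunter strategy $\mathcal{S} = (S_1,\ldots,S_\ell)$ with $|S_i| \leq |C|-1$ for every $i$, and use the sets $Z_i = Z_i(\mathcal{S})$ of potential rabbit positions. I would prove by induction on $i \geq 0$ the strengthened invariant
\[
(\star)\qquad C \subseteq Z_i \quad\text{and}\quad \text{for every } x \in C,\; N_I(x) \cap Z_i \neq \emptyset,
\]
which implies $Z_i \neq \emptyset$ for all $i$, so $\mathcal{S}$ is not winning. The base case $i=0$ uses $Z_0 = V$, combined with the common neighbour hypothesis applied to $x$ and any $y \in C \setminus \{x\}$ to supply a vertex of $N_I(x)$. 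For the inductive step, the first conjunct at round $i$ is obtained as follows: for $x \in C$, if some $y \in (C \setminus \{x\}) \setminus S_i$ exists, then $y \in Z_{i-1} \setminus S_i$ by the IH and is a clique neighbour of $x$; otherwise $|S_i| \leq |C|-1$ forces $S_i = C \setminus \{x\}$, so $S_i \cap I = \emptyset$ and the second conjunct of the IH provides a witness $z \in N_I(x) \cap Z_{i-1} \setminus S_i$. For the second conjunct at round $i$, I would pick $y \in C \setminus S_i$ with $y \neq x$ (possible unless $S_i = C \setminus \{x\}$), invoke the common neighbour hypothesis to produce $z \in N_I(x) \cap N_I(y)$, and use $y \in Z_{i-1} \setminus S_i$ as the required neighbour of $z$; in the remaining case $S_i = C \setminus \{x\}$, any $z \in N_I(x)$ works because its $C$-neighbour $x$ lies in $Z_{i-1} \setminus S_i$ by the IH.

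\medskip

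\noindent\textbf{$\neg$Condition $\Rightarrow$ $h(G) \leq |C|-1$.} Fix distinct $x,y \in C$ with $N_I(x) \cap N_I(y) = \emptyset$ and propose the hunter strategy
\[
\mathcal{S} = \bigl(C \setminus \{x\},\; C \setminus \{x\},\; C \setminus \{y\},\; C \setminus \{y\},\; C \setminus \{x\}\bigr),
\]
which uses at most $|C|-1$ hunters. I would verify $Z_5 = \emptyset$ by tracking $Z_i$ round by round: shooting $C \setminus \{x\}$ confines the rabbit to $\{x\} \cup I$, so $Z_1 \subseteq C \cup N_I(x)$; the second identical shot leaves only $\{x\} \cup N_I(x)$ alive, giving $Z_2 = C \cup N_I(x)$. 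The key step is at round 3: $y \notin N_C(N_I(x))$, for otherwise a common $I$-neighbour of $x$ and $y$ would exist, contradicting the hypothesis; hence $Z_3 \subseteq (C \setminus \{y\}) \cup N_I(y)$. Symmetrically, shooting $C \setminus \{y\}$ again and using $x \notin N_C(N_I(y))$ yields $Z_4 \subseteq C \setminus \{x\}$. Finally $S_5 = C \setminus \{x\}$ exhausts $Z_4$. Degenerate cases where $N_I(x)$ or $N_I(y)$ is empty terminate the computation even earlier.

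\medskip

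\noindent\textbf{Main obstacle.} The crux is identifying the second conjunct of the invariant $(\star)$. The naive invariant $C \subseteq Z_i$ is not preserved exactly when the hunter plays $S_i = C \setminus \{x\}$, because then the only $N(x)$-vertices escaping $S_i$ lie in $N_I(x)$ and such a vertex must already belong to $Z_{i-1}$. The strengthened invariant is self-sustaining precisely because the common neighbour condition guarantees that every pair in $C$ has a shared $I$-neighbour, which is exactly what lets the second conjunct propagate through the inductive step and what the strategy of the second direction directly breaks.
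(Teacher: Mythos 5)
Your proof is correct, and both directions rest on the same combinatorial ideas as the paper's: the rabbit survives $|C|-1$ hunters by hiding in the clique and escaping through a common $I$-neighbour whenever it is cornered, and the hunter wins with $|C|-1$ shooters by alternating $C\setminus\{x\}$ and $C\setminus\{y\}$ over five rounds (your strategy is the paper's up to swapping $x$ and $y$). The one genuine difference is in the packaging of the lower bound: the paper constructs an explicit adaptive rabbit trajectory, which forces a two-round lookahead (the rabbit must inspect $S_{i+3}$ before choosing $r_{i+1}$ when $S_{i+2}=C\setminus\{r_i\}$), whereas you maintain the invariant $C\subseteq Z_i$ together with $N_I(x)\cap Z_i\neq\emptyset$ for all $x\in C$ and conclude $Z_\ell\neq\emptyset$. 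Your second conjunct is exactly what replaces the lookahead, and the resulting induction is arguably cleaner and more mechanical to verify. Two cosmetic remarks: the equality $Z_2=C\cup N_I(x)$ should be an inclusion $Z_2\subseteq C\cup N_I(x)$ (only the upper bound is used, so nothing breaks), and your justification that $|C|\geq 2$ for connected $G$ with $|V(G)|\geq 2$ is really a consequence of the inclusion-maximality of $C$ (a single clique vertex with a neighbour in $I$ would contradict maximality) — the paper glosses over this degenerate case entirely, so you are if anything more careful.
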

\begin{proof}
First we show that if, for every two distinct vertices $x,y \in C$, there exists a vertex  $z \in I$ such that $xz \in E$ and $yz \in E$, then $h(G) = |C|$. We prove this by showing that there exists a winning rabbit strategy against $|C|-1$ hunters. That is, for any (fixed) hunter strategy $\mathcal{S} = (S_1, \dots,  S_{\ell} )$ such that $|S_i| \leq |C|-1$ for every $i \geq 1$, we design a rabbit trajectory $\mathcal{R} = (r_0, r_1, \dots, r_{\ell-1})$ such that for every $i\geq 0$, $r_i \notin S_{i+1}$. Since $|S_1| \leq |C|-1$, there is at least one vertex, say, $v \in C$, such that $v \notin S_1$. Let $r_0 = v$. Hence the rabbit is safe for the first round (since $r_0 \notin S_{1}$). Now, for $i\geq 0$, let us assume that we have built $(r_0,\dots,r_i)$ such that $r_j \notin S_{j+1}$ for every $0 \leq j \leq i$ and $r_i \in C$. If there is at least one vertex $u \neq r_i$ in $C$ such that $u$ is not shot in round $i+2$ (i.e., $u \notin S_{i+2}$), then let $r_{i+1} = u$. Otherwise, $S_{i+2}=C \setminus\{r_i\}$. Moreover, observe that there is at least one vertex $w \in C$ such that $w \notin S_{i+3}$ (since $|S_{i+3}| < |C|$). We note here that $w$ may be the same vertex as $r_i$. Due to our assumptions, there exists a vertex $z \in I$  such that $wz,r_{i}z \in E$. Let us set $r_{i+1} = z$ and $r_{i+2} = w$. Observe that $r_{i+1} \notin S_{i+2}$ (since $S_{i+2} = C\setminus \{r_i\}$ and $z \in I$), $r_{i+2} \notin S_{i+3}$, and $r_{i+2} \in C$. Therefore, using the above strategy, we can design $\mathcal{R}$ such that it is a winning trajectory against $\mathcal{S}$. Therefore, $h(G) \geq |C|$. Since $h(G) \leq |C|$ (due to Proposition~\ref{prop:split}), we have that $h(G) = |C|$. 

To prove the reverse direction, we show that if there exist two distinct vertices $x,y \in C$ such that $N_{I}(x)\cap N_{I}(y)=\emptyset$ (i.e., there is no $z \in I$ such that $xz \in E$ and $yz \in E$), then $h(G) \leq |C|-1$ (and so $h(G)=|C|-1$ by Proposition~\ref{prop:split}). We prove this by giving a (simple) winning hunter strategy $\mathcal{S}$ using $|C|-1$ hunters. Let $\mathcal{S} = (S_1, S_2,S_3,S_4,S_5)$ where $ S_1=S_2=S_5= C \setminus \{y\}$ and  $S_3 = S_4 = C \setminus \{x\}$. Let $\mathcal{R}=(r_0,\dots,r_4)$ be any rabbit trajectory. If the rabbit is not shot at the first round, i.e., $r_0 \notin S_1$, then either $r_0 = y$ or $r_0 \in I$. Accordingly, we consider both these cases below to show that $\mathcal{S}$ is a winning hunter strategy. 

\smallskip
\noindent\textbf{Case 1.} $\mathbf{r_0 = y:}$ In this case, assume $r_1 \notin S_2$ (otherwise, the rabbit will be shot). Note that $r_1\in N_I(y)$. Since $N_I(x)\cap N_I(y) = \emptyset$, 
$r_2 \in C \setminus \{x\}$.  As $S_3 = C \setminus \{x\}$, $r_2 \in S_3$, and therefore, $\mathcal{S}$ is a winning hunter strategy.

\smallskip
\noindent\textbf{Case 2.} $\mathbf{r_0 \in I:}$ In this case, if $r_1 \notin S_2$, then $r_1=y$. Now, assuming that $r_2 \notin S_3$, the rabbit can either move to $x$ (i.e, $r_2 = x$) or the rabbit can move to $N_I(y)$ (i.e., $r_2 \in N_I(y)$). We have the following two cases accordingly:

\begin{enumerate}
    \item [2.a] $\mathbf{r_2 \in N_I(y)}:$ This case is similar to Case 1. Since $N_I(x)\cap N_I(y) = \emptyset$, $r_3 \in C \setminus \{x\}$.  As $S_4 = C \setminus \{x\}$, $r_3 \in S_4$, and therefore, $\mathcal{S}$ is a winning hunter strategy.
    
    \item [2.b] $\mathbf{r_2 = x:}$ In this case, if $r_3 \notin S_4$, then $r_3 \in N_I(x)$. Therefore, similarly to previous arguments, $r_4 \in C\setminus \{y\}$. Since $S_5 = C\setminus \{y\}$, $\mathcal{S}$ is a winning hunter strategy.
\end{enumerate}
This completes the proof.
\end{proof}

The above characterisation allows us to show that the hunter number and the pathwidth of split graphs coincide.

\begin{corollary}
For any split graph $G=(C\cup I, E)$, $h(G)=pw(G)$.
\end{corollary}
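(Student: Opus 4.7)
The plan is to deduce the equality by lifting the dichotomy of Theorem~\ref{theo:split} to pathwidth. By Lemma~\ref{lem:splitPw} and Proposition~\ref{prop:split}, both $h(G)$ and $pw(G)$ already lie in $\{|C|-1,|C|\}$, so it suffices to prove the matching characterisation: $pw(G)=|C|$ if and only if every two distinct $x,y\in C$ admit a common neighbour in $I$, and $pw(G)=|C|-1$ otherwise. Combined with Theorem~\ref{theo:split}, this forces $h(G)=pw(G)$. (The trivial case $|V(G)|=1$, where $h(G)=pw(G)=0$, is handled separately.)

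For the ``easy'' implication, assume some pair $x,y\in C$ satisfies $N_I(x)\cap N_I(y)=\emptyset$, and build an explicit path-decomposition of width $|C|-1$. Partition $I$ into the disjoint sets $I_x=N_I(x)$, $I_y=N_I(y)$ and $I_\ast=I\setminus(I_x\cup I_y)$. For every $v\in I_y\cup I_\ast$ define $B_v=(C\setminus\{x\})\cup\{v\}$, for every $v\in I_x$ define $B_v=(C\setminus\{y\})\cup\{v\}$, and place the bag $C$ between these two blocks. Inclusion-maximality of $C$ ensures $|N_C(v)|\leq|C|-1$ for every $v\in I$, and our choice of the omitted vertex guarantees $N_C(v)\cup\{v\}\subseteq B_v$, so every bag has size at most $|C|$. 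Checking the three axioms of a path-decomposition is then routine: clique edges are covered by the $C$-bag and every edge $cv$ by $B_v$; $x$ appears in a suffix of the sequence (the $C$-bag and every $B_v$ with $v\in I_x$), $y$ appears in a prefix, each remaining $c\in C$ appears in every bag, and each $v\in I$ appears only in $B_v$.

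The converse is the main obstacle. Assume every pair $x,y\in C$ has a common neighbour in $I$; I must show that no path-decomposition of $G$ has all bags of size $\leq|C|$. Suppose for contradiction that $P=(X_1,\dots,X_p)$ is such a decomposition, and for $v\in V$ let $[a_v,b_v]$ be the interval of indices of bags containing $v$. Choose $c_L\in C$ minimising $b_c$ and $c_R\in C$ maximising $a_c$. Since $c_Lc_R\in E$ some bag contains both, so $a_{c_R}\leq b_{c_L}$; then for every $j\in[a_{c_R},b_{c_L}]$ and every $c\in C$ the chain $a_c\leq a_{c_R}\leq j\leq b_{c_L}\leq b_c$ gives $c\in X_j$, hence $C\subseteq X_j$, and the size bound forces $X_j=C$. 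Now pick a common neighbour $z\in I$ of $c_L$ and $c_R$: the edge $zc_L$ gives $a_z\leq b_{c_L}$ and $zc_R$ gives $b_z\geq a_{c_R}$, so $[a_z,b_z]$ meets $[a_{c_R},b_{c_L}]$ and $z$ must lie in some bag equal to $C$, contradicting $z\in I$. The delicate step is precisely this interval argument: one has to identify the pair $(c_L,c_R)$ so that the entire ``plateau'' $[a_{c_R},b_{c_L}]$ must coincide with $C$, after which the contradiction with any common $I$-neighbour of $c_L$ and $c_R$ drops out.
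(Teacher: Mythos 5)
Your proof is correct and follows essentially the same route as the paper: both reduce the corollary to showing that $pw(G)=|C|$ exactly when every pair of distinct clique vertices has a common neighbour in $I$, both exhibit the same explicit width-$(|C|-1)$ decomposition when some pair has disjoint $I$-neighbourhoods, and both obtain the lower bound by forcing a bag to equal $C$ and showing that a common neighbour of the extremal clique vertices cannot be placed in any bag. Your interval (``plateau'') formulation of the converse is a mild streamlining of the paper's version, which instead takes a reduced optimal decomposition and argues about the bags adjacent to the $C$-bag; just note that when $c_L=c_R$ (and in the degenerate case $|C|=1$, $I\neq\emptyset$, which your trivial-case remark does not cover) the required neighbour $z\in I$ of $c_L$ is supplied by connectivity rather than by the common-neighbour hypothesis.
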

\begin{proof}
If $|C|=1$ and $I=\emptyset$, then $pw(G)=h(G)=0$. If 
$|C|=1$ and $I\neq \emptyset$, then $pw(G)=h(G)=1$. 
Let us now assume that $|C|>1$. 

Since $pw(G),h(G)\in \{|C|-1,|C|\}$ by Lemma~\ref{lem:splitPw} and Proposition~\ref{prop:split}, let us assume first that $h(G)=|C|$. 
By Theorem~\ref{theo:split}, for any two distinct vertices $x,y\in C$, there exist $z\in N_I(x)\cap N_I(y)$. For purpose of contradiction, let us assume that there exists a reduced optimal path-decomposition $P=(X_1,\dots, X_\ell)$ of width $|C|-1$. 
It is well known that there exists $1 \leq i \leq \ell$ with $C \subseteq X_i$. Moreover, since $P$ has width $|C|-1$, $X_i=C$.
Let us prove now that $1<i<\ell$. Let us suppose by contradiction that $i=1$, i.e., $X_1=C$ (the case $i=\ell$ is symmetric). Since $P$ has width $|C|-1$ and is reduced, let $v \in X_1\setminus X_2$ and let $z \in N_I(v)$ ($z$ exists since $|C|>1$ and any two distinct vertices of $C$ have a common neighbour in $I$). Since $v \in X_1=C$ and $v \notin \bigcup_{1<j\leq \ell}X_j$, no bag contains both $v$ an $z$, contradicting the definition of a path-decomposition. Hence, $1<i<\ell$.
Now, let $x \in C\setminus X_{i-1}$ and $y\in C \setminus X_{i+1}$ ($x$ and $y$ exists since $P$ is reduced and $X_i=C$). Let $z \in N_I(x) \cap N_I(y)$.
If $x=y$, no bag contains both $x$ and $z$ (since $x$ appears only in $X_i=C$). If $x\neq y$, there must exist $1 \leq j \leq \ell$ such that $\{y,z\}\subseteq X_j$. Since $y \notin \bigcup_{i<h\leq \ell}X_h$, $j<i$ and since $z \in X_j\setminus X_i$, $z\notin \bigcup_{i<h\leq \ell}X_h$. Finally, since $x \notin \bigcup_{1 \leq h<i}X_h$, there is no bag containing both $x$ and $z$, contradicting the definition of a path-decomposition.

Second, let us assume that $h(G)=|C|-1$. By Theorem~\ref{theo:split}, there exist distinct vertices $x,y\in C$ such that $N_I(x)\cap N_I(y)=\emptyset$. Let us prove that, in that case, $pw(G)=|C|-1$. Let $N_I(x)=\{x_1,\dots, x_m\}$ and $I\setminus N_I(x)=\{y_1,\dots, y_t\}$. Then, $(x_1 \cup (C\setminus \{y\}),\dots,x_m \cup (C\setminus \{y\}),C,y_1 \cup (C\setminus \{x\}),\dots,y_t \cup (C\setminus \{x\}))$ is a path decomposition of $G$ with width $|C|-1$ and, since $pw(G)\geq |C|-1$ by Lemma~\ref{lem:splitPw}, we get that $pw(G)=|C|-1$.
\end{proof}

Next, let us characterise the monotone hunter number of split graphs. We start with the following general lemma.


\begin{lemma}\label{lem:simplicialClique}
Let $G$ be a graph that contains a complete subgraph $C$ such that $N(v)\setminus C \neq \emptyset$ for every $v \in C$. Then, $mh(G)\geq |C|$. 
\end{lemma}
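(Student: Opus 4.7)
The plan is to argue by contradiction: suppose $\mathcal{S}=(S_1,\ldots,S_\ell)$ is a monotone winning hunter strategy in $G$ with $h(\mathcal{S})\leq |C|-1$, which by Lemma~\ref{lem:monotoneParsimonious} may be taken to be parsimonious. The goal will be to exhibit a round $r$ with $|S_r|\geq|C|$.

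First, every $v\in C$ must be shot by $\mathcal{S}$: otherwise, the argument proving Claim~\ref{claim:neighbourhood} furnishes an index $j$ with $N(v)\subseteq S_j$, and since $N(v)\supseteq(C\setminus\{v\})\cup(N(v)\setminus C)$ with $N(v)\setminus C\neq\emptyset$ by hypothesis, this already yields $|S_j|\geq |C|$. By Lemma~\ref{lem:MonotoneProperties} the rounds at which a given $v\in C$ is shot then form a contiguous interval $I_v=[i_v,j_v]$.

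The core of the proof will be the claim that the family $\{I_v\}_{v\in C}$ has a common intersection. As these are intervals of integers, Helly's theorem on the real line reduces this to pairwise intersection; any $r\in\bigcap_{v\in C}I_v$ then satisfies $C\subseteq S_r$, and the desired contradiction $|S_r|\geq |C|$ follows.

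To establish pairwise intersection, suppose for contradiction that $v,w\in C$ satisfy $I_v\cap I_w=\emptyset$, say with $j_v<i_w$. By parsimony $w\in Z_{i_w-1}$, and since $\mathcal{S}$ is monotone Lemma~\ref{lem:reach2} makes $(Z_p)$ non-increasing, so $w\in Z_{j_v-1}$. Also $w\notin S_{j_v}$ (because $j_v<i_w$) and $w\in N(v)$ (because $C$ is a clique), hence $w$ witnesses $v\in Z_{j_v}$. Two cases then arise. If $v\in Z_{j_v+1}$, monotonicity applied with clearing round $i_v\leq j_v<j_v+1$ forces $v\in S_{j_v+2}$, contradicting that $j_v$ is the last shot of $v$ (or, if $j_v+1=\ell$, contradicting that $\mathcal{S}$ is winning). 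If instead $v\notin Z_{j_v+1}$, the definition of $Z$ gives $N(v)\cap Z_{j_v}\subseteq S_{j_v+1}$; but $v\in Z_{j_v}\setminus S_{j_v+1}$ (the latter since $j_v+1>j_v$), so for every $y\in N(v)$ the vertex $v$ itself witnesses $y\in Z_{j_v+1}$. Hence $N(v)\subseteq Z_{j_v+1}\subseteq Z_{j_v}$ by Lemma~\ref{lem:reach2}, whence $N(v)\subseteq S_{j_v+1}$, and $|N(v)|\geq |C|$ yields $|S_{j_v+1}|\geq |C|$, the sought contradiction.

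The main obstacle will be the ``touching'' configuration $j_v+1=i_w$: there $w$ is already in $S_{j_v+1}$ and cannot directly propagate contamination from $v$. The second case above is designed precisely to handle this, by exploiting the monotone non-increasingness of $(Z_p)$ to collapse the entire neighbourhood of $v$ --- of cardinality at least $|C|$ --- into a single shot set.
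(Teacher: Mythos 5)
Your proof is correct, and it takes a genuinely different route from the paper's. The paper first passes to $H=G[N[C]]$, locates the \emph{first} round $i$ at which $|C|-1$ vertices of $C$ are shot, exhibits a rabbit trajectory inside $C$ placing the rabbit on the unshot vertex $v$ at round $i-1$, and then derives a recontamination of some $z\in C$ at the first later round where $w\in N(v)\setminus C$ (or $v$ itself) is shot, via Proposition~\ref{prop:reach}. Your argument is more structural: you use Lemma~\ref{lem:MonotoneProperties} to turn the set of shooting rounds of each vertex of $C$ into an interval, reduce the statement to pairwise intersection of these intervals via Helly on the line, and prove pairwise intersection by a short contamination argument (with the clean observation that $|N(v)|\geq |C|$ for every $v\in C$, which also disposes of vertices of $C$ that are never shot). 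What the paper's approach buys is that it only needs to track a single recontamination event; what yours buys is modularity and a transparent explanation of \emph{why} $|C|$ hunters are needed, namely that monotonicity forces all of $C$ to be shot simultaneously in some round. The two boundary situations worth checking --- $j_v+1=\ell$, where the contradiction is with winningness rather than monotonicity, and the ``touching'' configuration $j_v+1=i_w$, which is absorbed by your second case collapsing all of $N(v)$ into $S_{j_v+1}$ --- are both handled explicitly in your write-up, so I see no gap.
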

\begin{proof}
By Lemmas~\ref{lem:subgraph},~\ref{lem:minDegree} and Proposition~\ref{prop:relationMonotone&nonM}, $mh(G) \geq h(G) \geq h(C) \geq |C|-1$.
Let $H=G[N[C]]$. We will show that $mh(H)\geq |C|$ and so, the result will follow from Lemma~\ref{lem:MonotoneSubgraph}.

Let us assume by contradiction that $mh(H)=|C|-1$. By Lemma~\ref{lem:monotoneParsimonious}, there exists a parsimonious monotone winning hunter strategy ${\cal S}=(S_1,\dots,S_{\ell})$ in $H$ using $|C|-1$ hunters.

There must be an index $1 \leq i \leq \ell$ such that $|S_i \cap C|=|C|-1$. Otherwise, $(r_0,\dots,r_{\ell})$ where $r_0 \in C \setminus S_1$ and $r_j \in C \setminus (S_{j+1} \cup \{r_{j-1}\})$ for every $1 \leq j \leq \ell$ is a winning rabbit trajectory, contradicting the fact that $\cal S$ is winning. 
Hence, let $i$ be the smallest integer such that $|S_i \cap C|=|C|-1$, let $\{v\}=C \setminus S_i$ and let $w \in N(v) \setminus C$ (which exists by hypothesis). Let us define a rabbit trajectory ${\cal R}=(r_0,\dots,r_{i-1}=v)$  such that $r_j \in C \setminus (S_{j+1} \cup \{r_{j-1}\})$ for every $1 \leq j <i-1$, which is possible since $|S_j \cap C|<|C|-1$ for all $j<i$. Thus $v\in Z_{i-1}\setminus S_i$. Therefore, $v \notin \bigcup_{1 \leq j \leq i}S_j$ since otherwise, $v$ would have been recontaminated.

Let us show now that $w \notin \bigcup_{1 \leq j <i}S_j$. Towards contradiction, let us assume that $w \notin \bigcup_{1 \leq j <i}S_j$. If $w \notin S_{i+1}$, then let $r_{i}=w$: this contradicts the monotonicity of $\cal S$ (since $w \in Z_{i} \cap \bigcup_{1 \leq j <i}S_j) \setminus S_{i+1}$). Hence, $w \in S_{i+1}$ and so there exists $z \in C \setminus (S_{i+1} \cup \{v\})$. In this latter case, let $r_i=z$, contradicting the monotonicity of $\cal S$ (since $z \in Z_{i} \cap S_i) \setminus S_{i+1}$).

Since $w \notin \bigcup_{1 \leq j <i}S_j$, let $j>i$ be the smallest integer such that $w \in S_j$, or,  if $w$ is never shot, let $j>i$ be the smallest integer such that $v \in S_j$ (it must exists otherwise the rabbit may oscillate between $v$ and $w$ without being never shot). In both cases, let $z \in C \setminus (S_j \cup \{v\})$. Thus, by Proposition~\ref{prop:reach}, 
$z\in Z_{j-1}\setminus S_j$, contradicting the monotonicity of $\cal S$.
 %
%
\end{proof}

Recall that a vertex in a graph $G$ is {\it simplicial} if its neighbourhood induces a clique. In particular, in a split graph $G=(C \cup I, E)$, a vertex $v \in C$ is simplicial if and only if $N(v) \setminus C =\emptyset$ (recall that $C$ is supposed to be an inclusion-maximal clique).

\begin{theorem}\label{theo:monotoneSplit}
Let $G=(C \cup I, E)$ be a split graph. Then, $mh(G)=|C|-1$ if and only if there exists a simplicial vertex in $C$. Otherwise, $mh(G)=|C|$.
\end{theorem}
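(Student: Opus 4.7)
The plan is to use Proposition~\ref{prop:split}, which already pins $mh(G)$ to the set $\{|C|-1,|C|\}$, and then prove the two implications separately.

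For the direction where no vertex of $C$ is simplicial, the key observation is that because $C$ is an inclusion-maximal clique of $G$, a vertex $v \in C$ is simplicial if and only if $N(v)\setminus C = \emptyset$ (any neighbour of $v$ outside $C$ would lie in $I$ and could be added to $C$, contradicting maximality). So the hypothesis ``no vertex of $C$ is simplicial'' is exactly the hypothesis of Lemma~\ref{lem:simplicialClique} applied with the complete subgraph $C$. That lemma gives $mh(G)\geq |C|$, which combined with the upper bound $mh(G)\leq |C|$ from Proposition~\ref{prop:split} yields $mh(G)=|C|$.

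For the converse, assume some $v\in C$ is simplicial. I would exhibit the explicit two-round strategy $\mathcal{S}=(S_1,S_2)$ with $S_1=S_2=C\setminus\{v\}$, which uses $|C|-1$ hunters. The structural fact to establish first is that simpliciality of $v$, combined with the maximality of $C$, forces both $N(v)\subseteq C\setminus\{v\}$ and $uv\notin E$ for every $u\in I$ (otherwise $C\cup\{u\}$ would be a strictly larger clique). Consequently, every vertex of $V\setminus(C\setminus\{v\})=I\cup\{v\}$ has its entire neighbourhood inside $C\setminus\{v\}$, so after round~1 one gets $Z_1\subseteq C\setminus\{v\}=S_2$ and hence $Z_2=\emptyset$, showing that $\mathcal{S}$ is winning. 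Monotonicity is then automatic: since $Z_2=\emptyset$, no vertex is contaminated after round~1, so the monotonicity condition from Section~\ref{ssec:propMonot} has no nontrivial instance to check. The degenerate case $|C|=1$ deserves a sentence: then $v$ simplicial plus connectivity forces $V=\{v\}$, so $mh(G)=0=|C|-1$ trivially.

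I do not expect any serious obstacle. The only subtle point is purely definitional, namely that in a \emph{maximal} clique of a split graph the notion of simpliciality coincides with the ``$N(v)\setminus C\neq \emptyset$'' condition in Lemma~\ref{lem:simplicialClique}; once this equivalence is noted, one direction is a direct citation of that lemma and the other direction is the analysis of an explicit two-shot strategy whose correctness comes down to a single neighbourhood containment.
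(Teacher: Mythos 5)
Your proposal is correct and follows essentially the same route as the paper: the lower bound via Lemma~\ref{lem:simplicialClique} (using the observation that, for an inclusion-maximal clique $C$, simpliciality of $v\in C$ is equivalent to $N(v)\setminus C=\emptyset$) together with Proposition~\ref{prop:split}, and the explicit two-round strategy $(C\setminus\{v\},C\setminus\{v\})$ when a simplicial $v\in C$ exists. You merely spell out the winning/monotonicity verification and the $|C|=1$ degeneracy that the paper leaves implicit.
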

\begin{proof}
Note first that if there is no simplicial vertex in $C$, then by Lemma~\ref{lem:simplicialClique}, $mh(G)\geq |C|$ and so, by Proposition~\ref{prop:split}, $mh(G)=|C|$.
Otherwise, if there exists a simplicial vertex $v\in C$, then $S=(C\setminus v,C\setminus v)$ is a monotone winning hunter strategy in $G$.
\end{proof}

Note that above results imply that there exist split graphs $G$ for which $mh(G)\neq h(G)$, i.e., recontamination helps in split graphs. For instance, let $G$ be the split graph obtained from a clique $C$ and an independent set $I$ by adding a perfect matching between $C$ and $I$. By Theorems~\ref{theo:monotoneSplit} and~\ref{theo:split}, we get that $mh(G)=n$ and $h(G)=n-1$.

To conclude this section, let us show another application of Lemma~\ref{lem:simplicialClique}. Recall that an {\it interval graph} is the intersection graph of a set of intervals in the real line. It is well known that, for any interval graph $G$, $pw(G)=\omega(G)-1$ where $\omega(G)$ is the maximum size of a clique in $G$, and that $G$ admits an optimal path-decomposition where each bag induces a complete graph.

\begin{theorem}
Let $G$ be an interval graph. Then, $h(G)=mh(G)=\omega(G)-1$ if every maximum clique has a simplicial vertex. Otherwise, $mh(G)=\omega(G)$.
\end{theorem}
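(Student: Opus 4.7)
\medskip
\noindent\textbf{Proof plan.} The bounds of Theorem~\ref{theo:pw}, combined with the classical equality $pw(G)=\omega(G)-1$ for interval graphs, immediately give $\omega(G)-1\leq mh(G)\leq \omega(G)$, while Lemmas~\ref{lem:subgraph} and~\ref{lem:minDegree} applied to the $K_{\omega(G)}$-subgraph yield $h(G)\geq \omega(G)-1$. If some maximum clique $C$ of $G$ has no simplicial vertex, then $N(v)\setminus C\neq \emptyset$ for every $v\in C$, and Lemma~\ref{lem:simplicialClique} applied to $C$ gives $mh(G)\geq |C|=\omega(G)$. Combined with the upper bound, this yields $mh(G)=\omega(G)$ in this case, which settles the ``otherwise'' part of the statement.

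The interesting direction is to exhibit a monotone winning hunter strategy using $\omega(G)-1$ hunters, under the assumption that every maximum clique contains a simplicial vertex. Since $G$ is an interval graph, it admits a reduced path-decomposition $(X_1,\dots,X_\ell)$ in which every bag $X_i$ is a maximal clique of $G$, with $|X_i|\leq \omega(G)$. For each bag $X_i$ with $|X_i|=\omega(G)$, I fix a simplicial vertex $v_i$ of $G$ inside $X_i$; since $N[v_i]$ is a clique containing $X_i$ and $|X_i|=\omega(G)$, we get $N[v_i]=X_i$, which together with the reduced-path-decomposition property forces $v_i$ to appear in no bag other than $X_i$. I then consider the strategy $\mathcal{S}$ of length $2\ell$ defined by $S_{2i-1}=S_{2i}=X_i\setminus\{v_i\}$ whenever $|X_i|=\omega(G)$, and $S_{2i-1}=S_{2i}=X_i$ otherwise; it uses at most $\omega(G)-1$ hunters.

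To prove $\mathcal{S}$ is winning, the plan is to show by induction on $i$ the invariant $Z_{2i}\subseteq \bigcup_{j>i} X_j$. The standard separation argument underlying the upper bound in Theorem~\ref{theo:pw} handles every vertex of $Z_{2i-2}$ lying outside $X_i$; the only new ingredient is the maximum-bag case, in which a rabbit sitting at $v_i$ at the start of round $2i-1$ is forced, since $N(v_i)=X_i\setminus\{v_i\}\subseteq S_{2i-1}$, into $X_i\setminus\{v_i\}$ at round $2i-1$ and then caught when the same set is shot again at round $2i$. Applied at $i=\ell$, this gives $Z_{2\ell}=\emptyset$. For monotonicity, every vertex $u$ distinct from all the chosen $v_j$ is shot in the contiguous block of rounds $\{2a-1,\dots,2b\}$ where $X_a,\dots,X_b$ are the bags containing $u$, and the invariant ensures $u\notin Z_j$ for $j\geq 2b$, ruling out later recontamination. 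A simplicial vertex $v_i$ is never shot but is cleared at round $2i-1$ through the neighbourhood clause (since $N(v_i)\subseteq S_{2i-1}$), and the invariant together with $v_i\notin\bigcup_{k>i}X_k$ forces $v_i\notin Z_j$ for every $j\geq 2i-1$, again preventing recontamination. We therefore obtain $mh(G)\leq \omega(G)-1$, and combined with the lower bounds above this gives $h(G)=mh(G)=\omega(G)-1$ in this case.

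The main obstacle is to carefully verify monotonicity for the simplicial vertices $v_i$: a priori such a $v_i$ could be cleared strictly earlier than round $2i-1$ (for instance, in degenerate situations where $N(v_i)\subseteq S_r$ for some $r<2i-1$), and one must then check that no recontamination occurs between this earlier clearing and round $2i-1$. The crucial structural facts are that $v_i$ has no neighbour outside $X_i$ and appears in no bag $X_k$ with $k\neq i$, so the invariant $Z_{2p}\subseteq \bigcup_{q>p}X_q$ together with the consecutive-bags property of the path decomposition is enough to guarantee $v_i\notin Z_j$ for every $j\geq 2i-1$, independently of when $v_i$ was first cleared.
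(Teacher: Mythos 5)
Your proposal follows essentially the same route as the paper: the bounds come from Theorem~\ref{theo:pw}, Lemma~\ref{lem:minDegree} and Lemma~\ref{lem:simplicialClique}, and the upper bound in the good case is the same strategy of shooting each clique bag of a path-decomposition twice while omitting a simplicial vertex (the paper omits one from every bag that has one, you only from maximum bags; both give $\omega(G)-1$ hunters). The one loose end is your monotonicity check: the definition also forbids recontamination in the window between a possible \emph{early} clearing of a vertex (via the neighbourhood clause, at some round $r$ strictly before its bag is first shot) and the start of its shooting block, and your stated justification only controls $Z_j$ for $j\geq 2i-1$ (resp.\ $j\geq 2b$). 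This is most cleanly closed by invoking Lemma~\ref{lem:nonMonotone}: since the strategy is winning, non-monotonicity would force some vertex that was actually \emph{shot} to be recontaminated, so it suffices to verify exactly the property you already establish for shot vertices, and the never-shot simplicial vertices $v_i$ need no separate treatment.
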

\begin{proof}
By Theorem~\ref{theo:pw}, $pw(G) \leq mh(G)\leq pw(G)+1=\omega(G)$. Moreover, by Lemma~\ref{lem:minDegree}, $h(G)\geq \omega(G)-1$. If there exists a clique of maximum size that does not contain any simplicial vertex, then by Lemma~\ref{lem:simplicialClique}, $mh(G)=\omega(G)$. Otherwise, let $(X_1,\dots,X_{\ell})$ be an optimal path-decomposition of $G$ such that all bags induce a complete graph. For every $1 \leq i \leq \ell$, if $X_i$ contains a simplicial vertex $v_i$, let $Y_i=\{v_i\}$ and let $Y_i=\emptyset$ otherwise. Then, $(X_1\setminus Y_1,X_1\setminus Y_1,X_2\setminus Y_2,X_2\setminus Y_2,X_3\setminus Y_3,\dots,X_{\ell}\setminus Y_{\ell},X_{\ell}\setminus Y_{\ell})$ is a monotone winning hunter strategy using $\omega(G)-1$ hunters. 
\end{proof}

It follows that $\omega(G)-1 \leq h(G)\leq \omega(G)$. But, the question of deciding $h(G)$ in interval graph when some maximum clique has no simplicial vertex seems more challenging.

\subsection{Cographs} \label{ssec:cograph}

The class of {\it cographs} can be defined recursively as follows~\cite{CPS85}. One vertex is a cograph. Given two cographs $A$ and $B$, their disjoint union $A \cup B$ is a cograph, and their join $A \Join B$ (where all edges between $A$ and $B$ are added) is a cograph. Note that a decomposition of a cograph ({\it i.e.}, a building sequence of unions and joins performed from single vertices) can be computed in linear time~\cite{CPS85}.

\begin{theorem}\label{theo:cographs}
$mh(G)$ can be computed in linear time in the class of cographs.
\end{theorem}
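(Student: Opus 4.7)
The plan is to perform structural induction on the cotree of $G$, which can be computed in linear time following~\cite{CPS85}. Since cotrees of connected cographs may have union-nodes corresponding to disconnected subgraphs, I would first extend $mh$ to (possibly) disconnected cographs and establish that $mh(G) = \max_i mh(G_i)$ when $G$ has connected components $G_1,\ldots,G_t$. The upper bound uses sequential execution of the components' optimal monotone strategies (nothing is ever recontaminated since components are pairwise non-adjacent), and the lower bound follows from the observation that restricting a monotone winning strategy in $G$ to the vertex set of any component yields a monotone winning strategy in that component, generalising Lemma~\ref{lem:MonotoneSubgraph}.

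I then claim the recurrence $mh(K_1) = 0$, $mh(A \cup B) = \max(mh(A), mh(B))$ (when $A, B$ are both non-empty), and $mh(A \Join B) = \min(|A| + mh(B),\, |B| + mh(A))$, which can be evaluated by a bottom-up traversal of the cotree in total linear time. The first two equalities follow directly from the observation above. For the upper bound $mh(A \Join B) \leq |A| + mh(B)$, given an optimal parsimonious monotone winning strategy $\mathcal{S}^B = (S_1^B,\ldots,S_\ell^B)$ for $B$ (which exists by Lemma~\ref{lem:monotoneParsimonious}), I define $\mathcal{S} = (A \cup S_1^B,\ldots,A \cup S_\ell^B)$ in $G = A \Join B$. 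A straightforward induction on $i$ shows that $Z_i(\mathcal{S}) = A \cup Z_i(\mathcal{S}^B)$ whenever $Z_{i-1}(\mathcal{S}^B) \setminus S_i^B$ is non-empty, using that every $B$-vertex is adjacent to every $A$-vertex in $G$; hence $\mathcal{S}$ is winning, uses $|A| + mh(B)$ hunters, and monotonicity can be verified for both $A$-vertices (always shot) and $B$-vertices (inheriting it from $\mathcal{S}^B$). The bound $mh(A \Join B) \leq |B| + mh(A)$ is symmetric.

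For the matching lower bound, I would appeal to Theorem~\ref{theo:pw}, which gives $mh(A \Join B) \geq pw(A \Join B)$, together with the classical formula $pw(A \Join B) = \min(|A| + pw(B),\, |B| + pw(A))$ (a consequence of the Helly property applied to the intervals induced on the bags of an optimal reduced path-decomposition by the $A$- and $B$-vertices). By strong induction on the cotree, I may assume $pw(X) = mh(X)$ for every proper subterm $X$ of $G$, so that $pw(A \Join B) = \min(|A| + mh(B),\, |B| + mh(A))$, which meets the upper bound and completes the equality.

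The main obstacle I anticipate lies in the careful monotonicity verification for the strategy $\mathcal{S}$ in the join upper bound: a vertex $v \in B$ may be cleared in $\mathcal{S}$ without ever being shot---namely when all of its $B$-neighbours in $Z_{i-1}(\mathcal{S}^B)$ are shot at round $i$---and one must check that the extra adjacencies $v$ acquires to all of $A$ in $G$ do not lead to its recontamination. The identity $Z_i(\mathcal{S}) = A \cup Z_i(\mathcal{S}^B)$, combined with the fact that $A$ is entirely shot at every round, makes this verification go through: the $G$-neighbourhood of $v$ in $Z_{i-1}(\mathcal{S})$ restricted to $B$ equals its $B$-neighbourhood in $Z_{i-1}(\mathcal{S}^B)$, while its $A$-neighbours in $Z_{i-1}(\mathcal{S})$ are always all contained in $S_i$.
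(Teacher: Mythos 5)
Your recurrence is exactly the paper's ($mh(A\cup B)=\max(mh(A),mh(B))$ and $mh(A\Join B)=\min(|A|+mh(B),|B|+mh(A))$), and your upper bound for the join --- shoot all of $A$ at every round while running an optimal parsimonious monotone strategy on $B$ --- is the same construction the paper uses. Where you genuinely diverge is the lower bound for the join. The paper proves $mh(A\Join B)\geq\min(|A|+mh(B),|B|+mh(A))$ directly on strategies: starting from a shortest parsimonious monotone winning strategy, it finds a vertex $v\notin Z_1$, deduces $V(B)\subseteq S_1$, and shows by induction that $V(B)$ must be contained in \emph{every} shot, so that deleting $B$ from the strategy leaves a monotone winning strategy on $A$ with $k-|V(B)|$ hunters. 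You instead route through Theorem~\ref{theo:pw} ($pw\leq mh$) and the Bodlaender--M\"ohring join formula $pw(A\Join B)=\min(|A|+pw(B),|B|+pw(A))$, carrying the invariant $mh(X)=pw(X)$ through the cotree induction. Your argument is valid and self-consistent (the invariant is re-established at every node), and it buys you the clean byproduct $mh(G)=pw(G)$ for every cograph, which the paper never states explicitly; the price is a dependence on an external classical result whose Helly-type proof you only sketch, whereas the paper's argument is self-contained and, as a bonus, its structural conclusion ($V(B)$ is shot at every round) is reused immediately afterwards in Lemma~\ref{lem:cographs-gap}.

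Two details in your upper bound deserve a line each. First, if $B$ has isolated vertices (e.g.\ $B=\overline{K_n}$), a winning strategy $\mathcal{S}^B$ may end with $Z_{\ell-1}(\mathcal{S}^B)\setminus S^B_\ell\neq\emptyset$ even though $Z_\ell(\mathcal{S}^B)=\emptyset$, in which case your $\mathcal{S}$ has $Z_\ell(\mathcal{S})=A\neq\emptyset$ and needs one extra round shooting $A$; the paper's own one-line assertion has the same blemish, so this is cosmetic. Second, a vertex $v\in B$ can be cleared in $G$ at a round $i$ where $N_B(v)\cap Z_{i-1}(\mathcal{S}^B)=\emptyset$, i.e.\ \emph{earlier} than it is cleared in $B$, so monotonicity is not literally "inherited"; the fix is that such a $v$ satisfies $v\notin Z_i(\mathcal{S}^B)$, hence by Lemma~\ref{lem:reach2} $v\notin Z_j(\mathcal{S}^B)=Z_j(\mathcal{S})\cap B$ for all $j\geq i$, so it can never be recontaminated. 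Both points are easily patched and do not affect the correctness of your approach.
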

\begin{proof}
Let $A$ and $B$ be two cographs. We prove that: 
\begin{itemize}
\item $mh(A \cup B)=max(mh(A),mh(B))$, and
\item  $mh(A \Join B)=min(mh(A)+|V(B)|, |V(A)|+mh(B))$.
\end{itemize}
The result then follows from the linear time algorithm to compute the recursive decomposition of cographs~\cite{CPS85}.

The first statement is obvious, so let us prove the second one. Let $G=A \Join B$ and let ${\cal S}^A=(S^A_1,\dots,S^A_{\ell})$ and ${\cal S}^B$ be two monotone winning hunter strategies for $A$ and $B$ and using respectively $mh(A)$ and $mh(B)$ hunters. 
Note that ${\cal S}^A\cup V(B)=(S^A_1 \cup V(B),\dots,S^A_{\ell}\cup V(B))$ and ${\cal S}^B\cup V(A)$ are both monotone winning hunter strategies in $G$. Therefore,  $mh(G)\leq min(mh(A)+|V(B)|, |V(A)|+mh(B))$. 

Let ${\cal S}=(S_1,\dots, S_{\ell})$ be a parsimonious monotone winning hunter strategy in $G$ using at most $k \geq mh(G)$ hunters (it exists by Lemma~\ref{lem:monotoneParsimonious}) and such that $\ell$ is minimized among all such strategies. If $\ell =1$, then $k= |S_1|= |V(G)| \geq min(mh(A)+|V(B)|, |V(A)|+mh(B))$. Hence, let us assume that $\ell>1$.
Let ${\cal Z}=(Z_0,\dots Z_\ell)$ be the set of contaminated vertices with respect to ${\cal S}$. Note first that since $\ell$ is minimum, $(S_2,\dots, S_\ell)$ is not a winning hunter strategy, and so $Z_1\neq V$. Let $v\in V\setminus Z_1$.

Let us assume that $v\in V(A)$ (the case $v \in V(B)$ is symmetric). Since $v\notin Z_1$ and $Z_0=V$, $N(v)\subseteq S_1$ . Since $B\subseteq N(v)$, we have that $V(B)\subseteq S_1$. Moreover, $V(B)\subseteq Z_1$. Otherwise, there exists $w\in B$ such that $N(w)\subseteq S_1$ and since $V(A)\subseteq N(w)$, we would have $S_1=V(A)\cup V(B)=V(G)$, a contradiction to the fact that $\ell>1$.

Let us prove by induction on $i$ that, for every $1\leq i < \ell$, $V(B)\subseteq Z_i$ and $V(B)\subseteq S_i$. This statement holds for $i=1$ by the previous paragraph.
By induction, let us assume that $V(B)\subseteq Z_i$ and $V(B)\subseteq S_i$ for some $1 \leq i <\ell-1$. 
Since ${\cal S}$ is monotone, $V(B)\subseteq S_{i+1}$. 
Let us assume that there exists $b\in V(B)$ such that $b\notin Z_{i+1}$. It implies that $V(A)\cap Z_i\subseteq S_{i+1}$. Therefore, $Z_i=(V(A) \cap Z_{i}) \cup (V(B) \cap Z_{i})\subseteq S_{i+1}$, which implies that $Z_{i+1}=\emptyset$,  contradicting the minimality of $\ell$. 
Thus $V(B)\subseteq Z_{i+1}$ and the induction hypothesis holds for $i+1$. In particular, $V(B)\subseteq Z_{\ell-1}$ and so, by monotonicity, $V(B)\subseteq S_{\ell}$. Therefore, $V(B) \subseteq S_i$ for all $1 \leq i \leq \ell$.
Since $V(B) \subseteq S_i$ for all $1 \leq i \leq \ell$, the strategy ${\cal S}\cap V(A)=(S_1 \cap V(A),\dots,S_{\ell} \cap V(A))$ is a monotone winning hunter strategy in $G[A]$ using $k-|V(B)|$ hunters. Hence, $k-|V(B)|\geq mh(A)$ which concludes the proof.
%
\end{proof}

Once again, the case of the hunter number seems more challenging. In particular, the following lemma shows that recontamination may help in cographs.

\begin{lemma}\label{lem:cographs-gap}
For every $k \geq 2$, there exists a cograph $G$ such that $h(G)\geq k$ and $mh(G)\geq \frac{3}{2}h(G)-1$.
\end{lemma}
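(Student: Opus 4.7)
The plan is to exhibit, for each $n \geq 1$, an explicit cograph
\[ G_n = (K_n \cup I_{2n-1}) \Join I_{2n}, \]
where $I_p$ denotes the edgeless graph on $p$ vertices, and to show that $h(G_n) = 2n$ while $mh(G_n) = 3n-1$. Given $k \geq 2$, setting $n = \lceil k/2 \rceil$ then yields $h(G_n) \geq k$ and $mh(G_n) = \frac{3}{2} h(G_n) - 1$, which proves the lemma.

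The value $mh(G_n) = 3n-1$ will follow from a double application of Theorem~\ref{theo:cographs}: the disjoint-union rule gives $mh(K_n \cup I_{2n-1}) = \max(n-1,0) = n-1$, and the join rule then yields $mh(G_n) = \min((n-1) + 2n,\, (3n-1) + 0) = 3n-1$. This is the easy half.

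For $h(G_n)$, I would write $A = V(K_n)$, $B = V(I_{2n-1})$ and $C = V(I_{2n})$. Every vertex of $B$ has exactly $|C| = 2n$ neighbours while the vertices of $A \cup C$ have degree $3n-1$, so $\delta(G_n) = 2n$ and Lemma~\ref{lem:minDegree} gives $h(G_n) \geq 2n$. For the matching upper bound, the plan is to exhibit the (necessarily non-monotone) four-round hunter strategy $\mathcal{S} = (C, A, A, C)$, which uses at most $|C| = 2n$ hunters, and to verify that its contamination sequence satisfies $Z_1 \subseteq A \cup C$, $Z_2 \subseteq A \cup B$, $Z_3 \subseteq C$ and $Z_4 = \emptyset$, using only the facts that $A$ is a clique, that $N(B) \subseteq C$, and that $A \cup B$ is fully joined to $C$. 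This gives $h(G_n) \leq 2n$.

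The only creative step is the choice of parameters: taking $|I_{2n}| = 2n$ pins both $\delta(G_n)$ and the first argument of the join minimum in the $mh$ formula at the desired value, while $|I_{2n-1}| = 2n-1$ is the smallest size of the independent companion in the union that makes the second argument of that minimum reach $3n-1$. No step is truly difficult; the only genuine combinatorial check is the four-round contamination sequence, which is a short direct calculation and also explains geometrically why the gap arises, since $C$ is shot in round~$1$ yet immediately recontaminated through $A$ at round~$2$, so $\mathcal S$ cannot be converted into a monotone strategy.
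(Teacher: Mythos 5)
Your proof is correct and follows essentially the same route as the paper's: a join of cographs whose monotone hunter number is pinned at $3a-1$ by the formulas from Theorem~\ref{theo:cographs}, a minimum-degree lower bound of $2a$ on $h$ via Lemma~\ref{lem:minDegree}, and an explicit four-round non-monotone strategy matching that lower bound. The paper uses the symmetric witness $(K_a\cup I_a)\Join(K_a\cup I_a)$ with strategy $(A,\,K_A\cup K_B,\,K_B,\,A)$ rather than your $(K_n\cup I_{2n-1})\Join I_{2n}$ with $(C,A,A,C)$, but the two constructions are interchangeable and both achieve $mh=\tfrac{3}{2}h-1$ exactly.
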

\begin{proof}
Let $a \geq 1$. Let $A$ and $B$ be two (isomorphic) cographs that consist of the disjoint union of a complete graph with $a$ vertices (denoted by $K_A$ and $K_B$ respectively) and $a$ independent vertices (so $|V(A)|=|V(B)|=2a$). Let $G = A \Join B$. 
Clearly, $h(A)=mh(A)=h(B)=mh(B)=a-1$ and, by the proof of Theorem~\ref{theo:cographs}, $mh(G)=3a-1$. Note also that $h(G)\geq 2a$ by Lemma~\ref{lem:minDegree}. Now, $(A,K_A \cup K_B,K_B,A)$ is a (non-monotone) winning hunter strategy in $G$ using $2a$ hunters and so $h(G) = 2a$.
\end{proof}

\subsection{Trees}\label{ssec:trees}


This section is devoted to showing that the monotone hunter number of trees can be computed in polynomial time. Roughly, we show that a Parsons' like lemma~\cite{parsons1} holds for the monotone hunter number in trees and then the algorithm follows the one for computing the pathwidth of trees in~\cite{appVertexSeparation}.

Let us start with the easy case of paths.

\begin{proposition}\label{prop:monotPath}
Let $P$ be any path with at least $4$ vertices. Then, $1=h(P)<mh(P)=2$.
\end{proposition}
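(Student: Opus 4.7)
The equality $h(P)=1$ is the $n\geq 2$ case stated in the introduction following~\cite{AbramovskayaFGP16}, and the upper bound $mh(P)\leq 2$ follows from Theorem~\ref{theo:pw} since the sequence of bags $\{v_i,v_{i+1}\}$ for $1\leq i<n$ is a path-decomposition of $P_n=v_1\cdots v_n$ of width $1$.

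For the lower bound $mh(P)\geq 2$, the plan is to reduce to $P=P_4$ and derive a contradiction. Since $P_4$ is a connected subgraph of every $P_n$ with $n\geq 4$, Lemma~\ref{lem:MonotoneSubgraph} yields $mh(P_4)\leq mh(P)$, so it suffices to show $mh(P_4)\geq 2$. I would argue by contradiction: assume there is a parsimonious monotone winning hunter strategy ${\cal S}=(S_1,\dots,S_\ell)$ in $P_4=v_1v_2v_3v_4$ using a single hunter (such a strategy exists under the assumption $mh(P_4)\leq 1$ by Lemma~\ref{lem:monotoneParsimonious}). Let $\{v\}=S_1$ and fix a connected subgraph $H$ of $P_4\setminus\{v\}$ with $|V(H)|\geq 2$: this is $P_3$ when $v\in\{v_1,v_4\}$ and the $P_2$ component otherwise.

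The core of the argument is the claim that $S_i=\{v\}$ for every $i\geq 1$, proved by induction on $i$. The base case $i=1$ is by definition. For the inductive step, assuming $S_j=\{v\}$ for every $j\leq i$, I would construct a rabbit trajectory $(r_0,\dots,r_{i-1},r_i=v)$ that oscillates between two adjacent vertices of $H$ (feasible because $H$ is connected, has at least two vertices, and none of its vertices is shot in the first $i$ rounds by the induction hypothesis) before stepping into $v$ at time $i$; this witnesses $v\in Z_i$. Since $v$ was cleared at round~$1$, monotonicity forces $v\in S_{i+1}$, and the one-hunter bound gives $S_{i+1}=\{v\}$. Once the claim is established, $S_i\cap V(H)=\emptyset$ for every $i$, and Lemma~\ref{lem:independent} applied to $H$ (with $W=V$) yields $|V(H)|=1$, contradicting $|V(H)|\geq 2$.

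The most delicate point is the trajectory construction in the inductive step when $v$ is internal: the unique neighbor of $v$ inside $H$ is forced at step $i-1$, so the rabbit must oscillate between the two vertices of $H$ to match the correct parity at each earlier step. A short parity argument (depending on the parity of $i$ and on which neighbor of $v$ lies in $H$) handles every case uniformly.
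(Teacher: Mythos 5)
Your proof is correct and follows essentially the same route as the paper's: both hinge on an induction showing that, by monotonicity, the single hunter is forced to shoot the initially shot vertex at every subsequent round (witnessed by a rabbit oscillating on two never-shot adjacent vertices, which is exactly the content of Proposition~\ref{prop:reach} that the paper invokes), so the rest of the path is never shot and the strategy cannot win. Your preliminary reduction to $P_4$ via Lemma~\ref{lem:MonotoneSubgraph} and your closing appeal to Lemma~\ref{lem:independent} (where the paper instead works on $P$ directly and concludes $Z_\ell\neq\emptyset$) are only cosmetic variations.
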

\begin{proof}
The fact that $h(P)=1$ has been proven in~\cite{AbramovskayaFGP16}, and the fact that $mh(P)\leq 2$ is easy.

%
Towards a contradiction, let us assume that there exists a winning monotone hunter strategy  in $P$ using one hunter and let ${\cal S}=(S_1,\dots S_\ell)$ be a shortest such strategy (i.e., minimizing $\ell$). Let $Z=(Z_1,\dots, Z_\ell)$ be the set of contaminated vertices with respect to ${\cal S}$. Let $w \in V(P)$ such that $S_1=\{w\}$. Note that $w \in Z_1$ and so, $\ell >1$. Since $P$ has length at least $4$, there exist $x,y\in V(P)$ such that $x\in N(w)$ and $y\in N(x)\setminus \{w\}$. 
We will prove by induction on $i$ that $S_i=\{w\}$ for all $1 \leq i \leq \ell$. The base case ($i=1$) is already proven. Assume now that for some $1\leq q< \ell$, it holds that $S_j=\{w\}$ for all $1 \leq j \leq q$. Thus, $x,y\notin \bigcup_{1\leq j\leq q} S_j$ and so, by Proposition~\ref{prop:reach}, $w\in Z_q$. 
Hence, by the monotonicity of $\mathcal{S}$, we have that $w\in S_{q+1}$, showing the step of the induction. 
Therefore, $x,y\notin \bigcup_{1\leq j\leq \ell} S_j$ and so, by Proposition~\ref{prop:reach}, $w,x,y\in Z_\ell$, contradicting the fact that $\mathcal{S}$ is a winning strategy in $P$. 
Therefore, $mh(P)\geq 2$.
\end{proof}

We then need the following two technical results.

\begin{proposition}\label{prop:distance-1}
Let $G=(V,E)$ be any connected graph and $H$ be a connected subgraph of $G$. Let $\mathcal{S}=(S_1,\dots, S_\ell)$ be any parsimonious monotone winning hunter strategy in $G$. Moreover, let $1\leq i\leq \ell$ and $x,y \in V(H)$ such that $x\in \bigcup_{j< i}S_j$, $y\in Z_{i-1}$ and minimising the distance between such $x$ and $y$ in $H$. 
If $x,y\notin S_i$, then $xy\in E(H)$.
\end{proposition}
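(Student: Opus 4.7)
I would argue by contradiction: assume $d:=d_H(x,y)\ge 2$, fix a shortest $x$--$y$ path $P=(x=v_0,v_1,\ldots,v_d=y)$ in $H$, and derive a contradiction by showing that its next-to-last vertex $v_{d-1}$ must satisfy two incompatible statements. The core step is a direct invocation of Proposition~\ref{prop:reach}.

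The first step is to pin down two simple facts about $x$ and $y$. From monotonicity: since $x$ is shot at some round $j<i$, it is cleared at round $j$, and as $x\notin S_i$, monotonicity applied at round $i-1>j$ forces $x\notin Z_{i-1}$. For $y$, I would show $y\notin \bigcup_{j<i}S_j$. Indeed, if $y\in S_j$ with $j\le i-2$, monotonicity with $y\in Z_{i-1}$ would impose $y\in S_i$, contradicting $y\notin S_i$; and if $y\in S_{i-1}$, then $y\in (\bigcup_{j<i}S_j)\cap Z_{i-1}$, so the pair $(y,y)$ would have $d_H$-distance $0<d$, contradicting the minimality of $d$. In particular $x\ne y$, which legitimises the discussion of a non-trivial shortest path $P$.

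The second step exploits the minimality of $d$ at the vertex $v_{d-1}$. If $v_{d-1}\in \bigcup_{j<i}S_j$, then the pair $(v_{d-1},y)$ is admissible with $d_H$-distance $1<d$, contradicting minimality; hence $v_{d-1}\notin \bigcup_{j<i}S_j$. Similarly, if $v_{d-1}\in Z_{i-1}$, the pair $(x,v_{d-1})$ is admissible with $d_H$-distance $d-1<d$, again contradicting minimality; hence $v_{d-1}\notin Z_{i-1}$.

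The final step is to apply Proposition~\ref{prop:reach} to $v:=v_{d-1}$ with $u:=y\in N_G(v_{d-1})$ (since $v_{d-1}v_d\in E(H)\subseteq E(G)$) and taking the ``other'' vertex to be $v_{d-1}$ itself, using the ``possibly $x=v$'' clause. The two required hypotheses, $y\notin \bigcup_{j\le i-1}S_j$ and $v_{d-1}\notin \bigcup_{j<i-1}S_j$, are exactly what was proved in the previous two steps. The proposition then yields $v_{d-1}\in Z_{i-1}$, which contradicts the conclusion of step two. Therefore $d\le 1$, and combined with $x\ne y$ we get $d=1$, i.e., $xy\in E(H)$.

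The only real subtlety, and the place where I expect most of the care is required, is fact~(b) above: ruling out $y\in S_{i-1}$ cannot be done by a bare application of monotonicity (whose quantifier ``for every $j'>i$ with $v\in Z_{j'}$'' does not cover the round $i-1$ itself), so one must invoke the minimality hypothesis via the pair $(y,y)$. Everything else is a routine combination of the preceding lemmas.
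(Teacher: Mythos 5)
Your proof is correct and follows essentially the same route as the paper's: both arguments pick an internal vertex of the shortest $x$--$y$ path, use minimality of $d_H(x,y)$ to show it is neither in $\bigcup_{j<i}S_j$ nor in $Z_{i-1}$, and then apply Proposition~\ref{prop:reach} (with the ``possibly $x=v$'' clause and the edge into the path) to force it into $Z_{i-1}$, a contradiction. The only difference is that you work at the neighbour $v_{d-1}$ of $y$ while the paper works at the neighbour $a$ of $x$; this obliges you to prove $y\notin\bigcup_{j<i}S_j$ unconditionally (the paper needs it only in its sub-case $b=y$), and your handling of the $y\in S_{i-1}$ edge case via the pair $(y,y)$ is a careful touch that the paper's corresponding monotonicity appeal leaves implicit.
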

\begin{proof}
Note first that $x\neq y$. Indeed, assuming otherwise would imply that $\mathcal{S}$ is not monotone since $y=x \in (\bigcup_{j< i}S_j \cap Z_{i-1}) \setminus S_{i}$. Hence, we may assume that $x \neq y$. 
Let $P$ be a shortest path from $x$ to $y$ in $H$. Let $a$ be the neighbour of $x$ in $P$. If $a=y$, then $\{x,y\}\in E(G)$, and the claim holds. 
Hence, we may assume that $a\neq y$. By minimality of the distance between $x$ and $y$, $a\notin Z_{i-1}$ and $a\notin \bigcup_{j<i} S_j$.  Let $b \neq x$ be the other neighbour of $a$ in $P$.
We show that $b \notin \bigcup_{j< i}S_j$. If $b\neq y$, then, by minimality of the distance between $x$ and $y$, $b\notin \bigcup_{j<i} S_j$. If $b=y$, since $y\in Z_{i-1}\setminus S_i$, then $y\notin \bigcup_{j< i}S_j$, because otherwise this would contradict the monotonicity of $\mathcal{S}$. Therefore, by Proposition~\ref{prop:reach}, $a\in Z_q$ for every $q< i$. In particular, $a\in Z_{i-1}$, a contradiction.
\end{proof}

\begin{lemma}\label{lem:noRoundEmpty}
Let $G=(V,E)$ be any graph and ${\cal S}=(S_1,\dots, S_\ell)$ be a parsimonious monotone winning hunter strategy in $G$ that uses at most $k$ hunters. Let $H$ be a connected subgraph of $G$ with $|V(H)|>1$. If  $S_i\cap V(H)\neq \emptyset$ and $S_j\cap V(H)\neq \emptyset$ for some $1\leq i<j\leq \ell$, then $S_z\cap V(H)\neq  \emptyset$ for every $i\leq  z \leq j$.
\end{lemma}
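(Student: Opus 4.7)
The plan is to argue by contradiction. Suppose the lemma fails, and among all pairs $(i,j)$ witnessing a failure, choose one with $j-i$ minimal. This choice forces $S_p \cap V(H) = \emptyset$ for every $i < p < j$, while $j \geq i+2$. I pick any $x \in S_i \cap V(H)$ and $y \in S_j \cap V(H)$. Since ${\cal S}$ is parsimonious, $y \in Z_{j-1}$, and by Lemma~\ref{lem:reach2} (since $i \leq j-1$) this gives $y \in Z_i$ as well.

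Next, I will invoke Proposition~\ref{prop:distance-1} at round $i+1$. The set of candidate pairs $(u,v) \in V(H)^2$ with $u \in \bigcup_{p<i+1} S_p$ and $v \in Z_i$ is nonempty since it contains $(x,y)$, so a distance-minimizing pair $(x',y')$ exists in $V(H)$. Because $S_{i+1} \cap V(H) = \emptyset$, both $x'$ and $y'$ automatically avoid $S_{i+1}$, so the hypothesis of the proposition is satisfied and I get $x'y' \in E(H) \subseteq E(G)$.

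The key step is then to produce a recontamination. Since $y' \in Z_i \setminus S_{i+1}$ and $x'y' \in E$, the definition of $Z_{i+1}$ yields $x' \in Z_{i+1}$. Since $x' \in S_p$ for some $p \leq i$, it is cleared at round $p$, so monotonicity forces $x' \in S_{i+2}$. Combining $x' \in S_p \cap S_{i+2}$ with the first statement of Lemma~\ref{lem:MonotoneProperties}, applied iteratively from $p$ up to $i+2$, gives $x' \in S_q$ for every $p \leq q \leq i+2$; in particular $x' \in S_{i+1}$. But $x' \in V(H)$ and $S_{i+1} \cap V(H) = \emptyset$, the desired contradiction.

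The main obstacle I expect is selecting the round at which to apply Proposition~\ref{prop:distance-1}. Round $i+1$ is exactly right: the assumption $V(H) \cap S_{i+1} = \emptyset$ immediately certifies the hypothesis ``$x', y' \notin S_{i+1}$'' of the proposition, while still leaving room for the monotonicity argument at round $i+2$ together with Lemma~\ref{lem:MonotoneProperties} to relocate $x'$ back into $S_{i+1}$ and contradict the gap. Any other choice either loses the automatic containment in $V(H)$ or fails to interact with Lemma~\ref{lem:MonotoneProperties} in the needed way.
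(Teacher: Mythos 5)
Your proof is correct and follows essentially the same route as the paper's: both reduce to an intermediate round whose shot set misses $V(H)$, apply Proposition~\ref{prop:distance-1} to a distance-minimizing pair (a previously shot vertex and a currently contaminated vertex of $H$, with $Z_{j-1}\subseteq Z_i$ via parsimony and Lemma~\ref{lem:reach2}) to obtain an edge, derive a recontamination forcing a shot at the next round, and then iterate Lemma~\ref{lem:MonotoneProperties} to place that vertex back in the empty shot set. The only cosmetic difference is that you normalize via minimality of $j-i$ and work at round $i+1$, whereas the paper argues directly at an arbitrary empty round $z$.
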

\begin{proof}
Let $2 \leq i+1<j \leq \ell$ be such that $V(H)\cap S_i\neq \emptyset$ and $V(H)\cap S_j\neq \emptyset$. Towards a contradiction, let us assume that there exists $i<z<j$ such that $S_z\cap V(H)=\emptyset$.
Let $X =V(H) \cap \bigcup_{q< z} S_q$. Since $V(H)\cap S_i\neq \emptyset$ and $i< z$, we get that $X\neq \emptyset$. Let $Y =V(H) \cap Z_{z-1}$. Since $S_{j}\cap V(H)\neq \emptyset$ and ${\cal S}$ is parsimonious, we have that $Z_{j-1}\cap V(H)\neq \emptyset$. Let $u\in Z_{j-1}\cap S_j\cap V(H)$. By Lemma~\ref{lem:reach2}, $u\in Z_q$ for every $q< j$. In particular, $u\in Z_{z-1}$ and so $Y\neq \emptyset$.
Let $x\in X$ and $y\in Y$ such that the distance bewteen $x$ and $y$ in $H$ is minimum. By Proposition~\ref{prop:distance-1}, $xy\in E$. Thus, since $y\in Z_{z-1}\setminus S_z$, we get that $x\in Z_z$. Therefore, since $\cal S$ is monotone and $x \in \bigcup_{q< z} S_q$, we must have $x \in S_{z+1}$. By Lemma~\ref{lem:MonotoneProperties}, since $x \in \bigcup_{q< z} S_q$ and $x \in S_{z+1}$, then $x \in S_z$. Hence, $V(H)\cap S_z\neq \emptyset$, a contradiction.
\end{proof}


Let $T$ be a tree and $v \in V(T)$. A {\it branch} at $v$ is any connected component of $T-v$. A {\it star} is any tree with at least two vertices and at most one vertex with degree at least two.
Roughly speaking, Parsons' Lemma~\cite{parsons1} states that, for any tree $T$ and $k \in \mathbb{N}$, $pw(T)\geq k+1$ if and only if there exists a vertex $v$ such that at least three branches at $v$ have pathwidth at least $k$. Here, we adapt this lemma in the case of the monotone hunter number of trees.


\begin{lemma}[Parsons' like lemma]\label{lem:parson}
 Let $T=(V,E)$ be any tree.
\begin{itemize}
\item $mh(T)= 0$ if and only if $|V|=1$;
\item $mh(T)= 1$ if and only if $T$ is a star;
\item $mh(T)=2$ if and only if $T$ is not a star and contains a path $P$ such that $T \setminus P$ is a forest of stars and isolated vertices;

\item For every $k\geq 3$,  
$mh(T)\geq k$ if and only if there exists a vertex $v \in V$ such that at least three branches at $v$ have monotone hunter number at least $k-1$.

\end{itemize}
\end{lemma}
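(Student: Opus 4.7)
The plan is to handle the four cases in increasing order of difficulty, establishing part~4 as the main technical workhorse and then deducing the non-trivial direction of part~3 from it specialised to $k=3$. Parts~1 and~2 are essentially direct. Part~1 holds by convention ($mh(T)=0$ when $|V|=1$) and because at least one shot is needed when $|V|\geq 2$. For part~2, if $T$ is a star with centre $c$, the strategy $(\{c\},\{c\})$ is monotone and winning with one hunter. Conversely, any non-star $T$ contains $P_4$ as a connected subgraph (take two adjacent vertices of degree $\geq 2$ and extend by a neighbour on each side), so $mh(T)\geq mh(P_4)=2$ by Lemma~\ref{lem:MonotoneSubgraph} and Proposition~\ref{prop:monotPath}.

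For the sufficient direction of part~4 (three branches of $mh\geq k-1$ at some $v$ imply $mh(T)\geq k$), assume for contradiction a parsimonious monotone winning strategy $\mathcal{S}=(S_1,\ldots,S_\ell)$ using $k-1$ hunters, which exists by Lemma~\ref{lem:monotoneParsimonious}. Applying Lemma~\ref{lem:MonotoneSubgraph} to each branch $B_i$ yields a round $q_i$ with $|S_{q_i}\cap V(B_i)|=k-1$, i.e.\ at $q_i$ every hunter is in $B_i$ and none occupies $v$ or the other two branches. The rounds $q_1,q_2,q_3$ are therefore pairwise distinct; WLOG $q_1<q_2<q_3$. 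Lemma~\ref{lem:noRoundEmpty} further implies that the rounds during which a given branch is visited form a contiguous interval, so the three intervals (each containing its $q_i$) are essentially ordered along time. The contradiction comes from tracking $v$: at a suitable round between two successive visit intervals, some branch is still contaminated, $v$ is unguarded, and $v$ has been cleared at an earlier round, so Proposition~\ref{prop:reach} (together with Lemma~\ref{lem:reach2}) exhibits a rabbit trajectory recontaminating $v$, violating monotonicity.

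For the necessary direction of part~4, proceed by induction on $|V(T)|$. Assuming no vertex has three branches of $mh\geq k-1$, the subgraph $T^\ast$ induced by vertices with at least two branches of $mh\geq k-1$ is a subtree and, in fact, a path (any branching vertex of $T^\ast$ would have three such branches). Extend $T^\ast$ to a path $P=v_1\ldots v_p$ in $T$; every branch of a $v_i\in P$ that does not continue along $P$ has $mh\leq k-2$ and, by induction, admits a monotone strategy with $k-2$ hunters. Now sweep along $P$ with $k-1$ hunters: one hunter ``holds'' $v_i$ while the remaining $k-2$ inductively clear each small branch attached to $v_i$; then move a hunter to $v_{i+1}$ before releasing $v_i$. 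The frontier hunter prevents recontamination of the already-cleared side of $P$. Part~3 follows: the forward direction specialises this construction to $k=3$, where each attached component is a star or an isolated vertex (so $mh\leq 1$) and is cleared by one hunter while the other holds $v_i$. For the reverse, part~4 with $k=3$ gives $mh(T)\leq 2$ iff no vertex has three non-star branches; the remaining structural claim — that a non-star $T$ satisfies this iff it is a lobster — is handled by inspecting branches with respect to a candidate spine (in a lobster, on-spine vertices have at most the two $P$-branches non-star and off-spine vertices lie in a star component of $T\setminus P$) and, conversely, by starting from a longest path $P$ and showing that any non-star component of $T\setminus P$ attached at $u\in P$ forces $u$ to have three non-star branches, unless $P$ can be re-routed through the deepest vertex of that component so as to make the complement a forest of stars and isolated vertices.

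The main obstacles are (a) formalising the recontamination contradiction in part~4's sufficient direction, which requires pinning down the exact round and vertex where monotonicity fails by combining the contiguity of visits, parsimony of $\mathcal{S}$, and Proposition~\ref{prop:reach} to exhibit a concrete rabbit trajectory; and (b) the re-routing argument in the converse of part~3, which must handle the delicate case where the longest-path subtrees $P_L,P_R$ at $u$ happen to be stars while the off-path component $C$ is not — here the depth of $C$ together with the longest-path inequality must force either a third non-star branch at $u$ or an alternative spine choice through $C$ whose complement consists solely of stars and isolated vertices.
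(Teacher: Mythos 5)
Your treatment of items 1, 2 and 4 follows essentially the same route as the paper: the $P_4$-subgraph argument for item 2, and for item 4 the combination of Lemma~\ref{lem:monotoneParsimonious}, the full-occupation rounds obtained from Lemma~\ref{lem:MonotoneSubgraph}, the contiguity of visit intervals from Lemma~\ref{lem:noRoundEmpty}, and Proposition~\ref{prop:reach}/Lemma~\ref{lem:reach2} for the contradiction; the path construction from the set of vertices with two heavy branches is also the paper's. One imprecision worth fixing in the sufficient direction of item 4: the visit intervals of the three branches are \emph{not} pairwise ordered in time (two branches can be shot in the same round), and the contradiction the paper actually derives is not a recontamination of $v$ --- indeed it shows $v$ remains contaminated up to the full-occupation round $j_2$ of the second-cleared branch --- but a recontamination of the first-cleared branch $B_1$: since all of $B_3\cup\{v\}$ is still contaminated at round $j_2$ and no hunter guards $v$ then, the neighbour of $v$ in $B_1$ re-enters $Z_{j_2}$, contradicting Lemma~\ref{lem:reach2} or Lemma~\ref{lem:noRoundEmpty} depending on the relative order of the clearing times. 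Your sketch points at the right lemmas but the object that witnesses non-monotonicity must be chosen as in the paper, not $v$ itself.

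The genuine divergence, and the one place where your plan has a real gap, is the converse of item 3. You propose to prove the structural equivalence (``$mh(T)\le 2$ iff $T$ has a path $P$ with $T\setminus P$ a forest of stars and isolated vertices'') by a separate longest-path argument with a re-routing step for the case where the two spine-branches at the attachment vertex are stars while the off-path component is not; you flag this yourself as delicate and do not carry it out. This detour is unnecessary: the path construction you already use for the necessary direction of item 4, run with $k=3$, produces from the hypothesis ``no vertex has three branches of monotone hunter number at least $2$'' a path $P$ such that every component of $T\setminus P$ has monotone hunter number at most $k-2=1$, i.e.\ is a star or an isolated vertex by items 1 and 2. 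This is exactly how the paper closes item 3 (it states the dichotomy: either such a $P$ exists, or some vertex has three branches each containing a $P_4$, and then invokes the ``if'' direction of item 4). Replacing your longest-path/re-routing argument by this specialisation both removes the gap and shortens the proof.
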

\begin{proof}
The first item is trivial. Then, if $T$ is a star, then shooting twice at the vertex of maximum degree is a monotone winning hunter strategy using one hunter, and so, (since $|V(T)|\geq 2$), $mh(T)=1$. If $T$ is not a star (and $|V(T)|>1$), then it contains a path with at least $4$ vertices as a subgraph. By Proposition~\ref{prop:monotPath} and Lemma~\ref{lem:MonotoneSubgraph}, it follows that $mh(T)\geq 2$, which concludes the proof of the second item. If $T$ is not reduced to a star and contains a path $P$ such that $T \setminus P$ is a forest of stars and isolated vertices, it is easy to show that $mh(T)\leq 2$. Otherwise, $T$ contains a vertex $v$ such that at least three components of $T -v$ contain a path with $4$ vertices. The ``if" statement of the fourth item then shows that $mh(T)> 2$ and concludes the proof of the third item. 

Let us prove the fourth item. Let $k\geq 3$.

%
\noindent {\bf Proof of $\Leftarrow$:} Let us first assume that there exists some vertex $v$ and three branches $B_1$, $B_2$ and $B_3$ at $v$, such that $mh(B_1),mh(B_2),mh(B_3)\geq k-1$. We will show that $mh(T)\geq k$. Towards a contradiction, let us assume that $mh(T)<k$. By Lemma~\ref{lem:monotoneParsimonious}, there exists a parsimonious monotone strategy ${\cal S}=(S_1,\dots, S_\ell)$ in $T$ that uses at most $k-1$ hunters.
Let $Z=(Z_1,\dots, Z_\ell)$ be the set of contaminated vertices with respect to ${\cal S}$.

For $j \in \{1,2,3\}$, let $1 \leq i_j \leq \ell$ be the minimum integer such that $V(B_j)\cap Z_{i_j}=\emptyset$.
Note that, by Lemma~\ref{lem:reach2}, $V(B_j)\cap Z_q=\emptyset$ for all $i_j \leq q \leq \ell$, and since $\cal S$ is parsimonious, $V(B_j)\cap S_q=\emptyset$ for all $i_j < q \leq \ell$. Note also that, since $mh(B_j)\geq k-1 \geq 2$, $B_j$ has at least two vertices. Since $Z_{i_j-1} \cap V(B_j) \neq \emptyset$ and $Z_{i_j} \cap V(B_j) = \emptyset$, it implies that $S_{i_j} \cap V(B_j) \neq \emptyset$ (otherwise, let $w \in (Z_{i_j-1} \cap V(B_j))$ and $u \in N(w) \cap V(B_j)$, then $u \in Z_{i_j} \cap V(B_j)$, a contradiction with the definition of $i_j$). W.l.o.g., let us assume that $i_1 \leq i_2 \leq i_3$. 

We will show that there exists a round $j_2$ during which all the $k-1$ hunters will have to shoot on vertices of $B_2$, and that $v\in Z_{j_{2}-1}$, which will lead to a contradiction.

For any $1\leq i\leq 3$, let $j_i$ be an index such that $|S_{j_i}\cap V(B_i)|=k-1$. These indices exist as, otherwise, by Lemma~\ref{lem:MonotoneSubgraph}, $mh(B_i)< k-1$. 
We will first show that $j_2<i_3$, which will be used to prove that $v\in Z_{j_{2}-1}$. Observe that $j_2\leq i_2\leq i_3$. 
Moreover, $j_2\neq i_3$ since $S_{i_3}\cap V(B_3)\neq \emptyset$, $|S_{j_2}\cap V(B_2)|=k-1$ and ${\cal S}$ uses at most $k-1$ hunters.
Hence, $j_2< i_3$. Therefore, by Proposition~\ref{prop:reach} and Lemma~\ref{lem:noRoundEmpty}, $V(B_3)\cup \{v\}\subseteq Z_q$ for all $q\leq j_2$. In particular, $v\in Z_{j_2-1}$. Moreover, since $S_{j_2}\subseteq V(B_2)$, $v \notin S_{j_2}$. Hence, $x\in Z_{j_2}$ where $x$ is the neighbour of $v$ in $B_1$, i.e., $Z_{j_2} \cap V(B_1) \neq \emptyset$. 

Since $Z_{i_1} \cap V(B_1) = \emptyset$, if $i_1< j_2$, then there is a contradiction to Lemma~\ref{lem:reach2}. Otherwise, $j_2<i_1 \leq i_2$ (because $j_2\neq i_1$) and so either $j_1< j_2< i_1$ or $j_2< j_1< i_2$ (because $j_1\neq j_2$, $j_1\leq i_1\leq i_2$ and $j_1\neq i_2$), both contradicting Lemma~\ref{lem:noRoundEmpty}.

\medskip

\noindent {\bf Proof of $\Rightarrow$:} 
Now let us assume that, for every $v \in V$, at most two branches at $v$ have monotone hunter number at least $k-1$. 
%
%
Let us prove that there exists a parsimonious monotone winning hunter strategy ${\cal S}$ in $T$ using at most $k-1$ hunters.

First, let us assume that there exists a path $P=(v_1,\dots, v_{p})$ such that for any connected component $C$ of $T\setminus P$, $mh(C)<k-1$. The following hunter strategy is parsimonious monotone winning and uses at most $k-1$ hunters. The strategy consists of $p$ phases executed sequentially from $i=1$ to $p$.
Phase $i$ consists in shooting $v_i$ at each round, and in using the $k-2$ remaining shots to clear sequentially each connected component of $T\setminus P$ that is adjacent to $v_i$ (this is possible since each of these branches at $v$ has monotone hunter number at most $k-2$). Finally, the last round of Phase $i$ (except for $i=p$) consists in shooting to both $v_{i}$ and $v_{i+1}$ (recall that $k-1 \geq 2$). 

Let us now show that a path $P$, defined as in the previous paragraph, exists.
Let $X$ be the set of vertices $v$ such that exactly two branches at $v$ have monotone hunter number at least $k-1$. First, let us assume that $X \neq \emptyset$ and let us show that it induces a path. Let $x,y \in X$ and let $z$ be any internal vertex of the path between $x$ and $y$. Let $B$ (resp., $B'$) be the branch at $z$ that contains $x$ (resp., that contains $y$). One branch $B_x$ at $x$ with $mh(B_x)\geq k-1$ is a subgraph of $B$ and so, by Lemma~\ref{lem:MonotoneSubgraph}, $mh(B) \geq k-1$. Similarly, $mh(B') \geq k-1$ and so, $z \in X$ and therefore $X$ induces a subtree of $T$. If there exists a node $w$ of degree at least three in $T[X]$, by similar arguments, there are at least three branches at $w$ with monotone hunter number at least $k-1$, a contradiction with the initial hypothesis. Hence, $X$ induces a path $(v_2,\dots,v_{p-1})$. Let $B_1$ be the branch at $v_2$ not containing $v_3$ such that $mh(B_1)=k-1$ (if $v_3$ does not exist, $B_1$ is any branch at $v_2$ with $mh(B_1)\geq k-1$) and let $v_1$ be the neighbour of $v_2$ in $B_1$. Symmetrically, let $B_p$ be the branch at $v_{p-1}$ not containing $v_{p-2}$ such that $mh(B_p)=k-1$ (if $p-1=2$, let  $B_p$ be the branch at $v_2$,
distinct from $B_1$, and with $mh(B_p)\geq k-1$) and let $v_p$ be the neighbour of $v_{p-1}$ in $B_p$. Then, $P=(v_1,v_2,\dots,v_{p-1},v_p)$ satisfies the desired conditions.

Finally, if $X= \emptyset$, let $v_1$ be any vertex of $T$. We build the path $P$ starting from $v_1$ as follows. Let us assume that a path $(v_1,\dots,v_i)$ has already been built for some $i\geq 1$. If there exists a branch $B$ at $v_i$, not containing $v_{i-1}$ (if $i>1$), and with monotone hunter number at least $k-1$ (if any, such a branch must be unique since $X = \emptyset$), then, let $v_{i+1}$ be the neighbour of $v_i$ in $B$. The process ends when no such branch $B$ exists and the obtained path  satisfies the desired conditions.

This completes the proof.
\end{proof}

We design a dynamic programming algorithm to compute the monotone hunter number of a tree $T$. Let us first root $T$ at any vertex $r\in V(T)$. Let $T[u]$ denote the subtree of $T$ induced by $u\in V(T)$ and all the descendent of $u$. For any $x_1,\cdots,x_p \in V(T[u])$, let $T[u,x_1,\ldots, x_p]$ denote the subtree obtained from $T[u]$ after the vertices of $\bigcup_{i \leq p} V(T[x_i])$ have been removed. Finally, for $k\geq 2$, a vertex $x\in V(T)$ is \textit{$k$-critical} if and only if $mh(T[x])=k$ and there exist two children $v_1$ and $v_2$ of $x$ in $T$ such that $mh(T[v_1])=mh(T[v_2])=k$. In the case $k=1$, we will say that a vertex $x$ is \textit{$1$-critical} if and only if $mh(T[x])=1$ and there exists a unique child $v$ of $x$ such that $mh(T[v])=1$. 

\paragraph{Remark:} Let us recall that $mh(T[w])\geq 1$ if $T[w]$ contains at least one edge, i.e., $w$ has at least one child.  Therefore, if a vertex $w$ is $1$-critical in $T[w]$, then $T[w]$ is a star centred in the child $w'$ of $w$ such that $mh(T[w'])=1$, i.e., $w'$ is the only vertex of $T[w]$ that has degree at least $2$. Moreover, if a vertex $w$ is not $1$-critical and $mh(T[w])=1$, then $T[w]$ is star rooted in $w$, i.e., $w$ is the only vertex that may have degree greater than $1$ in $T[w]$ ($w$ may also have degree $1$ when $|V(T[w])|=2$). 

\medskip

The next lemma is used through out the proof of Corollary~\ref{cor:parson} and Lemma~\ref{lem:algTree}.

\begin{lemma}\label{lem:atMost1Branch}
Let $T$ be any tree rooted in a vertex $v\in V(T)$. Let $k=\max_{1\leq i\leq d}\{mh(T[v_i]\}$, where $v_1,\ldots, v_d$ are the children of $v$ in $T$. For any vertex $w\in V(T)$, there is at most $1$ branch at $w$ in $T$ that has monotone hunter number $k+1$. Moreover, $mh(T)\leq k+1$.
\end{lemma}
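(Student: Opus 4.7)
The plan is to handle the two assertions in order, exploiting the rooted structure of $T$ together with the Parsons-like lemma (Lemma~\ref{lem:parson}). Part (1) will be a direct structural observation from Lemma~\ref{lem:MonotoneSubgraph}, and part (2) will follow by contraposition from the fourth item of Lemma~\ref{lem:parson}, with the easy case $k=0$ handled separately.

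For part (1), I will fix an arbitrary $w \in V(T)$ and examine its branches. If $w = v$, the branches are exactly $T[v_1], \dots, T[v_d]$, each of monotone hunter number at most $k$ by definition, so no branch has monotone hunter number $k+1$. If $w \neq v$, let $v_i$ be the unique child of $v$ that is an ancestor of $w$ in the rooted tree (possibly $v_i = w$). Then every branch at $w$ is either the unique ``upward'' branch containing $v$, or a subtree rooted at a child of $w$. A branch of the latter kind is a non-empty connected subgraph of $T[v_i]$, so by Lemma~\ref{lem:MonotoneSubgraph} it has monotone hunter number at most $mh(T[v_i]) \leq k$. Hence only the upward branch can have monotone hunter number at least $k+1$, which in particular gives the claimed bound for the value $k+1$. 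I will emphasise that the same argument yields the (slightly stronger) statement ``at most one branch at $w$ has monotone hunter number $\geq k+1$'', since this is the form that will feed into part (2).

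For part (2), I first dispose of the case $k=0$ directly: then each $T[v_i]$ has a single vertex, so every $v_i$ is a leaf and $T$ is either a single vertex or a star centred at $v$; in both situations the first two items of Lemma~\ref{lem:parson} give $mh(T) \leq 1 = k+1$. For $k \geq 1$ I will proceed by contradiction, assuming $mh(T) \geq k+2$. Since $k+2 \geq 3$, the fourth item of Lemma~\ref{lem:parson} (instantiated with the value $k+2$) would produce a vertex $w \in V(T)$ admitting at least three branches with monotone hunter number at least $k+1$, contradicting the strengthened form of part (1). Therefore $mh(T) \leq k+1$.

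The main (minor) obstacle is purely a bookkeeping one: I must be careful to state part (1) in the stronger ``$\geq k+1$'' form so that it matches the hypothesis of Lemma~\ref{lem:parson}(4), and I must treat the boundary case $k = 0$ separately because item (4) of the Parsons-like lemma requires its parameter to be at least $3$.
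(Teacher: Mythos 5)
Your proof is correct. The first assertion is handled exactly as in the paper: every ``downward'' branch at a vertex $w\neq v$ is a subtree of the $T[v_i]$ containing $w$, so Lemma~\ref{lem:MonotoneSubgraph} caps its monotone hunter number at $k$, leaving only the branch containing $v$ as a possible exception; and no branch at $v$ itself exceeds $k$ by the definition of $k$.

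For the second assertion ($mh(T)\leq k+1$) you take a genuinely different route. The paper argues constructively: shoot $v$ at every round with one hunter and use the remaining $k$ hunters to clear the branches at $v$ sequentially via their optimal monotone strategies; this is an explicit monotone winning strategy with $k+1$ hunters. You instead argue by contraposition from item (4) of Lemma~\ref{lem:parson}: if $mh(T)\geq k+2$, some vertex would have three branches of monotone hunter number at least $k+1$, contradicting the strengthened form of your first part. Your derivation is valid --- there is no circularity, since Lemma~\ref{lem:parson} is established before this lemma and does not depend on it --- and you correctly notice the two bookkeeping points it requires (stating part (1) with ``$\geq k+1$'' and isolating $k=0$, where item (4) is unavailable because its parameter must be at least $3$). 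The trade-off is that your argument leans on the lower-bound (``$\Leftarrow$'') direction of the Parsons-like lemma, which is the technically heavy part of that result, whereas the paper's two-line explicit strategy is elementary, self-contained, and also exhibits the witnessing strategy rather than merely certifying the bound.
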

\begin{proof} 
Note first that, by definition of $k$, there is no branch at $v$ that has monotone hunter number $k+1$.
Hence, let $w$ be any vertex $V(T)\setminus \{v\}$ and let us denote by $x$ the child of $v$ such that $w\in V(T[x])$. By definition of $k$, $mh(T[x])\leq k$. 
For any child $z$ of $w$,  $T[z]$ is a subtree of $T[x]$. Thus, by Lemma~\ref{lem:MonotoneSubgraph}, $mh(T[z])\leq k$ and so there is at most $1$ branch at $w$ that has monotone hunter number at least $k+1$.
To conclude, note that the hunter strategy consisting in applying sequentially all the monotone winning hunter strategies for the branches at $v$, while shooting continuously on $v$, is a monotone winning hunter strategy in $T$ using at most $k+1$ hunters, i.e., $mh(T)\leq k+1$. 
\end{proof}

The upcoming corollary, obtained from Lemmas~\ref{lem:parson} and~\ref{lem:atMost1Branch}, describes how to compute $mh(T)$ of a rooted tree $T$, bottom-up, from its leaves to the root, such that, for any $u\in V(T)$, $mh(T[u])$ is computed from the values $(mh(T[u_i]))_{u_i~child~of~u}$ and from the critical vertices the subtrees $T[u_i]$ contain.

\begin{corollary}\label{cor:parson}
Let $T$ be a rooted tree, $u \in V(T)$ and let $u_1,\dots,u_d$ be the $d$ children of $u$ in $T$. Let us order the children of $u$ such that $k=mh(T[u_1]) \geq mh(T[u_2])\geq \dots mh(T[u_d])$. 
\begin{enumerate}
\item If $d=0$, then $mh(T[u])=0$;

\item If $k=0$ and $d>0$, then $mh(T[u])=1$; 

\item If $k=1$, $d=1$ and the only child of $u$ is not $1$-critical, then $mh(T[u])=1$;

\item If $k=1$ and $d=1$ and the only child of $u$ is $1$-critical, then $mh(T[u])=2$; 

\item If $k=1$ and $d\geq 2$, then $mh(T[u])=2$;

\item If $k>1$ and $mh(T[u_3])=k$, then $mh(T[u])=k+1$;

\item If $k>1$, $mh(T[u_2])=k$ and ($mh(T[u_3])<k$ or $d=2$), and $T[u_1]$ or $T[u_2]$ contains a $k$-critical vertex, then $mh(T[u])=k+1$.

\item If $k>1$, $mh(T[u_2])=k$ and ($mh(T[u_3])<k$ or $d=2$), and neither $T[u_1]$ nor $T[u_2]$ contains a $k$-critical vertex, then $mh(T[u])=k$.

\item If $k>1$, $mh(T[u_1])=k$ and ($mh(T[u_2])<k$ or $d=1$), $T[u_1]$ contains a $k$-critical vertex $x$ and $mh(T[u,x])=k$, then $mh(T[u])=k+1$;

\item If $k>1$, $mh(T[u_1])=k$ and ($mh(T[u_2])<k$ or $d=1$) and $T[u_1]$ contains a $k$-critical vertex $x$ and $mh(T[u,x])<k$, then $mh(T[u])=k$;

\item If $k>1$, $mh(T[u_1])=k$ and ($mh(T[u_2])<k$ or $d=1$) and $T[u_1]$ does not contain any $k$-critical vertex, then $mh(T[u])=k$.
\end{enumerate}
\end{corollary}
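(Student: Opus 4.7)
My plan is to establish each of the eleven cases by combining Lemma~\ref{lem:parson} (the Parsons'-like characterisation), Lemma~\ref{lem:MonotoneSubgraph} (monotonicity under subgraphs), and Lemma~\ref{lem:atMost1Branch} (which supplies the universal upper bound $mh(T[u]) \leq k+1$ in cases $6$--$11$). The first five cases correspond to small values of the maximum child monotone hunter number. Case~1 is trivial from the first item of Lemma~\ref{lem:parson}. In case~2, every child is a leaf, so $T[u]$ is a star centred at $u$. In case~3, the unique child $u_1$ is not $1$-critical, so by the remark preceding the statement $T[u_1]$ is a star centred at $u_1$, and hence $T[u]$ is again a star. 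In case~4, the unique child $u_1$ is $1$-critical, so $T[u_1]$ is a star centred at a child $w$ of $u_1$; then both $u_1$ and $w$ have degree at least $2$ in $T[u]$, so $T[u]$ is not a star, but removing the path $(u,u_1,w)$ leaves only isolated vertices, and the third item of Lemma~\ref{lem:parson} yields $mh(T[u])=2$. Case~5 is handled similarly: a short path through $u$ that penetrates each of the two branches of $mh=1$ as deeply as needed (including their critical child, if any) leaves a forest of stars and isolated vertices after removal.

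For cases~6, 7, and 9 (where $mh(T[u])=k+1$), Lemma~\ref{lem:atMost1Branch} supplies the upper bound. For the lower bound I exhibit a vertex of $T[u]$ whose three branches in $T[u]$ all have $mh \geq k$ and apply the fourth item of Lemma~\ref{lem:parson}. In case~6 the vertex is $u$ itself, witnessed by $T[u_1], T[u_2], T[u_3]$. In case~7, assuming without loss of generality that the $k$-critical vertex $x$ lies in $T[u_1]$, the two downward branches at $x$ rooted at its two critical children have $mh=k$, and the upward branch at $x$ in $T[u]$, which equals $T[u,x]$, contains $T[u_2]$ and therefore has $mh \geq k$ by Lemma~\ref{lem:MonotoneSubgraph}. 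In case~9, the same $x$ is used, and the upward branch $T[u,x]$ already has $mh=k$ by hypothesis.

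For cases~8, 10, and 11 (where $mh(T[u])=k$), the lower bound $mh(T[u]) \geq mh(T[u_1])=k$ follows from Lemma~\ref{lem:MonotoneSubgraph}. For the upper bound I assume $mh(T[u]) \geq k+1$ and derive a contradiction: the fourth item of Lemma~\ref{lem:parson} yields a vertex $y \in V(T[u])$ with three branches of $mh \geq k$ in $T[u]$, and Lemma~\ref{lem:atMost1Branch} forces each of these to equal exactly $k$. If $y=u$, the hypothesis on the children's monotone hunter numbers rules out three children of $mh=k$. If $y$ lies in some $T[u_i]$ with $mh(T[u_i])<k$, every downward branch of $y$ is a subgraph of $T[u_i]$ with $mh<k$, so only the upward branch can reach $mh=k$, a contradiction. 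The remaining case is $y \in V(T[u_j])$ for some $j$ with $mh(T[u_j])=k$; since $mh(T[y]) \leq k$ and at most one of the three large branches is upward, $y$ has at least two downward branches of $mh=k$ and is therefore $k$-critical in $T$. In cases~8 and~11 this directly contradicts the hypothesis that no $k$-critical vertex lies in the relevant subtrees.

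The main obstacle is case~10, where the hypothesis supplies only one specific $k$-critical vertex $x$ in $T[u_1]$ with $mh(T[u,x])<k$. To close the argument I first establish a uniqueness claim: \emph{if $mh(T[u_1])=k \geq 2$, then $T[u_1]$ contains at most one $k$-critical vertex}. Given two such vertices $x_1 \neq x_2$ in $T[u_1]$, if they are incomparable in the rooted tree $T[u_1]$, then $V(T[x_1]) \cap V(T[x_2]) = \emptyset$, so the upward branch of $x_2$ in $T[u_1]$ contains $T[x_1]$ and has $mh \geq k$, yielding three branches of $mh \geq k$ at $x_2$ and, via the fourth item of Lemma~\ref{lem:parson}, $mh(T[u_1]) \geq k+1$; if $x_2$ is a proper descendant of $x_1$ lying in the subtree of one of the two critical children of $x_1$, the other critical child's subtree sits in the upward branch of $x_2$ and gives the same contradiction; and if $x_2$ lies in the subtree of a third child of $x_1$, that child has $mh=k$ too, so $x_1$ itself already has three children of $mh=k$. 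Granted this uniqueness, the vertex $y$ obtained above lies in $T[u_1]$ and is $k$-critical, hence equals $x$; the upward branch at $y$ in $T[u]$ is then $T[u,x]$, whose monotone hunter number is less than $k$ by hypothesis, contradicting the fact that it is one of the three branches of $mh=k$ at $y$. This closes case~10 and completes the proof.
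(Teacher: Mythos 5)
Your proof is correct and follows essentially the same route as the paper: each case is settled by combining Lemma~\ref{lem:parson} for the characterisation, Lemma~\ref{lem:MonotoneSubgraph} for lower bounds, and Lemma~\ref{lem:atMost1Branch} for the $k+1$ upper bound, with the $mh(T[u])=k$ cases handled by showing no vertex can have three branches of monotone hunter number at least $k$. Your explicit uniqueness claim for the $k$-critical vertex in case~10 is just a more structured version of the paper's inline argument (comparing the upward branches at the two candidate critical vertices), so there is no substantive difference.
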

\begin{proof}
Each statement can be proved using Lemma~\ref{lem:parson}. 

\begin{enumerate}
    \item If $d=0$, then $T[u]$ is a single vertex and by Lemma~\ref{lem:parson}, $mh(T[u])=0$.

    \item If $k=0$ and $d>0$, then there exists at least one child $w$ of $u$ such that $mh(T[w])=0$ and there is no child $w'$ of $u$ such that $mh(T[w'])>0$. Thus, $T[u]$ contains at least one edge and is a star graph centred in $u$. By Lemma~\ref{lem:parson}, we get that $mh(T[u])=1$.

    \item If $k=1$, $d=1$ and the only child of $u$ is not $1$-critical, i.e., $u_1$ is the only child of $u$ and $mh(T[u_1])=1$. Thus, $T[u_1]$ is a star centred at $u_1$. Therefore, $T[u]$ is also a star centred at $u_1$, and so, by Lemma~\ref{lem:parson}, $mh(T[u])=1$.

    \item If $k=1$, $d=1$ and the only child of $u$ is $1$-critical, i.e., $u_1$ is the only child of $u$, $mh(T[u_1])=1$ and there exists a child $w$ of $u_1$ such that $mh(T[w])=1$. By Lemma~\ref{lem:parson}, $T[w]$ contains at least $1$ edge. Let $v$ be the child of $w$ in $T[w]$. Thus, $T[u]$ contains the path $(v,w,u_1,u)$, and so, is not a star. Moreover, $T[u]\setminus u$ is a forest of stars, and so, by Lemma~\ref{lem:parson}, $mh(T[u])=2$.

    \item If $k=1$ and $d\geq2$, then $u$ has at least two children. Thus,  
    by Lemma~\ref{lem:parson}, $T[u_1]$ contains at least one edge. Let $w$ be a child of $u_1$ in $T[u_1]$. Note that $T[u]$ contains the path $P=(u_2,u,u_1,w)$ as subgraph and that $T[u]\setminus u$ is a forest of stars. Therefore, by Lemma~\ref{lem:parson}, $mh(T[u])=2$.

    \item If $k>1$ and $mh(T[v_3])=k$, then $mh(T[u])\geq k+1$ by Lemma~\ref{lem:parson}. Moreover, by Lemma~\ref{lem:atMost1Branch}, we get that $mh(T[u])\leq k+1$, and so $mh(T[u])=k+1$.

    \item If $k>1$, $mh(T[u_1])=mh(T[u_2])=k$ and ($mh(T[v_3])<k$ or $d=2$) and $T[u_1]$ or $T[u_2]$ contains a $k$-critical vertex, let $x$ be a $k$-critical vertex in $T[u]\setminus \{u\}$. W.l.o.g., let us assume that $x\in V(T[u_1])$. Let us denote by $y$ the parent of $x$ in $T[u]$. Note that $y$ may be equal to $u$. Let  $B_1$ and $B_2$ denote the two branches at $x$ in $T[x]$ such that $mh(B_1)=mh(B_2)=k$. Let $B_y$ denote the branch at $x$ in $T[u]$ that contains $y$. Note that $B_y$ contains $T[u_2]$ as subgraph. Thus, by Lemma~\ref{lem:MonotoneSubgraph}, $mh(B_y)\geq mh(T[u_2])= k$. Therefore, since there are $3$ branches at $x$ with monotone hunter number at least $k$, by Lemma~\ref{lem:parson}, we get that $mh(T[u])\geq k+1$. Finally, by Lemma~\ref{lem:atMost1Branch}, $mh(T[u])\leq k+1$, and so $mh(T[u])=k+1$.

    \item If $k>1$, $mh(T[u_1])=mh(T[u_2])=k$ and ($mh(T[v_3])<k$ or $d=2$) and neither $T[u_1]$ nor $T[u_2]$ contains a $k$-critical vertex, then there do not exist three branches at $u$ having monotone hunter number at least $k$. 
    Note that for any $w\in V(T[u_1])$ (resp., $w\in V(T[u_2])$), $w$ is not $k$-critical, and so, there is at most $1$ branch at $w$ in $T[w]$
    that has monotone hunter number at least $k$. Therefore, since every branch at $w$ in $T[u]$ is also a branch at $w$ in $T[u_1]$, except the one containing the parent of $w$, we get that there are at most $2$ branches at $w$ that has monotone hunter number at least $k$.
    Finally, if $d>2$, for any $2<j\leq d$ and any vertex $w\in V(T[u_j])$, 
    let us denote by $y$ the parent of $w$ in $T[u_j]\cup\{u\}$. Note that for any vertex $z\in N(w)\setminus \{y\}$, $T[z]$ is a subgraph of $T[u_j]$ and so, by Lemma~\ref{lem:MonotoneSubgraph}, $mh(T[z])\leq mh(T[u_j])< k$. 
    Therefore, there is no vertex $w\in V(T[u])$ such that $w$ has $3$ branches that have monotone hunter number at least $k$. Hence, by Lemma~\ref{lem:parson}, $mh(T[u])\leq k$. By Lemma~\ref{lem:MonotoneSubgraph} and because $mh(T[u_1])=k$, $mh(T[u])= k$.

    \item If $k>1$, $mh(T[u_1])=k$ and ($mh(T[u_2])<k$ or $d=1$) and $T[u_1]$ contains a $k$-critical vertex $x$ and $mh(T[u,x])=k$, then, $x$ has $3$ branches with monotone hunter number at least $k$ in $T[u]$. Therefore, by Lemma~\ref{lem:parson}, $mh(T[u])\geq k+1$. Finally, by Lemma~\ref{lem:atMost1Branch}, $mh(T[u])\leq k+1$ and so $mh(T[u])=k+1$.

    \item If $k>1$, $mh(T[u_1])=k$ and ($mh(T[u_2])<k$ or $d=1$) and $T[u_1]$ contains a $k$-critical vertex $x$ and $mh(T[u,x])<k$, then, there do not exist three branches at $u$ with monotone hunter number at least $k$. Let $w$ be any vertex of $T[u]\setminus \{u\}$. Let us assume first that $w$ is not $k$-critical. Thus, there is at most one branch at $w$ in $T[w]$ with monotone hunter number $k$.
    Since there is only one other branch left for $w$ in $T[u]$ (the one containing its parent), there do not exist three branches at $w$ in $T[u]$ with monotone hunter number at least $k$. 
    
    Let us assume now that $w$ is $k$-critical. Thus, $w\in V(T[u_1])$ since $mh(T[u_j])<k$ for any $2\leq j\leq d$.
    Note that $w=x$. Indeed, towards contradiction, let us assume that $w\neq x$. Then, the branch at $w$ containing its parent in $T[u_1]$ contains $T[x]$ as a subgraph and/or the branch at $x$ containing its parent in $T[u_1]$ contains $T[w]$. Thus, in every case, due to Lemma~\ref{lem:MonotoneSubgraph}, there exists a vertex in $T[u_1]$ that has $3$ branches having monotone hunter number $k$, each. Therefore, by Lemma~\ref{lem:parson}, $mh(T[u_1])=k+1$, which is a contradiction to the definition of $k$.
    To sum up, for any vertex $w\in V(T[u])\setminus \{x\}$, there exist at most two branches at $w$ that require monotone hunter number at least $k$.
    Let us recall that, by hypothesis, $mh(T[u,x])<k$. Hence, there are only two branches at $x$ that have monotone hunter number at least $k$ (otherwise, by Lemma~\ref{lem:parson} and Lemma~\ref{lem:MonotoneSubgraph}, $k< mh(T[x])\leq mh(T[u_1])$, a contradiction). Therefore, there is no vertex $w$ in $T[u]$ such that there exist at least $3$ branches at $w$, each having monotone hunter number at least $k$. Thus, by Lemma~\ref{lem:parson}, $mh(T[u])\leq k$ and by Lemma~\ref{lem:MonotoneSubgraph}, $mh(T[u])\geq mh(T[u_1])\geq k$.

    \item If $k>1$, $mh(T[u_1])=k$ and ($mh(T[v_2])<k$ or $d=1$), and $T[u_1]$ does not contain any $k$-critical vertex, then there is no vertex in $T[u]$ that has $3$ branches that require monotone hunter number at least $k$ (as in previous case). Therefore, by Lemma~\ref{lem:parson}, $mh(T[u])\leq k$ and by Lemma~\ref{lem:MonotoneSubgraph}, $mh(T[u])\geq mh(T[u_1])\geq k$, and so, $mh(T[u])=k$.
    
\end{enumerate}
\end{proof}

We need the following technical definition to finally describe our algorithm, which will recursively build a ``label'' for each vertex from the labels of its children. Recall that, for any $u,x_1,\cdots,x_p \in V(T[u])$, the tree $T[u,x_1,\ldots, x_p]$ denotes the subtree obtained from $T[u]$ after the vertices of $\bigcup_{i \leq p} V(T[x_i])$ have been removed.

\begin{definition}\label{definition:labels}
    For any tree $T[u]$, the {\rm label} $\lambda(u,T[u])$ of $u$ is a list of integers $(a_1,\ldots, a_p)$, where $a_1>a_2>\dots>a_p\geq 0$, possibly $a_p$ may be {\it marked} with a star, and there exists a set of vertices $\{\ell_1,\dots,\ell_p\}$, such that:
    \begin{itemize}
    \item $mh(T[u])=a_1$.
    \item For $1\leq i< p$, $mh(T[u,\ell_1,\ldots,\ell_i])=a_{i+1}$ and $\ell_i$ is an $a_i$-critical vertex in $T[u,\ell_1,\ldots, \ell_{i-1}]$. We will say that $\ell_i$ {\it is associated} to $a_i$. 
     
    \item $\ell_p=u$. If $a_p$ is not marked with a star ($*$), then there is no $a_p$-critical vertex in $T[u,\ell_1,\ldots, \ell_{p-1}]$. If $a_p$ is marked (with a star), then $\ell_p$ is an $a_p$-critical vertex. In both cases, $T[u,\ell_p]=T[u,u]$ is the empty tree.
    \end{itemize}
\end{definition}




\paragraph{Examples.} Let us exemplify the above definition. In the following examples, we start from two trees $T_1$ and $T_2$ whose roots have ``almost" the same labels and show that adding one vertex adjacent to their root may lead to two new trees whose roots have ``very different'' labels. In particular, this illustrates the importance of the presence of a star on the last integer of a label.

First, let $T_1$ be a tree rooted in a vertex $u_1$ with the label $\lambda(u_1,T_1[u_1])=(a_1,a_2,a_3)=(3,2,1)$. In Figure~\ref{figure:label-example} we provide an illustration of one such tree. Since $a_1=3$, then $mh(T_1[u_1])=3$ and there exists a vertex $\ell^1_1\in T_1[u_1]$ that is $3$-critical. 
Moreover, $mh(T_1[u_1,\ell^1_1])=a_2=2$ and there exists a vertex $\ell^1_2\in T_1[u_1,\ell^1_1]$ that is $2$-critical. Finally, $mh(T_1[u_1,\ell^1_1,\ell^1_2])=a_3=1$ and since $a_3$ is not marked with a star, there is no $1$-critical vertex in $T_1[u_1,\ell^1_1,\ell^1_2]$. 
By previous remarks, $T_1[u_1,\ell^1_1,\ell^1_2]$ is a star centred in $u_1$ and $\ell^1_3=u_1$ (moreover, the star contains at least $2$ vertices since $mh(T_1[u_1,\ell^1_1,\ell^1_2])>0$).

Let $T'_1$ be the tree obtained from $T_1$ by adding a vertex $u$ (the root of $T'_1$) adjacent to $u_1$. Note that, $\ell^1_1$ (resp., $\ell^1_2$) is still $3$-critical (resp., $2$-critical) in $T'_1[u]$ (resp., in $T'_1[u,\ell^1_1]$). Note also that $T'_1[u,\ell^1_1,\ell^1_2]$ is actually equal to the tree obtained from $T_1[u_1,\ell^1_1,\ell^1_2]$ by making $u$ adjacent to $u_1$. Hence, $T'_1[u,\ell^1_1,\ell^1_2]$ is a star (containing at least $3$ vertices) centred in $u_1$, and so $u$ is $1$-critical in $T'_1[u,\ell^1_1,\ell^1_2]$. Therefore, the label of $u$ in $T'_1$ is $(3,2,1^*)$.

Second, let $T_2$ be a tree rooted at $u_2$ with the label $\lambda(u_2,T_2[u_2])=(a_1,a_2,a_3)=(3,2,1^*)$. Similarly to the previous example, there exist $\ell^2_1$ and $\ell^2_2$ that are the $3$-critical vertex of $T_2[u_1]$ and the $2$-critical vertex of $T_2[u_1,\ell^2_1]$, respectively. Moreover, since $a_3$ is marked with a star, there exists a $1$-critical vertex $\ell^2_3$ in $T_2[u_2,\ell^2_1,\ell^2_2]$. Let us recall that by definition of a label, $\ell^2_3=u_2$, and so $T_2[u_2,\ell^2_1,\ell^2_2]$ is a star with at least $3$ vertices centred in the only child of $u_2$.

Let $T'_2$ be the tree obtained from $T_2$ by adding a vertex $u'$ (the root of $T'_2$) adjacent to $u_2$. Note that $T'_2[u',\ell^2_1,\ell^2_2]$ is actually equal to the tree obtained from $T_2[u_2,\ell^2_1,\ell^2_2]$ by making $u'$ adjacent to $u_2$. Therefore, $T'_2[u',\ell^2_1,\ell^2_2]$ contains a path with $4$ vertices and so $mh(T'_2[u',\ell^2_1,\ell^2_2])>1$. It follows that there exist three branches at $\ell^2_2$ in $T'_2[u',\ell^2_1]$ that have monotone hunter number at least $2$, and so, $mh(T'_2[u',\ell^2_1])\geq 3$. This implies that there exist three branches at $\ell^2_1$ in $T'_2[u']$ that have monotone hunter number at least $3$. Hence, we get that $mh(T'_2[u'])\geq 4$ and $\lambda(u',T'_2[u'])=(4)$.

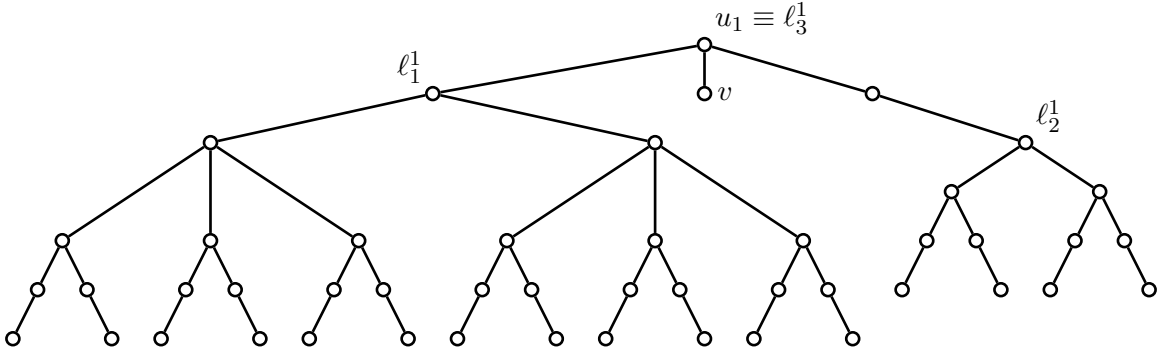
\begin{figure}[!t]
\centering

\begin{tikzpicture}[scale=0.65, inner sep=0.6mm]
    \node[draw, circle, line width=1pt, fill=white](v1) at (0,0)[] {};
    \node[draw, circle, line width=1pt, fill=white](v2) at (2,0)[] {};
    \node[draw, circle, line width=1pt, fill=white](v3) at (3,0)[] {};
    \node[draw, circle, line width=1pt, fill=white](v4) at (5,0)[] {};
    \node[draw, circle, line width=1pt, fill=white](v5) at (6,0)[] {};
    \node[draw, circle, line width=1pt, fill=white](v6) at (8,0)[] {};

    \node[draw, circle, line width=1pt, fill=white](u1) at (-9,0)[] {};
    \node[draw, circle, line width=1pt, fill=white](u2) at (-7,0)[] {};
    \node[draw, circle, line width=1pt, fill=white](u3) at (-6,0)[] {};
    \node[draw, circle, line width=1pt, fill=white](u4) at (-4,0)[] {};
    \node[draw, circle, line width=1pt, fill=white](u5) at (-3,0)[] {};
    \node[draw, circle, line width=1pt, fill=white](u6) at (-1,0)[] {};

    \node[draw, circle, line width=1pt, fill=white](v7) at (0.5,1)[] {};
    \node[draw, circle, line width=1pt, fill=white](v8) at (1.5,1)[] {};
    \node[draw, circle, line width=1pt, fill=white](v9) at (3.5,1)[] {};
    \node[draw, circle, line width=1pt, fill=white](v10) at (4.5,1)[] {};
    \node[draw, circle, line width=1pt, fill=white](v11) at (6.5,1)[] {};
    \node[draw, circle, line width=1pt, fill=white](v12) at (7.5,1)[] {};

    \node[draw, circle, line width=1pt, fill=white](u7) at (-8.5,1)[] {};
    \node[draw, circle, line width=1pt, fill=white](u8) at (-7.5,1)[] {};
    \node[draw, circle, line width=1pt, fill=white](u9) at (-5.5,1)[] {};
    \node[draw, circle, line width=1pt, fill=white](u10) at (-4.5,1)[] {};
    \node[draw, circle, line width=1pt, fill=white](u11) at (-2.5,1)[] {};
    \node[draw, circle, line width=1pt, fill=white](u12) at (-1.5,1)[] {};

    \node[draw, circle, line width=1pt, fill=white](v14) at (9,1)[] {};
    \node[draw, circle, line width=1pt, fill=white](v15) at (11,1)[] {};
    \node[draw, circle, line width=1pt, fill=white](v16) at (12,1)[] {};
    \node[draw, circle, line width=1pt, fill=white](v17) at (14,1)[] {};
    \node[draw, circle, line width=1pt, fill=white](v18) at (1,2)[] {};
    \node[draw, circle, line width=1pt, fill=white](v19) at (4,2)[] {};
    \node[draw, circle, line width=1pt, fill=white](v20) at (7,2)[] {};

    \node[draw, circle, line width=1pt, fill=white](u18) at (-8,2)[] {};
    \node[draw, circle, line width=1pt, fill=white](u19) at (-5,2)[] {};
    \node[draw, circle, line width=1pt, fill=white](u20) at (-2,2)[] {};

    \node[draw, circle, line width=1pt, fill=white](v21) at (9.5,2)[] {};
    \node[draw, circle, line width=1pt, fill=white](v22) at (10.5,2)[] {};
    \node[draw, circle, line width=1pt, fill=white](v23) at (12.5,2)[] {};
    \node[draw, circle, line width=1pt, fill=white](v24) at (13.5,2)[] {};
    \node[draw, circle, line width=1pt, fill=white](v25) at (10,3)[] {};
    \node[draw, circle, line width=1pt, fill=white](v26) at (13,3)[] {};
    \node[draw, circle, line width=1pt, fill=white](v27) at (4,4)[] {};

    \node[draw, circle, line width=1pt, fill=white](u27) at (-5,4)[] {};

    \node[draw, circle, line width=1pt, fill=white](v28) at (11.5,4)[label=above right: $\ell_2^1$] {};
    \node[draw, circle, line width=1pt, fill=white](v29) at (-0.5,5)[label=above left: $\ell_1^1$] {};
    \node[draw, circle, line width=1pt, fill=white](v30) at (5,5)[label=right: $v$ ] {};
    \node[draw, circle, line width=1pt, fill=white](v31) at (8.4,5)[] {};
    \node[draw, circle, line width=1pt, fill=white](v32) at (5,6)[label=above right: $u_1\equiv\ell_3^1$ ] {};

    \draw[-, line width=1pt]  (v1) -- (v7);
    \draw[-, line width=1pt]  (v2) -- (v8);
    \draw[-, line width=1pt]  (v3) -- (v9);
    \draw[-, line width=1pt]  (v4) -- (v10);
    \draw[-, line width=1pt]  (v5) -- (v11);
    \draw[-, line width=1pt]  (v6) -- (v12);

    \draw[-, line width=1pt]  (u1) -- (u7);
    \draw[-, line width=1pt]  (u2) -- (u8);
    \draw[-, line width=1pt]  (u3) -- (u9);
    \draw[-, line width=1pt]  (u4) -- (u10);
    \draw[-, line width=1pt]  (u5) -- (u11);
    \draw[-, line width=1pt]  (u6) -- (u12);

    \draw[-, line width=1pt]  (v7) -- (v18);
    \draw[-, line width=1pt]  (v8) -- (v18);
    \draw[-, line width=1pt]  (v9) -- (v19);
    \draw[-, line width=1pt]  (v10) -- (v19);
    \draw[-, line width=1pt]  (v11) -- (v20);
    \draw[-, line width=1pt]  (v12) -- (v20);

    \draw[-, line width=1pt]  (u7) -- (u18);
    \draw[-, line width=1pt]  (u8) -- (u18);
    \draw[-, line width=1pt]  (u9) -- (u19);
    \draw[-, line width=1pt]  (u10) -- (u19);
    \draw[-, line width=1pt]  (u11) -- (u20);
    \draw[-, line width=1pt]  (u12) -- (u20);

    \draw[-, line width=1pt]  (v14) -- (v21);
    \draw[-, line width=1pt]  (v15) -- (v22);
    \draw[-, line width=1pt]  (v16) -- (v23);
    \draw[-, line width=1pt]  (v17) -- (v24);
    \draw[-, line width=1pt]  (v18) -- (v27);
    \draw[-, line width=1pt]  (v19) -- (v27);
    \draw[-, line width=1pt]  (v20) -- (v27);

    \draw[-, line width=1pt]  (u18) -- (u27);
    \draw[-, line width=1pt]  (u19) -- (u27);
    \draw[-, line width=1pt]  (u20) -- (u27);

    \draw[-, line width=1pt]  (v21) -- (v25);
    \draw[-, line width=1pt]  (v22) -- (v25);
    \draw[-, line width=1pt]  (v23) -- (v26);
    \draw[-, line width=1pt]  (v24) -- (v26);
    \draw[-, line width=1pt]  (v25) -- (v28);
    \draw[-, line width=1pt]  (v26) -- (v28);
    \draw[-, line width=1pt]  (v27) -- (v29);

    \draw[-, line width=1pt]  (u27) -- (v29);

    \draw[-, line width=1pt]  (v28) -- (v31);
    \draw[-, line width=1pt]  (v29) -- (v32);
    \draw[-, line width=1pt]  (v30) -- (v32);
    \draw[-, line width=1pt]  (v31) -- (v32);

\end{tikzpicture}

\caption{An example of the tree $T_1$, described in the examples of Definition~\ref{definition:labels}, such that $\lambda(u_1,T_1[u_1])=(3,2,1)$. Observe that $mh(T_1[\ell_1^1])=3$ and that $\ell_1^1$ is $3$-critical since there are two branches attached to $\ell_1^1$, both of monotone hunter number equal to $3$. Similarly, $mh(T_1[\ell_2^1])=2$ and $\ell_3^1$ is $2$-critical. Finally, the graph $T_1[u_1,\ell_1^1,\ell_1^2]$ is the star on $3$ vertices, centred in $u_1$. Adding any number of leaves to attached to $u_1$ would result in a tree $T'_1$ such that $\lambda(v,T'_1[v])=(3,2,1)$. Adding one leaf $u_2$ attached to $v$ would result in a tree $T_2$ such that $\lambda(u_2,T_2[u_2])=(3,2,1^*)$. Adding a leaf $u'$ attached to $u_2$ would result in a tree $T'_2$ with $\lambda(u',T'_2[u'])=(4)$.}\label{figure:label-example}
\end{figure}

\begin{claim}\label{claim:no(1*,0)}
There exists no tree $T$ rooted in $v\in V(T)$ such that $v$ has label $(1,0)$ in $T[v]$.
\end{claim}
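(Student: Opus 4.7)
The plan is to assume, for contradiction, that $\lambda(v,T[v])=(1,0)$ for some tree $T$ rooted at $v$, unfold Definition~\ref{definition:labels} to locate a path on four vertices inside $T[v]$, and then invoke Proposition~\ref{prop:monotPath} together with Lemma~\ref{lem:MonotoneSubgraph} to contradict $mh(T[v])=1$.

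First, I would read off the data encoded in the label. With $p=2$, $a_1=1$, $a_2=0$ (unmarked), Definition~\ref{definition:labels} gives $mh(T[v])=1$, a vertex $\ell_1$ that is $1$-critical in $T[v]$, and $mh(T[v,\ell_1])=0$. The first item of Lemma~\ref{lem:parson} applied to $T[v,\ell_1]$ forces $|V(T[v,\ell_1])|=1$, and since $v$ always belongs to $T[v,\ell_1]$ (as $\ell_1\neq v$ and $v$ is the root), this vertex must be $v$ itself. Consequently every descendant of $v$ lies in $T[\ell_1]$, which in turn forces $\ell_1$ to be the unique child of $v$ in $T[v]$.

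Next, I would unpack the $1$-criticality of $\ell_1$. Using the special definition of $1$-critical stated just before the Remark preceding Lemma~\ref{lem:atMost1Branch}, $\ell_1$ admits a unique child $w$ in $T[v]$ with $mh(T[w])=1$. The first item of Lemma~\ref{lem:parson} then yields $|V(T[w])|\geq 2$, so $w$ has at least one child, call it $z$. Together these give a path $(v,\ell_1,w,z)$ on four vertices contained in $T[v]$.

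Finally, Lemma~\ref{lem:MonotoneSubgraph} combined with Proposition~\ref{prop:monotPath} yields $mh(T[v])\geq mh(P_4)=2$, contradicting $mh(T[v])=1$. The only subtle point is applying the $k=1$ clause of the definition of criticality correctly (it differs from the $k\geq 2$ case) and observing that $mh(T[w])=1$ forces $w$ to have at least one child; the rest is immediate from already established results.
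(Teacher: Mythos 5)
Your proof is correct and follows essentially the same strategy as the paper's: extract the $1$-critical vertex $\ell_1$ from the label, descend two more levels to exhibit a path on four vertices inside $T[v]$, and contradict $mh(T[v])=1$ via Proposition~\ref{prop:monotPath} and Lemma~\ref{lem:MonotoneSubgraph}. The only cosmetic difference is that you use $a_2=0$ to pin $\ell_1$ down as the unique child of $v$, whereas the paper only needs $\ell_1\neq v$ and connectivity to obtain a long enough path; both routes are valid.
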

\begin{proof}
Towards contradiction, let us assume that there exists a tree $T$ rooted in $v\in V(T)$ such that $v$ has label $\lambda=(a_1,a_2)=(1,0)$ in $T[v]$. Then, by the definition of the labelling, there exists a vertex $\ell_1$ in $T[v]$ such that $\ell_1$ is $1$-critical (because $a_1=1$) and moreover $\ell_1 \neq v$ because $\lambda \neq (a_1)$. By the definition of a vertex being $1$-critical, we get that $\ell_1$ has a child $y$ such that $mh(T[y])=1$. Since $mh(T[y])=1$, we get that $T[y]$ contains at least one edge and so there exists $x\in N(y)$. Since $T[v]$ is connected, there exists a $(\ell_1,v)$-path $P=(p_1=\ell_1,\ldots p_q=v)$. Thus, $P'=(x,y,p_1=\ell_1,\dots,p_q=v)$ is a path with at least $4$ vertices. Hence, $mh(T[v])>1$ which contradicts that $\lambda(v)=(1,0)$ since $a_1=mh(T[v])$. 
\end{proof}

The following notation is used in Algorithm~\ref{alg:Tree} and in its proof (Lemma~\ref{lem:algTree}). For any sequence $\lambda=(a_1,\ldots, a_p)$ and any integer $a>a_1$, let $a \odot \lambda=(a'_1=a ,a'_2=a_1,\ldots, a'_{p'-1}=a_{p-1}, a'_{p'}=a_p)$, i.e., $\odot$ denotes the concatenation. Moreover, let $pref(\lambda)=(a_1,\cdots,a_{p-1})$ (the prefix of $\lambda$) and $t(\lambda)=a_p$ (the tail of $\lambda$).




\begin{algorithm}\caption{Let $T$ be a tree rooted in a vertex $v$. Let $v_1,\ldots, v_d$ be the $d$ children of $v$ in $T[v]$ and let $\lambda_1,\ldots, \lambda_d$ be their corresponding labels.}\label{alg:Tree}
\begin{algorithmic}[1]

\IF{$d=0$}
 \RETURN $(0)$; \label{Alg:l:cas1}
\ENDIF
\STATE Let $\lambda= ()$ and let  $k=\max_{i\in \{1,\dots,d\}} \max_{q\in \{1,\dots,|\lambda_i|\}} \lambda_i$;
\FOR{$m$ from $0$ to $k$}
\STATE Let $n(m)$ be the number of children with $m$ or $m^*$ in their label;
\IF{$m=0$}
  \IF{$n(m)\geq 1$}
  \STATE $\lambda\gets (1)$; \label{Alg:l:cas2}
  \ELSE 
  \STATE $\lambda\gets (0)$; \label{Alg:l:cas1bis}
  \ENDIF
 \ELSIF{$m=1$}
 \STATE {\it \hspace{-5ex}//Case 3: one child of $v$ has $1$ in its label, its subtree does not contain any $1$-critical vertex and no child of $v$ has $0$ in its label}
\IF{$n(m)=1$, $\forall 1\leq i\leq d$, $m\notin pref(\lambda_i)$ and $t(\lambda_i) \neq m^*$ and $\lambda=(0)$} 
\STATE $\lambda \gets (1^*)$; \label{Alg:l:cas3}  
\STATE {\it \hspace{-7.5ex}//Case 4: $v$ has a unique child, and moreover, this child is a $1$-critical vertex, or}
\STATE {\it \hspace{-7.5ex}//Case 5: $v$ has at least $2$ children with at least one having $1$ in its label}
 \ELSIF{$n(m)\geq 1$}
 \STATE$\lambda \gets (2)$;  \label{Alg:l:cas4-5}
\ENDIF
\ELSIF{$m>1$}
\STATE {\it \hspace{-5ex}// Invariant/intuition: at this step, $\lambda = \lambda(v,T_{m-1})$ where $T_{m-1}$ is the subtree obtained from $T$ by removing the substrees $T[w]$ such that $w \neq v$ and $mh(T[w]) \geq m$.}
\STATE {\it \hspace{-5ex}//Case 6: at least three children of $v$ have $m$ or $m^*$ in their label}
\IF{$n(m)\geq 3$}
\STATE $\lambda \gets (m+1)$; \label{Alg:l:cas6} 
\STATE {\it \hspace{-7.5ex}//Case 7: two children of $v$ have $m$ or $m^*$ in their label and at least one of their subtrees contains a $m$-critical vertex}
\ELSIF{$n(m)=2$ and $\exists 1\leq i\leq d$, $m\in pref(\lambda_i)$ or $t(\lambda_i) = m^*$} 
\STATE $\lambda \gets (m+1)$;  \label{Alg:l:cas7}
\STATE {\it \hspace{-7.5ex}//Case 8: two children of $v$ have $m$ or $m^*$ in their label. Moreover, their subtrees do not contain any $m$-critical vertex}
\ELSIF{$n(m)=2$ and, $\forall 1\leq i\leq d$, $m\notin pref(\lambda_i)$ and $t(\lambda_i) \neq m^*$} 
\STATE $\lambda \gets (m^*)$;  \label{Alg:l:cas8}
\STATE {\it \hspace{-7.5ex}//Case 9: one child of $v$ has $m$ or $m^*$ in its label. Moreover, its subtree contains an $m$-critical vertex and $mh(T_{m-1})= m$}
\ELSIF{$n(m)=1$, $\exists 1\leq i\leq d$, $m\in pref(\lambda_i)$ or $t(\lambda_i) = m^*$, and $\lambda$ contains $m$ or $m^*$}  
\STATE $\lambda \gets (m+1)$; \label{Alg:l:cas9} 
\STATE {\it \hspace{-7.5ex}//Case 10: one child of $v$ has $m$ or $m^*$ in its label. Moreover, its subtree contains an $m$-critical vertex and $mh(T_{m-1})< m$}
\ELSIF{$n(m)=1$, $\exists 1\leq i\leq d$, $m\in pref(\lambda_i)$ or $t(\lambda_i) = m^*$, and $\lambda$ contains neither $m$ nor $m^*$}  
\STATE $\lambda \gets (m)\odot\lambda$;  \label{Alg:l:cas10}
\STATE {\it \hspace{-7.5ex}//Case 11: one child of $v$ has $m$ or $m^*$ in its label. Moreover, its subtree does not contain an $m$-critical vertex}
\ELSIF{$n(m)=1$ and, $\forall 1\leq i\leq d$, $m\notin pref(\lambda_i)$ and $t(\lambda_i) \neq m^*$} 
\STATE $\lambda \gets (m)$; \label{Alg:l:cas11} 
\ENDIF
\ENDIF
\ENDFOR

\RETURN $\lambda$;
\end{algorithmic}
\end{algorithm}

\begin{lemma}\label{lem:algTree}
Algorithm~\ref{alg:Tree} takes a tree $T$ rooted at some vertex $v\in V(T)$ and the label of each child of $v$ as inputs and it returns the label of $v$ in $T[v]$.
\end{lemma}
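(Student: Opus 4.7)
The proof proceeds by structural induction on $T[v]$. The base case $d = 0$ is immediate: line~\ref{Alg:l:cas1} returns $(0)$, matching $\lambda(v, T[v])$ for a single vertex by Definition~\ref{definition:labels}. For the inductive step, I assume that the child labels $\lambda_1,\ldots,\lambda_d$ are already correct on their respective subtrees (by induction on smaller subtrees).

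The plan is to maintain the following loop invariant: at the end of the iteration for value $m$, the variable $\lambda$ equals $\lambda(v, T_m)$, where $T_m$ is the subtree obtained from $T[v]$ by iteratively pruning, inside each child's subtree, every maximal subtree whose monotone hunter number exceeds $m$. Equivalently, $T_m$ is the largest subtree of $T[v]$ rooted at $v$ in which every vertex $u$ satisfies $mh(T_m[u]) \leq m$. Since $k$ equals $\max_i mh(T[v_i])$ (as the top entry of $\lambda_i$ is $mh(T[v_i])$), after the iteration for $m = k$ we have $T_k = T[v]$, so $\lambda = \lambda(v, T[v])$, as required.

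To verify the invariant, the main tool is a case-by-case correspondence with Corollary~\ref{cor:parson}. The key dictionary between the algorithm's tests on the children's labels and the corollary's hypotheses comes straight from Definition~\ref{definition:labels}: the value $m$ appears somewhere in $\lambda_i$ (whether in the prefix or as a starred tail) iff the branch at $v_i$ in $T_m$ has monotone hunter number exactly $m$; and $m \in pref(\lambda_i)$ or $t(\lambda_i) = m^*$ iff this branch at $v_i$ contains an $m$-critical vertex. Hence $n(m)$ counts exactly the number of branches at $v$ in $T_m$ of hunter number $m$, and the existence tests on the $\lambda_i$ faithfully track the presence of $m$-critical vertices inside those branches. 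With this dictionary, lines~\ref{Alg:l:cas2}--\ref{Alg:l:cas4-5} realise items~2--5 of Corollary~\ref{cor:parson} (the $m \leq 1$ situations), and for $m \geq 2$, lines~\ref{Alg:l:cas6}--\ref{Alg:l:cas11} realise items~6--11 in the same order.

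The main obstacle is the correct bookkeeping of the star marker and of the label continuation across iterations. In line~\ref{Alg:l:cas10}, the concatenation $\lambda \gets (m) \odot \lambda$ is justified because the $m$-critical vertex $\ell$ pruned out of $T_m$ leaves behind $T_m[v, \ell] = T_{m-1}$, whose label is exactly the previously computed $\lambda$; prepending the new top entry $m$ then yields the correct label for $T_m$. In contrast, the wholesale replacements $\lambda \gets (m^*)$ (line~\ref{Alg:l:cas8}) and $\lambda \gets (m)$ (line~\ref{Alg:l:cas11}) are justified because in those scenarios either $v$ itself is the $m$-critical vertex, or no $m$-critical vertex exists in $T_m$ at all, so by Definition~\ref{definition:labels} the label terminates at the current entry and any prior $\lambda$ must be discarded. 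Finally, Claim~\ref{claim:no(1*,0)} safeguards the boundary between $m=0$ and $m=1$ by ruling out the sole potentially inconsistent label pattern at that transition.
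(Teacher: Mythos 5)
Your proposal is correct and follows essentially the same route as the paper's proof: the same loop invariant $\lambda = \lambda(v,T_m)$ for the pruned trees $T_m$, the same dictionary translating occurrences of $m$, $m^*$ and $m\in pref(\lambda_i)$ into the hunter numbers and $m$-critical vertices of the branches $T^i_m$, and the same case-by-case matching with Corollary~\ref{cor:parson} (including the special handling of $m\le 1$ via Claim~\ref{claim:no(1*,0)} and of the concatenation in line~\ref{Alg:l:cas10}). The paper merely carries out the eleven cases in full detail where you sketch them, so nothing essential is missing.
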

\begin{proof}
Our goal is to prove that the algorithm returns the label of $v$ in $T[v]$, denoted as $\lambda(v,T[v])=(a^v_1,\ldots, a^v_{p^v})$. Let $d_{T[v]}(v)=d$. Let $\lambda^{alg}$ be the value computed by Algorithm~\ref{alg:Tree}.

Observe first that if $d=0$, by Corollary~\ref{cor:parson}, $mh(T[v])=0$, and so, by definition of a label, $\lambda(v,T[v])=(0)$. Note that Algorithm~\ref{alg:Tree} returns $(0)$ in line~\ref{Alg:l:cas1}. Hence, $\lambda^{alg}=\lambda(v,T[v])=(0)$.

Thus let us assume that $d>0$. Let $v_1,\ldots v_d$ be the $d$ children of $v$ in $T[v]$. For any $1\leq i\leq d$, let $\lambda_i=\lambda(v_i,T[v_i])=(a^i_1,\ldots, a^i_{p_i})$ be the label of $v_i$ in $T[v_i]$. W.l.o.g., let us assume that $a^1_1\geq a^2_1\geq \dots\geq a^d_1$ and let $k=a^1_1=\max_{i\in \{1,\dots,d\}} \max_{q\in \{1,\dots,|\lambda_i|\}} a^i_q$. 


\medskip 

For each $1\leq i\leq d$, let $T^i_k=T[v_i]$. Also, for $0\leq m\leq  k$, if $m=a^i_j$ for any $1\leq j\leq p_i$, let $T^i_{m-1}$ be obtained from $T^i_m$ by removing $T[\ell^i_j]$ where $\ell^i_j$ is the vertex associated to $a^i_j$. Otherwise, let $T^i_{m-1}= T^i_m$, i.e., $T^i_m=T[v_i]\setminus \bigcup_{a^i_j\in \lambda_i, a^i_j>m} T[a^i_j]$. Finally, for any $0\leq m\leq k$, let $T_m$ be the subtree of $T[v]$ induced by $\bigcup_{1\leq i\leq d} V(T^i_m)$. Intuitively, $T_m$ is the subtree obtained from $T[v]$ by removing the substrees $T[w]$ for every vertex $w \neq v$ 
such that $mh(T[w])\geq m+1$. Note that $T_k=T[v]=T$ and that $mh(T_m) \leq m+1$ for every $m\leq k$. Note also that $T_0$ consists of $v$ and, possibly, some of its children.

\medskip

Let us prove by induction on $0\leq m\leq k$ that after the $(m+1)$-th iteration of the loop of the algorithm, the current value of $\lambda$, the variable of Algorithm~\ref{alg:Tree}, denoted by $\lambda^m$, is equal to the label $\lambda(v,T_m)$, i.e., the label of $v$ in $T_m$. If this induction holds, then, when $m=k$, the algorithm returns $\lambda^{alg}=\lambda^k= \lambda(v,T_k)=\lambda(v,T[v])$, which concludes the proof.

Let $n(m)$ be the number of children $w$ of $v$, such that $m$ or $m^*$ is in the label of $w$. 

The base case is for $m=0$.
\begin{itemize}
    \item Let us assume first that $n(0)\geq 1$.  Recall that, by the definition of $T_0$, for any child $w$ of $v$ in $T_0$, $mh(T_0[w])\leq 0$, i.e., $T_0[w]$ only contains $w$. Therefore, $v$ is not $1$-critical in $T_0$. Thus, by Case~$2$ of Corollary~\ref{cor:parson} and by the definition of the labelling,  $mh(T_0)=1$ and $\lambda(v,T_0)=(1)$, which corresponds to $\lambda^0$ (line~\ref{Alg:l:cas2} of Algorithm~\ref{alg:Tree}).

    \item Let us assume now that $n(0)=0$, i.e. $T_0$ only contains  $v$. By Case~$1$ of Corollary~\ref{cor:parson}, $mh(T_0)=0$. Thus, $\lambda(v,T_0)=(0)$. Moreover, since, $n(m)=0$, $\lambda^0=(0)$ (line~\ref{Alg:l:cas1bis} of Algorithm~\ref{alg:Tree}).
\end{itemize} 

Let us assume now that $m=1$. It follows from the case $m=0$ that $\lambda^0=\lambda(v,T_0)$. Before analysing the several subcases for $m=1$, let us recall that $\lambda(v,T_m)$ cannot have label $(1,0)$ by Claim~\ref{claim:no(1*,0)}. 

\begin{itemize}
    \item Assume first that $n(1)=0$. This implies that $T_1=T_0$, and so,  we get that $\lambda^1=\lambda^0=\lambda(v,T_0)=\lambda(v,T_1)$ (in this case, Algorithm~\ref{alg:Tree} does nothing during the $2$-nd iteration of the loop).
    
    \item Assume next that $n(1)=1$, and let $v_i$ be the child of $v$ such that $(1)$ or $(1^*)$ is in the label of $v_i$. 
    
    \begin{itemize}
        \item Let us first assume that we are in Case $3$ (line~\ref{Alg:l:cas3} of Algorithm~\ref{alg:Tree}), i.e. $1$ (not $1^*$) is the last element of the label of $v_i$, and  $\lambda^0=(0)$.
        Recall that since $\lambda^0=(0)$, and by the case $m=0$, $T_0$ is a single vertex. Thus, $v$ has only one child $w$ in $T_1$, and  $mh(T_1[w])= 1$ (because $n(1)=1$), i.e., $T_1$ is a star centred in $w$ (but rooted in $v$). So, by Case~$3$ of Corollary~\ref{cor:parson}, $mh(T_1)=1$. Moreover, $v$ is $1$-critical in $T_1$. Thus, $\lambda(v,T_1)=(1^*)$, which correspond to $\lambda^1$ (line~\ref{Alg:l:cas3} of Algorithm~\ref{alg:Tree}).

        \item Let us then assume that we are in Case $5$ (line~\ref{Alg:l:cas4-5} of Algorithm~\ref{alg:Tree}), i.e. $1$ (not $1^*$) is the last element of the label of $v_i$, and  $\lambda^0=(1)$.
        Observe that if $\lambda^0=(1)$, and by the case $m=0$, then $T_0$ is a star centred (and rooted) in $v$. Thus, $v$ has at least $2$ children in $T_1$ (since $v_i \in V(T_1)\setminus V(T_0)$) but there is one child of $v$, $v_i$, that is not a leaf (i.e., $v_i$ is the root of a subtree with monotone hunter number $1$). So, by Case~$5$ of Corollary~\ref{cor:parson}, $mh(T_1)=2$. By Lemma~\ref{lem:atMost1Branch}, for any vertex $w\in T_1$, there is at most $1$ branch at $w$ that has monotone hunter number at least $2$ (so there are no $2$-critical vertices). Therefore, $\lambda(v,T_1)=(2)$, which corresponds to $\lambda^1$ (line~\ref{Alg:l:cas4-5} of Algorithm~\ref{alg:Tree}).
        

        \item Let us consider the Case $4$ (line~\ref{Alg:l:cas4-5} of Algorithm~\ref{alg:Tree}), i.e. the last element of the label of $v_i$ is $1^*$ and moreover, $v_i$ is the unique child of $v$. By Case~$4$ of Corollary~\ref{cor:parson}, 
         $mh(T_1)=2$. By Lemma~\ref{lem:atMost1Branch}, for any vertex $w\in T_1$, there is at most $1$ branch at $w$ that has monotone hunter number at least $2$ (so there are no $2$-critical vertices). Thus, $\lambda(v,T_1)=(2)$, which corresponds to $\lambda^1$ (line~\ref{Alg:l:cas4-5} of Algorithm~\ref{alg:Tree}).
    \end{itemize}
    
    \item Finally, assume that $n(m)=n(1)\geq 2$. Hence, in $T_1$, $v$ has at least two children which are the roots of subtrees with monotone hunter number $1$. By Case~$5$ of Corollary~\ref{cor:parson}, $mh(T_1)=2$ (recall that by definition of $T_1$, $mh(T_1) \leq 2$). Moreover, by Lemma~\ref{lem:atMost1Branch}, for any vertex $w\in T_1$, there is at most $1$ branch at $w$ that has monotone hunter number at least $2$ (so there are no $2$-critical vertices). Therefore, $\lambda(v,T_1)=(2)$, which corresponds to $\lambda^1$ (line~\ref{Alg:l:cas4-5} of Algorithm~\ref{alg:Tree}).
\end{itemize}

We are now ready to prove the induction step. Let $m\geq 2$ and let us assume that $\lambda(v,T_{m-1})=\lambda^{m-1}$. We will prove that $\lambda(v,T_{m})=\lambda^{m}$. 

\begin{itemize}
    \item {\bf Case $\boldsymbol{6}$.} We are in the case where  $n(m)\geq 3$ (line~\ref{Alg:l:cas6} of Algorithm~\ref{alg:Tree}).
    Since $n(m)\geq 3$, in $T_m$, $v$ has at least $3$ children $v_j$, $v_{j'}$ and $v_{j''}$ such that $mh(T^j_m)= mh(T^{j'}_m)=mh(T^{j''}_m)=m$ (and $mh(T^i_m)\leq m$ for every $1 \leq i \leq d$). Thus, we are in the Case~$6$ of Corollary~\ref{cor:parson}, and so, $mh(T_m)=m+1$. Note also that, by Lemma~\ref{lem:atMost1Branch} and for any vertex $w\in T_m$, there exists at most $1$ branch $B$ at $w$ with $mh(B)\geq m+1$. Therefore, $T_m$ has no $m+1$-critical vertex, and so $\lambda(v,T_m)=(m+1)$. To conclude, line~\ref{Alg:l:cas6} of Algorithm~\ref{alg:Tree} precisely returns $\lambda^m=(m+1)$. Hence, $\lambda^m=\lambda(v,T_m)$.

    \item {\bf Case $\boldsymbol{7}$.} We are in the case where $n(m)=2$ and there exists $1\leq i\leq d$ such that $m \in pref(\lambda_i)$ or $t(\lambda_i)=m^*$, i.e., $T^i_m$ contains an $m$-critical vertex (line~\ref{Alg:l:cas7} of Algorithm~\ref{alg:Tree}).
    Let us denote by $y_i$ be the $m$-critical vertex in $T^i_m$. Since $n(m)=2$, in $T_m$, $v$ has exactly $2$ children $v_j$ and $v_{j'}$ (and $i\in \{j,j'\}$) such that $mh(T^j_m)= mh(T^{j'}_m)=m$ and $mh(T^q_m)<m$ for every other child $v_q$ of $v$ in $T_m$. Since $y_i$ is critical, we are in the Case~$7$ of Corollary~\ref{cor:parson}. Thus, $mh(T_m)=m+1$. Note also that, by Lemma~\ref{lem:atMost1Branch}, for any vertex $w\in T_m$, there exists at most $1$ branch $B$ at $w$ with $mh(B)\geq m+1$. Therefore, $T_m$ has no $m+1$-critical vertex, and so $\lambda(v,T_m)=(m+1)$. To conclude, line~\ref{Alg:l:cas7} of Algorithm~\ref{alg:Tree} precisely returns $\lambda^m=(m+1)$. Hence, $\lambda^m=\lambda(v,T_m)$.

    \item  {\bf Case $\boldsymbol{8}$.} We are in the case where $n(m)=2$ and $m \notin pref(\lambda_i)$ and $t(\lambda_i) \neq m^*$ for all $1\leq i\leq d$, i.e., $T^i_m$ does not contain any $m$-critical vertex (line~\ref{Alg:l:cas8} of Algorithm~\ref{alg:Tree}).
    Since $n(m)=2$, in $T_m$, $v$ has exactly $2$ children $v_j$ and $v_{j'}$ such that $mh(T^j_m)= mh(T^{j'}_m)=m$ and $mh(T^q_m)<m$ for every other child $v_q$ of $v$ in $T_m$. Since $v_j$ and $v_{j'}$ are not $m$-critical, we are in the Case~$8$ of Corollary~\ref{cor:parson}. Thus, $mh(T_m)=m$. Since $n(m)=2$, $v$ is clearly an $m$-critical vertex, and so $\lambda(v,T_m)=(m^*)$. To conclude, line~\ref{Alg:l:cas8} of Algorithm~\ref{alg:Tree} precisely returns $\lambda^m=(m^*)$. Hence, $\lambda^m=\lambda(v,T_m)$.

    \item {\bf Case $\boldsymbol{9}$.} We are in the case where $n(m)=1$, there exists $1\leq i\leq d$ such that $m \in pref(\lambda_i)$, or $m^*=t(\lambda_i)$ (i.e., there is an $m$-critical vertex in $T^i_m$) and $\lambda_{m-1}$ contains $m$ or $m^*$, i.e., $mh(T_{m-1})=m$ (line~\ref{Alg:l:cas9} of Algorithm~\ref{alg:Tree}).   
    Let $y_i$ denote the $m$-critical vertex in $T^i_m$. Since $n(m)=1$, in $T_m$, $v$ has exactly $1$ child $v_j$ (and so, $i=j$) such that $mh(T^j_m)=m$ and $mh(T^q_m)<m$ for every other child $v_q$ of $v$ in $T_m$, and therefore $T_{m-1}= T_{m}[v,y_i]$. Thus, by definition of the labelling and since $m\in \lambda^{m-1}$, $mh(T_{m-1})=mh(T_{m}[v,y_i])=m$. Therefore, we are in the Case~$9$ of Corollary~\ref{cor:parson}. Thus, $mh(T_m)= m+1$. Note also that, by Lemma~\ref{lem:atMost1Branch}, for any vertex $w\in T_m$, there exists at most $1$ branch $B$ at $w$ with $mh(B)\geq m+1$. Therefore, $T_m$ has no $(m+1)$-critical vertex, and so $\lambda(v,T_m)=(m+1)$. To conclude, line~\ref{Alg:l:cas9} of Algorithm~\ref{alg:Tree} precisely returns $\lambda^m=(m+1)$. Hence, $\lambda^m=\lambda(v,T_m)$.

    \item {\bf Case $\boldsymbol{10}$.} We are in the case where $n(m)=1$, there exists $1\leq i\leq d$ such that $m \in pref(\lambda_i)$, or $m^*=t(\lambda_i)$ (i.e., there is an $m$-critical vertex in $T^i_m$), and $m\notin \lambda^{m-1}$ (line~\ref{Alg:l:cas10} of Algorithm~\ref{alg:Tree}). Let us denote by $y_i$ be the $m$-critical vertex in $T^i_m$. Since $n(m)=1$, in $T_m$, $v_i$ is the single child of $v$ such that $mh(T^i_m)=m$ and $mh(T^q_m)<m$ for every other child $v_q$ of $v$ in $T_m$. Therefore, $T_{m-1}[v]= T_{m}[v,y_i]$. Thus, by definition of the labelling and since $m\notin \lambda^{m-1}$, $mh(T_{m-1}[v])=mh(T_{m}[v,y_i])<m$. Therefore, we are in the Case~$10$ of Corollary~\ref{cor:parson}. Thus, $mh(T_m)= m$. Let $\lambda^{m-1}=\lambda(v,T_{m-1})=(a_1,\ldots, a_p)$. Recall that, there exist vertices $\ell_1,\ell_2,\ldots ,\ell_{p-1}$ such that $\ell_h$ is $a_h$-critical in $T_{m-1}[v,\ell_1,\cdots,\ell_{h-1}]$ for every $1 \leq h <p$. Since $T_{m-1}[v]= T_{m}[v,y_i]$, $\ell_h$ is $a_h$-critical in $T_{m}[v,y_i,\ell_1,\cdots,\ell_{h-1}]$ for every $1 \leq h <p$. Moreover, for every $1 \leq h \leq p$, $T_{m-1}[v,\ell_1,\dots, \ell_{h-1}]=T_{m}[v,y_i,\ell_1,\dots, \ell_{h-1}]$, and so $mh(T_{m}[v,y_i,\ell_1,\dots, \ell_{h-1}])= mh(T_{m-1}[v,\ell_1,\dots, \ell_{h-1}])=a_h$. Therefore, $\lambda(v,T_m)=(m,a_1,\dots,a_p)$. To conclude, line~\ref{Alg:l:cas10} of Algorithm~\ref{alg:Tree} precisely returns $\lambda^m=(m)\odot\lambda^{m-1}$. Hence, $\lambda^m=\lambda(v,T_m)$.    

    \item {\bf Case $\boldsymbol{11}$.} Finally, we are in the case where $n(m)=1$ and $m \notin pref(\lambda_i)$ and $t(\lambda_i) \neq m^*$ for all $1\leq i\leq d$, i.e., $T^i_m$ does not contain any $m$-critical vertex (line~\ref{Alg:l:cas11} of Algorithm~\ref{alg:Tree}).
    Since $n(m)=1$, in $T_m$, $v$ has exactly $1$ child $v_i$  such that $mh(T^i_m)=m$ and $mh(T^q_m)<m$ for every other child $v_q$ of $v$ in $T_m$. Since, there is no $m$-critical vertex in $T_m$, we are in the Case~$11$ of Corollary~\ref{cor:parson}. Thus, $mh(T_m)= m$. Note also that $v$ is not $m$-critical since $n(m)=1$. Therefore, $T_m$ has no $m$-critical vertex, and so $\lambda(v,T_m)=(m)$. To conclude, line~\ref{Alg:l:cas11} of Algorithm~\ref{alg:Tree} precisely returns $\lambda^m=(m)$. Hence, $\lambda^m=\lambda(v,T_m)$.
\end{itemize}
\end{proof}

The main result of this section follows:

\begin{theorem}
The monotone hunter number of any tree can be computed in polynomial time.
\end{theorem}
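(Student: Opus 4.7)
The plan is to use Algorithm~\ref{alg:Tree} in a bottom-up fashion on a rooted version of the tree. Given a tree $T$, root it at an arbitrary vertex $r \in V(T)$ and process the vertices of $T$ in post-order, i.e., from the leaves to the root. For each leaf $\ell$, set $\lambda(\ell, T[\ell]) = (0)$ (which is consistent with line~\ref{Alg:l:cas1} of Algorithm~\ref{alg:Tree}). Then, for each internal vertex $v$ whose children $v_1, \ldots, v_d$ have already been labelled with $\lambda_1, \ldots, \lambda_d$, invoke Algorithm~\ref{alg:Tree} on input $(v, v_1, \ldots, v_d, \lambda_1, \ldots, \lambda_d)$ to obtain $\lambda(v, T[v])$. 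By Lemma~\ref{lem:algTree}, this correctly computes the label of $v$ in $T[v]$. When the root $r$ is processed, the first entry $a_1$ of $\lambda(r, T[r]) = (a_1, \ldots, a_{p})$ equals $mh(T[r]) = mh(T)$ by Definition~\ref{definition:labels}.

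For the running time, the key observation is that labels are short and contain small integers. Indeed, by Theorem~\ref{theo:pw}, $mh(T) \leq pw(T)+1$, and Parsons' original lemma~\cite{parsons1} together with a standard counting argument gives $pw(T) \leq \lceil \log_3(|V(T)|+1)\rceil$ for any tree $T$ (since, by Lemma~\ref{lem:parson}, increasing the monotone hunter number by one requires at least tripling a branching subtree). Hence every entry $a_i$ of any label satisfies $a_i \leq mh(T) = O(\log |V(T)|)$, and since the entries of a label form a strictly decreasing sequence of non-negative integers, each label has length $O(\log |V(T)|)$. Therefore, each call to Algorithm~\ref{alg:Tree} at a vertex $v$ of degree $d(v)$ runs in time $O(d(v) \cdot \log^{O(1)} |V(T)|)$, since it iterates over the values $m \in \{0, 1, \ldots, k\}$ with $k = O(\log |V(T)|)$ and, for each $m$, performs a constant number of scans through the children's labels.

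Summing over all vertices, the total running time is bounded by $\sum_{v \in V(T)} O(d(v) \cdot \log^{O(1)}|V(T)|) = O(|V(T)| \cdot \log^{O(1)} |V(T)|)$, which is polynomial (in fact quasi-linear) in $|V(T)|$.

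The main (and essentially only remaining) obstacle is the analytical bookkeeping: one must carefully verify that the labels stay logarithmic in length so that each application of Algorithm~\ref{alg:Tree} is efficient, and that the post-order traversal correctly supplies the children's labels that the algorithm expects. Both the correctness (via Lemma~\ref{lem:algTree}) and the logarithmic bound on label length follow from already-established results, so no additional technical machinery is needed. Putting everything together yields a polynomial-time algorithm computing $mh(T)$, which establishes the theorem. \qed
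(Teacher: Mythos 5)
Your proposal is correct and follows exactly the route the paper intends: root the tree arbitrarily, compute labels bottom-up via Algorithm~\ref{alg:Tree} (whose correctness at each vertex is Lemma~\ref{lem:algTree}), and read off $mh(T)$ as the first entry of the root's label per Definition~\ref{definition:labels}. Your added observation that labels have length and entries $O(\log |V(T)|)$ (via $mh(T)\leq pw(T)+1$ and the logarithmic pathwidth of trees) is a nice refinement giving a quasi-linear bound, though plain polynomiality already follows from the trivial bound of $n$ on label length and entry size.
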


 \section{Monotone hunter number in the red variant in trees}\label{sec:bip}

So far, we have investigated the \textsc{Hunters and Rabbit} game with the additional monotonicity property since monotone strategies are often easier to deal with. Previous works on the \textsc{Hunters and Rabbit} game in bipartite graphs $G=(V_r\cup V_w,E)$ have shown that studying the {\it red variant} of the \textsc{Hunters and Rabbit} game, i.e., when the rabbit is constrained to start in a vertex in $V_r$, could be very fruitful. For instance, recall Lemma~\ref{lem:bipartition} which states that $h(G)=h_{V_r}(G)$ for every bipartite graph $G=(V_r\cup V_w,E)$ and which helped to get many results on the \textsc{Hunters and Rabbit} game~\cite{AbramovskayaFGP16,BOLKEMA2019360,gruslys2015catching}. Therefore, it is interesting to consider the monotonicity constraint when restricted to the red variant of the \textsc{Hunters and Rabbit} game. This section is dedicated to this study in the case of trees. Recall that $mh_{V_r}(G)$ denotes the minimum number of hunters required to win against a rabbit starting at $V_r$ in a bipartite graph $G=(V_r\cup V_w,E)$ and in a monotone way. It can be checked that, in~\cite{gruslys2015catching}, it is actually shown that, for any tree $T$, $mh_{V_r}(T)\leq \lceil \frac{\log_2 |V(T)|}{2} \rceil$ (a monotone strategy with respect to $V_r$ is described in~\cite{gruslys2015catching}). Therefore, the proof of Corollary~\ref{cor:costMonotone} actually shows that:

\begin{corollary}\label{cor:costMonotoneBip}
There exists  $\varepsilon>0$ such that, for any $k\in \mathbb{N}$, there exists a tree $T$ with $mh_{V_r}(T) \geq k$ and $mh(T)\geq (1+\varepsilon) mh_{V_r}(T)$.
\end{corollary}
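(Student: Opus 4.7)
The plan is to mimic the proof of Corollary~\ref{cor:costMonotone} verbatim, using the same family of complete rooted ternary trees $(T_n)_{n \geq 0}$ with $|V(T_n)| = (3^{n+1}-1)/2$ and pathwidth $pw(T_n) = n$ (by Parsons' lemma). The only non-cosmetic change needed is that the upper bound used there, on $h$, must be replaced by an analogous upper bound on $mh_{V_r}$.

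First, by Theorem~\ref{theo:pw}, $mh(T_n) \geq pw(T_n) = n$. Second, the crucial observation, which I would state explicitly (and which is noted immediately before the statement of this corollary), is that the strategy from~\cite{gruslys2015catching} witnessing $h(T) \leq \lceil \log_2|V(T)|/2 \rceil$ is in fact monotone with respect to $V_r$; hence $mh_{V_r}(T) \leq \lceil \log_2|V(T)|/2 \rceil$ for every tree $T$. Applied to $T_n$, this yields $mh_{V_r}(T_n) \leq \lceil (n+1)\log_2 3 / 2 \rceil$, and therefore
\[
\frac{mh(T_n)}{mh_{V_r}(T_n)} \;\geq\; \frac{n}{\lceil (n+1)\log_2 3/2 \rceil} \;\longrightarrow\; \frac{2}{\log_2 3} \;=\; \frac{2\ln 2}{\ln 3} \;>\; 1
\]
as $n \to \infty$. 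Fixing any $\varepsilon$ with $0 < \varepsilon < \frac{2\ln 2}{\ln 3} - 1$ then gives $mh(T_n) \geq (1+\varepsilon)\,mh_{V_r}(T_n)$ for every sufficiently large $n$.

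It remains to ensure that $mh_{V_r}(T_n)$ can be made at least $k$ for arbitrary $k$. By Proposition~\ref{prop:relationMonotone&nonM} and Lemma~\ref{lem:bipartition}, $mh_{V_r}(T_n) \geq h_{V_r}(T_n) = h(T_n)$, so it suffices to know $h(T_n) \to \infty$. This is precisely the (implicit) ingredient of the proof of Corollary~\ref{cor:costMonotone}: the complete ternary trees have unbounded hunter number (one can check by induction, using the single-hunter characterisation of~\cite{FPrincess}, that $h(T_n)$ grows like $\Theta(\log n)$, since each $T_{3n+3}$ contains three disjoint copies of $T_n$ hanging from a common vertex through paths of length three, forcing one extra hunter compared to $h(T_n)$). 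Picking $n$ large enough that both $mh_{V_r}(T_n) \geq k$ and the ratio inequality hold finishes the argument.

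The main obstacle is purely auxiliary: it lies in verifying that the strategy described in~\cite{gruslys2015catching} really is monotone with respect to $V_r$ (and not merely a winning non-monotone one). Once that claim is accepted, the corollary is a direct rereading of Corollary~\ref{cor:costMonotone} with $mh_{V_r}$ in place of $h$, and no new combinatorial ideas are required.
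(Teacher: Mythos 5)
Your proof follows the paper's argument essentially verbatim: the paper's own justification of Corollary~\ref{cor:costMonotoneBip} is precisely the observation (stated just before the corollary) that the strategy of~\cite{gruslys2015catching} is monotone with respect to $V_r$, so that the proof of Corollary~\ref{cor:costMonotone} --- the same ternary trees $T_n$, the bound $mh(T_n)\geq pw(T_n)=n$ via Theorem~\ref{theo:pw}, the upper bound $\lceil \log_2|V(T_n)|/2\rceil$ now read as a bound on $mh_{V_r}(T_n)$, and $\varepsilon$ coming from $2\ln 2/\ln 3>1$ --- goes through unchanged. So the core of your argument matches the paper's.

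The one place where you go beyond what the paper writes is also the one place where your reasoning breaks: the justification that $h(T_n)\to\infty$, which you need for $mh_{V_r}(T_n)\geq h_{V_r}(T_n)=h(T_n)\geq k$. The induction you sketch --- ``three disjoint copies of $T_n$ hanging from a common vertex force one extra hunter'' --- is a Parsons-type statement, and Parsons' lemma is exactly what fails for the \emph{non-monotone} hunter number: Section~\ref{sec:bip} of the paper constructs trees with arbitrarily deep nestings of such triple branchings that nevertheless satisfy $h=2$ (Lemma~\ref{lem:nonMonotTi} and Theorem~\ref{theo:subdivision}). The characterisation of~\cite{FPrincess} only separates $h=1$ from $h\geq 2$ and gives no inductive handle on larger values; and in $T_n$ the three sub-roots are adjacent to the new root, not at distance three. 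The correct way to close this step is to invoke the lower bound of~\cite{gruslys2015catching} showing that the hunter number of bounded-degree trees (in particular complete ternary trees) is unbounded. To be fair, the paper leaves the analogous fact implicit in Corollary~\ref{cor:costMonotone} as well, but it is a genuine external ingredient and is not recoverable by the induction you propose.
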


Therefore, Proposition~\ref{prop:monotPath} and Corollary~\ref{cor:costMonotoneBip} already show that 
there exist graphs $G$ for which $mh_{V_r}(G)<mh(G)$.
The main result of this section is that there exists an infinite family of trees $T$ such that the difference between $mh_{V_r}(T)$ and $h_{V_r}(T)$ is arbitrarily large. 
In particular, this improves the result of Corollary~\ref{cor:costMonotone} since $mh_{V_r}(G) \leq mh(G)$ and $h_{V_r}(G)=h(G)$ for any graph $G$.

More precisely, this section is devoted to proving:

\begin{theorem}\label{theo:gapMonotone&nonM}
For every $i \geq 3$, there exists a tree $T$ such that $mh_{V_r}(T)\geq i$ and $h_{V_r}(T)=h(T)=2$.
\end{theorem}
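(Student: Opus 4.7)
My plan is to exhibit, for each $i\geq 3$, an explicit recursive family $(T_i)_{i\geq 3}$ of trees together with a designated bipartition $V(T_i)=V_r\cup V_w$, and to verify separately that $mh_{V_r}(T_i)\geq i$ and $h_{V_r}(T_i)\leq 2$. Once both bounds are in hand, the equality $h(T_i)=h_{V_r}(T_i)$ follows from Lemma~\ref{lem:bipartition}, and the lower bound $h(T_i)\geq 2$ will be free (the construction will make $T_i$ contain a suitably subdivided $3$-leaf star, which by the classical characterisation of~\cite{FPrincess} forces the hunter number to exceed $1$). Concretely, I would let $T_3$ be a small base case checked by hand, and for $i\geq 4$ build $T_i$ from three pairwise-disjoint rooted copies of $T_{i-1}$ glued to a short ``parity-adapter'' path whose length is chosen so that all three subroots sit in the correct colour class with respect to a fixed global bipartition $(V_r,V_w)$.

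For the lower bound $mh_{V_r}(T_i)\geq i$, the plan is to adapt the Parsons-like Lemma~\ref{lem:parson} to the red variant. Given any parsimonious monotone winning strategy $\mathcal{S}$ in $T_i$ with respect to $V_r$, I would analyse the three branches at the root corresponding to the three copies of $T_{i-1}$: for each such branch $B_j$, let $i_j$ be the first round at which $V(B_j)\cap Z_{i_j}=\emptyset$. Using Lemmas~\ref{lem:reach2} and~\ref{lem:noRoundEmpty} together with the Proposition~\ref{prop:reach}-style reachability argument, I would show that in the round at which the last of the three branches becomes clear, the root and the middle vertex of the adapter must themselves still be contaminated, forcing one extra hunter on top of the $i-1$ hunters demanded inductively by a single copy. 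Induction on $i$ then yields $mh_{V_r}(T_i)\geq mh_{V_r}(T_{i-1})+1\geq i$.

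For the upper bound $h_{V_r}(T_i)\leq 2$, I would describe an explicit non-monotone two-hunter strategy that exploits the parity remark following Lemma~\ref{lem:bipartition}: because the rabbit starts in $V_r$, after every even (resp.\ odd) round it must occupy a red (resp.\ white) vertex. The strategy processes the three copies of $T_{i-1}$ sequentially; while processing one copy, one hunter acts as a ``parity guard'' on the cut-vertex of the adapter, timed so that the only rounds at which the rabbit could step onto the guarded vertex are precisely those at which the guard shoots there, while the second hunter sweeps the current copy using the inductively assumed two-hunter strategy. After a copy is processed, the guard is freed; the copy does get recontaminated in the set-theoretic sense, but the parity constraint guarantees that no rabbit trajectory winning against $\mathcal{S}$ can re-enter it at any subsequent round. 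Correctness is proved by induction on $i$, the base case $T_3$ being verified directly.

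The main obstacle I expect is engineering the parity-adapter gadget: its length and bipartition must simultaneously ensure that (i) the guard timing in the two-hunter strategy actually blocks every crossing trajectory, and (ii) the adapter does not open a monotone shortcut that would collapse the lower-bound induction. Getting these two requirements to be compatible is the delicate part, and it is what forces the gap between $h_{V_r}$ and $mh_{V_r}$ in trees to be arbitrarily large rather than the constant-factor gap obtained in Corollary~\ref{cor:costMonotone}.
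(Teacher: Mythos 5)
Your overall architecture (recursive family, two-hunter red-variant sweep for the upper bound, branch analysis for the lower bound) matches the paper's in spirit, but both of your key steps have genuine gaps, and the second rests on a claim that is actually false. For the lower bound you propose to prove $mh_{V_r}(T_i)\geq mh_{V_r}(T_{i-1})+1$ via a ``Parsons-like lemma adapted to the red variant'' applied to \emph{three} branches at the root. No such lemma holds: the discussion before Corollary~\ref{cor:costMonotoneBip} recalls that $mh_{V_r}(T)\leq \lceil \tfrac{1}{2}\log_2|V(T)|\rceil$ for every tree, so the complete ternary tree $T_n$ of Corollary~\ref{cor:costMonotone} satisfies $mh_{V_r}(T_n)=O(\log_3 |V(T_n)|\cdot \log_2 3/2)\approx 0.79\,n<n$ for large $n$, whereas a three-branch red-variant Parsons recursion would force $mh_{V_r}(T_n)\geq n$. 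This is exactly why the paper does \emph{not} use bounded branching: its trees $T_{i,q}$ have roots of degree $q\geq 2i$, and the lower bound (Lemma~\ref{lem:TiNotMonot}) goes through Lemma~\ref{lem:other-branches-not-touched}, which needs degree at least $2k$ to guarantee that when the first branch is definitively cleaned two other branches are completely unshot; the contradiction is then obtained globally by exhibiting $2i$ vertex-disjoint subtrees each forcing a recontamination unless shot, which $2(i-1)$ shots over two rounds cannot cover. With only three branches per node this counting gives nothing beyond $mh_{V_r}\geq 2$.

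The upper bound has a second, independent gap. You want a \emph{short} ``parity-adapter'' with one hunter acting as a parity guard on the cut vertex while the other hunter sweeps a copy of $T_{i-1}$ ``using the inductively assumed two-hunter strategy'' --- but one hunter cannot execute a two-hunter strategy, and the recursive strategy genuinely needs both hunters at rounds of both parities (already for the spider $S_{3,q}$ the strategy shoots two vertices at every odd round). So guard plus recursion needs three hunters at some rounds. The paper's resolution is not parity but \emph{time}: the paths $P_i^j$ joining the root $c_i$ to the sub-roots have rapidly growing even lengths $p_i^j\geq \ell_{i-1}+\sum_{k<j}p_i^k$, chosen so that after the sweeping hunter has pushed the rabbit outward, the rabbit sits at distance at least $\ell_{i-1}$ from $c_i$; both hunters can then abandon the root entirely and spend $\ell_{i-1}$ rounds clearing the current copy before the rabbit can possibly return. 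A short adapter provides no such buffer, and the obstacle you flag as ``delicate'' is in fact where your construction breaks. To repair the proposal you would need to (i) replace the three-way branching by $q\geq 2i$-way branching and redo the lower bound along the lines of Lemmas~\ref{lem:other-branches-not-touched} and~\ref{lem:PseudoDistance-1}, and (ii) replace the short adapters by the exponentially growing paths that make the non-monotone two-hunter strategy of Lemma~\ref{lem:nonMonotTi} work.
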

\begin{proof}
In Section~\ref{ssec:h(T3)=2}, we define a family $(T_{i,2i})_{i \geq 3}$ of trees such that $h_{V_r}(T_{i,2i})=2$ for every $i\geq 3$ (Lemma~\ref{lem:nonMonotTi}). Then, in Section~\ref{ssec:mh(T)>=3}, Lemma~\ref{lem:TiNotMonot} proves that $mh_{V_r}(T_{i,2i}) \geq i$ for every $i\geq 3$.
\end{proof}

In order to prove the Lemmas~\ref{lem:nonMonotTi} and~\ref{lem:TiNotMonot} below, we first need to adapt several technical lemmas and propositions above in the case of the red variant in bipartite graphs. Since the proofs of Proposition~\ref{prop:reachBip} and Lemma~\ref{lem:reach2Bip} (in the red variant) share many similarities with the previous, already proven, versions, we decided to postpone their proofs in the appendix. 
 


Next proposition is an adaptation of Proposition~\ref{prop:reach} in the red variant of the game.

\begin{restatable}{proposition}{propreachBip}\label{prop:reachBip}
Let $\mathcal{S} = (S_1,\ldots, S_\ell)$ be a hunter strategy in a bipartite graph $G=(V_r\cup V_w,E)$ with respect to $V_r$. Let $v\in V_r$ (resp. $v\in V_w)$ and $1\leq i \leq \ell$. If there exists a vertex $u \in N(v)$ and a vertex $x\in N(u)$ (possibly $x=v$) such that $u \notin \bigcup_{j\leq i}S_j$ and $x\notin \bigcup_{j< i}S_j$, then $v \in Z_{2p}$ for every $2p\leq i$  (resp. $v \in Z_{2p+1}$ for every $2p+1\leq i$).
\end{restatable}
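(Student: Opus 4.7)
The plan is to adapt the proof of Proposition~\ref{prop:reach} while respecting the bipartite structure and the constraint that the rabbit starts in $V_r$. The key new ingredient is the parity constraint noted in the remark at the end of Section~\ref{sec:preliminaries}: a rabbit that starts in $V_r$ must occupy a vertex of $V_r$ at every even round and of $V_w$ at every odd round. Consequently, any trajectory ending at $v\in V_r$ has even length, and any trajectory ending at $v\in V_w$ has odd length, which is exactly why the statement distinguishes the two cases. Since $u\in N(v)$ and $x\in N(u)$ in a bipartite graph, $u$ lies in the colour class opposite to $v$, and $x$ lies in the same colour class as $v$, so alternating walks between $x$ and $u$ automatically respect the bipartition.

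First I would handle the case $v\in V_r$. For $p=0$ the claim is immediate since $Z_0=V_r$. For $p\geq 1$ with $2p\leq i$, I would define the trajectory $(r_0,\ldots,r_{2p})$ by $r_q=x$ for each even $q$ with $0\leq q\leq 2p-2$, $r_q=u$ for each odd $q$ with $1\leq q\leq 2p-1$, and $r_{2p}=v$. This starts at $r_0=x\in V_r$ and remains valid in the degenerate case $x=v$. For the case $v\in V_w$ with $2p+1\leq i$, I would use the trajectory $(r_0,\ldots,r_{2p+1})$ defined by $r_q=u$ for each even $q$ with $0\leq q\leq 2p$, $r_q=x$ for each odd $q$ with $1\leq q\leq 2p-1$, and $r_{2p+1}=v$, which starts at $r_0=u\in V_r$.

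In both cases, each consecutive pair of positions is an edge, being either $\{x,u\}$ (an edge since $x\in N(u)$) or the final edge $\{u,v\}$. To verify that the rabbit is never shot, I would check that $r_q\notin S_{q+1}$ for every $q$ strictly smaller than the final round. For positions $r_q=u$, the indices $q+1$ lie in $\{1,3,\ldots,2p\}$ in the first case and in $\{1,3,\ldots,2p+1\}$ in the second, so in both cases $q+1\leq i$ and the hypothesis $u\notin\bigcup_{j\leq i}S_j$ suffices. For positions $r_q=x$, the indices $q+1$ are at most $2p-1\leq i-1$ in the first case and at most $2p\leq i-1$ in the second, so $q+1<i$ and the hypothesis $x\notin\bigcup_{j<i}S_j$ applies. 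The only real (and minor) obstacle will be keeping the parity bounds straight when aligning the permitted values of $i$ with $2p$ or $2p+1$; this is entirely mechanical and constitutes the sole non-routine aspect of the verification.
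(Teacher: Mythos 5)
Your proposal is correct and follows essentially the same approach as the paper: both construct the same alternating $x$--$u$ walk ending at $v$, starting from the appropriate vertex of $V_r$, and check that the parity of each round matches the hypothesis available for $u$ (shots up to round $i$) versus $x$ (shots strictly before round $i$). Your bookkeeping of which hypothesis covers which position is in fact slightly more careful than the paper's concluding sentence.
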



Next lemma adapts Lemma~\ref{lem:reach2} in the red variant of the game.

\begin{restatable}{lemma}{lemreachBip}\label{lem:reach2Bip}
Let $G=(V_r\cup V_w,E)$ be a bipartite graph with at least two vertices. Let $\mathcal{S} = (S_1,\ldots, S_\ell)$ be a monotone hunter strategy in $G$ with respect to $V_r$. For any $0 \leq p < i \leq \lceil \ell/2 \rceil$, $Z_{2i} \subseteq Z_{2p}$ and $Z_{2i+1} \subseteq Z_{2p+1}$.
\end{restatable}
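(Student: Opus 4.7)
The plan is to mirror the proof of Lemma~\ref{lem:reach2}, substituting Proposition~\ref{prop:reachBip} for Proposition~\ref{prop:reach} and keeping careful track of the bipartite parity. Recall from the remark at the end of Section~\ref{sec:preliminaries} that, in the red variant, a rabbit starting in $V_r$ occupies $V_r$ after every even round and $V_w$ after every odd round, so automatically $Z_{2i}\subseteq V_r$ and $Z_{2i+1}\subseteq V_w$. In particular, the case $p=0$ of the first inclusion, $Z_{2i}\subseteq Z_0=V_r$, is immediate; every other instance of the lemma will be treated uniformly by the argument below.

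Fix $i\ge 1$ and $v\in Z_{2i}$; the case $v\in Z_{2i+1}$ is entirely symmetric. By definition there is a rabbit trajectory $(r_0,\dots,r_{2i-2}=x,\,r_{2i-1}=u,\,r_{2i}=v)$ starting in $V_r$ with $r_j\notin S_{j+1}$ for all $0\le j<2i$. Thus $u\in N(v)$, $x\in N(u)$, and by parity $x\in V_r$, $u\in V_w$, $v\in V_r$. Since $u\in Z_{2i-1}\setminus S_{2i}$, monotonicity of $\mathcal{S}$ forces $u\notin\bigcup_{q\le 2i}S_q$: indeed, if $u$ had been cleared at some round $q\le 2i-1$, then, since $u\in Z_{2i-1}$, monotonicity would require $u\in S_{2i}$, contradicting $u\notin S_{2i}$. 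The same reasoning applied one round earlier, to $x\in Z_{2i-2}\setminus S_{2i-1}$, yields $x\notin\bigcup_{q<2i}S_q$.

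These are exactly the hypotheses of Proposition~\ref{prop:reachBip} with its parameter $i$ instantiated to $2i$. Since $v\in V_r$, that proposition gives $v\in Z_{2p}$ for every $2p\le 2i$, which is precisely the desired inclusion $Z_{2i}\subseteq Z_{2p}$ for every $0\le p\le i$, and in particular for every $p<i$. For $v\in Z_{2i+1}$ one argues identically: extract a trajectory ending in $(x,u,v)$ with $x\in V_w$, $u\in V_r$, $v\in V_w$, derive by the same monotonicity step that $u\notin\bigcup_{q\le 2i+1}S_q$ and $x\notin\bigcup_{q<2i+1}S_q$, and invoke the $V_w$ clause of Proposition~\ref{prop:reachBip} to conclude $v\in Z_{2p+1}$ for every $2p+1\le 2i+1$.

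I do not foresee a genuine obstacle; the proof is essentially a bipartite transcription of Lemma~\ref{lem:reach2}. The only bookkeeping worth checking is that for the smallest admissible index $i=1$ the trajectory is still long enough to extract both predecessors $x=r_0$ and $u=r_1$ of $v$, and that the condition ``$x\notin\bigcup_{q<2i}S_q$'' then reduces to the directly available $x\notin S_1$. Apart from this minor boundary check, everything follows from the preceding proposition together with the monotonicity of $\mathcal{S}$.
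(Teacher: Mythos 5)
Your proposal is correct and matches the paper's own (appendix) proof essentially line for line: both extract the final three vertices $x,u,v$ of a witnessing rabbit trajectory, use monotonicity to conclude $u\notin\bigcup_{q\le 2i}S_q$ and $x\notin\bigcup_{q<2i}S_q$, and then invoke Proposition~\ref{prop:reachBip} with the appropriate parity clause. The only differences are cosmetic (your explicit handling of $p=0$ and the $i=1$ boundary, versus the paper's brief dismissal of the small cases).
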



Lemma~\ref{lem:nonMonotoneBip} is a direct adaptation, in the red variant of the game, of Lemma~\ref{lem:nonMonotone}. The only difference in their proofs is that, in the case of Lemma~\ref{lem:nonMonotoneBip}, Proposition~\ref{prop:reachBip} must be used instead of Proposition~\ref{prop:reach}. Therefore, we present the proof of Lemma~\ref{lem:nonMonotoneBip} only in the appendix.

\begin{restatable}{lemma}{lemnonMonotoneBip}\label{lem:nonMonotoneBip}
Let $\mathcal{S} = (S_1,\ldots, S_\ell)$ be a non-monotone winning hunter strategy in a bipartite graph $G = (V_r\cup V_w,E)$ with respect to $V_r$. Then, there exist a vertex $v\in V$ and $1\leq i \leq \ell$ such that $v\in Z_{i-1}\setminus S_i$ and $v\in \bigcup_{p< i}S_p$.
\end{restatable}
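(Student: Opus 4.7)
The plan is to mirror the proof of Lemma~\ref{lem:nonMonotone} almost verbatim, with the parity care that Proposition~\ref{prop:reachBip} demands. I would argue by contradiction: suppose the conclusion fails, so that for every $v\in V$ and every $1\leq i\leq \ell$, if $v\in Z_{i-1}\setminus S_i$ then $v\notin \bigcup_{p<i}S_p$. Since $\mathcal{S}$ is non-monotone and winning (with respect to $V_r$), there exists a vertex $u$ that is cleared at some round $q$ with $1\leq q\leq \ell-2$ and recontaminated at a later round $j$, i.e.\ $u\in Z_j\setminus S_{j+1}$ with $q\leq j$. Applying our assumption to $(u,j+1)$ gives $u\notin\bigcup_{p\leq j}S_p$; in particular $u\notin S_q$, so $u$ cannot have been cleared at round $q$ by being shot, and must instead satisfy $N(u)\cap Z_{q-1}\neq\emptyset$ with $N(u)\cap Z_{q-1}\subseteq S_q$.

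The heart of the proof is to upgrade this to $N(u)\subseteq \bigcup_{p\leq q}S_p$. Let $x\in N(u)$ and assume by contradiction that $x\notin\bigcup_{p<q}S_p$. I would then invoke Proposition~\ref{prop:reachBip} with the triple $(v,u',x')=(x,u,x)$: both $u\notin\bigcup_{p\leq q-1}S_p$ (from the previous paragraph, as $q\leq j$) and $x\notin\bigcup_{p<q-1}S_p$ (from the supposition) hold. Here is where parity matters: since $N(u)\cap Z_{q-1}\neq \emptyset$ and all vertices in $Z_{q-1}$ share the colour opposite to $u$'s, the index $q-1$ has the parity matching the colour class of $x$, so Proposition~\ref{prop:reachBip} yields $x\in Z_{q-1}$. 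Hence $x\in N(u)\cap Z_{q-1}\subseteq S_q$, a contradiction with $x\notin\bigcup_{p\leq q}S_p$. Thus $N(u)\subseteq \bigcup_{p\leq q}S_p$ as desired.

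To close the argument, I use $u\in Z_j$ to pick a witness trajectory ending at $u$: there exists $w\in N(u)\cap Z_{j-1}$ with $w\notin S_j$. By the previous step, $w\in N(u)\subseteq \bigcup_{p\leq q}S_p\subseteq \bigcup_{p<j}S_p$ (since $q<j$; note $q<j$ because $u\notin S_{q+1},\ldots,S_j$ yet $u$ was cleared at $q$ and not merely shot). So $w$ satisfies $w\in Z_{j-1}\setminus S_j$ and $w\in\bigcup_{p<j}S_p$, which is exactly the configuration our assumption forbids for the pair $(w,j)$. This contradiction completes the proof.

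The main obstacle I anticipate is being careful with the bookkeeping of parities when invoking Proposition~\ref{prop:reachBip}: the conclusion gives membership in $Z_{2p}$ or $Z_{2p+1}$ depending on which colour class the target vertex belongs to, so one must check that $q-1$ has the parity matching the colour of $x$. This follows from $N(u)\cap Z_{q-1}\neq\emptyset$ and the fact that, under the red-variant, every contaminated set $Z_i$ lies entirely within $V_r$ or $V_w$ according to the parity of $i$; but it is the one nontrivial step where the bipartite proof genuinely differs from the uncoloured one.
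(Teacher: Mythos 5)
Your proposal is correct and follows essentially the same route as the paper's proof: assume the conclusion fails, deduce that the recontaminated vertex $u$ must have been cleared via the condition $N(u)\cap Z_{q-1}\subseteq S_q$, use Proposition~\ref{prop:reachBip} (with the parity of $q-1$ fixed by $N(u)\cap Z_{q-1}\neq\emptyset$ and the colour-class structure of the $Z_i$ in the red variant) to show $N(u)\subseteq\bigcup_{p\leq q}S_p$, and then derive a contradiction from a witness $w\in N(u)\cap Z_{j-1}\setminus S_j$. The parity bookkeeping you single out is exactly the point where the paper also departs from the proof of Lemma~\ref{lem:nonMonotone}, and you handle it the same way.
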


Lemma~\ref{lem:MonotoneSubgraphBip} is a direct adaptation, in the red variant of the game, of Lemma~\ref{lem:MonotoneSubgraph}. The only difference in their proofs is that, in the case of Lemma~\ref{lem:MonotoneSubgraphBip}, Proposition~\ref{prop:reachBip} and Lemma~\ref{lem:nonMonotoneBip} must be used instead of Proposition~\ref{prop:reach} and of Lemma~\ref{lem:nonMonotone}. Therefore, we present the proof of Lemma~\ref{lem:MonotoneSubgraphBip} only in the appendix.


\begin{restatable}{lemma}{lemMonotoneSubgraphBip}\label{lem:MonotoneSubgraphBip}
For any non-empty connected subgraph $H$ of a bipartite graph $G=(V_r\cup V_w,E)$, $mh_{V_r\cap V(H)}(H)\leq mh_{V_r}(G)$. Moreover, if $|V(H)|> 1$, we get that, if there exists a monotone winning hunter strategy ${\cal S}=(S_1,\ldots, S_\ell)$ in $G$ with respect to $V_r$, then there exists a monotone winning hunter strategy ${\cal S}'$ in $H$ with respect to $V_r\cap V(H)$ using at most $\max_{1\leq i\leq \ell} |S_i\cap V(H)|$ hunters.
\end{restatable}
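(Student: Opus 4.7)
The plan is to mirror the proof of Lemma~\ref{lem:MonotoneSubgraph} almost verbatim, with the only substantive change being the use of Lemma~\ref{lem:nonMonotoneBip} in place of Lemma~\ref{lem:nonMonotone} at the one point where a non-monotone strategy is shown to have a previously-shot, currently-reachable vertex. The trivial case $|V(H)|=1$ gives $mh_{V_r\cap V(H)}(H)=0\leq mh_{V_r}(G)$ immediately: either $V_r\cap V(H)=\emptyset$, in which case the rabbit has no valid starting position, or $H$ is a single-vertex graph and needs no hunters.

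Assume now $|V(H)|>1$. Since $H$ is connected, it contains an edge and so meets both sides of the bipartition, giving $V_r\cap V(H)\neq\emptyset$. Given a monotone winning hunter strategy $\mathcal{S}=(S_1,\dots,S_\ell)$ in $G$ with respect to $V_r$, Lemma~\ref{lem:independent} guarantees that some round has $S_i\cap V(H)\neq\emptyset$; let $m$ be minimal such and fix $u\in S_m\cap V(H)$. I then define $\mathcal{S}'=(S'_1,\dots,S'_\ell)$ by $S'_i=S_i\cap V(H)$ when this intersection is non-empty and $S'_i=\{u\}$ otherwise. Since $S_i\cap V(H)\subseteq S'_i$ for all $i$, every rabbit trajectory in $H$ starting from $V_r\cap V(H)$ and avoiding $\mathcal{S}'$ is also a trajectory in $G$ starting from $V_r$ that avoids $\mathcal{S}$; this yields the standard containment: for every $0\leq i\leq\ell$ and every $v\in V(H)$, $v\in Z^{V_r\cap V(H)}_i(\mathcal{S}')$ implies $v\in Z^{V_r}_i(\mathcal{S})$. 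The winning property transfers immediately from $\mathcal{S}$ to $\mathcal{S}'$, and plainly $h(\mathcal{S}')\leq\max_{1\leq i\leq\ell}|S_i\cap V(H)|\leq h(\mathcal{S})$.

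It remains to prove that $\mathcal{S}'$ is monotone. Suppose otherwise. By Lemma~\ref{lem:nonMonotoneBip}, there exist $v\in V(H)$ and $1\leq q<i\leq\ell$ with $v\in S'_q$, $v\in Z^{V_r\cap V(H)}_{i-1}(\mathcal{S}')$ and $v\notin S'_i$; the containment above gives $v\in Z^{V_r}_{i-1}(\mathcal{S})$. If $v\neq u$, then by construction $v\in S_q$ and $v\notin S_i$, directly contradicting monotonicity of $\mathcal{S}$. If $v=u$, then $u\notin S'_i$ forces $S_i\cap V(H)\neq\emptyset$ (otherwise $S'_i=\{u\}$ would contain $u$), hence $u\notin S_i$; by minimality of $m$, $S_i\cap V(H)\neq\emptyset$ gives $i\geq m$, and $u\in S_m$ combined with $u\notin S_i$ excludes $i=m$, so $m<i$. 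Then $u\in S_m$ and $u\in Z^{V_r}_{i-1}(\mathcal{S})\setminus S_i$, again contradicting the monotonicity of $\mathcal{S}$. The only real obstacle in this adaptation is this $v=u$ bookkeeping around the filler vertex, which is handled exactly as in Lemma~\ref{lem:MonotoneSubgraph}; the substitution of Lemma~\ref{lem:nonMonotoneBip} for Lemma~\ref{lem:nonMonotone} is otherwise transparent, since both lemmas yield the same witness (a vertex shot earlier and reachable now but not shot at the current round).
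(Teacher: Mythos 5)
Your proof is correct and follows essentially the same route as the paper's: the identical filler-vertex construction $S'_i=S_i\cap V(H)$ (or $\{u\}$ with $u\in S_m\cap V(H)$ for $m$ minimal), the same containment $Z_i(\mathcal{S}')\cap V(H)\subseteq Z_i(\mathcal{S})$, and the same case split on $v=u$ versus $v\neq u$ after invoking Lemma~\ref{lem:nonMonotoneBip}. The one edge case your argument shares with the paper's (when the clearing round $m$ equals the recontamination round $i-1$, which the strict ``$j>i$'' reading of monotonicity does not literally cover) is inherited from the paper's own definitions and proof, so it is not a gap you introduced.
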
 

Since the proofs of upcoming Lemmas~\ref{lem:monotoneParsimoniousBip} and ~\ref{lem:PseudoMonotoneProperties} (in the red variant) share many similarities with the proofs of Lemmas~\ref{lem:monotoneParsimonious} and ~\ref{lem:MonotoneProperties}, respectively, we decided to postpone their proofs in the appendix.

\begin{restatable}{lemma}{lemmonotoneParsimoniousBip}\label{lem:monotoneParsimoniousBip}
For any bipartite graph $G=(V_r\cup V_w,E)$ and any $k\geq mh_{V_r}(G)$, there exists a parsimonious monotone winning hunter strategy in $G$ with respect to $V_r$ and that uses $k$ hunters.
\end{restatable}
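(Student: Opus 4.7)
The plan is to mirror the proof of Lemma~\ref{lem:monotoneParsimonious} step by step, substituting each invocation of Proposition~\ref{prop:reach}, Lemma~\ref{lem:reach2}, and Lemma~\ref{lem:nonMonotone} by its bipartite analogue (Proposition~\ref{prop:reachBip}, Lemma~\ref{lem:reach2Bip}, Lemma~\ref{lem:nonMonotoneBip}), and tracking parity carefully. I would start with a monotone winning hunter strategy $\mathcal{S}=(S_1,\dots,S_\ell)$ in $G$ with respect to $V_r$ using at most $k$ hunters such that $\ell$ is minimized and, among such strategies, $\mathcal{S}$ maximizes the smallest round $j$ for which $S_j \not\subseteq Z_{j-1}(\mathcal{S})$. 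Assuming $\mathcal{S}$ is not already parsimonious, I proceed case by case to derive a contradiction.

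The routine cases transfer verbatim: if $Z_{\ell'}(\mathcal{S})=\emptyset$ for some $\ell'<\ell$, truncating $\mathcal{S}$ contradicts minimality of $\ell$; and if $S_j \cap Z_{j-1}(\mathcal{S})\neq \emptyset$, replacing $S_j$ by $S_j \cap Z_{j-1}(\mathcal{S})$ leaves all contamination sets unchanged and monotonicity intact, contradicting the maximality of $j$.

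The delicate case is $S_j \cap Z_{j-1}(\mathcal{S})=\emptyset$ together with the existence of a smallest $i>0$ such that $S_{j+i}\cap Z_{j+i-1}(\mathcal{S})\neq \emptyset$. Pick $v\in S_{j+i}\cap Z_{j+i-1}(\mathcal{S})$. In the non-bipartite proof, all intermediate shots $S_{j+i'}$ for $0\leq i'<i$ are replaced by $\{v\}$, using Lemma~\ref{lem:reach2} to ensure $v\in Z_{j+i'-1}(\mathcal{S})$ throughout. Here Lemma~\ref{lem:reach2Bip} only preserves the inclusion within one parity class, so $v\in Z_{j+i'-1}(\mathcal{S})$ holds only when $i'\equiv i \pmod 2$. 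To cover the opposite-parity rounds, I would pick $u\in N(v)\cap Z_{j+i-2}(\mathcal{S})\setminus S_{j+i-1}$ (which exists by the very definition of $v\in Z_{j+i-1}(\mathcal{S})$ and is of the opposite colour to $v$) and apply Lemma~\ref{lem:reach2Bip} to $u$ to get $u\in Z_{j+i'-1}(\mathcal{S})$ for the remaining indices $0\leq i'<i$ with $i'\not\equiv i\pmod 2$. Then I would define $\mathcal{S}'$ by replacing $S_{j+i'}$ with $\{v\}$ when $i'\equiv i\pmod 2$ and with $\{u\}$ otherwise, while leaving the rest of $\mathcal{S}$ unchanged. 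By induction on the round number, each modified shot is a subset of the prior contamination set, so $Z_{i'}(\mathcal{S}')\subseteq Z_{i'}(\mathcal{S})$ for all $i'$, which shows $\mathcal{S}'$ is winning. Monotonicity of $\mathcal{S}'$ is then verified via Lemma~\ref{lem:nonMonotoneBip}: any hypothetical violation pins down some vertex $x$ shot in $\mathcal{S}'$ before its recontamination, and a short case analysis distinguishing $x=v$, $x=u$, and $x\notin\{v,u\}$ reduces each such scenario to a violation of the monotonicity of $\mathcal{S}$, contradicting the initial choice of $\mathcal{S}$. The maximality of $j$ is thus contradicted.

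The remaining case (no such $i$ exists) is handled as in the original: the failure of parsimoniousness propagates through all subsequent rounds, and Proposition~\ref{prop:reachBip} applied to a suitable pair of unshot neighbours exhibits a vertex of $Z_{j-1}(\mathcal{S})$ that survives until round $\ell$, contradicting that $\mathcal{S}$ is winning. The main obstacle I anticipate is the monotonicity verification for $\mathcal{S}'$: the insertion of the auxiliary vertex $u$ into several consecutive rounds of opposite parity to $v$ creates new ``clearing'' events that must be reconciled with the monotonicity invariant, and keeping bookkeeping of which vertex plays which role in each parity class is where the bipartite argument genuinely diverges from its non-bipartite counterpart.
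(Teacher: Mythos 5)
Your overall architecture (minimize $\ell$, maximize the first non-parsimonious round $j$, then do a case analysis) matches the paper's, and your routine cases and your final case (all rounds from $j$ onwards useless, hence an oscillating rabbit survives) are fine. The genuine gap is in your main case, precisely at the point you yourself flag as the anticipated obstacle: the monotonicity of $\mathcal{S}'$ after you insert the auxiliary vertex $u$ into the opposite-parity rounds. The vertex $u\in N(v)\cap Z_{j+i-2}(\mathcal{S})\setminus S_{j+i-1}$ satisfies, by monotonicity of $\mathcal{S}$, $u\notin\bigcup_{h\leq j+i-1}S_h$, and nothing prevents $u$ from being contaminated again at rounds $p\geq j+i$ of its parity class without being shot there: in $\mathcal{S}$ this is perfectly legal, because $u$ was never cleared. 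But in $\mathcal{S}'$ you shoot $u$ at round $j$ or $j+1$, which clears it, so any later round $p$ with $u\in Z_p(\mathcal{S}')\setminus S'_{p+1}$ (and $Z_p(\mathcal{S}')$ can well contain $u$, since $Z_p(\mathcal{S}')\subseteq Z_p(\mathcal{S})$ is the only control you have) makes $\mathcal{S}'$ non-monotone. This violation cannot be ``reduced to a violation of the monotonicity of $\mathcal{S}$'' as you claim, because $\mathcal{S}$ never cleared $u$ at all. So the case $x=u$ of your case analysis does not close.

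The paper avoids this entirely by never introducing an auxiliary vertex: it defines $i$ as the least index with $S_{j+2i}\cap Z_{j+2i-1}\neq\emptyset$ over the \emph{same-parity} rounds only, replaces only $S_j,S_{j+2},\dots,S_{j+2(i-1)}$ by $\{v\}$ (Lemma~\ref{lem:reach2Bip} guarantees $v\in Z_{j-1+2i'}$ for all these rounds, and shooting $v$ continuously on its parity class up to round $j+2i$, where $v\in S_{j+2i}$ already, prevents $v$ itself from being recontaminated), and leaves the opposite-parity rounds untouched. This suffices because the contradiction only needs the first non-parsimonious round to move strictly past $j$. The price is that the residual case ``no same-parity round after $j$ is useful'' no longer contradicts winning (the opposite-parity rounds may still clear the graph), and the paper handles it with two further cases you do not have: if $Z_{j+1}(\mathcal{S})=Z_{j-1}(\mathcal{S})$, delete rounds $j$ and $j+1$ (contradicting minimality of $\ell$); otherwise pick $v\in Z_{j-1}(\mathcal{S})\setminus Z_{j+1}(\mathcal{S})$, which by Lemma~\ref{lem:reach2Bip} is never contaminated again and hence can safely replace $S_j$. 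To repair your proof you would need either to adopt this same-parity case split, or to prove that your $u$ is never contaminated after round $j+i-2$ — which is false in general.
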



\begin{restatable}{lemma}{lemPseudoMonotoneProperties}\label{lem:PseudoMonotoneProperties}
Let $G=(V_r \cup V_w,E)$ be a bipartite graph and $\mathcal{S}=(S_1,\dots,S_{\ell})$ be a parsimonious monotone winning hunter strategy with respect to $V_r$.
\begin{itemize}
\item If there exist $1 \leq i < j \leq \ell$ such that $v \in S_i \cap S_j$, then $v \in S_{i+2}$.
\item If $v \in V_r$ (resp., $v \in V_w$) and there exists an odd (resp., even) integer $1 \leq i < \ell$ such that $v \notin Z_{i-1}$, then $v \notin S_j$ for every $j \geq i$. 
 %
\end{itemize}
\end{restatable}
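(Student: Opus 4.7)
The plan is to mirror the proof of Lemma~\ref{lem:MonotoneProperties}, but replace every appeal to Lemma~\ref{lem:reach2} by its bipartite analogue (Lemma~\ref{lem:reach2Bip}). The key new ingredient is the bipartite remark at the end of Section~\ref{sec:preliminaries}: in a parsimonious hunter strategy in $G=(V_r\cup V_w,E)$ with respect to $V_r$, odd rounds shoot only in $V_r$ and even rounds shoot only in $V_w$, so both $S_t \subseteq V_r$ and $Z_{t-1} \subseteq V_r$ when $t$ is odd, and symmetrically for even $t$. This is what forces the ``step of $2$'' in the conclusion, instead of the step of $1$ from Lemma~\ref{lem:MonotoneProperties}.

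For the first item, I would argue as follows. Assume $v\in S_i\cap S_j$ with $i<j$. Since $\mathcal{S}$ is parsimonious, $S_i\subseteq Z_{i-1}$ and $S_j\subseteq Z_{j-1}$, so $v\in Z_{i-1}\cap Z_{j-1}$. By the bipartite remark, the indices $t$ with $Z_{t-1}$ containing a given vertex all share a common parity (determined by whether the vertex is in $V_r$ or $V_w$), so $i$ and $j$ have the same parity; in particular $j\geq i+2$ and the three indices $i-1$, $i+1$, $j-1$ all share the same parity. By Lemma~\ref{lem:reach2Bip} applied with these indices, $Z_{j-1}\subseteq Z_{i+1}$, hence $v\in Z_{i+1}$. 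Since $v\in S_i$, the vertex $v$ is cleared at round $i$; since $i+1>i$ and $v\in Z_{i+1}$, monotonicity forces $v\in S_{i+2}$, as required.

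For the second item, suppose $v\in V_r$ and $i$ is odd with $v\notin Z_{i-1}$ (the $V_w$ case is symmetric, taking $i$ even). The index $i-1$ is even, so by Lemma~\ref{lem:reach2Bip}, $Z_{2q}\subseteq Z_{i-1}$ for every even $2q$ with $2q\geq i-1$, and hence $v\notin Z_{2q}$ for all such $2q$. On the other hand, by the bipartite remark, $Z_{2q+1}\subseteq V_w$ for every $q\geq 0$, and since $v\in V_r$, this gives $v\notin Z_{2q+1}$ for every $q$. Combining, $v\notin Z_t$ for every $t\geq i-1$. Because $\mathcal{S}$ is parsimonious, $S_{t+1}\subseteq Z_t$, so $v\notin S_{t+1}$ for every $t\geq i-1$, which is precisely $v\notin S_j$ for every $j\geq i$.

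The routine part will be setting up the parity bookkeeping cleanly; no new ideas beyond Lemma~\ref{lem:reach2Bip} and the bipartite remark are needed. The only mild subtlety is checking that the three indices in the first item are within the range $0 \leq p \leq i \leq \lceil\ell/2\rceil$ that Lemma~\ref{lem:reach2Bip} requires, which follows immediately from $i+1\leq j-1\leq \ell-1$; thus this is really just a straightforward transposition of the original proof into the bipartite setting.
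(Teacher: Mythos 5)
Your proposal is correct and matches the paper's own proof in all essentials: both items rest on the parity bookkeeping from the bipartite remark together with Lemma~\ref{lem:reach2Bip}, exactly as you describe. The only cosmetic difference is that you prove the first item directly (concluding $v\in S_{i+2}$ from $v\in Z_{i+1}$ and monotonicity) whereas the paper phrases it as a contradiction with $v\notin S_{i+2}$, which changes nothing of substance.
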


\subsection{The family of trees $\boldsymbol{(T_{i,q})_{i\geq 3, q\geq 6}}$: definition and hunter number}\label{ssec:h(T3)=2}

In this section, we will prove that the gap between the hunter number and the monotone hunter number in the red variant of the game 
may be arbitrary large. More precisely, we will design an infinite family $(T_i)_{i\geq 3}$ of trees which exhibits this behaviour.


Let $S_{k,q}$ be the rooted tree obtained from $q\geq 6$ paths of length $k \geq 3$ (with $k$ edges) by identifying an endpoint of each path into a common vertex called the {\it root} of $S_{k,q}$ and denoted by $c$. Equivalently, $S_{k,q}$ can be obtained from a star with root $c$ of degree $q$ by subdividing each edge $k-1$ times. From now on, let $(V_r,V_w)$ be the bipartition of $V(S_{k,q})$ and let us assume that $c\in V_r$.  

\begin{lemma}\label{lem:spider}
For any $k,q\in \mathbb{N}$ such that $k\geq 3$ and $q \geq 6$, it holds that $h(S_{k,q})=mh_{V_r}(S_{k,q})=2$. 
\end{lemma}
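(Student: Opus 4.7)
My plan is to prove the two inequalities $h(S_{k,q}) \geq 2$ and $mh_{V_r}(S_{k,q}) \leq 2$ separately; together with the trivial $h(G) \leq mh_{V_r}(G)$, these pin both values to $2$.

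For the \textbf{lower bound}, I would invoke the structural characterisation of one-hunter-winnable trees recalled in the introduction (from~\cite{FPrincess}): one hunter wins on a tree $T$ if and only if $T$ does not contain, as a subgraph, a star with three leaves whose three edges have each been subdivided twice---which is exactly $S_{3,3}$. Since $k \geq 3$ and $q \geq 6 \geq 3$, the spider $S_{k,q}$ plainly contains $S_{3,3}$ as a subgraph (take any three legs and keep only the first three vertices of each). Hence $h(S_{3,3}) \geq 2$, and Lemma~\ref{lem:subgraph} gives $h(S_{k,q}) \geq h(S_{3,3}) \geq 2$.

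For the \textbf{upper bound}, I would construct an explicit parsimonious monotone winning hunter strategy with respect to $V_r$ using two hunters. The strategy processes the legs $L_1,\dots,L_q$ in $q$ phases, one leg per phase. The guiding idea is that because $c \in V_r$, the vertex $c$ can belong to $Z_r$ only at the end of an even round; hence one hunter is devoted to shooting $c$ in every odd round as long as some leg is still unprocessed, while the second hunter sweeps the current leg $L_j$ from its leaf toward $c$. In a typical phase, the shots alternate between $\{c, v^j_{2i}\}$ in an odd round and $\{v^j_{2i-1}, v^j_{2i+1}\}$ in the following even round (with appropriate truncation near the leaf and near $c$); this successively clears the red vertices $v^j_{2m}, v^j_{2m-2},\dots, v^j_2$ of the leg, while the intermediate white vertices are either shot directly or cleared by virtue of all their contaminated red neighbours having been shot in the preceding odd round. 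The opening round of each phase is adjusted to the parity of $k$: when $k$ is even one begins by shooting the outermost red vertex $v^j_k$, and when $k$ is odd the white leaf $v^j_k$ stays out of $Z$ as soon as its unique neighbour $v^j_{k-1}$ has been shot.

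The \textbf{main obstacle} is verifying monotonicity throughout the whole multi-phase strategy. I plan to maintain, by induction on the round number $r$, a precise description of $Z_r$: processed legs contribute nothing, and each unprocessed leg contributes exactly its red (respectively white) vertices when $r$ is even (respectively odd). Using Lemma~\ref{lem:reach2Bip}, the only places where a previously cleared vertex could be recontaminated are $c$ at even rounds---and the strategy then shoots $c$ in the immediately following odd round---and the outer red vertices of $L_j$ during its sweep, which is prevented by the outside-in order since all their ``inward'' contaminated neighbours are already shot in the current or preceding rounds. After all $q$ phases, $Z = \emptyset$, yielding $mh_{V_r}(S_{k,q}) \leq 2$ and hence the claimed equalities $h(S_{k,q}) = mh_{V_r}(S_{k,q}) = 2$.
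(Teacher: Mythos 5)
Your proposal is correct and follows essentially the same route as the paper: the lower bound via the characterisation of one-hunter-winnable trees from~\cite{FPrincess} together with Lemma~\ref{lem:subgraph}, and the upper bound via an explicit two-hunter monotone strategy with respect to $V_r$ that shoots $c$ at every odd round while sweeping the legs one at a time (the equality $h=h_{V_r}\leq mh_{V_r}$ being justified by Lemma~\ref{lem:bipartition} and Proposition~\ref{prop:relationMonotone&nonM}, which you should cite rather than call trivial). The only substantive difference is that you sweep each leg from the leaf toward $c$ (shooting a red vertex and then both of its white neighbours), whereas the paper sweeps from $c$ toward the leaf shooting $v^j_1,v^j_2,\dots,v^j_k$ in order; both variants yield a valid parsimonious monotone winning strategy using two hunters.
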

\begin{proof}
The fact that $h(S_{k,q})>1$ comes from the characterisation of trees with hunter number one in~\cite{FPrincess}. 
W.l.o.g., let us suppose that the centre $c$ of $S_{k,q}$ is in $V_r$. We now prove that $mh_{V_r}(S_{k,q})\leq 2$ and the result then follows from Lemma~\ref{lem:bipartition} and Proposition~\ref{prop:relationMonotone&nonM}.

The strategy $\cal S$ with respect to $V_r$ and using two hunters proceeds as follows. At every odd round, the first hunter shoots at $c$. The second hunter considers sequentially each path $P=(v_1,\dots,v_k)$ of $S_{k,q} \setminus c$ by iteratively shooting at $v_1,v_2,\dots,v_k$ (starting by shooting $v_1$ at an even round). 

Formally, let $P^1,\dots, P^q$ be the $q$ branches of $c$ in $S_{k,q}$, and let $P^i=(v^i_1,\dots,v^i_k)$ for every $1 \leq i \leq q$ where $v^i_1$ is the neighbour of $c$ in $P^i$. Then, the strategy $\cal S$ equals $(\{c\},\{v^1_1\},\{c,v^1_2\},\{v^1_3\},$ $\{c,v^1_4\},\dots,\{v^1_{k-1}\},\{c,v^1_k\},\{v^2_1\},\{c,v^2_2\},\dots,\{v^i_{j-1}\},\{c,v^i_j\},\dots,\{c,v^q_k\})$ if $k$ is even, and $\cal S$ equals $(\{c\},\{v^1_1\},\{c,v^1_2\},\{v^1_3\},\{c,v^1_4\},\dots,\{c,v^1_{k-1}\},\{v^1_k\},\{c\},\{v^2_1\},\{c,v^2_2\},\dots, \{v^{i-1}_k\},\{c\},\{v^i_1\}$ \\ $,\{c,v^i_2\},\dots,\{v^i_{j-1}\},\{c,v^i_j\},\dots,\{v^q_k\})$ if $k$ is odd.

Clearly, this is a monotone winning hunter strategy in $S_{k,q}$ with respect to $V_r$.
%
%
\end{proof}

Let us denote the strategy described in the previous proof by ${\cal S}_1$ and let $\ell_1$ be the smallest even integer greater or equal to the length of ${\cal S}_1$ (this length equals to $1+qk$ if $k$ is even and to $q(k+1)$ otherwise). 



\paragraph{The construction of the tree $\boldsymbol{T_{i,q}}$.}
For every $i\geq 2$ and $q \geq 6$, let $T_{i,q}$ be the tree recursively built as follows. First, $T_{1,q}=S_{3,q}$. Then, for $i>1$, let us assume that $T_{i-1,q}$ has been defined recursively and that there exists a winning hunter strategy, of length $\ell_{i-1}$, using $2$ hunters in the red variant in $T_{i-1,q}$ (this holds for $i-1=1$ from the previous lemma and it will be proven to hold for every $i \geq 2$ in the next lemma). 
Let $T_{i,q}$ be obtained from $q$ vertex disjoint copies $T_i^1,\dots, T_i^{q}$ of  $T_{i-1,q}$ and from a vertex $c_i$, the root of $T_{i,q}$. Then, for every $1\leq j\leq q$, add a path $P_i^j$ of length $p_i^j$ (defined below) between the root $c_i^j$ of $T_i^j$ and $c_i$ (that is, $c_i$ and $c_i^j$ are at distance $p^i_j$ in $T_{i,q}$). 

The lengths $p_i^j$ are defined recursively as follows. Let $p_i^1=2$ and, for every $1< j\leq q$, let $p_i^j$ be the minimum even integer greater or equal to  
$\ell_{i-1} + \sum_{1\leq k< j} p_i^k$
(it will be shown in the next lemma that $\ell_{i}$ equals the smallest even integer greater or equal to  $q \ell_{i-1} +  \sum_{1\leq j\leq q}  j p_i^j$).

Finally, let us assume that $c_i \in V_r$ and note that, since $p_i^j$ is even, this implies that $c_i,c_i^1,\dots, c_i^{q}$ all belong to $V_r$. 

\begin{lemma}\label{lem:nonMonotTi}
For any $i\in \mathbb{N}^*$ and $q \geq 6$, $h_{V_r}(T_{i,q})=2$.
\end{lemma}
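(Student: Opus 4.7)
The plan is to proceed by induction on $i \geq 1$. For the base case, $T_{1,q} = S_{3,q}$, and by Lemma~\ref{lem:spider} combined with Lemma~\ref{lem:bipartition}, $h_{V_r}(T_{1,q}) = h(S_{3,q}) = 2$. For the lower bound at $i \geq 2$, since $T_{i,q}$ is bipartite and contains $S_{3,q}$ as a subgraph, Lemmas~\ref{lem:subgraph} and~\ref{lem:bipartition} together give $h_{V_r}(T_{i,q}) = h(T_{i,q}) \geq h(S_{3,q}) = 2$. It therefore suffices to prove the upper bound $h_{V_r}(T_{i,q}) \leq 2$ inductively for $i \geq 2$.

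For the inductive step, I would suppose that ${\cal S}_{i-1}$ is a 2-hunter winning hunter strategy in $T_{i-1,q}$ with respect to $V_r$, of length exactly $\ell_{i-1}$ (even by construction), and I would build a 2-hunter winning hunter strategy ${\cal S}_i$ in $T_{i,q}$ consisting of $q$ consecutive phases, one per arm. During phase $j \in \{1, \dots, q\}$, the strategy first performs a sweep sub-phase: for each $k \in \{j, j+1, \dots, q\}$ (in a fixed order), one hunter shoots the vertices of $P_i^k$ successively, from the neighbour of $c_i$ toward $c_i^k$, for $p_i^k$ rounds, while the second hunter is used as needed to handle parity (for instance, by shooting $c_i$ at well-chosen rounds). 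The sweep is designed so that any rabbit currently on $P_i^k$ is either shot or ``pushed'' into $T_i^k$. Then, in the clearing sub-phase, both hunters apply ${\cal S}_{i-1}$ on $T_i^j$ (via the natural isomorphism with $T_{i-1,q}$) for $\ell_{i-1}$ rounds. The total length is $q \ell_{i-1} + \sum_{j=1}^{q} \sum_{k \geq j} p_i^k = q \ell_{i-1} + \sum_{k=1}^{q} k \cdot p_i^k$, matching $\ell_i$ (after rounding up to even parity).

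Correctness would be established through the invariant: at the end of phase $j$, every rabbit trajectory starting in $V_r$ that lies initially within $V(T_i^1) \cup V(P_i^1) \cup \dots \cup V(T_i^j) \cup V(P_i^j) \cup \{c_i\}$ has been shot. The heart of the argument is to prove that, at the start of the clearing sub-phase of phase $j$, any uncaught rabbit trajectory originating in arm $j$ is necessarily positioned in $V_r \cap V(T_i^j)$ and must remain inside $T_i^j$ for the following $\ell_{i-1}$ rounds; the inductive hypothesis on ${\cal S}_{i-1}$ then guarantees that such a rabbit is shot.

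The hard part will be establishing this invariant rigorously, because many rounds elapse between the start of the game and the clearing sub-phase of phase $j$, leaving the rabbit ample opportunity to migrate between arms. The path-length condition $p_i^j \geq \ell_{i-1} + \sum_{k<j} p_i^k$ is the key tool: it prevents a rabbit that has been forced out of a previously processed arm from returning in time to pollute arm $j$, and symmetrically prevents a rabbit originating in arm $j$ from reaching a previously processed arm during the clearing sub-phase. Additional care with parities (using that $\ell_{i-1}$ and every $p_i^j$ are even, and that $c_i, c_i^j \in V_r$) will be needed to guarantee that ${\cal S}_{i-1}$ is applied with the correct $V_r$-alignment in each clearing sub-phase.
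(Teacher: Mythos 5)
Your overall architecture (induction on $i$, $q$ phases each consisting of a sweep along the pendant paths followed by running ${\cal S}_{i-1}$ on one copy of $T_{i-1,q}$, with the path lengths $p_i^j$ controlling how far a displaced rabbit can backtrack) is the same as the paper's, and your base case and lower bound are fine. However, there are two genuine gaps in the inductive step. First, your induction hypothesis is too weak: you only assume that ${\cal S}_{i-1}$ is a winning $2$-hunter strategy with respect to $V_r$. With that alone, nothing prevents a rabbit sitting inside $T_i^j$ from walking out through the root $c_i^j$ while both hunters are busy executing ${\cal S}_{i-1}$ inside $T_i^j$ and $c_i$ is unguarded; for $j=1$ the path $P_i^1$ has length only $2$, so such a rabbit reaches $c_i$ almost immediately and recontaminates arms that are supposed to stay clean, destroying your invariant. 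You correctly identify "must remain inside $T_i^j$" as the heart of the argument, but you supply no mechanism for it. The paper's fix is to prove a \emph{strengthened} inductive statement: ${\cal S}_{i-1}$ additionally guarantees that whenever the root $c_{i-1}$ of $T_{i-1,q}$ lies in $Z_h$, it is shot at round $h+1$. This property (which the strategy of Lemma~\ref{lem:spider} satisfies, since the centre is shot at every odd round) is exactly what confines the rabbit to $T_i^j$ during the clearing sub-phase, and it must be re-established for the strategy ${\cal S}_i$ you build.

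Second, your sweep order is wrong. You sweep $P_i^j, P_i^{j+1},\dots,P_i^q$ in increasing order of index, but the inequality $p_i^k \geq \ell_{i-1}+\sum_{k'<k}p_i^{k'}$ only yields the needed guarantee (every surviving rabbit in an arm $k>j$ is at distance at least $\ell_{i-1}$ from $c_i$ when the clearing of $T_i^j$ begins) if the paths are swept in \emph{decreasing} order $P_i^q, P_i^{q-1},\dots,P_i^j$, i.e., longest first. With your order, after $P_i^{k}$ is swept for some $j<k<q$, the remaining sweeps of the \emph{longer} paths $P_i^{k+1},\dots,P_i^q$ take $\sum_{k'>k}p_i^{k'}\gg p_i^{k}$ rounds, during which the rabbit in arm $k$ can walk all the way back to the neighbour of $c_i$ and then cross into an already-cleaned arm as soon as the clearing sub-phase leaves $c_i$ unguarded. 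Relatedly, "the second hunter is used as needed to handle parity" understates its role: one hunter must shoot $c_i$ at \emph{every} odd round throughout the sweep sub-phase (which suffices since $c_i\in V_r$ can only be contaminated at even rounds) to prevent any migration between arms before the clearing starts.
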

\begin{proof}
The fact that $h_{V_r}(T_{i,q})\geq 2$ follows from Lemmas~\ref{lem:spider} and~\ref{lem:subgraph} and since $T_{i,q}$ contains $S_{3,q}$ as a subgraph.

We prove that $h_{V_r}(T_{i,q})\leq 2$ by induction on $i$. More precisely, we prove that there exists a winning hunter strategy $\mathcal{S}_i=(S_1,\dots,S_{\ell_i})$ for $T_{i,q}$, with respect to $V_r$, using $2$ hunters and such that, for any $j\geq 1$, if the root $c_i$ of $T_{i,q}$ is in $Z_j$, then $c_i\in S_{j+1}$. 
This holds for $i=1$ by Lemma~\ref{lem:spider}. 
Let $i>1$ and let us assume by induction that such a strategy $\mathcal{S}_{i-1}$ has already been defined for $T_{i-1,q}$.

Recall that, for all $1 \leq j \leq q$, $c^j_i$ denotes the root of the copy $T^j_i$ of $T_{i-1,q}$ linked to the root $c_i$ of $T_i$ by a path $P^j_i$ of length at least $\ell_{i-1} + \sum_{1\leq k< j} p_i^k$. Moreover, recall that $c_i \in V_r$. 

Let us define the strategy ${\cal S}_i$ as follows. It proceeds in $q$ phases and ensures that, at every round $h$, if $c_i \in Z_h$, then $c_i \in S_{h+1}$ and that, for every round $h$ arising after the $j^{th}$ phase, $(Z_h \setminus c_i) \cap \bigcup_{k\leq j} (V(T^k_i) \cup V(P^k_i)) = \emptyset$.
This implies that, at the round $\ell_i$, if the rabbit is still alive, it has to be on $c_i$. But during the last phase, we ensured that the rabbit cannot reach $c_i$. Thus, ${\cal S}_i$ is a winning strategy.

Let $1 \leq j \leq q$, let $i^j_0$ be the last round of phase $j$ ($i^0_0=0$) and assume by induction on $j$ that $i^j_0$ is even. Let us moreover assume by induction on $j$ that $(Z_{i^{j-1}_0} \setminus c_i) \cap \bigcup_{k\leq j-1} (V(T^k_i) \cup V(P^k_i)) = \emptyset$. The $j^{th}$ phase proceeds into two sub-phases as follows.

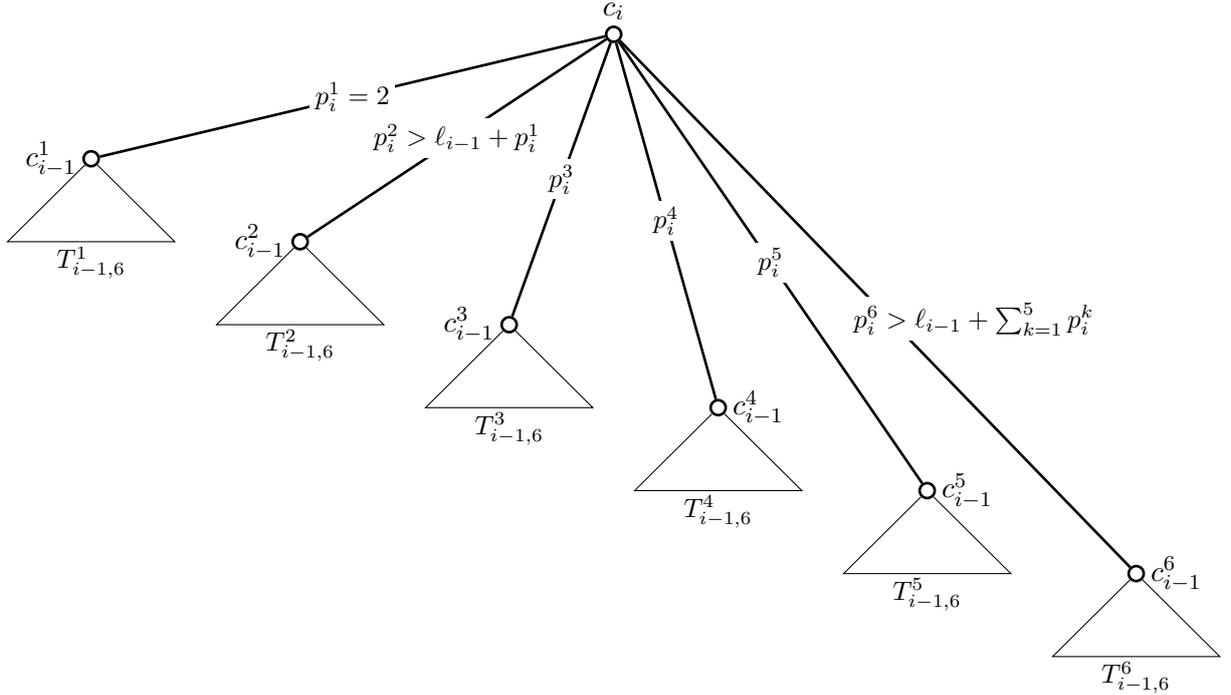
\begin{figure}[!t]
\centering

\begin{tikzpicture}[scale=0.55, inner sep=0.7mm]
    \node[draw, circle, line width=1pt, fill=white](v1) at (0,0)[label=left: $c_{i-1}^1$] {};
    \node[draw, circle, line width=1pt, fill=white](v2) at (5,-2)[label=left: $c_{i-1}^2$] {};
    \node[draw, circle, line width=1pt, fill=white](v3) at (10,-4)[label=left: $c_{i-1}^3$] {};
    \node[draw, circle, line width=1pt, fill=white](v4) at (15,-6)[label=right: $c_{i-1}^4$] {};
    \node[draw, circle, line width=1pt, fill=white](v5) at (20,-8)[label=right: $c_{i-1}^5$] {};
    \node[draw, circle, line width=1pt, fill=white](v6) at (25,-10)[label=right: $c_{i-1}^6$] {};
    \node[draw, circle, line width=1pt, fill=white](v7) at (12.5,3)[label=above: $c_i$] {};

    \draw (v1) --+(-2,-2) --+(2,-2) -- (v1);
    \draw (v2) --+(-2,-2) --+(2,-2) -- (v2);
    \draw (v3) --+(-2,-2) --+(2,-2) -- (v3);
    \draw (v4) --+(-2,-2) --+(2,-2) -- (v4);
    \draw (v5) --+(-2,-2) --+(2,-2) -- (v5);
    \draw (v6) --+(-2,-2) --+(2,-2) -- (v6);
    
    \draw[-, line width=1pt]  (v7) -- node [fill=white] {\small $p_i^1=2$}(v1);
    \draw[-, line width=1pt]  (v7) -- node [fill=white] {\small $p_i^2>\ell_{i-1}+p_i^1$} (v2);
    \draw[-, line width=1pt]  (v7) -- node [fill=white] {\small $p_i^3$}(v3);
    \draw[-, line width=1pt]  (v7) -- node [fill=white] {\small $p_i^4$}(v4);
    \draw[-, line width=1pt]  (v7) -- node [fill=white] {\small $p_i^5$}(v5);
    \draw[-, line width=1pt]  (v7) -- node [xshift=1.3cm, yshift=-0.2cm, fill=white] {\small $p_i^6>\ell_{i-1} + \sum_{k=1}^5 p_i^k$}(v6);


    \node at (0,-2.5) {\small $T_{i-1,6}^1$};
    \node at (5,-4.5) {\small $T_{i-1,6}^2$};
    \node at (10,-6.5) {\small $T_{i-1,6}^3$};
    \node at (15,-8.5) {\small $T_{i-1,6}^4$};
    \node at (20,-10.5) {\small $T_{i-1,6}^5$};
    \node at (25,-12.5) {\small $T_{i-1,6}^6$};

\end{tikzpicture}

\caption{
The graph $T_{i,6}$. The labels on the edges are used to represent their respective lengths. In particular, for every $2\leq j\leq 6$, we have that $p_i^j>\sum_{1\leq k\leq j-1} p_i^k+\ell_{i-1}$, where $p_i^1=2$ and $\ell_{i-1}$ is equal to the number of rounds needed to clear any copy of the $T_{i-1,6}$ graph.}\label{figure:T_i,q}
\end{figure}

Very informally, in the first sub-phase, the hunters ``push'' the rabbit toward the subtrees $T^q_i$, then $T^{q-1}_i$ until the subtree $T^j_i$. Then, in the second sub-phase, the two hunters clear the subtree $T^j_i$ (without the rabbit being able to leave $T^j_i$ if it was there). The lengths of the paths linking the roots of the subtrees to $c_i$ (illustrated in Figure~\ref{figure:T_i,q}) guarantee that the rabbit cannot reach $c_i$ before $T^j_i$ has been cleared. 

Formally, during the first sub-phase, the first hunter shoots at $c_i$ at every odd round. Hence, the rabbit cannot leave the component of $T_{i,q} \setminus c_i$ that it was occupying at the end of the $(j-1)^{th}$ phase.
During the same first sub-phase, the second hunter sequentially shoots the vertices of $P^q_i, P^{q-1}_i,\dots,P^j_i$ in this order and from the neighbours of $c_i$ to the vertices $c^q_i,c^{q-1}_i,\dots,c^j_i$. More precisely, for every $j \leq k \leq q$, let $P^k_i=(v^k_0=c_i,v^k_1,\dots,v^k_{p^k_i}=c^k_i)$. The second hunter starts at round $i^j_0+2$ by shooting $v^q_1$ and then sequentially shoots $v^q_2,v^q_3,\dots,v^q_{p^q_i}=c^q_i,v^{q-1}_1,v^{q-1}_2,\dots,v^{q-1}_{p^{q-1}_i}=c^{q-1}_i,v^{q-2}_1,\dots,c^j_i$. Note that, after the round when the second hunter shoots at $c^q_i$, if the rabbit was occupying a vertex in $T^q_i \cup P^q_i$ at the beginning of the $j^{th}$ phase, then it must occupy a vertex at distance at least $p^q_i$ from $c_i$. 
Similarly, after the round when the second hunter shoots at $c^{q-1}_i$, if the rabbit was occupying a vertex in $T^{q-1}_i \cup P^{q-1}_i$ at the beginning of the $j^{th}$ phase, then it must occupy a vertex at distance at least $p^{q-1}_i$ from $c_i$. Moreover, if the rabbit was occupying a vertex in $T^{q}_i \cup P^{q}_i$ at the beginning of the $j^{th}$ phase, then it must occupy a vertex at distance at least $p^{q}_i-p^{q-1}_i$ from $c_i$ (since there have been $p^{q-1}_i$ rounds between the shots of $c^q_i$ and of $c^{q-1}_i$).
With similar arguments, and by the definition of $p^k_i$ for $j < k \leq q$, after the round when the second hunter shoots at $c^j_i$, the rabbit must be at distance at least $\ell_{i-1}$ from $c_i$ if it was occupying a vertex of $\bigcup_{j<k\leq q} T^{k}_i \cup P^{k}_i$ at the end of the $(j-1)^{th}$ phase. Moreover, the rabbit cannot occupy any vertex in $\bigcup_{k\leq j-1} (V(T^k_i) \cup V(P^k_i))$ since the first hunter is always shooting $c_i$ during the odd rounds. Finally, the rabbit cannot occupy a vertex in $P^j_i$ since the second hunter has just shot sequentially the vertices $v^j_1,v^j_2,\dots,v^j_{p^j_i}=c^j_i$.

Then, the second sub-phase of phase $j$ starts, during which both hunters execute the strategy ${\cal S}_{i-1}$ in the subtree $T^j_i$ (the shot of $c^j_i$ by the second hunter during the first sub-phase, i.e., the last round of the first sub-phase, may be used as the first round of ${\cal S}_{i-1}$). By the induction hypothesis, the strategy ${\cal S}_{i-1}$ ensures that the rabbit cannot reach the root $c^j_i$ of $T^j_i$ without being shot immediately (if $c^j_i$ is in $Z_h$ for some round $h$, then $c^j_i\in S_{h+1}$). Thus, if the rabbit was occupying a vertex of $T^j_i$ at the beginning of the second sub-phase, then the rabbit cannot leave this subtree and it is eventually shot. Otherwise, because ${\cal S}_{i-1}$ has length at most $\ell_{i-1}$, the rabbit cannot reach $c_i$ before the last shots of the hunters in $T^j_i$. Let $i^j_0$ be the last round of the phase. Note that $i^j_0$ is even since $p^h_i$ and $\ell_{h}$ are even for all $1 \leq h \leq q$, and $i^{j-1}_0$ is even by induction. Then, the $j^{th}$ phase ends after the (even) round $i^j_0$ and with the desired property: the rabbit can only occupy a vertex in $c_i\cup \bigcup_{j<k\leq q} (V(T^k_i) \cup V(P^k_i))$, i.e., $(Z_{i^{j}_0} \setminus c_i) \cap \bigcup_{k\leq j} (V(T^k_i) \cup V(P^k_i)) = \emptyset$.


To conclude, note that ${\cal S}_i$ is winning in at most $q \ell_{i-1} +  \sum_{1\leq j\leq q}  j p_i^j$ rounds since each phase $j$ proceeds in $\ell_{i-1}+ \sum_{j \leq k \leq q} p^k_i$ rounds.
\end{proof}

\begin{theorem}\label{theo:subdivision}
For any tree $T$, there exists a subdivision $T'$ of $T$ such that $h(T')\leq 2$.
\end{theorem}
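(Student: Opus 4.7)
The plan is to prove a strengthened statement by induction on $|V(T)|$: for every rooted tree $(T,r)$ there exists a subdivision $T'$ of $T$ (in which $r$ lies in the red part $V_r$ of the bipartition of $T'$) together with a hunter strategy $\mathcal{S}=(S_1,\dots,S_\ell)$ in $T'$ with respect to $V_r$, using at most $2$ hunters, of even length $\ell$, and satisfying the ``root-protected'' invariant: whenever $r\in Z_j(\mathcal{S})$, then $r\in S_{j+1}$. By Lemma~\ref{lem:bipartition}, this implies $h(T')\leq 2$. The base case $|V(T)|=1$ is trivial (take $T'=T$ and $\mathcal{S}=(\{r\},\{r\})$).

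For the inductive step, root $T$ at any vertex $r$, let $v_1,\dots,v_q$ be the children of $r$, and let $T_i$ be the subtree of $T$ rooted at $v_i$. By induction, each $T_i$ admits a subdivision $T'_i$ (with $v_i\in V_r$) and a $2$-hunter winning strategy $\mathcal{S}_i$ of even length $\ell_i$ satisfying the root-protected invariant at $v_i$. I will construct $T'$ from $T$ by replacing each edge $rv_i$ with a path $P_i$ of even length $p_i$, where the $p_i$'s are defined inductively: $p_1=2$ and, for $i\geq 2$, $p_i$ is the smallest even integer with $p_i\geq L+\sum_{k<i}p_k$, where $L=\max_j\ell_j$. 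The even parity of each $p_i$ preserves the bipartition so that $r$ and all $v_i$ remain in $V_r$.

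The strategy $\mathcal{S}$ mirrors the phased strategy in the proof of Lemma~\ref{lem:nonMonotTi}. It proceeds in $q$ phases, and in phase $j$ the first hunter shoots $r$ at every odd round; by the bipartition argument (since $r\in V_r$ and the rabbit occupies $V_r$ exactly at even positions along its trajectory), this prevents any rabbit from reaching $r$ during the phase. Meanwhile the second hunter first sweeps the paths $P_q,P_{q-1},\dots,P_j$ in this order, each from the neighbour of $r$ outward to $v_k$. By the length condition on the $p_i$, any rabbit originally in $T'_k\cup P_k$ for $k>j$ is pushed, by the end of the sweep, to distance at least $L\geq\ell_j$ from $r$. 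Then the two hunters cooperate to execute the inductive strategy $\mathcal{S}_j$ inside $T'_j$; the root-protected invariant of $\mathcal{S}_j$ guarantees no rabbit can escape $T'_j$ through $v_j$, while the distance bound guarantees rabbits in the later subtrees cannot reach $r$ during the $\ell_j$ rounds of this second sub-phase.

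The main obstacle is the careful bookkeeping needed to verify the parity conditions and the root-protected invariant for the composed strategy $\mathcal{S}$. One must check that (a) each phase has even length, so that the next phase begins at an odd round and the first hunter's shot at $r$ catches any rabbit that may have arrived at $r$ at the end of the previous phase (using the fact that $r\in V_r$, so a rabbit can be located at $r$ only at even rounds); and (b) the root-protected invariant at $r$ holds globally, not only within a single phase. The latter follows because outside the second sub-phases the first hunter continuously shoots $r$ at odd rounds, while inside a second sub-phase the distance bound together with the length $\ell_j$ of $\mathcal{S}_j$ ensures $r\notin Z_h(\mathcal{S})$ throughout. Once these routine verifications are carried out, the induction closes and the theorem follows.
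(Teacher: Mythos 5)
Your proof is correct in outline, but it takes a different route from the paper's. The paper derives this theorem in three lines: it observes that a subdivision of $T$ embeds as a subgraph of $T_{i,\max\{6,q\}}$ (where $i$ is the eccentricity of a root $r$ and $q$ the maximum degree, each vertex of $T$ being mapped to a high-degree vertex of $T_{i,\max\{6,q\}}$), and then concludes via the subgraph lemma (Lemma~\ref{lem:subgraph}) together with Lemma~\ref{lem:bipartition} and $h_{V_r}(T_{i,\max\{6,q\}})=2$ from Lemma~\ref{lem:nonMonotTi}. You instead re-run the inductive construction of Lemma~\ref{lem:nonMonotTi} directly on the given tree, strengthening the induction hypothesis to carry the even-length and root-protected invariants, and choosing path lengths $p_i\geq L+\sum_{k<i}p_k$ with $L=\max_j\ell_j$ in place of the uniform $\ell_{i-1}$ of the homogeneous family $T_{i,q}$. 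Both work; the paper's version is cheaper because all the delicate sweeping/parity/distance analysis is already paid for in Lemma~\ref{lem:nonMonotTi}, while yours is self-contained, avoids the artificial degree bound $\max\{6,q\}$ (your induction is valid for any number of children, including $0$ or $1$), and produces a subdivision tailored to $T$ rather than one inherited from a universal host tree. One point to state more carefully if you write this up: your sentence that the first hunter's odd-round shots at $r$ ``prevent any rabbit from reaching $r$ during the phase'' is only true during the first sub-phase; during the second sub-phase both hunters are committed to $\mathcal{S}_j$ inside $T'_j$, and the protection of $r$ there rests entirely on the distance bound $\geq \ell_j$ plus the root-protected invariant of $\mathcal{S}_j$ at $v_j$, with a possible arrival of the rabbit at $r$ exactly at the (even) last round of the phase being caught by the odd-round shot opening the next phase --- you do say this later, so it is a matter of presentation rather than a gap.
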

\begin{proof}
Let $q$ be the maximum degree of $T$. Let $r$ be any vertex of $T$, and let $i$ be the eccentricity of $r$ (i.e., the largest distance between $r$ and some vertex of $T$). Then, there exists a subdivision $T'$ of $T$, that is a subgraph of $T_{i,\max\{6,q\}}$ (each vertex of $T$ being ``mapped'' to a vertex of degree at least $3$ of $T_{i,\max\{6,q\}}$ and $r$ being ``mapped'' to the root of $T_{i,\max\{6,q\}}$). By Lemmas~\ref{lem:subgraph},~\ref{lem:bipartition} and~\ref{lem:nonMonotTi}, $h(T')\leq 2$. 
%
\end{proof}

\begin{corollary}
For every $\ell \geq 0$, there exists a tree $T$ and a subdivision $T'$ of $T$ such that $h(T)-h(T') \geq \ell$. 
\end{corollary}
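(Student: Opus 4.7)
The plan is to combine two earlier results in the paper. On the one hand, Corollary~\ref{cor:costMonotone} (via the recursive ternary trees $T_n$) guarantees that for every $k \in \mathbb{N}$ there exists a tree $T$ with $h(T) \geq k$. On the other hand, Theorem~\ref{theo:subdivision} asserts that every tree admits a subdivision $T'$ with $h(T') \leq 2$. Stacking these two statements against each other immediately produces an unbounded gap between the hunter number of a tree and that of one of its subdivisions.

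Given $\ell \geq 0$, I would first invoke Corollary~\ref{cor:costMonotone} with $k = \ell + 2$ to obtain a tree $T$ satisfying $h(T) \geq \ell + 2$. I would then apply Theorem~\ref{theo:subdivision} to that same $T$ to construct a subdivision $T'$ of $T$ with $h(T') \leq 2$. Subtracting the two bounds yields
\[
h(T) - h(T') \;\geq\; (\ell+2) - 2 \;=\; \ell,
\]
which is exactly the inequality the corollary asks for.

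Because both ingredients are already established, there is essentially no technical obstacle to the proof itself; the corollary is a one-line consequence. The only point worth highlighting is conceptual: while the hunter number of trees can be driven arbitrarily high by nesting the recursive ternary construction of Corollary~\ref{cor:costMonotone}, inserting long enough subdivisions along the edges (as done in building $T_{i,\max\{6,q\}}$ within the proof of Theorem~\ref{theo:subdivision}) flattens the hunter number down to the absolute constant $2$. So $h$ is strikingly non-monotone under subdivision, in sharp contrast with genuinely topological parameters such as pathwidth, which are essentially invariant under subdividing edges.
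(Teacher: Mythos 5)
Your proposal is correct and is exactly the argument the paper intends: the corollary is stated without proof immediately after Theorem~\ref{theo:subdivision} precisely because it follows by combining the existence of trees of arbitrarily large hunter number (as guaranteed by Corollary~\ref{cor:costMonotone}) with that theorem. Your instantiation with $k=\ell+2$ and the subtraction $(\ell+2)-2=\ell$ is the intended one-line derivation.
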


\subsection{Non monotonicity of the red variant in trees}\label{ssec:mh(T)>=3}

 Before proving Lemma~\ref{lem:TiNotMonot}, we need some additional results. Note that the next lemma is the adaptation of Proposition~\ref{prop:distance-1} for the red variant of the game. 

\begin{lemma}\label{lem:PseudoDistance-1}
Let $G=(V_r \cup V_w,E)$ be any bipartite graph and $H$ be a connected subgraph of $G$. Let $\mathcal{S}=(S_1,\dots, S_\ell)$ be any parsimonious monotone winning hunter strategy in $G$ with respect to $V_r$. Let  $1\leq i\leq \ell$ and $x,y \in V(H)$ such that $x\in \bigcup_{j< i}S_j$ and $y\in Z_{i-1}$ and minimising the distance between such $x$ and $y$ in $H$. 
If $x,y\notin S_i$, then $xy\in E(H)$.
\end{lemma}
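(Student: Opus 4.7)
The plan is to mirror the proof of Proposition~\ref{prop:distance-1}, replacing Proposition~\ref{prop:reach} by its red-variant analogue Proposition~\ref{prop:reachBip} and keeping careful track of the parities of the bipartition classes. The main new ingredient is a case (when $x$ and $y$ lie in the same part of the bipartition) that does not arise in the non-bipartite setting and must be handled separately.

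I would first show that $x\neq y$. Suppose $x=y$ and $x\in S_j$ for some $j<i$. If $j<i-1$, then $x$ is cleared at round $j$ and $x\in Z_{i-1}$ with $i-1>j$ forces $x\in S_i$ by monotonicity, contradicting $x\notin S_i$. If $j=i-1$, then parsimony yields $x\in S_{i-1}\subseteq Z_{i-2}$; but in the red variant the remark following Lemma~\ref{lem:bipartition} implies that $Z_{i-2}$ and $Z_{i-1}$ lie in opposite parts of the bipartition, so $x$ cannot belong to both, a contradiction. The very same argument, applied to $y$, shows $y\notin\bigcup_{j<i}S_j$, which will be needed below.

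Now let $P=(x=u_0,u_1,\ldots,u_k=y)$ be a shortest $x$--$y$ path in $H$ and assume toward a contradiction that $k\geq 2$. Set $A=\bigcup_{j<i}S_j$ and $B=Z_{i-1}$. By the minimality of the distance, $u_j\notin A\cup B$ for every $0<j<k$, $u_0=x\notin B$ (otherwise $(x,x)$ would be a valid pair of distance $0$, already excluded), and $u_k=y\notin A$. I split into two cases according to the parts of $x$ and $y$. If $x$ and $y$ lie in different parts, then $k$ is odd and $\geq 3$, and $u_1$ has the same color as $y$. Applying Proposition~\ref{prop:reachBip} with $v=u_1$, $u=u_2$, $x'=u_1$ (using the ``$x=v$'' option) and parameter $i^\ast=i-1$, the hypotheses $u_2\notin\bigcup_{j\leq i-1}S_j$ and $u_1\notin\bigcup_{j<i-1}S_j$ follow from $u_1,u_2\notin A$; combined with the color of $u_1$, this yields $u_1\in Z_{i-1}$, so $(x,u_1)$ is a valid pair at distance $1<k$ in $H$, contradicting minimality.

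If instead $x$ and $y$ share a part, then $k$ is even and $\geq 2$. I first note that $x\notin Z_{i-1}$: repeating the ``$x=y$'' argument (using $x\in S_j$, $x\in Z_{i-1}$, $x\notin S_i$) yields a contradiction via monotonicity and bipartite parity. I then apply Proposition~\ref{prop:reachBip} with $v=x$, $u=u_1$, $x'=u_2$ and parameter $i^\ast=i-1$; the hypotheses hold since $u_1,u_2\notin A$ (for $k=2$, $u_2=y\notin A$ was established above). Because $x$ has the same color as $y\in Z_{i-1}$, the conclusion gives $x\in Z_{i-1}$, contradicting what we just proved. The main obstacle is precisely this same-color case: it has no counterpart in Proposition~\ref{prop:distance-1}, and resolving it requires first excluding $x\in Z_{i-1}$ by combining monotonicity with the bipartite structure of the sets $Z_j$.
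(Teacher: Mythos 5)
Your proof is correct and follows essentially the same route as the paper's: take a shortest $x$--$y$ path in $H$, apply Proposition~\ref{prop:reachBip} to the first one or two vertices after $x$, and contradict either the minimality of the pair or the monotonicity of $\mathcal{S}$; your case split by the parity of the distance is merely a reorganization of the paper's split on whether the path has length exactly two. Your explicit handling of the degenerate situations ($x=y$, and $x\in Z_{i-1}$) via parsimony together with the bipartite parity of the sets $Z_j$ is in fact slightly more careful than the paper's one-line appeal to non-monotonicity, but the substance is identical.
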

\begin{proof}
Note first that if $x=y$, then $\mathcal{S}$ is non-monotone since $y=x \in (\bigcup_{j< i}S_j \cap Z_{i-1}) \setminus S_{i}$. Hence, we may assume that $x \neq y$. Let $P$ be a shortest path from $x$ to $y$ in $H$ (it exists since $H$ is connected). Let us assume that $S_i\subseteq V_r$ and so $i$ is odd (the case when $S_i\subseteq V_w$ and $i$ is even is similar). Since $y\in Z_{i-1}$ and $S_i \subseteq Z_{i-1}$ (since ${\cal S}$ is parsimonious), $y\in V_r$.
Let $a$ be the neighbour of $x$ in $P$.

Towards a contradiction, let us assume that $a\neq y$. By the minimality of the distance between $x$ and $y$, $a\notin Z_{i-1}$ and $a\notin \bigcup_{j<i} S_j$. Let $b \neq x$ be the other neighbour of $a$ in $P$.  If $b \neq y$, then by the minimality of the distance between $x$ and $y$, $b\notin Z_{i-1}$ and $b\notin \bigcup_{j<i} S_j$. 
Therefore, by Proposition~\ref{prop:reachBip}, if $a\in V_r$, then $a\in Z_i$ and if $b\in V_r$ then $b\in Z_i$. In both cases, there is a contradiction with the fact that $P$ minimises the distance between $x$ and $y$.


Therefore, we may assume that $b=y$. This implies that $x\in V_r$. Note also that $y\notin S_j$ for all $j\leq i$. Indeed, assuming otherwise would contradict the fact that $\mathcal{S}$ is monotone, since $y\notin S_i$ and $y\in Z_{i-1}$. 
Thus, by Proposition~\ref{prop:reachBip}, and since $a,y\notin \bigcup_{j\leq i-1} S_j$, we have that $x\in Z_{i-1}$. This contradicts the monotonicity of ${\cal S}$ since $x\notin S_i$ and $x\in \bigcup_{j< i}S_j$.
\end{proof}

Before we prove the next lemma, we introduce an extra definition. Let $G=(V,E)$ be any graph and ${\cal S}=(S_1,\dots,S_{\ell})$ be any winning hunter strategy in $G$ with respect to $X\subseteq V$. We say that $W\subseteq V$ is {\it definitively cleaned at the round $i$} if $W\cap Z_{j}({\cal S})=\emptyset$ and $W\cap S_{j+1}=\emptyset$ for every $i \leq j \leq \ell$.


Informally, the following lemma says that if the degree of the root $r$ of a tree $T$ is large enough, compared to the number of hunters, then, when a first branch of $r$ is definitively cleaned according to any monotone hunter strategy, there must be some other branches of $r$ whose vertices have never been shot.

\begin{lemma}\label{lem:other-branches-not-touched}
Let $T=(V_r \cup V_w,E)$ be a tree rooted in some vertex $c \in V_r$ with neighbours $N(c)=\{v_1,\dots,v_d\}$, $d\geq 2k$. For every $1 \leq i \leq d$, let $B_i$ be the branch at $c$ containing $v_i$ and assume that $|V(B_i)| \geq 2$. Let $\mathcal{S}=(S_1\dots,S_\ell)$ be any parsimonious monotone winning hunter strategy in $T$ with respect to $V_r$ using at most $k-1$ hunters. Let $1\leq j\leq \ell$ be the smallest index such that there exists an $1\leq \alpha\leq d$ such that $V(B_{\alpha})$ is definitively cleaned at the round $j$. Then, there exist $1 \leq \beta < \gamma \leq d$, $\alpha \notin \{\beta,\gamma\}$, such that $(\bigcup_{1\leq i\leq j} S_i)\cap V(B_{\beta})=\emptyset$ and $(\bigcup_{1\leq i\leq j} S_i)\cap V(B_{\gamma})=\emptyset$.
\end{lemma}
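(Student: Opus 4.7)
The plan is to argue by contradiction via a pigeonhole count at round $j$. Suppose at most one branch other than $B_\alpha$ is untouched during rounds $1,\dots,j$. Then at least $d-2\ge 2k-2$ indices $i\ne\alpha$ satisfy $V(B_i)\cap\bigcup_{1\le r\le j}S_r\ne\emptyset$; for each such $i$, let $t_i$ be the first round with a shot in $V(B_i)$ and let $t'_i\in[t_i,\ell]$ be the last such round over the entire strategy. Since each $B_i$ is connected with at least two vertices, Lemma~\ref{lem:noRoundEmpty} guarantees $S_r\cap V(B_i)\ne\emptyset$ for every $r\in[t_i,t'_i]$.

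The first key step is to show that $t'_i\ge j$ for each of these $\ge 2k-2$ branches. Assume instead that $t'_i<j$ for some $i\ne\alpha$; then no vertex of $B_i$ is shot after round $t'_i$. If $V(B_i)\cap Z_s\ne\emptyset$ for some $s\ge t'_i$, then the rabbit could place itself at some $v\in V(B_i)\cap Z_s$ and, using that $|V(B_i)|\ge 2$ and $B_i$ is connected, oscillate forever between $v$ and a neighbour inside $B_i$ (as in the proof of Lemma~\ref{lem:independent}), contradicting the winning property of $\mathcal{S}$. Hence $V(B_i)\cap Z_s=\emptyset$ and $V(B_i)\cap S_{s+1}=\emptyset$ for every $s\ge t'_i$, so $B_i$ is definitively cleaned at round $t'_i<j$, contradicting the minimality of $j$. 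Therefore $t'_i\ge j$, so each of these branches contains a hunter at round $j$.

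The second step is to check that $B_\alpha$ itself also consumes a hunter at round $j$, i.e.\ $V(B_\alpha)\cap S_j\ne\emptyset$. By minimality of $j$, $B_\alpha$ is not definitively cleaned at round $j-1$, and by definitive cleaning at round $j$ we have $V(B_\alpha)\cap S_s=\emptyset$ for every $s>j$; together these force $V(B_\alpha)\cap Z_{j-1}\ne\emptyset$ or $V(B_\alpha)\cap S_j\ne\emptyset$. In the first case, pick $x_0\in V(B_\alpha)\cap Z_{j-1}$ and any neighbour $y$ of $x_0$ inside $B_\alpha$ (which exists since $|V(B_\alpha)|\ge 2$ and $B_\alpha$ is connected); if $V(B_\alpha)\cap S_j=\emptyset$, then $x_0,y\notin S_j$ and the recursion $Z_j=\{x:\exists z\in Z_{j-1}\setminus S_j,\ xz\in E\}$ yields $y\in V(B_\alpha)\cap Z_j$, contradicting $V(B_\alpha)\cap Z_j=\emptyset$. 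Hence $V(B_\alpha)\cap S_j\ne\emptyset$ in either case.

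Combining the two steps, at round $j$ at least $(2k-2)+1=2k-1$ pairwise vertex-disjoint branches each contain a hunter, so $|S_j|\ge 2k-1$, contradicting $|S_j|\le k-1$ as soon as $k\ge 2$ (which is forced since hunter shots are non-empty). The step I expect to be the most delicate is the first one, since it is there that the hypothesis $|V(B_i)|\ge 2$ for every branch is really used, combined with Lemma~\ref{lem:noRoundEmpty} and the oscillation argument, in order to upgrade ``finitely many shots in $B_i$ ending before round $j$'' into ``$B_i$ is definitively cleaned before round $j$'' and thus invoke the minimality of $j$.
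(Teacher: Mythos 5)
Your overall plan is genuinely different from the paper's proof (which derives its contradiction at round $j+1$ via a recontamination argument built on Lemma~\ref{lem:PseudoDistance-1}), and several of your pieces are sound: your second step showing $V(B_\alpha)\cap S_j\neq\emptyset$ is essentially the paper's Claim~\ref{claim:definitivelyCleaned}, and your oscillation argument showing that a touched branch whose last shot occurs before round $j$ would be definitively cleaned before round $j$ is correct. The gap is in the step where you conclude that each touched branch is shot \emph{at round $j$ itself}: you invoke Lemma~\ref{lem:noRoundEmpty}, but that lemma is stated and proved only for parsimonious monotone strategies that are winning with respect to $V$, whereas $\mathcal{S}$ here is a strategy with respect to $V_r$; its proof rests on Lemma~\ref{lem:reach2}, Proposition~\ref{prop:distance-1} and Lemma~\ref{lem:MonotoneProperties}, none of which apply verbatim in the red variant. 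The paper restates red-variant analogues of every technical tool it needs in Section~\ref{sec:bip} (Proposition~\ref{prop:reachBip}, Lemmas~\ref{lem:reach2Bip}, \ref{lem:PseudoMonotoneProperties}, \ref{lem:PseudoDistance-1}) and conspicuously does not provide one for Lemma~\ref{lem:noRoundEmpty}.

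This is not a bookkeeping issue: the literal red-variant analogue of Lemma~\ref{lem:noRoundEmpty} fails. In a parsimonious strategy with respect to $V_r$, odd rounds shoot only vertices of $V_r$ and even rounds only vertices of $V_w$, so a connected subgraph can receive shots at rounds $i$ and $i+2$ and none at round $i+1$; indeed Lemma~\ref{lem:PseudoMonotoneProperties} only guarantees $v\in S_{i+2}$ (not $v \in S_{i+1}$) when $v\in S_i\cap S_j$, and in the strategy of Lemma~\ref{lem:spider} the subgraph induced by $\{c,v^2_1\}$ is shot at rounds $1$ and $3$ but not at round $2$. Consequently, from ``first shot in $B_i$ at some round $\leq j$ and last shot at some round $\geq j$'' you cannot deduce $S_j\cap V(B_i)\neq\emptyset$. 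At best a parity-aware substitute would give a shot at round $j$ or at round $j+1$, which would require both proving that weaker lemma in the red variant and redoing your pigeonhole over $S_j\cup S_{j+1}$; as written, the count at round $j$ does not go through, and this is exactly the point where the paper switches to a different mechanism (exhibiting $y_h\in Z_{j-1}\cap V(B_h)$, applying Lemma~\ref{lem:PseudoDistance-1}, and forcing $d-k\geq k$ recontaminated vertices into $S_{j+1}$).
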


\begin{proof}
Let $j$ and $\alpha$ be defined as in the statement and, w.l.o.g., let us assume that $\alpha=1$. That is, the branch $B_1$ is definitively cleaned at round $j$, and no other branch has been definitively cleaned before round $j$. 

\begin{claim}\label{claim:definitivelyCleaned}
For any vertex $v\in V(T)$ such that there exist a least $3$ branches at $v$ with at least two vertices, let $q$ be the minimum index such that such a branch $B$ is definitively cleaned at round $q$. Then, $S_q\cap V(B)\neq \emptyset$.
\end{claim}
\begin{proofclaim}. 
By the minimality of $q$, either $Z_{q-1}\cap V(B)\neq \emptyset$ or $S_q\cap V(B)\neq \emptyset$. If $S_q\cap V(B)\neq \emptyset$, then the statement holds. Thus, let us assume that $S_q\cap V(B)= \emptyset$. Then $Z_{q-1}\cap V(B)\neq \emptyset$, as otherwise $B$ would have been definitively cleaned at round prior to $q$. Let $x\in Z_{q-1}\cap V(B)$. Since $x\notin S_q$, $N_{B}(x)\subseteq Z_q$. Since $B$ is connected and contains at least two vertices, we have that $N_{B}(x)\neq \emptyset$. Thus $Z_q\cap V(B)\neq \emptyset$, which contradicts that $B$ is definitively cleaned at round $q$.
\end{proofclaim}

It follows by Claim~\ref{claim:definitivelyCleaned} that $V(B_1)\cap S_j\neq \emptyset$.
Thus, since $|S_j|\leq k-1$, there are at most $k-2$ branches at $c$, other than $B_1$, which can be shot during the round $j$. W.l.o.g., let us assume that $B_{2},\dots, B_{k-1}$ are the branches at $c$ that are also shot during round $j$. That is, $S_j\subseteq  \{c\} \cup \bigcup_{1\leq h< k} V(B_{h})$.

For purpose of contradiction, let us assume that there exists at most one branch, w.l.o.g., say  $B_{k}$, such that $\bigcup_{1\leq i\leq j} S_i\cap V(B_{k})=\emptyset$. Hence, we assume that for every $k<h\leq d$, there exists $j_h< j$ and $x_h \in V(B_h) \cap S_{j_h}$.

For any $k<h\leq d$, let us denote by $j_h^*$ the minimum index such that $B_h$ is definitively cleaned at round $j_h^*$. By Claim~\ref{claim:definitivelyCleaned}, for any $k<h\leq d$, $S_{j_h^*}\cap V(B_h)\neq \emptyset$.
Thus, since $V(B_h)\cap S_j=\emptyset$ for every $k<h\leq d$, we have that $j_h^*\neq j$ for every $k<h\leq d$. In particular, it follows by the minimality of $j$ that  $j_h^*>j$.


Let us prove that, for some $k<h\leq d$, there exists a vertex $y_h\in Z_{j-1}\cap V(B_{h})$. Towards a contradiction, let us assume that for every $k<h\leq d$, we have $Z_{j-1}\cap V(B_h)=\emptyset$. Recall that $S_{j_h^*}\cap V(B_h)\neq \emptyset$, and let $z\in S_{j_h^*}\cap V(B_h)$ (for some $k<h\leq d$). Since ${\cal S}$ is parsimonious, $z\in Z_{j_h^*-1}$.
Thus, there exists a rabbit trajectory $(r_0,\dots, r_{j_h^*-1}=z)$ such that $r_q\notin S_{q+1}$ for every $0\leq q< j_h^*-1$.
Moreover, $r_{j-1}\notin V(B_h)$, since $Z_{j-1}\cap V(B_h)=\emptyset$. Since a rabbit trajectory is a walk, and any walk between a vertex from $V\setminus V(B_h)$ to a vertex of $V(B_h)$ contains $c$, there exists $j-1\leq m<j_h^*-1$ such that $r_m=c$. Since $r_m\notin S_{m+1}$, it follows that $v_1\in Z_{m+1}$, where $v_1$ is the neighbour of $c$ in $B_1$. This contradicts that $B_1$ is definitively cleaned at round $j$. Hence, there exists some vertex $y_h\in Z_{j-1}\cap V(B_{h})$.

Finally, for every $k < h \leq d$, let us choose the vertices $x_h$ and $y_h$, such that $x_h\in V(B_{h})\cap \bigcup_{1 \leq i< j}S_i$ and $y_h\in Z_{j-1}\cap V(B_{h})$ and the distance between $x_h$ and $y_h$ is minimised. Note that $y_h\notin \bigcup_{1 \leq i< j}S_i$, since, otherwise, $\mathcal{S}$ would not be monotone as $y_h\in Z_{j-1}\setminus S_j$. Since $S_j\cap V(B_{h})=\emptyset$ and by Lemma~\ref{lem:PseudoDistance-1}, we obtain that $x_{h} y_h\in E(B_{h})$. Thus, $x_h\in Z_{j}$ for every $k<h\leq d$. Since $\mathcal{S}$ is a monotone strategy, $x_h \in S_{j_h} \cap Z_j$ (with $j_h <j$), that $x_h\in S_{j+1}$ for every $k<h\leq d$. However, $d\geq 2k$ and so $|S_{j+1}|\geq k$, a contradiction. 
\end{proof}




\begin{figure}[!t]
\centering

\begin{tikzpicture}[scale=0.46, inner sep=0.7mm]
    \node[draw, circle, line width=1pt, fill=white](v13) at (-1,2)[] {};
    \node[draw, circle, line width=1pt, fill=white](v14) at (12,2)[] {};
    \node[draw, circle, line width=1pt, fill=white](v15) at (4,2)[] {};
    \node[draw, circle, line width=1pt, fill=white](v16) at (9,2)[] {};
    \node[draw, circle, line width=1pt, fill=white](z1) at (15,2)[] {};
    \node[draw, circle, line width=1pt, fill=white](v17) at (4,4)[label=above right: $\gamma_{2i-s}$] {};

    \node[draw, circle, line width=1pt, fill=white](v22) at (-1,9)[] {};

    \node[draw, circle, line width=1pt, fill=white](u3) at (15,9)[] {};
    \node[draw, circle, line width=1pt, fill=white](z2) at (18,9)[] {};
    
    \node[draw, circle, line width=1pt, fill=white](u2) at (4,9)[] {};
    
    \node[draw, circle, line width=1pt, fill=white](v24) at (12,9)[] {};

    \node[draw, circle, line width=1pt, fill=white](v31) at (-2,14)[] {};
    \node[draw, circle, line width=1pt, fill=white](v32) at (19,14)[] {};
    \node[draw, circle, line width=1pt, fill=white](z3) at (22,14)[] {};
    \node[draw, circle, line width=1pt, fill=white](v33) at (4,11)[label=above right: $\gamma_{2i-1}$] {};

    \node[draw, circle, line width=1pt, fill=white](u1) at (4,14)[] {};
    
    \node[draw, circle, line width=1pt, fill=white](v34) at (16,14)[] {};
    \node[draw, circle, line width=1pt, fill=white](v35) at (8,17)[label=above: $\gamma_{2i}$] {};

    \draw (v13) --+(-2,-2) --+(2,-2) -- (v13);
    \draw[-,line width=0.5pt,black,fill=lightgray] (v14) --+(-1,-1) --+(1,-1) -- (v14);
    \draw[-,line width=0.5pt,black,fill=lightgray] (u3) --+(-1,-1) --+(1,-1) -- (u3);
    \draw[-,line width=0.5pt,black,fill=lightgray] (z1) --+(-1,-1) --+(1,-1) -- (z1);
    \draw[-,line width=0.5pt,black,fill=lightgray] (z2) --+(-1,-1) --+(1,-1) -- (z2);
    \draw[-,line width=0.5pt,black,fill=lightgray] (z3) --+(-1,-1) --+(1,-1) -- (z3);

    \draw (v15) --+(-2,-2) --+(2,-2) -- (v15);
    \draw (v16) --+(-2,-2) --+(2,-2) -- (v16);

    \draw (v22) --+(-2,-2) --+(2,-2) -- (v22);
    \draw (v24) --+(-2,-2) --+(2,-2) -- (v24);

    \draw (v31) --+(-2,-2) --+(2,-2) -- (v31);
    \draw[-,line width=0.5pt,black,fill=lightgray] (v32) --+(-1,-1) --+(1,-1) -- (v32);
    \draw (v34) --+(-2,-2) --+(2,-2) -- (v34);

    \draw (u2) edge[bend right=15] (-4,-2);
    \draw (u2) edge[bend left=17] (19,-2);
    \draw (-4,-2) -- (19,-2);

    \draw (u1) edge[bend right=30] (-6,-4);
    \draw (u1) edge[bend left=33] (28,-4);
    \draw (-6,-4) -- (28,-4);
    
	\draw[-, line width=1pt]  (v17) -- (v13);
    \draw[-, line width=1pt]  (v17) -- (v14);
    \draw[-, line width=1pt]  (v17) -- (v15);
    \draw[-, line width=1pt]  (v17) -- (v16);

    \draw[-, line width=1pt]  (v33) -- (v22);
    \draw[-, line width=1pt]  (v33) -- (u2);
    \draw[-, line width=1pt]  (v33) -- (v24);

    \draw[-, line width=1pt]  (v35) -- (v31);
    \draw[-, line width=1pt]  (v35) -- (v32);
    \draw[-, line width=1pt]  (v35) -- (u1);

    \draw[-, line width=1pt]  (v35) -- (v34);
    \draw[-, line width=1pt]  (v33) -- (u3);
    \draw[-, line width=1pt]  (v17) -- (z1);
    \draw[-, line width=1pt]  (v33) -- (z2);
    \draw[-, line width=1pt]  (v35) -- (z3);


    \path (12,1.5) -- (15,1.5) node [black, midway, sloped] {$\dots$};
    \path (15,8.5) -- (18,8.5) node [black, midway, sloped] {$\dots$};
    \path (19,13.5) -- (22,13.5) node [black, midway, sloped] {$\dots$};
    \draw[decorate, line width=1pt] (u1) -- (v33);
    \draw[linedot] (u2) -- (v17);

    \node at (-1,0.5) {\tiny $B_1^{2i-s}$};
    \node at (4,0.5) {\tiny $B_2^{2i-s}$};
    \node at (9,0.5) {\tiny $B_3^{2i-s}$};
    
    \node at (7,-1.5) {\small $B_2^{2i-1}$};
    \node at (10,-3.5) {\small$B_2^{2i}$};
    \node at (-1,7.5) {\small$B_1^{2i-1}$};
    \node at (12,7.5) {\small$B_3^{2i-1}$};
    
    \node at (-2,12.5) {\small$B_1^{2i}$};
    \node at (16,12.5) {\small$B_3^{2i}$};

    \node[draw, circle, line width=1pt, fill=white]() at (15,2)[] {};
    \node[draw, circle, line width=1pt, fill=white]() at (18,9)[] {};
    \node[draw, circle, line width=1pt, fill=white]() at (22,14)[] {};
    \node[draw, circle, line width=1pt, fill=white]() at (12,2)[] {};
    \node[draw, circle, line width=1pt, fill=white]() at (15,9)[] {};
    \node[draw, circle, line width=1pt, fill=white]() at (19,14)[] {};
    \node[draw, circle, line width=1pt, fill=white]() at (4,9)[] {};
	
\end{tikzpicture}

\caption{A representation of the tree $T_{2i,d}$ illustrating the notation used throughout the proof of Lemma~\ref{lem:TiNotMonot}. Wiggly edges are used to represent paths whose internal vertices have degree $2$. The branch $B_1^{2i}$ is the first branch of $T_{2i,d}$ at $\gamma_{2i}$ that is definitively cleaned, and this happens during the round $j_{2i}$. No vertex of the branches $B_2^{2i}$ and $B_3^{2i}$ has been shot until the round $j_{2i}$. Among the branches $B_2^{2i}$ and $B_3^{2i}$, the first branch that is definitively cleaned is $B_2^{2i}$, and this happens at the round $j'_{2i}>j_{2i}$. The colour grey is used on the small triangles to denote that we do not know the state of the corresponding branches at the same level as $B_1^{2i}$ that are different from $B_2^{2i}$ and $B_3^{2i}$. Observe that $B_2^{2i}$ contains a copy of $T_{2i-1,d}$, rooted at $\gamma_{2i-1}$. Since $B_2^{2i}$ is definitively cleaned at round $j'_{2i}$, we can iterate the same arguments, and define $B_1^{2i-1}$ to be the first branch of $T_{2i-1},d$ at $\gamma_{2i-1}$ that is definitively cleaned, and this happens during the round $j_{2i}< j_{2i-1}<j_{2i}'$, and so on, until we have reached the leaves of $B_2^{2i}$.\label{figure:monotone}
}
\end{figure}
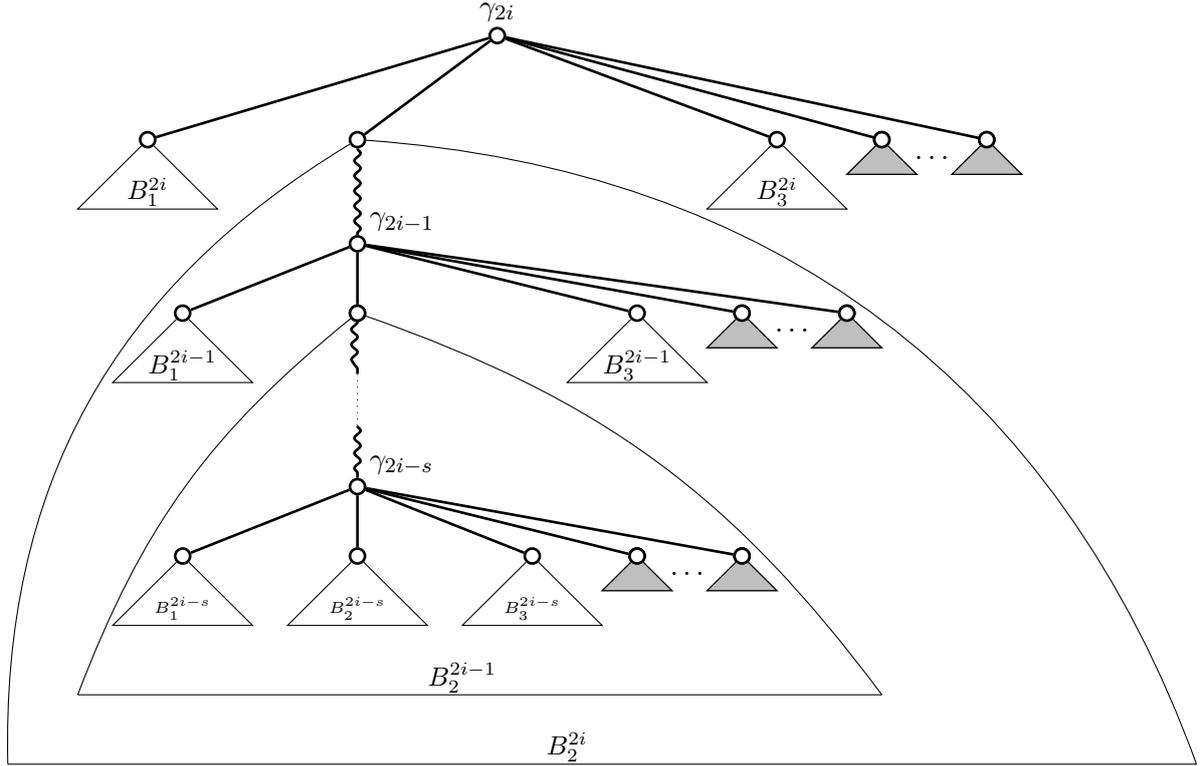

Finally, we will need an extra definition: For any tree $T=(V_r\cup V_w,E)$, and any vertex $v\in V(T)$, let $B$ be any branch at $v$ such that $|V(B)|>1$. For any strategy  ${\cal S}=(S_1,\dots, S_{\ell})$ in $T$ with respect to $V_r$, let $m$ be the minimum integer such that $S_m \cap V(B) \neq \emptyset$ and let $u \in S_m \cap V(B)$ (by Lemma~\ref{lem:independent}, such an integer $m$ exists because $|V(B)|>1$ and $B$ is connected, which implies that $V_r\cap V(B)\neq \emptyset$). Let the \textit{restriction} ${\cal S}_B$ of ${\cal S}$ be the hunter strategy, such that for every $1\leq i\leq \ell$,
\[   
S'_i = 
     \begin{cases}
       S_i\cap V(B), &\quad\text{if } S_i\cap V(B)\neq \emptyset\\
       \{u\}, &\quad\text{otherwise}\\
     \end{cases}
\]


Recall that, $h({\cal S}_B)\leq \max_{1\leq i\leq \ell} S_i\cap V(B)$ by Lemma~\ref{lem:MonotoneSubgraphBip}.

\begin{lemma}\label{lem:TiNotMonot}
For any $i\in \mathbb{N}^*$, $i\geq 3$ and $d \geq 2i$, we have that $mh_{V_r}(T_{2i,d})\geq i$.
\end{lemma}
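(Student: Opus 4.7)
The plan is to argue by contradiction: assume $mh_{V_r}(T_{2i,d})\leq i-1$ and obtain, by Lemma~\ref{lem:monotoneParsimoniousBip}, a parsimonious monotone winning strategy ${\cal S}=(S_1,\dots,S_\ell)$ with respect to $V_r$ using at most $i-1$ hunters. I then iteratively apply Lemma~\ref{lem:other-branches-not-touched} to the nested copies of $T_{m,d}$ embedded in $T_{2i,d}$ for $m=2i,2i-1,\dots,i+1$, using monotonicity to show that each level introduces a mandatory ``guard'' hunter at $\gamma_m$, so that at the deepest first-cleaning round $j_{i+1}$ we require $i$ simultaneous shots, contradicting the bound $|S_{j_{i+1}}|\leq i-1$.

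Concretely, at the outermost level ($m=2i$), Lemma~\ref{lem:other-branches-not-touched} applies with $k=i$ since $d\geq 2i$ and ${\cal S}$ uses $\leq i-1$ hunters: let $j_{2i}$ be the smallest round such that some branch $B_1^{2i}$ at $\gamma_{2i}$ is definitively cleaned, and let $B_2^{2i},B_3^{2i}$ be two other branches untouched through $j_{2i}$, with $B_3^{2i}$ chosen to be the one whose first-touched round is latest. Inside $B_2^{2i}$ sits a copy of $T_{2i-1,d}$ rooted at $\gamma_{2i-1}$; by Lemma~\ref{lem:MonotoneSubgraphBip} the restriction of ${\cal S}$ to this copy is itself a monotone winning strategy with $\leq i-1$ hunters, so Lemma~\ref{lem:other-branches-not-touched} again applies with $k=i$ (since $d\geq 2i$), yielding $j_{2i-1}>j_{2i}$ and untouched branches $B_2^{2i-1},B_3^{2i-1}$. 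Iterating $i-1$ times produces nested roots $\gamma_{2i},\gamma_{2i-1},\dots,\gamma_{i+1}$, strictly increasing rounds $j_{2i}<j_{2i-1}<\dots<j_{i+1}$, and spare branches $B_3^m$ persistently untouched through $j_{i+1}$, as suggested by Figure~\ref{figure:monotone}.

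The core monotonicity argument at each level $m$ is the following: since $B_3^m$ remains untouched, the neighbour $v_3^m\in V_w$ of $\gamma_m$ belongs to $Z_r$ for every odd round $r$ (the rabbit can stay inside $B_3^m$ and oscillate), and $v_3^m\notin S_r$; Proposition~\ref{prop:reachBip} then forces $\gamma_m\in V_r$ to lie in $Z_{r+1}$ for each even $r+1$. Because the neighbour $v_1^m\in V_w$ of $\gamma_m$ inside the already-cleared $B_1^m$ was cleared at round $j_m$, monotonicity implies that whenever $v_1^m\in Z_r$ for $r$ odd and $r>j_m$, we must have $v_1^m\in S_{r+1}$; since $B_1^m$ is definitively cleaned ($v_1^m\notin Z_r$ for $r\geq j_m$), the only way this holds is if $\gamma_m\in S_r$ at every odd round $r\in(j_m,j_{i+1}]$. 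The even length of the paths $P^j_m$ in the construction of $T_{2i,d}$ guarantees that every $\gamma_m$ lies in $V_r$, and the bipartite alternation (together with the fact that the clearing of $B_1^m$ involves shooting a vertex whose parity forces $j_m$ to be odd) then implies $j_{i+1}$ is odd. At this odd round the strategy must therefore shoot simultaneously at the $i-1$ guard vertices $\gamma_{2i},\gamma_{2i-1},\dots,\gamma_{i+2}$ and at least one additional vertex of $B_1^{i+1}$ (by the internal Claim~\ref{claim:definitivelyCleaned} used within the proof of Lemma~\ref{lem:other-branches-not-touched}), giving $|S_{j_{i+1}}|\geq i$, the desired contradiction.

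The main obstacle I expect is rigorously justifying that the nested spare branches $B_3^m$ can indeed be chosen so that all $i-1$ of them remain simultaneously untouched through round $j_{i+1}$. This relies on two ingredients: first, the very large path lengths $p^j_m$ built into the recursive construction of $T_{2i,d}$, which mean that once ${\cal S}$ commits hunters to the inner subtree $T_{m-1,d}$ it cannot afford to also shoot vertices far out in $B_3^m$ before the deep work inside is finished; second, a pigeonhole argument exploiting the bound $|S_r|\leq i-1$ so that at every round the strategy can touch at most $i-1$ branches, which combined with the freedom to choose, at each level, $B_3^m$ among the untouched branches with the latest first-touched round, lets us push the guards forward all the way to $j_{i+1}$. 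The parity bookkeeping needed to confirm that $j_{i+1}$ falls in the odd parity class (so all the guard shots at $V_r$ vertices are legal at that round) is the other technical point to be handled carefully.
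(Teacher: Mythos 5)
Your setup (contradiction, parsimonious monotone strategy via Lemma~\ref{lem:monotoneParsimoniousBip}, nested application of Lemma~\ref{lem:other-branches-not-touched} to the copies of $T_{m,d}$) matches the paper's skeleton, but your endgame diverges and contains two genuine gaps. First, your ``guard'' argument needs each $\gamma_m$ to be contaminated at (almost) every round between $j_m$ and $j_{i+1}$, which you justify by letting the rabbit oscillate on the first two vertices of the spare branch $B_3^m$. But Lemma~\ref{lem:other-branches-not-touched} only guarantees that $B_3^m$ is untouched \emph{up to round $j_m$}; nothing prevents the strategy from shooting $v_3^m$ and its neighbour shortly after $j_m$ (parsimony permits it, since they are contaminated then), after which your invocation of Proposition~\ref{prop:reachBip} fails and $\gamma_m$ need not lie in $Z_{r}$ at the later rounds where you want to force $\gamma_m\in S_{r+1}$. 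You flag this as ``the main obstacle'' and gesture at path lengths and pigeonhole, but no actual argument is given, and the obstacle is not cosmetic: the whole count $|S_{j_{i+1}}|\geq i$ rests on it. Second, the parity claim that $j_{i+1}$ is odd is unjustified. In a parsimonious strategy with respect to $V_r$ we have $S_r\subseteq V_r$ for odd $r$ and $S_r\subseteq V_w$ for even $r$; the round at which $B_1^{i+1}$ is first definitively cleaned can have either parity, and if $j_{i+1}$ is even then none of your guards $\gamma_m\in V_r$ can belong to $S_{j_{i+1}}$, so the $i$ shots you want to charge to a single round live in two different rounds and the contradiction evaporates.

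The paper avoids both problems by descending through \emph{all} $2i$ levels (not just $i-1$) and replacing the single-round counting by a recontamination argument spread over two consecutive rounds. For each level $s$ it forms the connected subgraph $H_s$ induced by $B_1^s\cup B_3^s\cup\{\gamma_s\}$, exhibits in $H_s$ a vertex shot before round $j_1$ (inside the cleaned $B_1^s$) and a vertex still contaminated at round $j_1-1$ (inside $N[B_3^s]$, obtained from a never-shot adjacent pair that must exist because $B_3^s$ is cleaned only later), and takes a closest such pair $(x_s,y_s)$. Since the $2i$ subgraphs $H_s$ are disjoint and at most $2(i-1)<2i$ of the vertices $x_s,y_s$ can be shot during rounds $j_1$ and $j_1+1$, some pair survives both rounds; Lemma~\ref{lem:PseudoDistance-1} then makes $x_{s^*}$ and $y_{s^*}$ adjacent, so $x_{s^*}$ is recontaminated, contradicting monotonicity. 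This is exactly the tool that sidesteps your need to control which vertices of $B_3^m$ get shot when, and it is independent of the parity of $j_1$. To repair your proof you would essentially have to import this witness-pair mechanism, at which point you are reproducing the paper's argument.
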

\begin{proof}
Let $\gamma_{2i}$ denote the root of $T_{2i,d}$.
For purpose of contradiction, let us assume that $mh_{V_r}(T_{2i,d})< i$. By Lemma~\ref{lem:monotoneParsimoniousBip}, there exists a parsimonious monotone winning hunter strategy with respect to $V_r$ using at most $i-1$ hunters; let that strategy be ${\cal S}_{2i}=(S^{2i}_1,\dots, S^{2i}_{\ell})$.

Let $1 \leq j_{2i} \leq \ell$ be the smallest index such that some branch at $\gamma_{2i}$, w.l.o.g., $B^{2i}_1$, is definitively cleaned at round $j_{2i}$. By Lemma~\ref{lem:other-branches-not-touched}, there exist two branches at $\gamma_{2i}$, w.l.o.g., $B^{2i}_2$ and $B^{2i}_3$, such that $(\bigcup_{1\leq q\leq j_{2i}} S_q)\cap V(B^{2i}_2)=\emptyset$ and $(\bigcup_{1\leq q\leq j_{2i}} S_q)\cap V(B^{2i}_3)=\emptyset$.

Let $1 \leq j'_{2i} \leq \ell$ be the smallest index such that at least one branch $B^{2i}_2$ or $B^{2i}_3$ is definitively cleaned. 

Note that $B^{2i}_2$ (resp., $B^{2i}_3$) is connected and has at least two vertices with at least one in $V_r$. Therefore, by Lemma~\ref{lem:independent}, at least one vertex of $B^{2i}_2$ (resp., $B^{2i}_3$) must be shot before the branch is definitively cleaned . Hence, $j_{2i} <j'_{2i}$.
W.l.o.g., assume that $B^{2i}_2$ is definitively cleaned at round $j'_{2i}$ (possibly, $B^{2i}_3$ may also be definitively cleaned at round $j'_{2i}$). 

We now prove by induction on $0\leq h < 2i$ , that there exist $1 \leq j_{2i} <j_{2i-1}<\dots <j_{2i-h} <j'_{2i-h}\leq j'_{2i-h+1}\leq \dots\leq j'_{2i-1}\leq j'_{2i} \leq \ell$ and $\{B^{2i-s}_1,B^{2i-s}_2,B^{2i-s}_3\}_{0\leq s \leq h}$  such that, for every $0 \leq s \leq h$:
\begin{enumerate}
\item $(\bigcup_{1\leq q\leq j_{2i-s}} S_q)\cap V(B^{2i-s}_2)=\emptyset$ and $(\bigcup_{1\leq q\leq j_{2i-s}} S_q)\cap V(B^{2i-s}_3)=\emptyset$;
\item $B^{2i-s}_1,B^{2i-s}_2$ and $B^{2i-s}_3$ are vertex disjoint branches at the root $\gamma_{2i-s}$ of the copy of $T_{2i-s,d}$ contained in $B^{2i-(s-1)}_2$ (with $B^{2i+1}_2=T_{2i,d}$  for $s=0$), each containing a copy of $T_{2i-(s+1),d}$ (with $T_{0,d}=\emptyset$ for $s=h=2i-1$);
\item $B^{2i-s}_1$ is definitively cleaned at round $j_{2i-s}$ (not before, i.e., for every $x<j_{2i-s}$, $B^{2i-s}_1$ is not definitively cleaned at round $x$),  $B^{2i-s}_2$ is definitively cleaned at round $j'_{2i-s}$ (not before), $B^{2i-s}_3$ is definitively cleaned at some round $x \geq j'_{2i-s}$ (not before). 
\end{enumerate}

See Figure~\ref{figure:monotone} for an illustration of the above notation.

We have already proven that the induction hypothesis holds for $h=0$. Let us assume that it holds for some $0 \leq h <2i-1$ and let us show it holds for $h+1$. Let $F$ be the copy of $T_{2i-(h+1),d}$ (rooted at $\gamma_{2i-(h+1)}$) contained in $B^{2i-h}_2$ and let $1 \leq j_{2i-(h+1)} \leq j'_{2i-h}$ be the smallest integer such that some branch $B$ of $F$ at $\gamma_{2i-(h+1)}$ is definitively cleaned. Note that each branch of $F$ at $\gamma_{2i-(h+1)}$ is connected and has at least two vertices with at least one in $V_r$. Therefore, by Lemma~\ref{lem:independent}, at least one vertex of $B$ must be shot before it is definitively cleaned. Hence, $j_{2i-h}  <j_{2i-(h+1)}$. Let $B=B^{2i-(h+1)}_1$. 
Let $\mathcal{S}_{2i-(h+1)}$ denote the restriction of $\mathcal{S}_{2i-h}$ on $F$.
By Lemma~\ref{lem:other-branches-not-touched} 
considering the strategy $\mathcal{S}_{2i-(h+1)}$ on $F$, there exist at least two branches at $\gamma_{2i-(h+1)}$, let us denote them by $B^{2i-(h+1)}_2$ and $B^{2i-(h+1)}_3$, such that $(\bigcup_{1\leq q\leq j_{2i-(h+1)}} S_q)\cap V(B^{2i-(h+1)}_2)=\emptyset$ and $(\bigcup_{1\leq q\leq j_{2i-(h+1)}} S_q)\cap V(B^{2i-(h+1)}_3)=\emptyset$. 
Also, it follows by Lemma~\ref{lem:independent} that $j_{2i-(h+1)}<j'_{2i-(h+1)}$. Finally, $B^{2i-(h+1)}_1$, $B^{2i-(h+1)}_2$ and $B^{2i-(h+1)}_3$ are all contained in $B_{2i-h}$ and, thus, $max(j_{2i-(h+1)},j'_{2i-(h+1)})=j'_{2i-(h+1)}\leq j'_{2i-h}$.
This finishes the proof of the induction step.

For every $1 \leq s \leq 2i$, let $H_s$ be the subgraph induced by $B^s_1$ and $B^s_3$ and $\gamma_{s}$ (so $H_s$ is connected and the subgraphs $H_s$ and $H_{s'}$ are vertex disjoint for any $s \neq s'$). Since $B^s_1$ has been definitively cleaned at round $j_{s}$, has at least two vertices and by Lemma~\ref{lem:independent}, there exists a vertex $x'_s \in V(B^s_1) \cap \bigcup_{1\leq q\leq j_{2i}} S_q$.

Note that $B^s_3$ is connected and has at least two vertices with at least one in $V_r$. Therefore, by Lemma~\ref{lem:independent}, at least one vertex of $B^s_3$ must be shot before the branch is definitively cleaned. Thus, since $B^s_3$ is definitely cleaned at round $z_s\geq j'_{s}$, but not definitely cleaned at a previous round, $S_{z_s}\cap V(B^s_3)\neq \emptyset$. 
Moreover, since $\mathcal{S}_{2i}$ is parsimonious, any vertex $v\in S_{z_s}\cap V(B^s_3)$ is such that $v\in Z_{z_s-1}$. It follows that $(N(v)\cap Z_{z_s-2})\setminus S_{z_s-1}\neq \emptyset$. Let $w_s\in (N(v)\cap Z_{z_s-2})\setminus S_{z_s-1}$ and note that $w_s \in N[V(B^s_3)]=V(B^s_3) \cup \{\gamma_{s}\}$. Since $\mathcal{S}_{2i}$ is monotone, we get that $w_s\notin  S_j$ for every $j< z_s$ and that $N(w_s)\not\subseteq S_j$ for every $j < z_s$, i.e. $w_s$ has not been shot before round $z_s$. Note also that if $w_s=\gamma_{s}$ , then the neighbour of $\gamma_{s}$ in $B_1^{s}$ is in $Z_{z_s-1}$, a contradiction since $z_s> j_{1}$ and $B_1^{s}$ is definitively cleaned at round $j_s\leq j_{1}$. Hence, $w_s \neq \gamma_{s}$ and so $w_s \in V(B^s_3)$.

Similarly, there exists a vertex $w'_s\in N(w_s)\cap Z_{z_s-3}\setminus S_{z_s-2}$ such that $w'_s$ has not been cleaned before round $z_s-1$. Hence, we have two adjacent vertices $w_s$ and $w'_s$ in $N[B^s_3]$ that have never been shot before the round $z_s-1$. Thus, since $\{w_s,w'_s\} \cap V_r \neq \emptyset$, there exists a rabbit trajectory $(\dots,w_s,w'_s,w_s,w'_s,\dots)$ consisting in oscillating between $w_s$ and $w'_s$ which implies that $\{w_s,w'_s\} \cap Z_j \neq \emptyset$ for all $j<z_s$. In particular, since $j_{1}-1 <z_s$, there exists a vertex $y'_s\in N[B^s_3]\cap Z_{j_{1}-1}$. 

For every $1 \leq s \leq 2i$, let  $x_s$ and $y_s$ be two vertices in $V(H_s)$ such that $x_s \in V(H_s) \cap \bigcup_{1\leq q\leq j_{1}} S_q$, $y_s\in V(H_s) \cap Z_{j_{1}-1}$ and the distance between $x_s$ and $y_s$ 
is minimised. 
Let $\mathcal {P}=\{s \mid 1 \leq s \leq 2i, y_s\in S_{j_{1}} \cup S_{j_{1}+1}$ or $x_s\in S_{j_{1}} \cup S_{j_{1}+1}\}$, i.e. $\cal P$ is the set of indices $s$ such that at least one of $y_s$ or $x_s$ is shot during the round $j_{1}$ or $j_{1}+1$. Since ${\cal S}_{2i}$ uses at most $i-1$ hunters, we have that $|\mathcal {P} |\leq 2(i-1)$. Thus, let $1 \leq s^* \leq 2i$ such that $s^* \notin {\cal P}$, i.e., $x_{s^*}, y_{s^*} \notin S_{j_{1}} \cup S_{j_{1}+1}$. It follows from Lemma~\ref{lem:PseudoDistance-1} that $x_{s^*}y_{s^*}\in E(T_{2i,d})$. Since $y_{s^*} \in Z_{j_{2i}-1}\setminus S_{j_{1}}$, we have that that $x_{s^*} \in Z_{j_{1}}$. However, $x_{s^*} \in \bigcup_{1\leq q\leq j_{1}} S_q \setminus S_{j_{1}+1}$ and so, $x_{s^*}$ is recontaminated, contradicting the monotonicity of ${\cal S}_{2i}$.
\end{proof}

\section{Kernelization by vertex cover}\label{sec:kernel}

Let us first remind some of the basic definitions regarding parameterised complexity.
An instance of a parameterised version $\Pi_p$ of a decision problem $\Pi$ is a pair $(I,t)$, where $I$ is an instance of $\Pi$ and $t$ is a non-negative integer, called a \textit{parameter}, associated with $I$. 
We say that $\Pi_p$ is \textit{fixed-parameter tractable} (\FPT) if there exists an algorithm (called as \textit{FPT algorithm}) that, given an instance $(I,t)$ of $\Pi_p$, solves it in time $f(t)\cdot |I|^{\mathcal{O}(1)}$, where
$f$ is any computable function of $t$.


\begin{definition} [Equivalent Instances] \label{equivalent}
Let $\Pi_{1}$ and $\Pi_{2}$ be two parameterised problems. Two instances, $(I, t)\in \Pi_{1}$ and $(I', t')\in \Pi_{2}$, are \textit{equivalent} when $(I, t)$ is a Yes-instance if and only if $(I',t')$ is a Yes-instance.
\end{definition}

A parameterised (decision) problem $\Pi_p$ admits a \textit{kernel} of size $f(t)$, for some function $f$ that depends only on $t$, if the following is true: there exists an algorithm (called a \textit{kernelization algorithm}) that, given as input an instance $(I,t)$ of $\Pi_p$, runs in $(|I|+t)^{\mathcal{O}(1)}$ time and outputs an equivalent instance $(I',t')$ of $\Pi_p$ such that $|I'|\leq f(t)$ and $t'\leq t$. If the function $f$ is polynomial, then the problem is said to admit a \textit{polynomial kernel}. It is well-known that a decidable parameterised problem is \FPT~if and only if it admits a kernel~\cite{bookParameterized}.

Recall that a \textit{vertex cover} of a graph $G$ is any set $U\subseteq V(G)$ such that for every edge $uv\in E(G)$, $U\cap \{u,v\}\neq \emptyset$. The order of a minimum size vertex cover of $G$ is usually referred to as the \textit{vertex cover number} of $G$ and denoted by $vc(G)$.
In what follows,  we  consider the \textsc{Hunters and Rabbit} Problem parameterised by the vertex cover number. That is, an instance $((G,k),t)$ is defined by an input $(G,k)$ where the problem aims at deciding whether $h(G)\leq k$ and the parameter $t$ is any upper bound on $vc(G)$.

First, we have the following observation.
\begin{proposition}\label{prop:boundVC&h}
For any connected graph $G$, $h(G) \leq mh(G) \leq vc(G)$.
\end{proposition}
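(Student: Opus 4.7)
The plan is to handle the trivial inequality first and then construct an explicit monotone winning strategy from a vertex cover of minimum size. The inequality $h(G)\leq mh(G)$ is already given by Proposition~\ref{prop:relationMonotone&nonM}, so the substantive content is $mh(G)\leq vc(G)$. When $|V(G)|=1$ the result is trivial since all three quantities equal $0$, so assume $|V(G)|\geq 2$ (and $G$ is connected by hypothesis, so every vertex has a neighbour).

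Let $U$ be a minimum vertex cover of $G$, so $|U|=vc(G)$ and $V\setminus U$ is an independent set with $N(V\setminus U)\subseteq U$. Consider the hunter strategy $\mathcal{S}=(S_1,S_2)=(U,U)$, which clearly uses $vc(G)$ hunters. I would first verify that $\mathcal{S}$ is winning: after round $1$, any surviving rabbit must have started in some $v\in V\setminus U$ (vertices of $U$ are shot in round $1$), and since $V\setminus U$ is independent the rabbit must move to a neighbour, which lies in $U$. Then shooting $U$ again in round $2$ catches the rabbit; formally $Z_1\subseteq N(V\setminus U)\subseteq U=S_2$, hence $Z_2=\emptyset$.

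The key step is to verify monotonicity. At round $1$, every vertex of $U$ is cleared by being shot. For each $v\in V\setminus U$, one has $N(v)\subseteq U=S_1$ and $N(v)\neq\emptyset$ (since $G$ is connected with $|V|\geq 2$), so $N(v)\cap Z_0=N(v)\subseteq S_1$ and $v$ is cleared at round $1$ as well. The only index to check for potential recontamination is $j=1$: every $v\in Z_1$ satisfies $v\in U=S_2$, so the monotonicity condition holds. Since $Z_2=\emptyset$, the strategy terminates without any vertex being recontaminated, so $\mathcal{S}$ is a monotone winning hunter strategy and $mh(G)\leq |U|=vc(G)$.

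There is no real obstacle here; the only small subtlety is the clearing condition for vertices outside $U$, which relies on the fact that connectedness plus $|V|\geq 2$ rules out isolated vertices, so that $N(v)\cap Z_0\neq\emptyset$ as required by the definition of ``cleared''.
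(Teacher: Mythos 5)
Your proof is correct and follows essentially the same route as the paper's: take a vertex cover $U$, shoot $U$ twice, and observe that the resulting strategy is winning and monotone. You simply spell out the monotonicity check (including the non-emptiness condition in the definition of ``cleared'') in more detail than the paper, which just asserts it.
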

\begin{proof}
Let $U$ be a vertex cover in $G$ and $I$ be the independent set $V(G) \setminus U$. The hunter player can win simply by shooting all the vertices of $U$ twice. If the rabbit starts at a vertex $u\in U$, then it gets shot in the first round. Otherwise, the rabbit was on a vertex $v \in I$, and then it has to move to a vertex in $U$ (since $I$ is an independent set) that is, $Z_{1}= U$ and then, the rabbit is shot by a hunter in the next round. Finally, note that this strategy is also monotone.
\end{proof}

Let $U$ be a vertex cover of size $t\geq vc(G)$ of $G$ and $I$ be the independent set $V(G) \setminus U$. For each subset $S\subseteq U$, we define the following equivalence class: $\mathcal{C}_S = \{ v \mid v\in I \ \text{and} \ N(v) = S\}$. Next, we have the following crucial lemma.

\begin{lemma}\label{lem:safeRule}
Let $G=(V,E)$ be a connected graph, $U\subseteq V$ be a vertex cover of $G$, $k \geq 1$ and let $S\subseteq U$ be such that $|\mathcal{C}_S| > k+1$. Let $\mathcal{C}_S=\{v_1, \dots, v_q\}$. 
Then, $h(G) \leq k$ (resp., $mh(G)\leq k$) if and only if $h(G[V\setminus \{v_{k+2}, \dots , v_q \}]) \leq k$ (resp., $mh(G[V\setminus \{v_{k+2}, \dots , v_q \}]) \leq k$).  
\end{lemma}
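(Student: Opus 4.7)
The plan is to prove the two equivalences by handling the easy ``only if'' direction via the subgraph lemmas, and then showing for the ``if'' direction that a single strategy $\mathcal{S}'$ on $G'=G[V\setminus\{v_{k+2},\dots,v_q\}]$ already serves as a (monotone) winning strategy on $G$. The whole argument will hinge on the twin structure: the vertices $v_1,\dots,v_q$ share a common open neighbourhood $S\subseteq U$, and since $\mathcal{S}'$ uses at most $k$ hunters while $k+1$ of these twins remain in $G'$, at least one $v_j$ with $j\le k+1$ is unshot in every round. Note that $S\neq\emptyset$ (otherwise $\mathcal{C}_S$ would be a set of isolated vertices, contradicting connectivity of $G$ with $|\mathcal{C}_S|>k+1$), so $G'$ is connected, and the application of Lemma~\ref{lem:subgraph} and Lemma~\ref{lem:MonotoneSubgraph} gives the ``only if'' direction immediately.

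For the ``if'' direction on the non-monotone version, I would argue by contradiction: given a winning $\mathcal{S}'=(S'_1,\dots,S'_\ell)$ in $G'$ and a hypothetical rabbit trajectory $\mathcal{R}=(r_0,\dots,r_\ell)$ in $G$ surviving $\mathcal{S}'$, I would build a corresponding trajectory $\mathcal{R}'$ in $G'$ by the substitution
\[
r'_t=\begin{cases} r_t & \text{if } r_t\in V(G'),\\ \text{any } v_j\in\{v_1,\dots,v_{k+1}\}\setminus S'_{t+1} & \text{if } r_t\in\{v_{k+2},\dots,v_q\}.\end{cases}
\]
Such a $v_j$ exists because $|S'_{t+1}|\le k<k+1$. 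The walk remains valid in $G'$ because any neighbour of a $v_\ell\in\mathcal{C}_S$ lies in $S\subseteq V(G')$, so two consecutive positions are never both substituted, and all $v_\ell$ share the same neighbourhood $S$ in $G$ and in $G'$. Since $r_t\notin S'_{t+1}$ is preserved by the substitution, $\mathcal{R}'$ survives $\mathcal{S}'$ in $G'$, contradicting that $\mathcal{S}'$ is winning.

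For the monotone case, the key auxiliary fact I would establish (using essentially the same trajectory-substitution argument) is that for every $t\ge 0$,
\[
Z_t(\mathcal{S}',G)\cap V(G')=Z_t(\mathcal{S}',G').
\]
The inclusion ``$\supseteq$'' is immediate, and the other direction follows because any $G$-trajectory ending at some $v\in V(G')$ can be turned into a $G'$-trajectory by the same substitution (the endpoint is never substituted). Now suppose $\mathcal{S}'$ is not monotone in $G$. By Lemma~\ref{lem:nonMonotone}, there exist $v\in V(G)$ and $i$ with $v\in Z_{i-1}(\mathcal{S}',G)\setminus S'_i$ and $v\in\bigcup_{p<i}S'_p$. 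Since $v$ is shot by $\mathcal{S}'$, we have $v\in V(G')$, and the above equality gives $v\in Z_{i-1}(\mathcal{S}',G')$. But then $v$ was cleared in $G'$ at some round $p<i$ (by being shot), $v\in Z_{i-1}(\mathcal{S}',G')$, and $v\notin S'_i$, contradicting the monotonicity of $\mathcal{S}'$ in $G'$.

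The main obstacle I expect is the second part of the monotone case, because the notion of ``cleared'' in the paper allows a vertex to be cleared by shooting all its contaminated neighbours, and a neighbour in $\mathcal{C}_S$ may sit in $Z_{i-1}(\mathcal{S}',G)$ while never being shot by $\mathcal{S}'$. Channelling everything through Lemma~\ref{lem:nonMonotone} bypasses this subtlety by reducing non-monotonicity to a \emph{shot} vertex being recontaminated, which is the only situation that the $Z_t\cap V(G')=Z_t(G')$ identity needs to rule out; the twin-substitution inequality $|S'_{t+1}|\le k<|\mathcal{C}_S\cap V(G')|$ is exactly what is needed to prove that identity.
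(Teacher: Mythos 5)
Your proof is correct and follows essentially the same route as the paper's: the easy direction via the subgraph lemmas, and the hard direction via the twin-substitution of trajectories (replacing a deleted $v_j$ by a remaining twin outside the current shot set, which exists since $|S'_{t+1}|\leq k < |\mathcal{C}_S\cap V(G')|$), with the monotone case reduced through Lemma~\ref{lem:nonMonotone} to the recontamination of a \emph{shot} (hence surviving) vertex. If anything, your version is slightly tighter than the paper's, since you make explicit both the identity $Z_t(\mathcal{S}',G)\cap V(G')=Z_t(\mathcal{S}',G')$ (where the paper argues more informally that "$x$ gets recontaminated by $\mathcal{R}$ in $H$ as well") and the connectivity of $G'$ needed to invoke Lemma~\ref{lem:MonotoneSubgraph}.
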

\begin{proof}
By Lemma~\ref{lem:subgraph}, $h(G[V\setminus \{v_{k+2}, \dots , v_q \}]) \leq h(G)$. Similarly, due to Lemma~\ref{lem:MonotoneSubgraph}, $mh(G[V\setminus \{v_{k+2}, \dots , v_q \}]) \leq mh(G)$. So, it only remains to prove that, if $h(G)>k$ (resp., $mh(G)>k$), then $h(G[V\setminus \{v_{k+2}, \dots , v_q \}]) > k$ (resp., $mh(G[V\setminus \{v_{k+2}, \dots , v_q \}]) > k$). Let $H=G[V\setminus \{v_{k+2}, \dots , v_q \}]$ and let $X = \{v_1,\dots, v_{k+1} \}$ (i.e., $X= V(H)\cap {\cal C}_S$).


In the following we show that if $h(G) >k$ (resp., $mh(G)>k$), then $h(H) >k$ (resp., $mh(H)>k$). To this end, we establish that if the rabbit has a winning strategy in $G$ against $k$ hunters, then the rabbit has a winning strategy in $H$ against $k$ hunters. 

\medskip
\noindent \textbf{(1) $\boldsymbol {h(G)>k \implies h(H)>k}$:} Let $\mathcal{S}=(S_1, S_2,\dots,S_{\ell})$ be any hunter strategy (not necessarily winning) in $H$ using at most $k$ hunters. Then, $\cal S$ is a hunter strategy (not necessarily winning) in $G$ using at most $k$ hunters. Since $h(G) > k$, there exists a rabbit-trajectory $\mathcal{R}' = (r'_0, r'_1, \dots,r'_{\ell-1})$ in $G$ such that $r'_i \notin S_{i+1}$ for every $0\leq i <\ell$. Let ${\cal R}=(r_0,\dots,r_{\ell-1})$ be such that, for every $0 \leq i < \ell$, let $r_i=r'_i$ if $r'_i \in V(H)$ and, otherwise, let $r_i$ be any vertex of $X \setminus S_{i+1}$ (such a vertex exists since $|S_{i+1}|\leq k$ and $|X|>k$). Note that $r'_i \neq r_i$ only if $r'_i \notin V(H)$ and therefore $r'_i \in {\cal C}_S$. This implies that, if $r'_i \notin V(H)$, then $r'_{i-1},r'_{i+1} \in S \subset V(H)$ (since $N(r'_i)=S$). Therefore, $r_{i-1}=r'_{i-1}$ and $r_{i+1}=r'_{i+1}$ and $r_{i-1},r_{i+1} \in N_H(r_i)$ (since $r_i \in X$ and so $N_G(r_i)=N_H(r_i)=S$). Therefore, $\cal R$ is a rabbit trajectory in $H$ and, by construction, $r_i \notin S_{i+1}$ for every $0 \leq i < \ell$. Hence, $\mathcal{S}$ is not a winning hunter strategy. Therefore, $h(H)>k$.

\medskip
\noindent\textbf{(2) $\boldsymbol {mh(G)>k \implies mh(H)>k}$:} 
Let $\mathcal{S}=(S_1, S_2,\dots,S_{\ell})$ be any hunter strategy (not necessarily winning) in $H$ using at most $k$ hunters. Then, $\cal S$ is a hunter strategy (not necessarily winning) in $G$ using at most $k$ hunters. Since $mh(G)>k$, for every 
hunter strategy $\mathcal{S}$ in $G$ using at most $k$ hunters, there is a rabbit trajectory $\mathcal{R}'$ that either is a winning rabbit trajectory (the rabbit never gets shot) or recontaminates a vertex (rabbit may be shot at a later round). Now, let $\mathcal{R}$ be built from $\mathcal{R}'$ similarly to the previous case. If the rabbit never gets shot in $\mathcal{R}'$, then due to the arguments presented in $(1)$, the rabbit evades getting shot in $\mathcal{R}$ as well. Hence, we assume that the rabbit gets shot in $\mathcal{R}'$, during, say, a round $p$, but recontaminates a vertex, say $x$, during a round $p'<p$. 
Since only vertices of $H$ can be shot in the hunter strategy $\mathcal{S}$, only the vertices of $H$ can be recontaminated by $\mathcal{R}'$ (recall Lemma~\ref{lem:nonMonotone}). 
Hence $x\in V(H)$. Therefore, $x$ gets recontaminated by $\mathcal{R}$ in $H$ as well. Thus, $mh(H)>k$.
\end{proof}

Finally, we present our kernelization result.
\begin{theorem}\label{T:kernel}
The problem that takes an $n$-node connected graph $G$ and an integer $k\geq 1$ as inputs and decides whether $h(G) \leq k$ (resp., $mh(G)\leq k$), parameterised by any upper bound $t$ on $vc(G)$, admits a kernel of size at most $4^t(t+1)+2t$.
Moreover, this problem can be solved in \FPT~time $(4^t(t+1)+2t)^{t+1}\cdot n^{\mathcal{O}(1)}$.
\end{theorem}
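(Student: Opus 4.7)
The plan is to use Lemma~\ref{lem:safeRule} as the main reduction rule and combine it with a bound on the number of equivalence classes induced by a vertex cover. First, I would dispose of the easy case: by Proposition~\ref{prop:boundVC&h}, $h(G) \leq mh(G) \leq vc(G) \leq t$, so if $k \geq t$, the answer is immediately YES and we may return a trivial YES-instance (for example the single-vertex graph with parameter $1$). From now on we assume $k \leq t-1$, and in particular $k+1 \leq t \leq t+1$. Next, compute in polynomial time a vertex cover $U$ of $G$ of size at most $2t$, using the standard $2$-approximation via a maximal matching; this is valid since $vc(G) \leq t$, hence such a matching has at most $t$ edges and $|U| \leq 2t$. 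Set $I = V \setminus U$, which is an independent set.

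The second step is to group the vertices of $I$ by neighbourhood. For each $S \subseteq U$, define $\mathcal{C}_S = \{v \in I \mid N(v) = S\}$. Since each vertex of $I$ is determined (up to its class) by some $S \subseteq U$, there are at most $2^{|U|} \leq 2^{2t} = 4^t$ non-empty classes. For each class $\mathcal{C}_S$ with $|\mathcal{C}_S| > k+1$, delete arbitrary vertices of $\mathcal{C}_S$ until exactly $k+1$ remain; by iterated application of Lemma~\ref{lem:safeRule} (which deletes one vertex at a time and preserves both $h(G) \leq k$ and $mh(G) \leq k$), the resulting graph $G'$ is equivalent to $G$ for the decision problem. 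Note that connectivity of $G'$ is maintained since the deleted vertices are leaves of the bipartite graph between $U$ and $I$ whose neighbourhoods in $U$ are still covered by the $k+1$ surviving representatives.

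Bounding the size of the kernel is then immediate: $|V(G')| \leq |U| + \sum_{S \subseteq U} \min(|\mathcal{C}_S|, k+1) \leq 2t + 4^t \cdot (k+1) \leq 4^t(t+1) + 2t$, which is the claimed kernel size. For the FPT algorithm, we apply the kernelization and then solve the problem on $G'$ by brute-force game-tree search: a configuration of the game is determined by the contaminated set $Z \subseteq V(G')$, and at each round the hunter chooses a subset $S \subseteq V(G')$ with $|S| \leq k \leq t$. Enumerating candidate strategies and checking them by propagating the contaminated sets runs within the claimed bound of $(4^t(t+1)+2t)^{t+1} \cdot n^{\mathcal{O}(1)}$, since $k \leq t$ bounds both the size of each shot and the relevant search depth, while the polynomial factor absorbs the cost of computing the vertex cover and performing the reductions on the original $n$-vertex input.

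The main obstacle is the correctness of the reduction rule, and that was done in Lemma~\ref{lem:safeRule}; what remains for this theorem is purely bookkeeping, namely arguing that $k+1$ representatives per twin class suffice, that the $2$-approximation of $U$ yields the factor $4^t$ (rather than $2^t$), and that the parameter $t$ of the kernel satisfies $t' \leq t$, which it does since the vertex cover number can only decrease under vertex deletion. The less routine ingredient is to make sure the brute-force step on the kernel fits within the stated $(4^t(t+1)+2t)^{t+1}$ bound rather than a naive $2^{O(\text{kernel size})}$ bound.
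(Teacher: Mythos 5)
Your kernelization argument is correct and essentially identical to the paper's: dispose of the case $k\geq t$ via Proposition~\ref{prop:boundVC&h}, compute a vertex cover $U$ with $|U|\leq 2t$ by the matching-based $2$-approximation, truncate each twin class $\mathcal{C}_S$ to $k+1$ representatives by iterating Lemma~\ref{lem:safeRule}, and bound the result by $2t+4^t(k+1)\leq 4^t(t+1)+2t$. The observation that connectivity survives the deletions is a nice (if minor) addition that the paper leaves implicit.

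The gap is in the ``moreover'' part. You claim that a brute-force game-tree search over contaminated sets on the kernel $G'$ runs in time $(4^t(t+1)+2t)^{t+1}\cdot n^{\mathcal{O}(1)}$ ``since $k\leq t$ bounds both the size of each shot and the relevant search depth.'' The first half is true (there are at most $|V(G')|^{k}$ choices of shot per round), but the second half is unsubstantiated: nothing bounds the number of rounds of an optimal strategy by $k$ or $t$, and the natural state space of the search --- the family of reachable contaminated sets $Z\subseteq V(G')$ --- has size up to $2^{|V(G')|}=2^{4^t(t+1)+2t}$. A naive search therefore only yields a doubly-exponential-in-$t$ bound; that still proves membership in \FPT~(any kernel does), but not the stated running time $(4^t(t+1)+2t)^{t+1}$. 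The paper closes this step by invoking the known XP algorithm of Abramovskaya et al., which decides $h(H)\leq k$ in time $|V(H)|^{k+1}$; combined with $k\leq t$ this gives exactly the claimed bound. You flag this as ``the less routine ingredient'' yourself, but you do not supply the argument, so as written the running-time claim does not follow.
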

\begin{proof}
The kernelization proceeds as follows. First, if $k>t$, then answer that $h(G)\leq mh(G) \leq k$ (this is correct by Proposition~\ref{prop:boundVC&h}). Otherwise, let $U$ be a vertex cover of size at most $2t$ of $G$ (which can be computed in time $\mathcal{O}(tn)$ by classical $2$-approximation for vertex cover using maximal matching~\cite{bookApprox}). Let $H$ be the graph obtained from $G$ as follows. For every $S \subseteq U$, if $|{\cal C}_S|>k+1$, then remove $|{\cal C}_S|-(k+1)$ vertices from ${\cal C}_S$. By Lemma~\ref{lem:safeRule} (applied iteratively for each $S \subseteq U$), $h(G)\leq k$ (resp., $mh(G)\leq k$) if and only if $h(H)\leq k$ (resp., $mh(H)\leq k$). Moreover, $|V(H)| = |U| + \sum_{S \subseteq U} |{\cal C}_S \cap V(H)| \leq 2t + 2^{2t}(k+1)\leq 4^t(t+1)+2t$ (the last inequality holds by Proposition~\ref{prop:boundVC&h}). Hence, the above algorithm is the desired kernelization algorithm.

Finally, applying the XP-algorithm~\cite{AbramovskayaFGP16}, it can be decided in time $|V(H)|^{k+1}$ whether $h(H) \leq k$. Since, by Proposition~\ref{prop:boundVC&h}, $k \leq t$, this gives the \FPT~algorithm that decides whether $h(G) \leq k$ (resp., $mh(G)\leq k$) in time  $(4^t(t+1)+2t)^{t+1}\cdot n^{\mathcal{O}(1)}$.
\end{proof}

\section{Some Future Directions}\label{sec:futureDirections}

In this paper, we studied the \textsc{Hunters and Rabbit} game by defining the notion of monotonicity for this game. Using this notion of monotonicity, we characterised the monotone hunter number for various classes of graphs. Moreover, we established that, unlike several graph searching games, the monotonicity helps in this game, i.e., the $h(G)$ can be arbitrary smaller than $mh(G)$. 

There are still several challenging open questions in this area. The most important among them is the computational complexity of \textsc{Hunters and Rabbit}. Although our results establish that computing $mh(G)$ is \textsf{NP}-hard, the computational complexity of computing/deciding $h(G)$ remains open, even if $G$ is restricted to be a tree graph.

We also established that both \textsc{Hunters and Rabbit}, as well as its monotone variant, are \FPT~parameterised by $vc(G)$ by designing  exponential kernels. It is not difficult to see that both of these variants admit AND Composition parameterised by the solution size (by taking the disjoint union of the instances). Thus, since computing $mh(G)$ is $\mathsf{NP}$-hard and $pw(G) \leq mh(G)\leq pw(G)+1$, it is unlikely for \textsc{Monotone Hunters and Rabbit} parameterised by $k+pw(G)$ to admit a polynomial compression. Note that the same cannot be argued about \textsc{Hunters and Rabbit} since it is not yet proved to be \textsf{NP}-hard. Moreover, since $mh(G)$ is closely related to $pw(G)$ and pathwidth admits a polynomial kernel with respect to $vc(G)$~\cite{chapelle2017treewidth}, it might be interesting to see if deciding $mh(G)\leq k$ (resp., $h(G) \leq k$) also admits a polynomial kernel when parameterised by $vc(G)$. Moreover, another interesting research direction is to study the parameterised complexity of both these games by considering  parameters such as solution size, treewidth, and pathwidth.

Finally, we propose some open questions concerning the computation of $h(G)$ for various graph classes including trees, cographs, and interval graphs. Specifically, it will be interesting to design a polynomial time algorithm, similar to Algorithm~\ref{alg:Tree}, to compute $h(T)$ for a tree $T$, a question that was already proposed in~\cite{AbramovskayaFGP16}. The natural way that one could tackle this question is through the notion of monotonicity, which we defined and studied in this paper. Unfortunately, Theorem~\ref{theo:gapMonotone&nonM} implies that such an approach will not work. This means that a positive answer to this question (if any) would require the introduction of new tools and techniques. 
Moreover, it would be interesting to know the monotone hunter number of grids.




\section{Acknowledgements}
This work is partially funded by the project UCA JEDI (ANR-15-IDEX-01) and the EUR DS4H Investments in the Future (ANR-17-EURE-004) and the ANR Digraphs and the ERC grant titled PARAPATH and the IFCAM project ``Applications of graph homomorphisms'' (MA/IFCAM/18/39).

\bibliographystyle{plainurl}
\bibliography{biblio}

\newpage

\appendix
\section{Appendix}

\propreachBip*
\begin{proof}
This clearly holds if $p=0$ (i.e.,  when $v\in V_r$) since $Z_0=V_r$. If $p=1$ (i.e.,  when $v\in V_w$), there exists a rabbit trajectory $(r_0=u \in N(v)\cap V_r\setminus S_1, r_1=v)$ and so $v \in Z_1$. Hence, we assume that $p>1$. 
    
The rabbit can follow the following strategy depending on whether $p$ is odd or even:
\begin{enumerate}
    \item $p$ is odd (and so $v\in V_w$): The rabbit can follow the following trajectory: $(r_0 = u,r_1=x, \ldots, r_{p-1} = u, r_{p}=v)$ where, for $q<p$, $r_{q} = u$ if $q$ is even and $r_q = x$ if $q$ is odd. 
    \item $p$ is even (and so, $v\in V_r$): The rabbit can follow the following trajectory: $(r_0 = x,r_1=u, \ldots, r_{p-1} = u, r_{p}=v)$ where, for $q<p$, $r_{q} = x$ if $q$ is even and $r_q = u$ if $q$ is odd.
\end{enumerate}
In both cases, for every $0\leq j <p$, $r_j\notin S_{j+1}$ since $p\leq i$ and $x,u \notin \bigcup_{j<i}S_j$. Therefore, $v\in Z_p$.
\end{proof}

\lemreachBip*
\begin{proof}
Let us recall that for any hunter strategy in $G$ with respect to $V_r$, if $q$ is even, then $Z_q\subseteq V_r$ and $Z_q\subseteq V_w$ otherwise. 
Thus , if $p=0$ or if $i=2$, $Z_i\subseteq Z_p$. Also, if $i=1$, then $Z_i=Z_p$. Hence, let us assume that $p\geq 1$ and $i>2$. Let $v \in Z_i$. Since $v\in Z_i$, there exists a rabbit trajectory $R=(r_0,\dots, r_{i-2}=x,r_{i-1}=u, r_i=v)$ such that, for any $0\leq j<i$, $r_j\notin S_{j+1}$. By definition of a rabbit trajectory, $u \in N(v)$ and $x\in N(u)$. Moreover, by monotonicity of ${\cal S}$, since $u\in Z_{i-1}\setminus S_i$ (resp. $x\in Z_{i-2}\setminus S_{i-1}$), $u\notin \bigcup_{q\leq i} S_q$ (resp. $x\notin \bigcup_{q\leq i-1} S_q$). By Proposition~\ref{prop:reachBip}, $v \in Z_{p}$ for each $p\leq i$ such that $p$ and $i$ has the same parity.
\end{proof}

\lemnonMonotoneBip*
\begin{proof}
Towards a contradiction, assume that the statement of the lemma is false, i.e., for every vertex $v\in V$ and every $1\leq i \leq \ell$, if $v\in Z_{i-1} \setminus S_i$ then $v\notin \bigcup_{p< i}S_p$. 

Since $\mathcal{S}$ is non-monotone and winning, there exists a vertex $u$ such that $u$ is cleared at a round $1\leq q\leq \ell -2$, and then recontaminated at a round $j> q$ (i.e., $u\in Z_j\setminus S_{j+1}$). Moreover, by our assumption, $u$ is cleared by shooting each contaminated vertex in $N(u)$ at round $q$, i.e., $Z_{q-1} \cap N(u) \subseteq S_q$.
W.l.o.g., let us assume that $u\in V_r$. Therefore, $N(u)\subseteq V_w$ and $q-1$ is odd. 

Let us show that $N(u) \subseteq \bigcup_{p\leq q} S_p$. Let us assume that there exists a vertex $x \in N(u)$ such that $x\notin \bigcup_{p< q} S_p$.  Since both $u,x\notin \bigcup_{p< q} S_q$, $u\in N(x)$, $q-1$ is odd and $x\in V_w$, by Proposition~\ref{prop:reachBip}, we get that $x\in Z_{q-1}$. Therefore, $x\in Z_{q-1} \cap N(u) \subseteq S_q$. Hence, $N(u) \subseteq \bigcup_{p\leq q} S_p$.

Since $v \in Z_j$ then there exists $w \in (N(u)\cap Z_{j-1})\setminus S_j$ and $w\in \bigcup_{p< j}S_p$, i.e., $w$ satisfies the statement of the lemma, a contradiction. 
\end{proof}

\lemMonotoneSubgraphBip*
\begin{proof}
Let $\mathcal{S}=(S_1,\dots,S_\ell)$ be a monotone winning hunter strategy for $G$ with respect to $V_r$.

If $|V(H)|=1$, the result clearly holds since $mh(H)=0$. Hence, let us assume that $|V(H)|>1$.
Let $m$ be the minimum integer such that $S_m \cap V(H) \neq \emptyset$ and let $u \in S_m \cap V(H)$ (by Lemma~\ref{lem:independent}, such an integer $m$ exists because $|V(H)|>1$ and $H$ is connected, which implies that $V_r\cap V(H)\neq \emptyset$). Let $\mathcal{S}'=(S'_1,\dots,S'_\ell)$ be the hunter strategy, such that for every $1\leq i\leq \ell$,
\[   
S'_i = 
     \begin{cases}
       S_i\cap V(H), &\quad\text{if } S_i\cap V(H)\neq \emptyset\\
       \{u\}, &\quad\text{otherwise}\\
     \end{cases}
\]
 First, we have the following claim.
\begin{claim}\label{C:L1Bip}
    For every $0 \leq i\leq  \ell$ and for any vertex $v\in V(H)$, if $v\in Z_i({\cal S}')$, then $v\in Z_i({\cal S})$.
\end{claim}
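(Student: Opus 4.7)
The plan is to lift any rabbit trajectory witnessing $v\in Z_i({\cal S}')$ in $H$ to a rabbit trajectory in $G$ that avoids $\cal S$, so that $v\in Z_i({\cal S})$ follows immediately. The argument is a direct adaptation of the proof of Claim~\ref{C:L1} in the non-red case (Lemma~\ref{lem:MonotoneSubgraph}); the restriction to $V_r$ changes nothing because the starting set $V_r$ is the same in the game played on $G$ (w.r.t.\ $V_r$) and in the game played on $H$ (w.r.t.\ $V_r\cap V(H)$).

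The key structural observation I would isolate first is that, by the very definition of $\mathcal{S}'$, for every $1\leq j \leq \ell$,
\[
S_j \cap V(H) \;\subseteq\; S'_j.
\]
Indeed, if $S_j\cap V(H)\neq\emptyset$ then equality holds; otherwise the inclusion is trivial (since $S'_j=\{u\}$). This inclusion is the only place where the construction of $\mathcal{S}'$ is used.

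Given $v\in Z_i({\cal S}')$, by definition there is a walk ${\cal R}=(r_0,\ldots,r_i=v)$ in $H$ with $r_0\in V_r\cap V(H)$ and $r_j\notin S'_{j+1}$ for all $0\leq j<i$. Since $H$ is a subgraph of $G$, $\cal R$ is also a walk in $G$ starting from $V_r$. Using the inclusion above together with the fact that each $r_j\in V(H)$, one gets $r_j\notin S_{j+1}\cap V(H)$, and hence $r_j\notin S_{j+1}$, for every $0\leq j<i$. Thus $\cal R$ witnesses $v\in Z_i({\cal S})$, which is exactly the claim.

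I do not anticipate any real obstacle: the proof is a short verification and the only subtlety is to note that the bipartite/red-variant constraint is preserved automatically because $r_0\in V_r\cap V(H)\subseteq V_r$, so the witnessing trajectory in $H$ is still a legal trajectory in $G$ with respect to $V_r$.
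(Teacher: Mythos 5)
Your proof is correct and follows essentially the same route as the paper's: both arguments rest on the inclusion $S_j\cap V(H)\subseteq S'_j$ and lift a witnessing trajectory from $H$ to $G$. Your explicit remark that $r_0\in V_r\cap V(H)\subseteq V_r$ keeps the trajectory legal in the red variant is a minor detail the paper leaves implicit, but the substance is identical.
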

\begin{proofclaim}
    Let $\mathcal{R}=(r_0,\dots, r_i=v)$ be a rabbit trajectory in $H$ such that for any $0\leq j < i$, $r_j r_{j+1}\in E(H)$, $r_j\notin S'_{j+1}$ and $r_i=v$ (such a trajectory exists since $v\in Z_i({\cal S}')$). By construction of $\mathcal{S}'$, for any $1\leq j\leq \ell$, $S_j\cap V(H)\subseteq S'_j$. Therefore, $\mathcal{R}$ is also a rabbit trajectory in $G$ with $r_j \notin S_{j+1}$, for all $0 \leq j < i$. Thus, $v\in Z_i({\cal S})$. 
\end{proofclaim}

Let us show that $\mathcal{S}'$ is a monotone winning hunter strategy in $H$ with respect to $V_r\cap V(H)$.
First, we show that $\mathcal{S}'$ is indeed a winning hunter strategy in $H$ with respect to $V_r\cap V(H)$. Towards a contradiction, assume that $\mathcal{S}'$ is not a winning strategy in $H$ w.r.t. $V_r\cap V(H)$. This implies that $Z_\ell ({\cal S}')\neq \emptyset$. Hence, Claim~\ref{C:L1Bip} implies that $Z_\ell ({\cal S})\neq \emptyset$, contradicting the fact that $\mathcal{S}$ is a winning hunter strategy in $G$ with respect to $V_r$.

Thus, $\mathcal{S}'$ is a winning strategy in $H$ with respect to $V_r\cap V(H)$. Next, we establish that $\mathcal{S}'$ is indeed monotone. Towards a contradiction, let us assume that $\mathcal{S}'$ is non-monotone. Hence, by Lemma~\ref{lem:nonMonotoneBip}, there exist $v \in V(H)$ and $1 \leq q <i \leq \ell$ such that $v \in S'_q$ and $v \in Z_i({\cal S'})\setminus S'_{i+1}$. By Claim~\ref{C:L1Bip} and because $v\in Z_i({\cal S}')$, $v \in Z_i({\cal S})$.
Since $S_p\cap V(H)\subseteq S'_p$ for any $1\leq p\leq \ell$ and because $v\notin S'_{i+1}$, $v\notin S_{i+1}$. 

If $v=u$, $i+1>m$ (since $u \in S'_p$ for all $1 \leq p \leq m$) and so $v \in S_m$ and in $Z_i({\cal S}) \setminus S_{i+1}$, contradicting the monotonicity of $\cal S$.

Otherwise, $v \neq u$. By construction of $\mathcal{S}'$, $S'_p \setminus \{u\} \subseteq S_p$ for all $1 \leq p \leq \ell$. Hence,  $v \in S_q$ and $v \in Z_i({\cal S}) \setminus S_{i+1}$, contradicting the monotonicity of $\cal S$.

Finally, the fact that $h(\mathcal{S}')\leq \max_{1\leq i\leq \ell} |S_i\cap V(H)|\leq h({\cal S})$ completes the proof.
\end{proof}

\lemmonotoneParsimoniousBip*
\begin{proof}
Let $\mathcal{S}=(S_1,\dots, S_\ell)$ be a monotone winning hunter strategy with respect to $V_r$ using at most $k\geq mh_{V_r}(G)$ hunters such that $\ell$ is minimum. If ${\cal S}$ is parsimonious, we are done. Otherwise, among such strategies, Let us consider ${\cal S}$ that maximizes the first round $1 \leq j < \ell$ that makes $\mathcal{S}$ not parsimonious. There are several cases to be considered.
\begin{itemize}
\item
Let $\mathcal{Z}({\cal S})=(Z_0({\cal S}),\dots,Z_{\ell}({\cal S}))$ be the set of contaminated vertices for each round of $\mathcal{S}$. If there exists an integer $\ell' < \ell$ such that $Z_{\ell'}({\cal S}) = \emptyset$, then $\mathcal{S}=(S_1,\dots, S_{\ell'})$ is also a winning hunter strategy with respect to $V_r$ using at most $k$ hunters, contradicting the minimality of $\ell$. 

Hence, we may assume that $Z_{i}({\cal S}) \neq \emptyset$ for every $0 \leq i < \ell$. 
\item
Let $1 \leq j \leq  \ell$ be the smallest integer such that $S_j \setminus Z_{j-1}({\cal S}) \neq \emptyset$ (if no such integer exists, then ${\cal S}$ is parsimonious and we are done). If $S_j \cap Z_{j-1}({\cal S}) \neq \emptyset$, replace $S_j$ by $S_j \cap Z_{j-1}({\cal S})$. This leads to a winning monotone hunter strategy ${\cal S}'$  (indeed, $Z_h({\cal S})=Z_h({\cal S}')$ for all $1 \leq h \leq \ell$) contradicting the maximality of $j$.

Hence, we may assume that $S_j \cap Z_{j-1}({\cal S}) = \emptyset$. Note that this implies that $j<\ell$ (since otherwise, $\cal S$ would not be winning).

\item 
If any, let $0<i$ be the minimum integer such that $S_{j+2i} \cap Z_{j+2i-1}({\cal S}) \neq \emptyset$. Let $v \in S_{j+2i} \cap Z_{j+2i-1}({\cal S})$. Since $b\in Z_{j+2i-1}({\cal S})$,  by Lemma~\ref{lem:reach2Bip}, $v\in Z_{j-1+2i'}$ for every $0\leq i'\leq i$.

Then, for every $0 \leq i' <i$, replace $S_{j+2i'}$ by $\{v\}$. Let us prove that this leads to a monotone hunter strategy contradicting the maximality of $j$. 

Let ${\cal S'}$ be the strategy obtained by the above modifications. First, note that, for any  $0\leq h< j$, $S_h=S'_h$ and so $Z_h({\cal S})= Z_h({\cal S}')$. By definition, $Z_j({\cal S})=\{x\in V\mid\exists y\in Z_{j-1}({\cal S})\setminus S_j \wedge (xy\in E)\}$ and, since $S_j\cap Z_{j-1}=\emptyset$, we get $Z_j({\cal S})=\{x\in V\mid\exists y\in Z_{j-1}({\cal S}) \wedge (xy\in E)\}$. On the other hand, $Z_j({\cal S}')=\{x\in V\mid\exists y\in Z_{j-1}({\cal S'})\setminus S'_j \wedge (xy\in E)\} = \{x\in V\mid\exists y\in Z_{j-1}({\cal S})\setminus \{v\} \wedge (xy\in E)\}$ since $Z_{j-1}({\cal S}')=Z_{j-1}({\cal S})$ and $S'_j=\{v\}$. Hence, $Z_j({\cal S}')\subseteq Z_j({\cal S})$. By induction on $j\leq i' \leq \ell$ and using the same arguments, we get that $Z_{i'}({\cal S}') \subseteq Z_{i'}({\cal S})$ for every $j\leq i'\leq \ell$.
Thus, ${\cal S}'$ is a winning hunter strategyin $G$ with respect to $W\subseteq V_r$ using at most $k\geq mh_{V_r}(G)$ hunters (because $Z_{\ell}({\cal S}')\subseteq Z_{\ell}({\cal S})=\emptyset$).
It remains to show that ${\cal S}'$ is monotone. 

For purpose of contradiction, let us assume that ${\cal S}'$ is non-monotone. By Lemma~\ref{lem:nonMonotoneBip}, there exists a vertex $x$  and $1  < m \leq \ell$ such that $x\in Z_{m-1}({\cal S}')\setminus S'_m$ and $x \in \bigcup_{h<m}S'_h$. 

If $x \neq v$, then by definition of ${\cal S}'$ (for every $1 \leq r \leq \ell$, either $S'_r=S_r$ or $S'_r=\{v\}$) and because $Z_r({\cal S}')\subseteq Z_r({\cal S})$ for all $1 \leq r \leq \ell$, we get that $x \in \bigcup_{h<m}S_h$ and $x\in Z_{m-1}({\cal S})\setminus S_m$ which contradicts the monotonicity of ${\cal S}$. Hence, let us assume that $x=v$. 

Recall that we proved that $v\in Z_{j-1}({\cal S})\setminus S_j$. Therefore, by monotonicity of $\cal S$ and by the definition of $i$ above, $v \notin \bigcup_{r<j+2i}S_r$ which implies that $v \notin \bigcup_{r<j}S'_r$. Since $v \in S'_{j+2i'}$ for all $0 \leq i' \leq i$ and, because $v \in Z_{j+2i-1}$ and by parity, $v \notin Z_{j+2i'}({\cal S}')$ for all $0 \leq i' \leq i$, we get that $m>j+2i$. This means that $v \in Z_{m-1}({\cal S})\setminus S_m$ (because $Z_r({\cal S}')\subseteq Z_r({\cal S})$ for all $1 \leq r \leq \ell$ and $S'_r=S_r$ for all $r\geq m>j+2i$) and $v \in S_{j+2i}$, which contradicts the monotonicity of $\cal S$. 

Hence, we may assume that $S_{j+2i'} \cap Z_{j+2i'-1}({\cal S}) = \emptyset$ for all $0 \leq i'$ such that $j+2i' \leq \ell$.

\item 
If $Z_{j+1}({\cal S})=Z_{j-1}({\cal S})$, then remove $S_j$ and $S_{j+1}$ from ${\cal S}$. Let ${\cal S}'$ be the obtained strategy. We have that $Z_r({\cal S})=Z_r({\cal S}')$ and $S_r=S'_r$ for all $r <j$ and $Z_{r+2}({\cal S})=Z_r({\cal S}')$ and $S_{r+2}=S'_r$ for all $j \leq r \leq \ell-2$. Hence, ${\cal S}'$ is winning since $Z_{\ell-2}({\cal S'})=\emptyset$. Moreover, if ${\cal S}'$ is non-monotone, then, by Lemma~\ref{lem:nonMonotone}, there exists $x$  and $1  < m \leq \ell$ such that $x\in Z_{m-1}({\cal S}')\setminus S'_m$ and $x \in \bigcup_{h<m}S'_h$. If $m< j$, this implies that $x\in Z_{m-1}({\cal S})\setminus S_m$ and $x \in \bigcup_{h<m}S_h$ contradicting the monotonicity of $\cal S$. Otherwise ($m\geq j$), $x\in Z_{m+1}({\cal S})\setminus S_{m+2}$ and $x \in \bigcup_{h<m+2}S_h$, also contradicting the monotonicity of $\cal S$.

Hence, we may assume that $Z_{j+1}({\cal S})\neq Z_{j-1}({\cal S})$.
\item
Note first that by Lemma~\ref{lem:reach2Bip}, $Z_{j+1}({\cal S})\subseteq Z_{j-1}({\cal S})$. Thus, $Z_{j+1}({\cal S})\setminus Z_{j-1}({\cal S})= \emptyset$. Hence, since $Z_{j+1}({\cal S})\neq Z_{j-1}({\cal S})$, we get that $Z_{j-1}({\cal S})\setminus Z_{j+1}({\cal S})\neq \emptyset$.

Hence, there exists $v \in Z_{j-1}({\cal S})\setminus Z_{j+1}({\cal S})$. Let ${\cal S}'$ be obtained by replacing $S_j$ by $\{v\}$. By arguments similar to the ones of the third item of this proof, we can prove that ${\cal S}'$ is a monotone hunter strategy contradicting the maximality of $j$.
\end{itemize}
This completes the proof.
\end{proof}

\lemPseudoMonotoneProperties*
\begin{proof}
First, for purpose of contradiction, let us assume that $v\in S_i\cap S_j$ and $v\notin S_{i+2}$ with $i<j$. Note that, since $G$ is bipartite, that ${\cal S}$ is a parsimonious strategy with respect to $V_r$ and $S_i \cap S_j \neq \emptyset$, thus $i$ and $j$ must have the same parity and so, $j\geq i+4$ (because $v\notin S_{i+2}$). Since $\mathcal{S}$ is parsimonious, $v \in Z_{i-1} \cap Z_{j-1}$. Thus, by Lemma~\ref{lem:reach2Bip}, $v\in Z_{i+1}\setminus S_{i+2}$ and $v\in S_i$, contradicting the monotonicity of ${\cal S}$.
Hence, the first statement holds.

Let us now prove the second statement. Let us assume now that $v\in V_r$ (the case $v \in V_w$ can be handled similarly). If $v\notin Z_{i-1}$ for any odd $1\leq i< \ell$, then $v\notin Z_{j-1}$ for any odd $i\leq j\leq \ell$ by Lemma~\ref{lem:reach2Bip}. Therefore, since ${\cal S}$ is parsimonious, $v\notin S_j$ for any odd $i\leq j\leq \ell$. Moreover, since $G$ is bipartite and that ${\cal S}$ is a parsimonious strategy with respect to $V_r$, for any even $1\leq j\leq \ell$, $v\notin S_j$. 
%
%
\end{proof}
\end{document}